\pgfplotsset{compat=1.11} 
\definecolor{darkblue}{rgb}{0.0, 0.0, 0.55}
\definecolor{teal}{rgb}{0.0, 0.5, 0.5}
\newcites{app}{References}
\newtheorem{lemma}{Lemma}
\newtheorem{theorem}{Theorem}
\newtheorem{assumption}{Assumption}
\newtheorem{example}{Example}
\newtheoremstyle{named}{}{}{\itshape}{}{\bfseries}{.}{.5em}{#1\thmnote{ #3}}
\theoremstyle{named}
\newcommand{\esum}{n^{-1}\sum_{i=1}^n}
\newcommand{\C}{\mathcal{C}}
\newcommand{\E}{\mathbb{E}}
\newcommand{\M}{\mathcal{M}}
\newcommand{\N}{\mathbb{N}}
\newcommand{\mbbP}{\mathbb{P}}
\newcommand{\R}{\mathbb{R}}
\newcommand{\calS}{\mathcal{S}}
\newcommand{\argmin}{\textup{argmin}}
\newcommand{\argmax}{\textup{argmax}}
\newcommand{\poly}{\textup{poly}}
\newcommand{\ppoly}{\textup{ppoly}}
\newcommand{\rk}{\textup{rk}}
\begin{document}

\title{Testing Inequalities Linear in Nuisance Parameters}
\author{Gregory Fletcher Cox\footnote{Department of Economics, National University of Singapore (\href{mailto:ecsgfc@nus.edu.sg}{ecsgfc@nus.edu.sg})} 
\and Xiaoxia Shi\footnote{Department of Economics, University of Wisconsin-Madison (\href{mailto:xshi@ssc.wisc.edu}{xshi@ssc.wisc.edu})}
\and Yuya Shimizu\footnote{Department of Economics, University of Wisconsin-Madison (\href{mailto:yuya.shimizu@wisc.edu}{yuya.shimizu@wisc.edu})}}
\date{\today}

\maketitle

\begin{abstract}
This paper proposes a new test for inequalities that are linear in possibly partially identified nuisance parameters. This type of hypothesis arises in a broad set of problems, including subvector inference for linear unconditional moment (in)equality models, specification testing of such models, and inference for parameters bounded by linear programs. The new test uses a two-step test statistic and a chi-squared critical value with data-dependent degrees of freedom that can be calculated by an elementary formula. Its simple structure and tuning-parameter-free implementation make it attractive for practical use. We establish uniform asymptotic validity of the test, demonstrate its finite-sample size and power in simulations, and illustrate its use in an empirical application that analyzes women's labor supply in response to a welfare policy reform. 
\end{abstract}

{\bf Keywords:}  Linear Program, Moment Inequalities, Quadratic Programming, Subvector Inference, Uniform Inference

\pagebreak

\section{Introduction}

We propose a simple new test for hypotheses of the form $H_0$: there exists a $\delta$ such that $C\delta\leq b$, where elements of the Jacobian matrix $C$ and the intercept vector $b$ are reduced-form parameters that can be consistently estimated, and elements of $\delta$ are unknown parameters whose values are partially identified by the inequalities under $H_0$. Since the inequalities, rather than $\delta$, are of central interest, $\delta$ is a nuisance parameter vector. Hypotheses of this form arise in specification testing and subvector inference for linear unconditional moment (in)equality models and in inference for parameters bounded by linear programs, including discrete instrumental variable (IV) models with shape restrictions and policy relevant treatment effect models. These models have wide applications in empirical work. We explain the applications and give examples in Section \ref{sec:setup}. 

Testing this hypothesis is non-standard both because the nuisance parameter $\delta$ may not be point-identified and because the hypothesis involves inequalities. As a result, commonly used test statistics have non-standard asymptotic distributions involving parameters that cannot be consistently estimated, in particular, the local slackness of the inequalities evaluated at the true value of $\delta$. 
This complicates the design of critical values. 
A common approach is to simulate the asymptotic distribution with a conservative estimator of the local slackness plugged in. However, the conservative estimators typically involve user-chosen tuning parameters that introduce arbitrariness to the procedure. Moreover, the simulated critical values can be computationally burdensome. 

In the special case that the Jacobian is {\em known}, \cite{CoxShi2023}, hereafter CS23, propose the subvector conditional chi-squared (sCC) test that does not require simulation or user-chosen tuning parameters and yet still has uniform asymptotic size control and good power. 
The simplicity of the test is achieved by considering the conditional distribution of the quasi-likelihood ratio (QLR) statistic given the identity of the active inequalities.\footnote{An inequality is active if it holds with equality at our null-imposed estimator of $\delta$.} The conditional distribution is shown to be bounded by a chi-squared distribution with degrees of freedom (DoF) dependent on the conditioning event. CS23 recommends computing the DoF by solving a sequence of linear programming problems. 

Our first contribution is to derive an elementary formula for the DoF that replaces the recommendation in CS23. The formula makes the computation of the critical value elementary. Implementing the sCC test is now no harder than calculating the test statistic, which is a convex quadratic programming problem (CQPP). 
The formula also reveals an intuitive interpretation of the sCC test: The sCC test turns out to be the same as the classic Sargan-Hansen's J test for a moment equality model, where the equalities are determined by the active inequalities. 

While the sCC test is attractive, the known Jacobian assumption significantly restricts its applicability. 
In both linear unconditional moment inequality models and models with a parameter of interest bounded by linear programs, a known Jacobian only applies in special cases. 
The known Jacobian assumption even rules out conditional moment inequality models where $\delta$ includes coefficients on endogenous covariates. 
The second contribution of this paper is to propose a new test, called the generalized conditional chi-squared (GCC) test, that accounts for the estimation error of an {\em unknown} Jacobian while maintaining the simplicity, size, and power properties of the sCC test.

Estimating the Jacobian presents two challenges. 
The first challenge is finding the limit of the constraint set in the definition of the QLR statistic. 
In general, convergence of the matrix of coefficients in a system of inequalities is insufficient for the set defined by those inequalities to converge in a setwise sense. 
We show that a simple stable rank condition on the Jacobian is sufficient for convergence of the constraint set. 
The stable rank condition requires the rank of certain submatrices of the Jacobian to not change in the limit. 
The submatrices that need to satisfy the condition are minimal in some sense. 
They are associated with collections of inequalities that (implicitly or explicitly) define equality restrictions in the limit. 
While the stable rank condition is not innocuous, it relaxes the commonly used strong identification assumption in moment equality models, which requires the Jacobian to be full rank.

The second challenge is finding a consistent variance estimator. 
The effect of the estimated Jacobian on the variance depends on $\delta$, but $\delta$ is only partially identified. This makes the additional variance term difficult to account for. Our solution is to use a first-stage estimator of $\delta$ that converges to a point in the identified set for $\delta$. The procedure resembles optimal weighting in two-step generalized method of moments (GMM). 
The GCC test compares this two-step statistic to a chi-squared critical value with DoF determined by the formula from our first contribution. 
We also define a refinement of the GCC test in the supplemental appendix that has slightly more power. 

We next review three strands of related literature. 
The first consists of papers that propose tests of moment inequalities that are possibly nonlinear in nuisance parameters. 
These include \cite{BugniCanayShi2015, BugniCanayShi2017}, \cite{CCT2018}, \cite{BelloniBugniChernozhukov2018}, \cite{KMS2019}, and \cite{Bei2023}. 
These methods use critical values that are nontrivial to compute and require user-chosen tuning parameters to be adaptive to the unknown slackness of the inequalities.\footnote{An exception is Procedure 3 in \cite{CCT2018} in that it does not require any tuning parameter or simulation. We include this procedure in the simulations.} In contrast, the GCC test uses an algebraic critical value and is adaptive to the unknown slackness of the inequalities without any tuning parameter, at the price of requiring linearity. 

The second strand of related literature is composed of CS23 and \cite{AndrewsRothPakes2023}, which consider conditional moment inequalities that are linear in the nuisance parameters. 
These papers rely on an idea that if the Jacobian depends only on random variables on which the inequalities hold conditionally, then the Jacobian can be treated as known. 
However, this idea does not apply if the moment inequalities are unconditional or, more generally, if the Jacobian depends on a random variable for which the moment inequalities do not hold conditionally. Our method applies regardless of whether the Jacobian can be treated as known.

When our test is applied to inference for a scalar parameter (say $\theta$) bounded by linear programs, it is related to the third strand of literature. This literature addresses the problem of inference for the value of a linear programming problem (LPP). The papers include  \cite{FreybergerHorowitz2015}, \cite{BaiSantosShaikh2022}, \cite{FSST2023}, \cite{ChoRussell2024}, \cite{Gafarov2025}, \cite{Voronin2025}, and \cite{GoffMbakop2025}, among others. There is a subtle technical difference between our setting and this literature: while our test is inverted to yield a confidence interval for $\theta$, the LPP literature aims at constructing confidence intervals for the upper or lower bound for $\theta$ defined by the value of a LPP. While the two problems are distinct, they are closely related. Our confidence interval by design can cover either bound with correct nominal coverage probability (asymptotically), and in the LPP literature, one-sided confidence intervals  of the appropriate direction for either bound are also valid confidence intervals for $\theta$.\footnote{In the LPP literature, two-sided confidence intervals for $\theta$ can be obtained by combining two one-sided confidence intervals via a Bonferroni adjustment.} Thus, the methods can be used for the same empirical problems. Notably, our method is the only one that is tuning parameter and simulation free. 

We include a simple one-sided specification in the simulations in order to compare the GCC test to representative papers in all three strands of the literature. In addition to the simple simulation, we also evaluate the GCC test in two realistic simulation examples: one of an interval outcome instrumental variables (IV) model as in \cite{GandhiLuShi2023}, hereafter GLS23, and the other of bounds on policy relevant treatment effects as in \cite{MogstadSantosTorgovitsky2018}, hereafter MST18.  The simulations show that the GCC test is computationally very fast with good size and power. 

We also implement the GCC test in an empirical analysis of female labor supply in response to a welfare policy reform. \cite{KlineTartari2016}, hereafter KT16, estimate bounds on the treatment responses by manually eliminating the nuisance parameters from revealed preference inequalities. The GCC test provides uniformly valid inference for the treatment responses based directly on the revealed preference inequalities.
Overall, we find statistically significant heterogeneous responses to the policy change, which agree with the results in KT16.

The rest of the paper is organized as follows. Section \ref{sec:setup} describes the setup and applications. Section \ref{sec:CC_Def} defines the GCC test. Section \ref{sec:momineq} describes the theoretical properties of the GCC test. Section \ref{sec:mc} presents the simulations. Section \ref{sec:empirical_KT} presents the empirical illustration. Section \ref{sec:conclusion} concludes. An appendix contains proofs of the theorems, while a supplemental appendix includes additional results, proofs, simulations, and discussion. 

\section{Setup and Examples}\label{sec:setup}

We are interested in testing the hypothesis 
\begin{equation}
H_0:C\delta\leq b \text{ for some }\delta\in\mathbb{R}^{d_\delta}, \label{first_specification}
\end{equation}
where $C$ is a $d_C\times d_\delta$ matrix of reduced-form parameters, $b$ is a $d_C$-dimensional vector of reduced-form parameters, $d_\delta$ is the dimension of $\delta$, and $d_C$ is the number of inequalities. 
To make an invertibility assumption imposed later as unrestrictive as possible, we add some structure to $C$ and $b$. 
We assume that 
\begin{equation}
C=B\Pi+D ~ \text{ and } ~ b=d-B\mu,\label{structure}
\end{equation} 
where $B$ is a known $d_C\times d_\mu$ matrix, $D$ is a known $d_C\times d_\delta$ matrix, $d$ is a known $d_C$-dimensional vector, $\Pi$ is an unknown $d_\mu\times d_\delta$ matrix of reduced-form parameters, and $\mu$ is an unknown $d_\mu$-dimensional vector of reduced-form parameters. 
This structure separates the unknown and estimated components from the known components in the Jacobian $C$ and the intercept $b$. 
It is satisfied in all the examples considered below. 
Typically, $B$ has more rows than columns, and it absorbs the linear dependence across rows for the estimation noise of the inequalities. 
This allows us to accommodate inequalities with linearly dependent estimation errors, which arise when we write an equality as a pair of opposing inequalities, when the model contains a deterministic constraint such as a shape or sign restriction, or when the law of total probability dictates that a weighted sum of the inequalities involves no unknown quantities under $H_0$. 

With the structure in (\ref{structure}), $H_0$ can be equivalently written as:
\begin{align}
H_0:B(\mu+\Pi\delta)+D\delta \leq d \text{ for some }\delta\in \R^{d_\delta}. \label{generalH0}
\end{align}
Let $\overline{\mu}_n$ and $\overline{\Pi}_n$ be estimators of $\mu$ and $\Pi$. 
Let $\overline{C}_n=D+B\overline{\Pi}_n$ and $\bar b_n=d-B\overline{\mu}_n$. 
In the next two subsections, we describe two classes of models that are covered by this framework.

\subsection{Moment (In)equality Models}\label{sub:momineq}
Moment (in)equality models are used to address data or modeling incompleteness issues, including missing data, multiple equilibria, and large intractable games.\footnote{For empirical applications and current statistical methods for such models, see the survey papers by \cite{CanayShaikh2017}, \cite{HoRosen2017}, and \cite{Molinari2020}.} 
Here, we show that both specification testing and subvector inference for moment inequality models fit the hypothesis in equation (\ref{generalH0}) when the moments are linear in the parameters. 

Consider the moment (in)equality model:
\begin{equation}
\E[\overline{m}_n^{eq}(\beta)] = \mathbf{0} \text{ and }
\E[\overline{m}_n^{ineq}(\beta)]\geq \mathbf{0} \text{ with }\beta\in {\cal B}\subseteq \R^{d_\beta},\label{mi}
\end{equation}
where $\overline{m}_n(\beta) = (\overline{m}^{eq}_n(\beta)',\overline{m}_n^{ineq}(\beta)')'$ is a $\R^{d_m}$-valued sample moment function, $\beta$ is an unknown vector of parameters, and ${\cal B}$ is its parameter space. Specifically, let $\overline{m}_n(\beta) = n^{-1}\sum_{i=1}^n m(W_i,\beta)$, where $\{W_i\}_{i=1}^n$ is a sample of observable variables and $m(\cdot, \beta)$ is a function known up to the unknown parameter $\beta$. Let the number of equalities be denoted $d_{eq}$ and the number of inequalities be denoted $d_{ineq}$, so that $d_m = d_{eq}+d_{ineq}$. 

Suppose $\overline{m}_n(\beta)$ is linear in $\beta$. That is, $\overline{m}_n(\beta) = \overline{\Gamma}_n\beta + \overline{\eta}_n$ for $\overline{\Gamma}_n = \partial\overline{m}_n(\beta)/\partial\beta'$ and $\overline{\eta}_n = \overline{m}_n(\mathbf{0})$. 
Suppose ${\cal B} = \R^{d_\beta}$.\footnote{More generally, if ${\cal B}$ is a polyhedral set, then the deterministic inequalities that define ${\cal B}$ should be included when writing the hypothesis in the form of (\ref{generalH0}). We show how deterministic constraints can be incorporated in the next subsection.}
Consider the following types of problems: 
\begin{enumerate}
\item Specification Testing. 
When specification testing, one evaluates whether there exists a $\beta\in\R^{d_\beta}$ such that the moment (in)equalities hold. If not, then the model is misspecified. 
The hypothesis is 
\[
H_0: \E[\overline{m}_n^{eq}(\beta)] = \mathbf{0}\text{ and }\E[\overline{m}_n^{ineq}(\beta)]\geq \mathbf{0} \text{ for some }\beta\in \R^{d_\beta}. 
\]
This is the type of hypothesis considered in \cite{BugniCanayShi2015}. It can be written in the form of (\ref{generalH0}) with
\[
B = \left(\begin{smallmatrix}-I_{d_{eq}}&\mathbb{O}_{d_{eq}\times d_{ineq}}\\I_{d_{eq}}&\mathbb{O}_{d_{eq}\times d_{ineq}}\\\mathbb{O}_{d_{ineq}\times d_{eq}}&-I_{d_{ineq}}\end{smallmatrix}\right),~\mu = \E[\overline{\eta}_n], ~\Pi = \E[\overline{\Gamma}_n], ~\delta = \beta, ~D=\mathbb{O}_{(d_{eq}+d_m)\times d_\beta}, ~d = \mathbf{0},
\]
where $I_a$ is an identity matrix of size $a$ and $\mathbb{O}_{a\times b}$ is a $a\times b$ zero matrix. Note that we write the equalities as pairs of opposing inequalities via the first $2d_{eq}$ rows of $B$. 
\item Subvector Inference. In subvector inference, one constructs a confidence set for a subvector of the parameters. Suppose the subvector of interest is composed of the first $\ell$ elements of $\beta$ and denote it by $\theta$. 
Let $\delta$ denote the rest of the elements in $\beta$. 
Then a confidence set for $\theta$ can be constructed by testing the following hypothesis at each value of $\theta$ and collecting the values of $\theta$ at which the hypothesis is not rejected:
\[
H_0: \E[\overline{m}^{eq}(\theta,\delta)] = \mathbf{0}\text{ and }\E[\overline{m}^{ineq}(\theta,\delta)]\geq \mathbf{0} \text{ for some }\delta \in \R^{d_\delta}.
\]
This is the type of hypotheses considered in \cite{BugniCanayShi2017}, and it can be written in the form of (\ref{generalH0}) with
\[
B = \left(\begin{smallmatrix}-I_{d_{eq}}&\mathbb{O}_{d_{eq}\times d_{ineq}}\\I_{d_{eq}}&\mathbb{O}_{d_{eq}\times d_{ineq}}\\\mathbb{O}_{d_{ineq}\times d_{eq}}&-I_{d_{ineq}}\end{smallmatrix}\right),~\mu = \E[\overline{m}_n(\theta,\mathbf{0})], ~\Pi = \E[\overline{\Gamma}_n^{\delta}], ~D=\mathbb{O}_{(d_{eq}+d_m)\times (d_\beta-\ell)}, ~d = \mathbf{0},
\]
where $\overline{\Gamma}_n^{\delta} = \partial\overline{m}_n(\theta,\delta)/\partial\delta'$, which is the last $d_\beta -\ell$ columns of $\overline{\Gamma}_n$.
\end{enumerate}
Two remarks are in order regarding subvector inference: 

\noindent{\bf Remarks:}
(1) {\it Linearity in $\theta$ is not needed for subvector inference. The discussion remains unchanged if $\overline{m}_n(\theta,\delta)$ is linear in $\delta$ and nonlinear in $\theta$. }

(2) {\it \label{rem:orthonormalize} Sometimes, the parameter of interest is not a subvector of $\beta$, but instead a linear function of $\beta$. 
That is, $\theta = \Lambda\beta$ for a known $d_\theta\times d_\beta$ full-rank matrix $\Lambda$ with $d_\theta<d_\beta$. One approach is to 
reparameterize $\beta$ so that $\theta$ becomes a subvector of the new parameter. 
Let $\Lambda^c$ be a $(d_\beta-d_\theta) \times d_\beta$ row-augmenting matrix so that $\left(\begin{smallmatrix}\Lambda\\\Lambda^c\end{smallmatrix}\right)$ is nonsingular.\footnote{In Matlab,  one can find such a matrix by applying the function \texttt{null}(~) on $\Lambda$.} 
The reparameterization is given by $\gamma = \left(\begin{smallmatrix}\Lambda\\\Lambda^c\end{smallmatrix}\right)\beta$. 
Then, plugging $\beta = \left(\begin{smallmatrix}\Lambda\\\Lambda^c\end{smallmatrix}\right)^{-1}\gamma$ into \textup{(\ref{mi})} reparameterizes the model so that $\theta$ is a subvector of $\gamma$. Equivalently, one can add $\theta= \Lambda\beta$ to the model as deterministic constraints and treat $\beta$ as the nuisance parameter. }

We end this subsection with two examples of linear moment inequality models.

\begin{example}[Interval Outcome IV Regression]\label{ex:interval}  Consider a linear model $Y^\ast = X'\beta+\varepsilon$ with $\E[\varepsilon Z]=\mathbf{0}$, 
where $Z$ is a vector of instruments. 
The dependent variable $Y^\ast$ is not observed. 
Instead, we observe $Y^L$ and $Y^U$ that satisfy: $\E[Y^LZ]\leq \E[Y^\ast Z]\leq \E[Y^U Z]$. 
Then, we have the following unconditional moment inequalities:
\begin{equation}
\E[Y^L Z -ZX'\beta]\leq \mathbf{0}\text{ and }\E[ZX'\beta-Y^UZ]\leq \mathbf{0}.\label{uncmi}
\end{equation}
This is an example of \textup{(\ref{mi})}.  The interval outcome IV regression model was proposed in \textup{\cite{ManskiTamer2002}}. A generalization of such a model to a non-standard aggregate demand estimation problem is studied in \textup{GLS23}. In their generalization, $Y^L$ and $Y^U$ can be nonlinear functions of the parameter of interest. 

{\textup{CS23} and \textup{\cite{AndrewsRothPakes2023}} cover a related model where the inequalities in \textup{(\ref{uncmi})} hold conditionally on $Z$. Their tests apply to hypotheses that fix the coefficients on all the endogenous regressors. 
Then, the Jacobian of the inequalities with respect to the nuisance parameter is known after conditioning on $Z$ since it is not a function of the endogenous regressors. Their tests do not apply when a nuisance parameter is a coefficient on an endogenous regressor.} 
\end{example}

\begin{example}[Panel Data Multinomial Choice Model]\label{ex:multi} Consider a panel data multinomial choice model where individual $i$ at time $t$ obtains utility $u_{ijt}$ from choosing option $j$.  Let $y_{ijt}=1$ if $i$ chooses $j$ at time $t$, and $y_{ijt}=0$ otherwise. The random utility model stipulates that $y_{ijt}=1$ if and only if $u_{ijt}\geq u_{ij't}\text{ for all }j'\in \{0,1,2,\dots,J\}$. 
Consider the linear index model of the random utility: $u_{ijt} = X_{ijt}'\gamma+\lambda_{ij}+\varepsilon_{ijt}$, where $X_{ijt}$ is a vector of observed covariates, $\lambda_{ij}$ is an unobserved fixed effect, and $\varepsilon_{ijt}$ is an idiosyncratic taste shock. Normalize $X_{i0t}=\mathbf{0}$. For illustration, let there be only two time periods ($t=1,2$) and let the individuals be independent and identically distributed. Under a conditional time homogeneity assumption on $\varepsilon_{ijt}$, \cite{ShiShumSong2018} show that the following moment inequality holds: 
\begin{equation}
\E[\Delta y_i' \Delta X_{i}\gamma | X_{i}]\geq 0, \label{pmn0}
\end{equation}
where $\Delta y_i$ is a $J$-dimensional vector with its $j$th element being $y_{ij2}-y_{ij1}$, $\Delta X_{i}$ is a $J\times d_x$ dimensional matrix with its $j$th row being $(X_{ij2}-X_{ij1})'$, and $X_{i}$ collects $X_{ijt}$ for all $j\in \{1,2,\dots,J\}$ and $t\in\{1,2\}$. 
Note that none of the elements of $\Delta y_i' \Delta X_{i}$ can be considered exogenous because they depend on $y_{ijt}$. Thus, the inequalities do not fit into the conditional moment inequality setup in CS23 or  \cite{AndrewsRothPakes2023}. 
Let ${\cal I}(X_i)$ be a non-negative vector-valued instrumental function. Then, 
\begin{equation}
\E[{\cal I}(X_i)\Delta y_i' \Delta X_{i}\gamma]\geq \mathbf{0}\label{pmn}
\end{equation}
rewrites the inequalities in \textup{(\ref{pmn0})} into the form of \textup{(\ref{mi})}.\footnote{In this model, a normalization is usually imposed on $\gamma$, such as the first element being one, that can be accommodated by simply setting that element to 1.} 
\end{example}

\subsection{Parameters Bounded by Linear Programming}\label{sub:bound_lp}
Recently, an important class of models have arisen in the structural estimation literature where a scalar parameter of interest is not point-identified but bounded by the values of linear programming problems (LPPs).\footnote{Some notable examples appear in KT16, MST18, \cite{KKLS2021}, and \cite{SyrgkanisTamerZiani2021}.} 
The constraint sets for the LPPs are defined by linear (in)equalities, where the constants and coefficients in the (in)equalities are unknown parameters to be estimated. 
To fix ideas, suppose the parameter of interest is 
\begin{equation}
\theta = \gamma' \delta,\label{target}
\end{equation}
where $\gamma\in\R^{d_\delta}$ is a known vector that defines a linear combination of the nuisance parameters $\delta$. 
The constraint set is defined by a collection of (in)equalities: 
\begin{equation}
\Gamma\delta = m ~\text{ and }~ A\delta\leq b,\label{deltaID}
\end{equation}
where the elements of $m\in \R^{d_{\Gamma}}$ and $\Gamma\in\R^{d_\Gamma\times d_\delta}$ are known or can be consistently estimated,  $A$ is a known $d_A\times d_\delta$ matrix, and $b\in\R^{d_A}$ is a known vector.\footnote{We focus on the case $A$ and $b$ are known because it is common in applications. Unknown and estimated $A$ and/or $b$ can also be covered.} 
The known inequalities in (\ref{deltaID}) define a parameter space for $\delta$. 
For example, if $\delta$ is a vector of weights, then $\delta$ takes values in the simplex, which can be represented by an appropriate choice of $A$ and $b$. 

Inference on $\theta$ can be based on the GCC test for the existence of a value of $\delta$ that simultaneously satisfies (\ref{target}) and (\ref{deltaID}) at a hypothesized value of $\theta$. 
A confidence interval for $\theta$ can be calculated by inverting a family of tests. 
The restrictions in (\ref{target}) and (\ref{deltaID}) can be written in the form of (\ref{generalH0}) with 
\begin{align}
B = \left(\begin{smallmatrix} \mathbf{0}'_{d_{\Gamma}}\\
	\mathbf{0}'_{d_{\Gamma}}\\
         I_{d_\Gamma}\\
       - I_{d_{\Gamma}}\\
         \mathbb{O}_{d_A\times d_{\Gamma}}\end{smallmatrix}\right), 
~\mu =-m,~\Pi =\Gamma, ~
D = \left(\begin{smallmatrix}
\gamma'\\
-\gamma'\\
\mathbb{O}_{d_{\Gamma}\times d_\delta}\\
\mathbb{O}_{d_{\Gamma}\times d_\delta}
\\A
\end{smallmatrix}\right),  \text{ and }
d = \left(\begin{smallmatrix}\theta\\-\theta\\\mathbf{0}_{d_{\Gamma}}\\\mathbf{0}_{d_{\Gamma}}\\b\end{smallmatrix}\right). \label{known_gamma_translation}
\end{align}
Note that the first two rows and the last $d_A$ rows of $B$ are zeros to accommodate the deterministic constraints $\theta = \gamma\delta$ and $A\delta\leq b$. 

Another approach to inference on $\theta$ is to use LPPs. 
From the point of view of identification, $\theta$ is bounded sharply by $\theta_\min = \min_{\delta:~\text{(\ref{deltaID}) holds}}\gamma'\delta$ and $\theta_\max = \max_{\delta:~\text{(\ref{deltaID}) holds}}\gamma'\delta$. 
Then one can construct one-sided confidence intervals that are bounded from below (above) for $\theta_{\min}$ ($\theta_{\max}$) and use them as confidence intervals for $\theta$. 
Inference for the value of a LPP is generally based on plugging the estimators of $m$ and $\Gamma$ into the LPPs and simulating or bootstrapping the asymptotic distributions of these estimators. 
However, the asymptotic distributions depend on which corner or face of the constraint set solves the LPP, which is not smooth as a function of the estimated reduced-form parameters. 
Thus, the na\"{i}ve strategy of bootstrapping the value of a LPP is generally invalid. 
In order to obtain valid inference, the LPP literature recommends various modifications that involve tuning parameters and/or simulating/bootstrapping nonstandard distributions. 
Using the GCC test to obtain a confidence interval for $\theta$ avoids these complications. 

\noindent{\bf Remarks:} (1){\it
The GCC test is valid for any hypothesized value of $\theta$ in $[\theta_\min, \theta_\max]$, including the endpoints. Thus, the GCC test is a valid way to do inference for the value of a LPP. However, the GCC test may direct power in a one-sided way. In particular, when $\theta_\max-\theta_\min$ is not small, the GCC test is effectively one-sided for the value of the LPP (though still two-sided for $\theta$). Therefore, when the parameter of interest is $\theta_\min$ or $\theta_\max$ instead of $\theta$ and the researcher desires two-sided inference, then  some of the inference recommendations from the LPP literature are preferred. }

(2) \label{rem:known_gamma}{\it
In some applications, $\gamma$ is unknown and estimated. 
There is a convenient way to write \textup{(\ref{target})} and \textup{(\ref{deltaID})} in the form of \textup{(\ref{generalH0})}. 
The idea is to add an element to $\mu$ that is always zero, while adding $\gamma'$ as a row of $\Pi$.
Specifically, \textup{(\ref{generalH0})} can be satisfied when $\gamma$ is unknown by letting 
\begin{equation}
B = \left(\begin{smallmatrix}1&\mathbf{0}'_{d_{\Gamma}}\\
                                         -1&\mathbf{0}'_{d_{\Gamma}}\\
                                          \mathbf{0}_{d_{\Gamma}}&I_{d_{\Gamma}}\\
                                         \mathbf{0}_{d_{\Gamma}}&-I_{d_{\Gamma}}\\
                                         \mathbf{0}_{d_A}&\mathbb{O}_{d_A\times d_{\Gamma}}\end{smallmatrix}\right), ~\mu = \left(\begin{smallmatrix}0\\ -m\end{smallmatrix}\right),~
\Pi = \left(\begin{smallmatrix}\gamma'\\
\Gamma\end{smallmatrix}\right), ~
D = \left(\begin{smallmatrix}
\mathbf{0}'_{d_\delta}\\
\mathbf{0}'_{d_\delta}\\
\mathbb{O}_{d_{\Gamma}\times d_\delta}\\
\mathbb{O}_{d_{\Gamma}\times d_\delta}
\\A
\end{smallmatrix}\right), \text{ and }
d = \left(\begin{smallmatrix}\theta\\-\theta\\\mathbf{0}_{d_{\Gamma}}\\\mathbf{0}_{d_{\Gamma}}\\b\end{smallmatrix}\right).\label{LP1}
\end{equation}
More generally, if one of the equations in \textup{(\ref{deltaID})} has a known intercept but the corresponding row of the Jacobian is unknown, then the intercept should be included in $d$ while the row of the Jacobian should be included in $\Pi$, possibly including a zero in $\mu$ and augmenting the columns of $B$.\footnote{This idea can be applied more generally. Consider a generalization of (2), where $C$ can be written as $C=B_1\Pi_1+B_2\Pi_2+D$ and $b=d-B_1\mu_1$. That is, $B_2\Pi_2$ is a component of $C$ that needs to be estimated but cannot be written as a linear combination of the columns of $B_1$. Then, we can satisfy the structure in (2) by taking $B=[B_1,B_2]$ and $\mu=(\mu'_1,\mathbf{0}')'$. This works because we do not require the estimator of $\mu$ to have a nonsingular variance matrix, but we only require the estimator of $\mu+\Pi\delta$ to have a nonsingular variance matrix for a value of $\delta$ in the identified set; see Assumption \ref{Assumption1}(iii), below.}  
This demonstrates the flexibility of the specification in \textup{(\ref{generalH0})}. }

We demonstrate the relevance of the setup in (\ref{target}) and (\ref{deltaID}) with examples.

\begin{example}[Discrete IV Regression with Shape Restrictions]\label{ex:npiv}
IV regressions with discrete regressors and instruments are common in  practice. Prominent examples include \cite{PermuttHebel1989}, \cite{AngristKrueger1991}, and
\cite{Angrist1998}. 
\cite{FreybergerHorowitz2015} consider an IV model with discrete $X_i$ and $Z_i$: 
\begin{equation}
Y_i = \delta(X_i)+\varepsilon_i,~~~
\E[\varepsilon_i|Z_i]=0,\label{npiv_setup}
\end{equation} 
where $Y_i$ is the dependent variable, $X_i$ a discrete endogenous regressor, $Z_i$ is a discrete instrument, $\delta(\cdot)$ an unknown function that represents the structural relationship between $X_i$ and $Y_i$, and $\varepsilon_i$ is the error term. 
While linearity of $\delta(\cdot)$ is often assumed, \cite{FreybergerHorowitz2015} emphasize that no such functional form restrictions are needed. 

Let the support of $X_i$ and $Z_i$ be $\{x_1,\dots,x_{d_x}\}$ and $\{z_1,\dots,z_{d_z}\}$, respectively. 
Let $\delta_k = \delta(x_k)$ for $k\in\{1,\dots,d_x\}$, and $\delta= (\delta_1,\dots,\delta_{d_x})'$. 
The model in \textup{(\ref{npiv_setup})} implies that $\delta$ satisfies the equalities in \textup{(\ref{deltaID})} with 
\begin{equation}
m = \left(\begin{smallmatrix}\E[Y_i1\{Z_i=z_1\}]\\ \E[Y_i1\{Z_i=z_2\}]\\ \vdots\\ \E[Y_i1\{Z_i=z_{c_z}\}]\end{smallmatrix}\right) \text{ and } 
\Gamma =\left(\begin{smallmatrix}\mbbP(X_i =x_1,Z_i=z_1)&\mbbP(X_i =x_2,Z_i=z_1)&\cdots&\mbbP(X_i =x_{d_x},Z_i=z_1)\\
\mbbP(X_i =x_1,Z_i=z_2)&\mbbP(X_i =x_2,Z_i=z_2)&\cdots&\mbbP(X_i =x_{d_x},Z_i=z_2)\\
\vdots&\vdots&\ddots&\vdots\\
\mbbP(X_i =x_1,Z_i=z_{d_z})&\mbbP(X_i =x_2,Z_i=z_{d_z})&\cdots&\mbbP(X_i =x_{d_x},Z_i=z_{d_z})
\end{smallmatrix}\right). \label{npiv_mom}
\end{equation}
Note that $m$ and $\Gamma$ are reduced-form parameters that can be estimated by sample averages. 

When $d_z<d_x$, $\delta$ is not point identified by the equalities in \textup{(\ref{deltaID})}. 
To sharpen identification, \cite{FreybergerHorowitz2015} add shape restrictions of the form $A\delta\le b$ for some known matrix $A$ and known vector $b$. 
This covers several types of shape restrictions including monotonicity and/or convexity of $\delta(\cdot)$. 
The parameter of interest is typically a linear function of $\delta$. 
For example, $\theta = [-1,1,0,...,0]\delta=\delta_2-\delta_1$ is the effect of changing $X$ from $x_1$ to $x_2$. Thus, inference for the structural function in \textup{(\ref{npiv_setup})} falls into the framework of \textup{(\ref{target})-(\ref{deltaID})}. 
\end{example}

\begin{example}[Policy Relevant Treatment Effects (PRTE)]\label{ex:PRTE} 
Treatment effects that are relevant for policy are often not equal to the local average treatment effects (LATEs) associated with any available instrument. In a standard program evaluation model, they are weighted averages of underlying marginal treatment responses (MTRs), where the weights are identified or known, but the MTRs are not. MST18 show that the MTRs can be partially identified from the LATEs, or more generally from IV-like estimands, because these estimands are weighted averages of the MTRs. Then, bounds on a PRTE can be deduced from the identified set of the MTRs. Under a parameterization of the MTRs, MST18 show that the bounds are values of LPPs.\footnote{In some cases, MST18 show that even the nonparametric bounds can be written as finite-dimensional LPPs; see their Proposition 4.} 

To be specific, consider an outcome $Y$, a binary treatment indicator $D$, and covariates $Z=(X,Z_0)$, where $X$ is a vector of control variables and $Z_0$ is the vector of excluded instruments. 
Let $Y_0$ and $Y_1$ denote the potential outcomes corresponding to the two treatment arms. 
Suppose treatment is determined by a weakly separable selection equation: $D=\mathds{1}\{p(Z)\ge U\}$ for some unobserved uniformly distributed variable $U$, where $p(Z)$ is the propensity score. 
The MTR functions are defined to be 
\begin{equation}
\kappa_0(u,x)=\mathbb{E}[Y_0|U=u, X=x] \text{ and }\kappa_1(u,x)=\mathbb{E}[Y_1|U=u, X=x]. 
\end{equation}
A wide range of PRTEs can be written as weighted averages of the MTRs. 
The MTRs can be parameterized as a linear combination of functions belonging to some basis. 
Let $\delta$ be a vector of coefficients on the basis functions. 
Then, a PRTE, say $\theta$, can be written as $\theta=\gamma'\delta$, where $\gamma$ is a vector of weighted averages applied to each basis function. 
This writes the parameter of interest in the form of \textup{(\ref{target})}. 

An IV-like estimand is a parameter of the form $m_s=\mathbb{E}[s(D,Z)Y]$ for some identified or known function $s(D,Z)$. 
MST18 show that every IV-like estimand can be written as a weighted average of the MTRs with a simple formula for the weights. 
This means that each $m_s$ can be written as a linear combination of $\delta$. 
If $m$ denotes a vector of finitely many IV-like estimands, then $m$ satisfies the equalities in \textup{(\ref{deltaID})} with each row of $\Gamma$ being the weighted average of the basis functions with weights corresponding to the IV-like estimand. 

In addition, MST18 allow the researcher to specify additional shape restrictions on the MTEs, in a similar manner as Example \ref{ex:npiv}. 
Depending on the choice of basis functions, these can sometimes be written as deterministic linear inequalities on $\delta$. 
Overall, this shows that $\theta$ is bounded by LPPs and satisfies the structure in \textup{(\ref{target})} and \textup{(\ref{deltaID})}. 
We demonstrate the GCC test in a simulation of this example in Appendix \ref{app:MTE}. 
\end{example}

\begin{example}[State Transition Probabilities]\label{ex:labor} 
Consider a model where individuals choose between finitely many states, $s\in\mathcal{S}$. 
A policy change may induce individuals to choose a different state. 
The state transition probabilities are defined by 
\begin{equation}
\delta_{s, s'}=\Pr(S_a=s'|S_b=s) \text{ for }s, s'\in\mathcal{S}, 
\end{equation}
where $S_a$ denotes an individual's choice after a policy change and $S_b$ denotes an individual's choice before a policy change. 
These transition probabilities represent the fraction of individuals who start in state $s$ and change to state $s'$ in response to the policy change. 

The transition probabilities are not identified from the data if all one has is a repeated cross-section of individuals before and after the treatment, or a cross-section of individuals randomly assigned to different policy regimes. On the other hand, the marginal probabilities of $S_a$ and $S_b$, denoted $p_a$ and $p_b$, respectively, are identified. Then, the transition probabilities are partially identified through their relationship with $p_a$ and $p_b$: 
\begin{equation}
p_a=\Delta p_b, \label{conditional_marginal_relation}
\end{equation} 
where $\Delta$ is the matrix of $\delta_{s,s'}$ values for $s,s'\in\mathcal{S}$. These equations fit the structure of the equalities in \textup{(\ref{deltaID})}, with elements of $\Delta$ forming the nuisance parameter vector $\delta$.\footnote{One of the equations in (\ref{conditional_marginal_relation}) is redundant because the sum of the elements in $p_a$ is one. This is not a problem.} 

In addition, the transition probabilities satisfy: 
\begin{equation}
\delta_{s,s'}\ge 0 \text{ for all } s, s'\in\mathcal{S} \text{ and } \sum_{s'\in\mathcal{S}}\delta_{s,s'}=1 \text{ for all } s\in\mathcal{S}. \label{s_inequalities}
\end{equation}
These restrictions can be written as the inequalities in \textup{(\ref{deltaID})} with known $A$ and $b$. 
Then, inference for a particular transition probability fits \textup{(\ref{target})} and \textup{(\ref{deltaID})}. 

KT16 use such a model to study women's labor supply in response to a welfare policy change. 
In their model, the transition probabilities represent the fraction of women who gain/lose employment and/or register for welfare. 
KT16 analyze the data from an experiment that randomized exposure to a new welfare policy. 
The policy change could have heterogeneous effects on labor supply through the extensive margin (encouraging some women to gain employment) and the intensive margin (encouraging some women to decrease their hours or wages in order to qualify for welfare). 
The experiment identifies the distribution of employment, welfare participation, and income for women under two welfare policies. 
KT16 point out that the details of the policy change, combined with weak assumptions on the utility functions of the individuals, restrict many of the transition probabilities to zero. 
They then manually solve for bounds on each of the remaining transition probabilities from \textup{(\ref{conditional_marginal_relation})} and \textup{(\ref{s_inequalities})} and find significant labor market effects along both the extensive and intensive margins. 
In Section \ref{sec:empirical_KT}, we employ the GCC test to construct confidence intervals for this example, avoiding the need to solve for the bounds manually.
\end{example}

\section{The Generalized Conditional Chi-Squared Test}
\label{sec:CC_Def}

In this section, we define the generalized conditional chi-squared (GCC) test. 
The test depends on $\overline{\mu}_n$ and $\overline{\Pi}_n$, consistent and asymptotically normal estimators of $\mu$ and $\Pi$. 
The test also depends on $\overline{\Omega}_n$, a consistent estimator of the asymptotic variance of $(\overline{\mu}'_n,\textup{vec}(\overline{\Pi}_n)')'$, where $\textup{vec}(A)$ denotes the vectorization of a matrix, $A$. 

We start with preliminary $H_0$-restricted estimators for $\mu$ and $\delta$:
\begin{equation}
(\widetilde\mu_n,\widetilde\delta_n)=\underset{\mu,\delta: B\mu+\overline{C}_n\delta\le d}{\argmin} n(\overline{\mu}_n-\mu)'\widehat\Upsilon_n(\overline{\mu}_n-\mu), \label{tildemudef}
\end{equation}
where $\widehat\Upsilon_n$ is a preliminary weight matrix that is converging in probability to a deterministic positive definite limit.\footnote{This is similar to the weight matrix used in the first step of two-step GMM; the limit theory is invariant to the choice of weight matrix used. We take $\widehat{\Upsilon}_n$ to be the identity matrix.} 
While $\widetilde\mu_n$ is always unique, $\widetilde\delta_n$ need not be. 
In that case, we can take $\widetilde\delta_n$ to be the minimizer that has the smallest norm: 
\begin{equation}
\widetilde\delta_n=\underset{\delta: B\widetilde\mu_n+\overline{C}_n\delta\le d}{\argmin} \|\delta\|. \label{tildedeltadef}
\end{equation}
Equation (\ref{tildedeltadef}) is a tie-breaking procedure that is only used if the value of $\widetilde\delta_n$ that minimizes (\ref{tildemudef}) is not unique.\footnote{Equation (\ref{tildedeltadef}) uses the Euclidean norm, although it could be replaced with any other norm.} 
Later, we add assumptions so that $\widetilde\delta_n$ is consistent for its population analogue, $\delta^\ast_F$.\footnote{We formally define $\delta^\ast_F$ in (\ref{deltanF}), below. For now, it is enough to think of $\delta^\ast_F$ as the probability limit of $\widetilde\delta_n$.} 

We next estimate the asymptotic variance of $\overline{\mu}_n+\overline{\Pi}_n\delta^\ast_F$ using 
\begin{equation}
\widetilde{\Sigma}_n=\left(\begin{matrix} I_{d_\mu}\\ \widetilde{\delta}_n\otimes I_{d_\mu}\end{matrix}\right)'\overline{\Omega}_n \left(\begin{matrix} I_{d_\mu}\\ \widetilde{\delta}_n\otimes I_{d_\mu}\end{matrix}\right), 
\end{equation}
where $\otimes$ denotes the Kronecker product. Define the QLR test statistic as 
\begin{equation}
T_n=\min_{\mu,\delta: B\mu+\overline{C}_n\delta\le d} n(\overline{\mu}_n-\mu)'\widetilde\Sigma^{-1}_n(\overline{\mu}_n-\mu). \label{QLRnew}
\end{equation}
Let $(\widehat\mu_n,\widehat\delta_n)$ solve the minimization problem in (\ref{QLRnew}). 
Similar to the initial estimators, $\widehat\mu_n$ is always unique and $\widehat\delta_n$ may not be unique. 
In that case, we can take $\widehat\delta_n$ to be any minimizer.\footnote{Note the subtlety in the definitions of $\widetilde\delta_n$ and $\widehat\delta_n$: $\widetilde\delta_n$ is required to minimize (\ref{tildedeltadef}) because it has to be consistent for $\delta^\ast_F$, while $\widehat\delta_n$ can be arbitrary because its consistency is not essential.} 

For any $K\subseteq\{1,...,d_C\}$, let  $|K|$ denote the cardinality of $K$, and let $I_K$ denote the submatrix of the $d_C\times d_C$ identity matrix formed by taking the rows corresponding to the indices in $K$. 
In this way, conformable premultiplication of $I_K$ to a matrix $B$  selects the rows of $B$ corresponding to the indices in $K$ to form the submatrix of $B$ with $|K|$ rows.

We are ready to define the DoF and the critical value. 
Let $\widehat b_n=d-B\widehat\mu_n$ and $\widehat{K}= \{j\in\{1,\dots,d_C\}: e'_j\overline{C}_n\widehat\delta_n= e'_j\widehat b_n\}$, where $e_j$ is the $j$th standard normal basis vector. Note that $\widehat{K}$ denotes the set of indices at which the inequality constraint holds as equality for the minimizers in (\ref{QLRnew}). 
Let 
\begin{equation}
\widehat{s}_n = \textup{rk}\left(I_{\widehat K}\left[B,D\right]\right)-\textup{rk}(I_{\widehat K}\overline{C}_n). \label{rhatformula}
\end{equation}
For a significance level $\alpha$, let $cv(s,\alpha)$ denote the $1-\alpha$ quantile of the $\chi^2$ distribution with DoF equal to $s$. 
The GCC test rejects if $T_n$ is greater than $cv(\widehat s_n,\alpha)$. 

We end this section with some remarks on the definition of the GCC test. 

\noindent\textbf{Remarks: }{\it 
\textup{(1)} Equation \textup{(\ref{rhatformula})} gives an algebraic formula for the DoF. 
In CS23, the DoF, $\widehat{r}_n$, is defined as the dimension of the span of a polyhedral cone. 
Theorem \textup{\ref{lem:rhat}}, below, shows that $\widehat{r}_n$ is equal to $\widehat s_n$ with probability one in the limit. 
For computational reasons, we recommend $\widehat s_n$ over $\widehat r_n$. 

\textup{(2)} The GCC test is in some sense a ``na\"ive'' test. 
It is equivalent to first selecting the inequalities that are active according to the finite sample CQPP in} (\ref{QLRnew}){\it, pretending that these inequalities are equalities, and then forming a Sargan-Hansen's $J$-test for overidentification of a model defined by these equalities. 
To see this, note that the typical case is when $\rk(I_{\widehat K}[B,D])=|\widehat K|$ and $\rk(I_{\widehat K}C)=d_\delta$. 
Then, the DoF used by the GCC test is $|\widehat K|-d_\delta$, which represents the number of active inequalities minus the number of nuisance parameters. 

\textup{(3)} The weight matrix, $\widetilde{\Sigma}_n$, used in the definition of the GCC test is an estimator of the asymptotic variance of $\sqrt{n}(\overline{\mu}_n+ \overline\Pi_{n}\delta^\ast_F)$ instead of that of $\sqrt{n}\overline{\mu}_n $. The latter is used in CS23 for the sCC test. 
The new weight matrix accounts for the estimation error in $\overline{\Pi}_n$. To gain some  intuition for this weight matrix,  note that $T_n$ can be rewritten as
\begin{align}
{T}_n &= \min_{\eta,\gamma:B\eta+\overline{C}_n\gamma\leq h_n}(X_n-\eta)'\widetilde{\Sigma}_n^{-1}(X_n-\eta), \label{Tnrep}
\end{align}
where $\gamma = \sqrt{n}(\delta-\delta^\ast_F)$, $\eta = \sqrt{n}(\mu-\mu_F+(\overline{\Pi}_n-\Pi_F)\delta_F^\ast)$,  $X_n = \sqrt{n}(\overline{\mu}_n - \mu_F +(\overline{\Pi}_n-\Pi_F)\delta_F^\ast)$, and $h_n = \sqrt{n}(d-C_F\delta_F^\ast-B\mu_F)$, where $\mu_F$, $\Pi_F$, and $C_F$ stand for the true values of $\mu$, $\Pi$, and $C$. With this change of variables, one can see that $\widetilde{\Sigma}_n$ estimates the asymptotic variance of $X_n$. 

\textup{(4)} The GCC test is very easy to compute since it only requires solving two CQPPs. Efficient interior-point algorithms for CQPPs are available in most commonly used software, and they are known to have a worst-case computational complexity of $O((d_C+d_\delta)^4)$, where $d_C$ is the number of inequalities and $d_\delta$ is the dimension of the nuisance parameter. This is only slightly slower than the computational complexity of LPPs, which is $O((d_C+d_\delta)^{3.5})$; see \cite{Karmarkar1984} and \cite{YeTse1989}. Most importantly, no simulation or bootstrap is needed to perform the test.}

\section{Theoretical Properties}\label{sec:momineq}

In this section, we present the three main theoretical results of the paper: (1) a theorem that justifies using $\widehat s_n$ and hence simplifies the rank calculation, (2) a theorem that shows the consistency of $\widetilde\delta_n$ for $\delta^\ast_F$, and (3) a theorem that shows the uniform asymptotic validity of the GCC test. 

\subsection{Rank Calculation Theorem}

We now present the result that justifies using $\widehat s_n$. 
One can also define the DoF in the GCC test using the Karush-Kuhn-Tucker (KKT) multipliers. 
Let $\widehat{L}=\{j\in\{1,...,d_C\}: e'_j\widehat\psi_n>0\}$, where $\widehat\psi_n$ is a vector of nonnegative multipliers that satisfy the KKT conditions for (\ref{QLRnew}). 
Then 
\begin{equation}
\widehat{t}_n = \textup{rk}\left(I_{\widehat L}\left[B,D\right]\right)-\textup{rk}(I_{\widehat L}\overline{C}_n) \label{rtildeformula}
\end{equation}
is another way to define the DoF. 
Also note that 
\begin{equation}
\widehat r_n=\textup{dim}(B'\{h\geq \mathbf{0}: h'\overline{C}_n=\mathbf{0}, h'(B\widehat{\mu}_n - d) = 0\})	\label{r_n}
\end{equation}
is the definition of the DoF for the sCC test in CS23, where $\textup{dim}(\cdot)$ denotes the dimension of a set, or the maximum number of linearly independent elements. 
The following theorem shows that $\widehat s_n$, $\widehat t_n$, and $\widehat r_n$ are equal with probability one in the limit. 
This theorem plays a vital role in the proof of the uniform asymptotic validity of the GCC test, below. 

\begin{theorem}\label{lem:rhat} Suppose $\widetilde{\Sigma}_n$ is positive definite. 
Then,
\textup{(a)} $\widehat t_n\le\widehat r_n\le \widehat s_n$. 

\textup{(b)} For fixed $\overline{C}_n$ and $\widetilde\Sigma_n$, there is a Lebesgue measure zero subset of $\R^{d_\mu}$, ${\cal M}_0$, such that
\begin{equation}
\widehat r_n=\widehat s_n=\widehat t_n, \text{ unless }\overline{\mu}_n\in{\cal M}_0. \label{r_equality}
\end{equation}
\end{theorem}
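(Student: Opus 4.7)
The plan is to express $\widehat r_n$, $\widehat s_n$, and $\widehat t_n$ as dimensions of $B'$-images of a nested family of cones and subspaces in $\R^{d_C}$, deducing part (a) by set inclusion and part (b) by a dimension count on the KKT-induced polyhedral subdivision of $\R^{d_\mu}$.

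The core reformulation is $\widehat r_n = \dim(B' H)$, where $H = \{h \ge 0 : h' \overline C_n = 0,\ h_j = 0 \text{ for } j \notin \widehat K\}$: letting $s := d - B\widehat\mu_n - \overline C_n \widehat\delta_n$ denote the primal slack ($s \ge 0$ with $s_j = 0$ iff $j \in \widehat K$), the equation $h'(B\widehat\mu_n - d) = -h's$ combined with $h \ge 0$, $h' \overline C_n = 0$, and $s \ge 0$ forces $h_j = 0$ off $\widehat K$. Setting $V^K = \{h : h'\overline C_n = 0,\ h_j = 0 \text{ for } j \notin \widehat K\}$ and $V^L$ analogously with $\widehat L$, a rank-nullity computation for $h \mapsto B'h$ restricted to $V^K$---using $\overline C_n = D + B\overline\Pi_n$ to identify its kernel as $\{h \in V^K : h' I_{\widehat K}[B, D] = 0\}$---yields $\dim(B' V^K) = \widehat s_n$ and $\dim(B' V^L) = \widehat t_n$. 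For part (a), $H \subseteq V^K$ gives $\widehat r_n \le \widehat s_n$; for $\widehat t_n \le \widehat r_n$, the KKT multiplier $\widehat\psi_n$ lies in $H$ and is strictly positive on $\widehat L$ and zero off $\widehat L$, so for any $v \in V^L$ the shift $v + \lambda \widehat\psi_n$ lies in $V^L \cap \R^{d_C}_+ \subseteq H$ for $\lambda$ large (the inclusion using $\widehat L \subseteq \widehat K$), placing $v$ in $\textup{span}(H)$ and hence $B' V^L \subseteq B'(\textup{span}(H))$.

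For part (b), by part (a) it suffices to identify a Lebesgue-null $\mathcal M_0$ outside of which $\widehat s_n = \widehat t_n$. For each pair $(K, L)$ with $L \subseteq K \subseteq \{1, \ldots, d_C\}$, let $R_{K, L} \subseteq \R^{d_\mu}$ be the set of $\overline\mu_n$ for which some valid $(\widehat\delta_n, \widehat\psi_n)$ gives $\widehat K = K$ and $\widehat L = L$; there are finitely many such pairs and each $R_{K, L}$ is a finite union of polyhedra. I would prove the dimensional bound $\dim R_{K, L} \le d_\mu - (\widehat s_n(K) - \widehat t_n(L))$ by substituting $\mu$ from the $\mu$-stationarity condition into the active equalities $I_K(B\mu + \overline C_n\delta) = I_K d$, dotting with the left null space of $I_K \overline C_n$ to eliminate $\delta$, then using the $\delta$-stationarity $(I_L \overline C_n)'\psi_L = 0$ to eliminate $\psi_L$ up to a rank-$\textup{rk}(I_L\overline C_n)$ reduction; the count of resulting independent affine constraints on $\overline\mu_n$ equals $\widehat s_n(K) - \widehat t_n(L)$. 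Consequently, $R_{K, L}$ has Lebesgue measure zero whenever $\widehat s_n(K) > \widehat t_n(L)$, and setting $\mathcal M_0 := \bigcup_{(K, L):\,\widehat s_n(K) > \widehat t_n(L)} R_{K, L}$ completes the proof. The main obstacle is the final rank identity linking $\widehat s_n - \widehat t_n$ to the number of independent affine constraints on $\overline\mu_n$; I expect it to follow from a careful tracking of the row-space interplay between $[B, D]$ and $\overline C_n$ under $\overline C_n = D + B\overline\Pi_n$ on the supports $K$ and $L$, together with consistent canonical tie-breaking for $\widehat\delta_n$ and $\widehat\psi_n$ when they are not unique.
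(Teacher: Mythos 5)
Your part (a) is correct and is in fact a more direct route than the paper's. The paper proves $\widehat r_n\le\widehat s_n$ by first passing to the projected representation $\ppoly(B,d;\overline{C}_n)=\poly(A,g)$ from Lemma 1 of CS23 and identifying $\widehat r_n$ with $\rk(A_{\widehat J})$, whereas you work directly with the definition in (\ref{r_n}): your observation that $h'(B\widehat\mu_n-d)=-h's$ for the primal slack $s$ converts the cone there into $H=V^K\cap\{h\ge\mathbf{0}\}$, where $V^K$ is a linear subspace whose image under $B'$ has dimension exactly $\widehat s_n$ (by the same rank--nullity computation as Lemma \ref{lem:MB2}(a),(b), using $\overline{C}_n=D+B\overline{\Pi}_n$), so $\widehat r_n\le\widehat s_n$ follows from the inclusion $H\subseteq V^K$ alone. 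Your argument for $\widehat t_n\le\widehat r_n$ --- shifting an arbitrary $v\in V^L$ by a large multiple of $\widehat\psi_n$ to land in $H$, whence $V^L\subseteq\textup{span}(H)$ --- is the same perturbation idea the paper uses, but executed without re-solving the KKT system in the $(A,g)$ coordinates. Both inequalities in (a) are fully justified in your write-up.

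Part (b) follows the same strategy as the paper, but the step you flag as ``the main obstacle'' is precisely the content of the paper's Lemma \ref{lem:KKTB2_new}, and you have not carried it out. It does close, and more simply than by counting ``independent affine constraints'': (i) stationarity $2n\widetilde\Sigma_n^{-1}(\overline\mu_n-\widehat\mu_n)=B'\widehat\psi_n$, combined with $\overline{C}_n'\widehat\psi_n=\mathbf{0}$ and the support restriction of $\widehat\psi_n$ to $L$, forces $\overline\mu_n-\widehat\mu_n=\widetilde\Sigma_nB'_LM_{C_L}I_L\widehat\psi_n/(2n)$, which lies in a linear subspace of dimension $\rk(M_{C_L}B_L)=\widehat t_n(L)$; (ii) premultiplying the active equalities $I_K(B\widehat\mu_n+\overline{C}_n\widehat\delta_n)=I_Kd$ by $M_{C_K}$ annihilates $\widehat\delta_n$ and confines $\widehat\mu_n$ to an affine subspace of codimension $\rk(M_{C_K}B_K)=\widehat s_n(K)$. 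Hence $\overline\mu_n$ lies in a Minkowski sum of dimension at most $d_\mu-\bigl(\widehat s_n(K)-\widehat t_n(L)\bigr)$, which is exactly your claimed bound on $\dim R_{K,L}$, and the union over the finitely many pairs with $\widehat s_n(K)>\widehat t_n(L)$ is Lebesgue-null. Two smaller points: rather than imposing ``canonical tie-breaking,'' let $R_{K,L}$ collect the $\overline\mu_n$ for which $K$ and $L$ arise from possibly \emph{different} optimal triples --- since $\widehat\mu_n$ is unique, (i) and (ii) apply to different KKT solutions sharing the same $\widehat\mu_n$, so the union still covers every $\overline\mu_n$ with $\widehat s_n\neq\widehat t_n$ under any selection; and reducing to $\widetilde\Sigma_n=I_{d_\mu}$ at the outset (replacing $B$ by $B\widetilde\Sigma_n^{-1/2}$) streamlines the algebra.
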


\noindent\textbf{Remarks:} 
(1) {\it Theorem \textup{\ref{lem:rhat}} justifies the use of $\widehat s_n$ as the DoF. 
This overrides CS23, which recommends calculating $\widehat r_n$ using an algorithm that includes a series of LPPs. 
The new recommendation applies to both the sCC test in CS23 and to the GCC test.} 

(2) {\it Theorem \textup{\ref{lem:rhat}} is not random---it does not rely on the distribution of $\overline{\mu}_n$, $\overline{C}_n$, or $\widetilde\Sigma_n$. 
The result is a general feature of CQPPs. 
Part \textup{(a)} shows that, regardless of the distribution, $\widehat s_n$ is (weakly) more conservative than $\widehat r_n$. 
Part \textup{(b)} shows that equality holds with probability one if the conditional distribution of $\overline{\mu}_n$ given $\overline{C}_n$ and $\widetilde\Sigma_n$ is absolutely continuous. 
A key case where this holds is in the limit, where $\overline{C}_n$ and $\widetilde\Sigma_n$ are deterministic and $\overline{\mu}_n$ is Gaussian. 
Thus, a simple corollary of Theorem \ref{lem:rhat} is that $\widehat r_n=\widehat s_n$ with probability one in the limit. }

(3) {\it The expression for $\mathcal{M}_0$ can be found in the Supplemental Appendix, equation \textup{(\ref{eq:M0})}. 
The value of $\M_0$ may depend on the value of $\overline{C}_n$ or $\widetilde\Sigma_n$ that is fixed. 
Note that $\widehat{\delta}_n$ (and $\widehat{K}$) may not be unique for the definition of $\widehat s_n$, and $\widehat\psi_n$ (and $\widehat{L}$) may not be unique for the definition of $\widehat t_n$. 
When they are not unique, $\mathcal{M}_0$ does not depend on the choice of $\widehat{\delta}_n$ or $\widehat \psi_n$. }

(4) {\it In general, $\M_0$ is not the empty set. 
To clarify the necessity of $\M_0$ in part \textup{(b)}, we give a simple example to show that $\widehat r_n<\widehat s_n$ is possible, albeit on a set of measure zero. 
Suppose $d_\mu=d_\delta=1$ and $d_C=2$. 
Let $B=(0,1)'$, $\overline{C}_n=D=(1,1)'$, and $d=(0,0)'$.  
If $\overline{\mu}_n=0$, then $\widehat\mu_n=\widehat\delta_n=0$ solves \textup{(\ref{QLRnew})} (for any positive scalar $\widetilde\Sigma_n$). 
From \textup{(\ref{rhatformula})}, $\widehat s_n=1$ because $\widehat K=\{1,2\}$, but from \textup{(\ref{r_n})}, $\widehat r_n=0$. 
This example requires degenerate features that make $\widehat r_n<\widehat s_n$ unlikely to occur in practice.} 

\subsection{Uniform Asymptotic Validity of the GCC Test}
\label{UniformitySection}

Before describing the assumptions and theorems, we first clarify the true values of the parameters and the underlying distribution of the estimators. 
Let $F$ denote the joint distribution of $\overline{\mu}_n$, $\overline{C}_n$, and $\overline{\Omega}_n$, and let $P_F(\cdot)$ denote probabilities taken with respect to $F$. 
Let $\mathcal{F}_n$ be a parameter space for $F$.\footnote{We subscript $\mathcal{F}_n$ with $n$ because $\overline{\mu}_n$, $\overline{C}_n$, and $\overline{\Omega}_n$ are typically functions of a sample, $\{W_i\}_{i=1}^n$, with sample size $n$. Thus, their distribution naturally depends on $n$. We allow $\mathcal{F}_n$ to depend arbitrarily on $n$.} 
Let $\mu_F$ and $\Pi_{F}$ denote the true values of $\mu$ and $\Pi$. 
Also let $C_F=B\Pi_F+D$ and $b_F=d_n-B\mu_F$. 
The $F$ in the subscript makes explicit that these quantities depend on $F$. 
We allow the values of these parameters, together with the value of $d$, to depend on $n$ to incorporate the situation where we test a sequence of null hypotheses.\footnote{Testing a sequence of null hypotheses is required to evaluate the uniform coverage probability of a confidence set for a parameter of interest, $\theta$. Then, the inequalities that define the null hypothesis may depend on the hypothesized value of $\theta$.} 
We make explicit the dependence of $d$ on $n$, denoting it by $d_n$. 
For notational simplicity, we keep the dependence of $\mu_F$, $\Pi_F$, $C_F$, and $b_F$ on $n$ implicit. 

Let $\mathcal{F}_{n0}$ be the subset of $\mathcal{F}_n$ that satisfies the null hypothesis: $\mathcal{F}_{n0}=\{F\in\mathcal{F}_n: C_F\delta\le b_F \text{ for some }\delta\in\mathbb{R}^{d_\delta}\}$. 
For $F\in\mathcal{F}_{n0}$, let $\delta^\ast_{F}$ be the value of $\delta$ that satisfies $C_F\delta\le b_F$. 
If there is more than one such value of $\delta$, we take $\delta^\ast_{F}$ to be the one that has minimum norm: 
\begin{equation}
\delta^\ast_{F}=\underset{\delta: C_F\delta\le b_F}{\argmin}\|\delta\|.\label{deltanF}
\end{equation} 
This mimics the definition of $\widetilde\delta_n$. 

The following assumption ensures asymptotic normality of the estimators of the reduced-form parameters and consistent estimation of the asymptotic variance, at least along a subsequence of true data generating processes. 
It is used to show consistency of $\widetilde\delta_n$ and asymptotic uniform validity of the GCC test. 

\begin{assumption}\label{Assumption1}
For every sequence $\{F_n\}_{n=1}^{\infty}$ with $F_n\in {\cal F}_{n0}$ 
and for every subsequence, $\{n_m\}$, there exists a further subsequence, $\{n_q\}$, a vector $\mu_\infty$, a vector $d_\infty$, a matrix $\Pi_\infty$, a positive semi-definite matrix, $\Omega_\infty$, and a positive definite matrix, $\Upsilon_\infty$, such that: 

\textup{(i)} $\mu_{F_{n_q}}\rightarrow\mu_\infty$, $\Pi_{F_{n_q}}\rightarrow \Pi_\infty$, and $d_{n_q}\to d_\infty$ 

\textup{(ii)} $\sqrt{n_q}\left(\begin{array}{c}\overline{\mu}_{n_q}-\mu_{F_{n_q}}\\\textup{vec}(\overline\Pi_{n_q}-\Pi_{F_{n_q}})\end{array}\right)\rightarrow_d N(\mathbf{0},\Omega_\infty)$

\textup{(iii)} $\Sigma_\infty:=\left(\begin{smallmatrix}I\\\delta^\ast_\infty\otimes I\end{smallmatrix}\right)'\Omega_\infty\left(\begin{smallmatrix}I\\\delta^\ast_\infty\otimes I\end{smallmatrix}\right)$ is positive definite, where $\delta^\ast_\infty:=\underset{\delta: B\mu_\infty+(D+B\Pi_\infty)\delta\le d_\infty}{\argmin}\|\delta\|$, 

\textup{(iv)} $\overline{\Omega}_{n_q}\rightarrow_p \Omega_\infty$, 

\textup{(v)} $\widehat\Upsilon_{n_q}\rightarrow_p \Upsilon_\infty$, and 

\textup{(vi)} $\{\delta\in \R^{d_{\delta}}:B\mu_\infty+(D+B\Pi_{\infty})\delta\leq d_\infty\} \neq \emptyset$. 
\end{assumption}

\noindent{\textbf{Remarks:}} \textup{(1)} {\it Assumption \textup{\ref{Assumption1}}  is stated using subsequences in order to ensure uniformity over $\mathcal{F}_{n0}$. 
Part \textup{(i)} assumes that $d_n$ and the sequence of true parameter values for the reduced-form parameters converge to some limits along a subsequence. 
This is equivalent to assuming that the parameter space for these parameters is compact.

\textup{(2)} Part \textup{(ii)} assumes asymptotic normality of the estimators for the reduced-form parameters along a subsequence. 
Part \textup{(iii)} requires $\Sigma_\infty$ to be positive definite. 
While this may seem restrictive, it is mitigated by the way the inequalities are specified in equation \textup{(\ref{generalH0})}. 
Specifically, the $B$ matrix allows us to write the inequalities as a linear function of a core collection of reduced-form parameters that admit an estimator with a positive definite asymptotic variance matrix. 
The matrix $B$ absorbs any linear dependence among the estimation errors of the inequalities. 
Also note that $\delta^\ast_\infty$ is well-defined by Assumption \textup{\ref{Assumption1}(vi)}. 
Part \textup{(iv)} assumes consistency of the estimator of the asymptotic variance. 
Part \textup{(v)} assumes consistency of the first-step weight matrix in equation \textup{(\ref{tildemudef})} for a positive definite limit. 
Parts \textup{(ii)}, \textup{(iv)}, and \textup{(v)} can be verified using standard consistency and asymptotic normality arguments. 
For example, when the data are i.i.d.\ and the model is a moment (in)equality model, they can be verified by the Lindeberg-Feller central limit theorem and a law of large numbers for triangular arrays. 

\textup{(3)} Part \textup{(vi)} assumes the constraint set for the limit is nonempty. Part \textup{(vi)} is guaranteed under part \textup{(i)} if, for example, there is a fixed compact set $\Delta$ such that $\{\delta\in \mathbb{R}^{d_\delta}: C_{F}\delta\leq b_{F}\}\subseteq\Delta$ for all $F\in{\cal F}_{n0}$.\footnote{This type of assumption is common in the literature. For example, it is assumed by \cite{Voronin2025} and \cite{GoffMbakop2025}.} In that case, for any sequence $F_{n_q}\in{\cal F}_{n_q0}$, there exists a $\delta_{F_{n_q}}$ such that  $B\mu_{F_{n_q}}+(B\Pi_{F_{n_q}}+D)\delta_{F_{n_q}}\leq d_{n_q}$. This sequence has a subsequence that converges to some limit $\delta_\infty$ that satisfies $B\mu_{\infty}+(B\Pi_\infty+D)\delta_\infty\leq d_\infty$, showing part \textup{(vi)}. Appendix \ref{A1iiiDiscussion} shows another way to verify part \textup{(vi)} under a strengthened version of Assumption \textup{\ref{assu:rank:simple}}, below.}\medskip

Next, we state the stable rank condition mentioned in the introduction.  We first introduce some new notation. Let $K^=\subseteq \{1,...,d_C\}$ be a set that contains the indices for the inequalities that were originally equalities. 
This set is special because it is always included in the set of active inequalities: $K^=\subseteq\widehat K$. 
For any $d_C\times d_\delta$ matrix $C$ and for any $d_C$-dimensional vector $b$, 
let $\mathcal{A}(C, b)=\{K\subseteq\{1,...,d_C\}: Cx\le b \text{ and } I_{K}(C x-b) = \mathbf{0} \text{ for some }x\in\mathbb{R}^{d_\delta}\}$ be the collection of all subsets of inequalities that could be simultaneously active for the system of inequalities defined by $C$ and $b$.\footnote{A combination of inequalities cannot be simultaneously active if, for example, it involves an upper and a lower bound that are parallel and separated. 
Such combinations are excluded from $\mathcal{A}(C, b)$.}

\begin{assumption}[Stable Rank]\label{assu:rank:simple}
For every sequence $\{F_n\}_{n=1}^\infty$ with $F_n\in\mathcal{F}_{n0}$ and for any subsequence, $\{n_m\}$, satisfying Assumption \textup{\ref{Assumption1}(i)} with $C_\infty=B\Pi_\infty+D$ and $b_\infty=d_\infty-B\mu_\infty$, there is a further subsequence, $\{n_q\}$, along which 
\begin{equation}
P_{F_{n_q}}(\textup{rk}(I_K\overline{C}_{n_q}) =\textup{rk}(I_K{C}_{F_{n_q}})=\textup{rk}(I_KC_\infty))\to 1\textup{, as }q\to \infty, \label{rankst}
\end{equation}
for any $K\in{\cal A}(C_\infty, b_\infty)$ that satisfies $K^{=}\subseteq K$. 
\end{assumption}

\noindent\textbf{Remarks:} (1) {\it We refer to Assumption \textup{\ref{assu:rank:simple}} as a ``stable rank'' condition because perturbations of the $C_\infty$ matrix in the directions of the estimation error do not change the rank. }

(2) {\it Assumption \textup{\ref{assu:rank:simple}} is not a necessary condition for the validity of the GCC test. 
{It is possible to relax {\it Assumption \textup{\ref{assu:rank:simple}} by reducing the number of collections of indices, $K$, for which the rank equality in \textup{(\ref{rankst})} needs to be assumed. 
In Appendix \textup{\ref{relaxations}}, we show that the rank equality only needs to hold for index sets, $K$ corresponding to collections of inequalities that define a linear subspace in the limit. 
In cases where there are no equalities and the identified set for the nuisance parameters has a positive volume in the parameter space, every inequality could be slack, and no stable rank condition is needed.} 
Due to the nuances of this discussion, it is relegated to the Supplemental Appendix.}}

(3) {\it While Assumption \textup{\ref{assu:rank:simple}} is not necessary, it is used in an essential way in the proofs of consistency of $\widetilde{\delta}_n$ and asymptotic validity of the GCC test. 
Moreover, a more than superficial connection of Assumption \textup{\ref{assu:rank:simple}} with the weak IV problem in linear IV regression models suggests that relaxing Assumption \textup{\ref{assu:rank:simple}} completely may require insights from that literature. 
We discuss this connection in Section \textup{\ref{Assumption1Discussion}}, below.}

(4) {\it Assumptions playing a similar role as Assumption \textup{\ref{assu:rank:simple}} are common in the literature on subvector inference in moment inequality models and in models defined by linear systems. 
One type of such assumptions is a known and fixed $C$, as in \cite{GuggenbergerHahnKim2008}, \cite{KaidoSantos2014}, and \cite{FSST2023}. In that case $\overline{C}_n=C_F=  C$, and Assumption \textup{\ref{assu:rank:simple}} holds trivially. 
Other types of assumptions appear in \cite{PPHI2015}, \cite{BugniCanayShi2017}, \cite{ChoRussell2024}, and \cite{GoffMbakop2025}.\footnote{Assumption \ref{assu:rank:simple} differs from constraint qualification, as considered in \cite{KMS2022}. Constraint qualification restricts a fixed collection of constraints to ensure KKT conditions are necessary or sufficient or the KKT multipliers are unique. In contrast, Assumption \ref{assu:rank:simple} concerns a sequence of linear inequality constraints and restricts the way they converge to a limiting set of constraints.} 
We discuss the connection between these assumptions in a simple example in Section \textup{\ref{sec:compare}}}. \medskip

The following theorem states an important preliminary result: consistency of $\widetilde{\delta}_n$. 

\begin{theorem}\label{lem:delta} Suppose Assumption \textup{\ref{assu:rank:simple}} holds. 
Let $\{F_n\}_{n=1}^\infty$ be a sequence with $F_n\in\mathcal{F}_{0n}$ and let $\{n_q\}$ be a subsequence satisfying Assumptions \textup{\ref{Assumption1}(i), (ii)},  \textup{(v)}, and \textup{(vi)}. 
We have that 
\[
\widetilde{\delta}_{n_q}\to_p\delta_\infty^\ast~\text{ and }~\delta^\ast_{F_{n_q}}\rightarrow \delta^\ast_\infty~\text{ as }~q\to\infty, 
\]
where $\delta^\ast_\infty$ is defined in Assumption \textup{\ref{Assumption1}(iii)}. 
\end{theorem}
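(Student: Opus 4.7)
The plan is to prove both conclusions by first establishing Painlev\'e--Kuratowski convergence of the relevant polyhedra in $\delta$, then invoking continuity of the min-norm selector on nonempty closed convex sets. For the deterministic statement $\delta^\ast_{F_{n_q}}\to\delta^\ast_\infty$: by Assumption \textup{\ref{Assumption1}}(i), $(C_{F_{n_q}},b_{F_{n_q}})\to(C_\infty,b_\infty)$, and by Assumption \textup{\ref{Assumption1}}(vi) the limit set $\Delta_\infty:=\{\delta:C_\infty\delta\le b_\infty\}$ is nonempty. Outer semicontinuity of $\Delta_{F_{n_q}}:=\{\delta:C_{F_{n_q}}\delta\le b_{F_{n_q}}\}$ toward $\Delta_\infty$ is immediate from linearity. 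The main work is inner semicontinuity: given $\delta\in\Delta_\infty$, I need to produce $\delta_{n_q}\in\Delta_{F_{n_q}}$ with $\delta_{n_q}\to\delta$. I would invoke Hoffman's error bound, which yields $\delta_{n_q}\in\Delta_{F_{n_q}}$ with $\|\delta_{n_q}-\delta\|$ at most a Hoffman constant times $\|(C_{F_{n_q}}\delta-b_{F_{n_q}})_+\|\to 0$, provided the Hoffman constants remain bounded along the subsequence. Assumption \textup{\ref{assu:rank:simple}} delivers this uniform bound: the Hoffman constant is controlled by the ranks of submatrices of $C_{F_{n_q}}$ associated with index sets $K\in\mathcal{A}(C_\infty,b_\infty)$ containing $K^=$, and rank stability prevents these ranks from dropping in the limit. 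With $\Delta_{F_{n_q}}\to\Delta_\infty$ in the Painlev\'e--Kuratowski sense and $\|\cdot\|^2$ strictly convex and coercive, the min-norm elements converge, giving $\delta^\ast_{F_{n_q}}\to\delta^\ast_\infty$.

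For the convergence in probability of $\widetilde\delta_{n_q}$, I would first show $\widetilde\mu_{n_q}\to_p\mu_\infty$. By Assumption \textup{\ref{Assumption1}}(ii), $\overline\mu_{n_q}\to_p\mu_\infty$ and $\overline C_{n_q}\to_p C_\infty$; by Assumption \textup{\ref{Assumption1}}(v), $\widehat\Upsilon_{n_q}\to_p\Upsilon_\infty$, which is positive definite. The estimator $\widetilde\mu_{n_q}$ is the $\widehat\Upsilon_{n_q}$-projection of $\overline\mu_{n_q}$ onto the $\mu$-shadow $\mathcal{M}_{n_q}:=\{\mu:B\mu+\overline C_{n_q}\delta\le d_{n_q}\text{ for some }\delta\}$, itself a polyhedron. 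A probabilistic version of the inner semicontinuity argument---applied along an arbitrary subsequence of $\{n_q\}$ together with the further subsequence promised by Assumption \textup{\ref{assu:rank:simple}}---shows $\mathcal{M}_{n_q}$ converges in probability to $\mathcal{M}_\infty$; since $\mu_\infty\in\mathcal{M}_\infty$ by Assumption \textup{\ref{Assumption1}}(vi), continuity of the projection operator yields $\widetilde\mu_{n_q}\to_p\mu_\infty$. Then $\widetilde\delta_{n_q}$ is the min-norm point of the random polyhedron $\{\delta:\overline C_{n_q}\delta\le d_{n_q}-B\widetilde\mu_{n_q}\}$, whose defining data converges in probability to $(C_\infty,d_\infty-B\mu_\infty)$; one more application of the set-convergence argument, now in probability, gives $\widetilde\delta_{n_q}\to_p\delta^\ast_\infty$.

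The main obstacle is the inner semicontinuity step under linear perturbation of the inequalities. Without Assumption \textup{\ref{assu:rank:simple}}, the Hoffman constant can blow up and the feasible sets can collapse in ways that destroy continuity of the min-norm operation; Assumption \textup{\ref{assu:rank:simple}} is tailored precisely to rule this out by forcing the ranks of the relevant submatrices of $C_{F_{n_q}}$ to match those of $C_\infty$. Restricting attention to index sets $K\in\mathcal{A}(C_\infty,b_\infty)$ with $K^=\subseteq K$---rather than all subsets of potentially active constraints---is the delicate bookkeeping step, since only those collections actually pin down the local geometry of $\Delta_\infty$ near feasible points on its boundary.
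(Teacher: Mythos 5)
Your overall architecture matches the paper's: establish Painlev\'e--Kuratowski convergence of the relevant constraint sets, then invoke continuity of the (weighted or min-norm) projection onto a nonempty closed convex set (the paper's Lemma \ref{Projection-convergence}), first for $\widetilde\mu_{n}$ and $\delta^\ast_{F_n}$ and then for $\widetilde\delta_n$ on the polyhedron $\{\delta:\overline C_n\delta\le d_n-B\widetilde\mu_n\}$. The gap is in how you justify inner semicontinuity. You claim that Assumption \ref{assu:rank:simple} uniformly bounds the Hoffman constants of $C_{F_{n_q}}$ because ``the Hoffman constant is controlled by the ranks of submatrices \ldots associated with index sets $K\in\mathcal{A}(C_\infty,b_\infty)$ containing $K^=$.'' That is not true: the global Hoffman constant of $C_{F_{n_q}}$ involves submatrices indexed by sets that need \emph{not} belong to $\mathcal{A}(C_\infty,b_\infty)$, and Assumption \ref{assu:rank:simple} deliberately imposes nothing on those. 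Concretely, take $d_\delta=2$ with constraints $\delta_1\le 1$ and $-\delta_1+\epsilon_n\delta_2\le 1$, $\epsilon_n\downarrow 0$. The limit constraints are parallel and separated, so $\{1,2\}\notin\mathcal{A}(C_\infty,b_\infty)$ and Assumption \ref{assu:rank:simple} holds, yet $\textup{rk}(I_{\{1,2\}}C_{F_n})=2$ drops to $1$ in the limit and the global Hoffman constant of $C_{F_n}$ is of order $\epsilon_n^{-1}$. For $\delta=(-1,\delta_2)$ with $\delta_2>0$, the constraint violation is $\epsilon_n\delta_2$, so your bound $H(C_{F_n})\cdot\|(C_{F_n}\delta-b_{F_n})_+\|$ is of order $\delta_2$ and does not vanish, even though the true distance $d(\delta,\Delta_{F_n})=O(\epsilon_n\delta_2)\to 0$. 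So your proof of inner semicontinuity does not go through as written.

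To repair it you would need a local error bound whose constant involves only the index set actually active at the projection of $\delta$ onto $\Delta_{F_{n_q}}$, together with an argument that this active set eventually lies among the rank-stable ones --- and that second step is itself nontrivial precisely because you do not yet know the projections converge. This is the content that the paper supplies by a different route: Lemma \ref{K-convergence_nonempty2_new} proves inner semicontinuity by an explicit construction (projecting a feasible point of $\poly(A_n,h_n)$ onto the row space of the active constraints at the target point, after first perturbing the target into the relative interior), and it only requires rank stability for the \emph{minimal} activatable set, i.e.\ the collection of inequalities that implicitly define an affine subspace in the limit. A secondary, minor point: in the second paragraph you route the convergence of $\widetilde\mu_{n_q}$ through set convergence of the $\mu$-shadow $\mathcal{M}_{n_q}=\ppoly(B,d_{n_q};\overline C_{n_q})$; projected polyhedra require extra care (the paper's Lemma \ref{projected_poly_convergence} needs additional rank conditions), and the paper avoids this by working with the joint polyhedron in $(\mu,\delta)$ and only using the one-sided bound $\|\widetilde\mu_{n}-\overline\mu_n\|_{\widehat\Upsilon_n}\le\|\mu^\dagger_n-\overline\mu_n\|_{\widehat\Upsilon_n}$ for an approximating feasible sequence $\mu^\dagger_n\to\mu_\infty$. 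Only the inner-semicontinuity half is needed there, so you should phrase that step accordingly rather than asserting full set convergence of $\mathcal{M}_{n_q}$.
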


\noindent\textbf{Remark:} \textit{
Consistency of estimators defined by tie-breaking procedures, such as the norm minimization in the definition of $\widetilde{\delta}_n$, is especially challenging. 
It is surprising that Assumption \textup{\ref{assu:rank:simple}} is sufficient in this case.
The proof uses a novel argument that establishes setwise convergence of the constraint set.} \medskip

The following theorem is the main theoretical result of the paper. 
\begin{theorem}\label{thm:level_CC} If Assumptions \textup{\ref{Assumption1}} and \textup{\ref{assu:rank:simple}} hold, then 
\begin{equation*}
\underset{n\to\infty}{\textup{limsup}}\sup_{F\in{\cal F}_{n0}}P_F(T_n>cv(\widehat s_n,\alpha)) \leq\alpha. 
\end{equation*}
\end{theorem}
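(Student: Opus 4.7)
The plan is to use the subsequencing device standard for uniform asymptotic size results, reduce to a limiting problem with a known Jacobian, and then invoke the CS23 conditional chi-squared argument together with Theorem \ref{lem:rhat}. First, it suffices to show that for any sequence $F_n\in\mathcal{F}_{n0}$ attaining $\limsup_n \sup_F P_F(T_n>cv(\widehat s_n,\alpha))$, one can extract a subsequence $\{n_q\}$ along which the probability has limit at most $\alpha$. Assumption \ref{Assumption1} supplies such a subsequence on which the reduced-form parameters and $d_{n_q}$ converge to $\mu_\infty,\Pi_\infty,d_\infty$, the estimators satisfy a joint CLT with limit $\Omega_\infty$, and $\overline\Omega_{n_q}\rightarrow_p\Omega_\infty$, $\widehat\Upsilon_{n_q}\rightarrow_p\Upsilon_\infty$. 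Theorem \ref{lem:delta} then gives $\widetilde\delta_{n_q}\rightarrow_p\delta^\ast_\infty$, so $\widetilde\Sigma_{n_q}\rightarrow_p\Sigma_\infty$, which is positive definite by Assumption \ref{Assumption1}(iii).

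Next, I rewrite $T_{n_q}$ via the change of variables producing equation (\ref{Tnrep}), giving $T_{n_q}=\min_{\eta,\gamma:\,B\eta+\overline{C}_{n_q}\gamma\le h_{n_q}}(X_{n_q}-\eta)'\widetilde\Sigma_{n_q}^{-1}(X_{n_q}-\eta)$, with $X_{n_q}\rightarrow_d X\sim N(\mathbf{0},\Sigma_\infty)$ and $h_{n_q}\ge\mathbf{0}$ since $F_{n_q}\in\mathcal{F}_{n_q 0}$. Passing to a further subsequence, the inequality indices split into a set $K^\star\supseteq K^=$ along which $h_{n_q}$ converges to a finite limit $h_\infty$, and a complementary set along which $h_{n_q}$ diverges to $+\infty$; the latter become inactive in large samples on the high-probability events $\{\|X_{n_q}\|\le M\}$. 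The limiting problem is then $T_\infty=\min_{\eta,\gamma:\,I_{K^\star}(B\eta+C_\infty\gamma)\le I_{K^\star}h_\infty}(X-\eta)'\Sigma_\infty^{-1}(X-\eta)$, a CS23-style problem with a known, deterministic Jacobian $C_\infty$. The stable-rank condition (Assumption \ref{assu:rank:simple}) ensures that, for each $K\in\mathcal{A}(C_\infty,h_\infty)$ with $K^=\subseteq K$, the probability that $\textup{rk}(I_K\overline{C}_{n_q})=\textup{rk}(I_KC_\infty)$ tends to one. This rank stability, combined with the feasible-set convergence already used in the proof of Theorem \ref{lem:delta}, upgrades to joint weak convergence of $(T_{n_q},\widehat s_{n_q})$ to $(T_\infty,\widehat s_\infty)$, where $\widehat s_\infty$ is defined from the limiting minimizers via (\ref{rhatformula}).

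Finally, I apply the CS23 conditional chi-squared bound to the limit problem: conditional on the identity of the active inequalities at the limiting CQPP optimum, $T_\infty$ is stochastically dominated by $\chi^2_{\widehat r_\infty}$, so $P(T_\infty>cv(\widehat r_\infty,\alpha))\le\alpha$. Since $X$ is Gaussian and hence absolutely continuous, Theorem \ref{lem:rhat}(b) gives $\widehat r_\infty=\widehat s_\infty$ almost surely, whence $P(T_\infty>cv(\widehat s_\infty,\alpha))\le\alpha$. Combined with the joint weak convergence above and almost-sure continuity of the limit distribution at its critical value, the portmanteau theorem delivers the claim along $\{n_q\}$.

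The hardest step is the joint weak convergence of $(T_n,\widehat s_n)$ in the second paragraph: $\widehat s_n$ depends discontinuously on which inequalities are active at the finite-sample CQPP optimum and can jump under arbitrarily small perturbations of $\overline{C}_n$, so a naive continuous-mapping argument fails. Assumption \ref{assu:rank:simple} is precisely the structural condition that rules out the degenerate configurations where estimation noise in $\overline{C}_n$ changes the rank of an active-set submatrix and thereby the DoF; without it, the limiting critical value could be strictly smaller than its finite-sample counterpart, and the chi-squared bound would not transfer to $(T_n,\widehat s_n)$.
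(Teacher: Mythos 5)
Your overall architecture matches the paper's: subsequence extraction, a change of variables to a limit experiment with deterministic Jacobian $C_\infty$ and rescaled slackness $h_\infty\in[0,\infty]^{d_C}$, the CS23 conditional chi-squared bound $P(T_\infty>cv(r^\ast,\alpha))\le\alpha$, and Theorem \ref{lem:rhat}(b) to identify $r^\ast$ with $s^\ast$ almost surely. However, there is a genuine gap at the step you yourself flag as the hardest one. You assert joint weak convergence of $(T_{n_q},\widehat s_{n_q})$ to $(T_\infty,\widehat s_\infty)$, but this is neither what Assumption \ref{assu:rank:simple} delivers nor what is true in general: $\widehat\delta_n$ is an \emph{arbitrary} minimizer of (\ref{QLRnew}) (the paper explicitly does not require it to be consistent), so the active set $\widehat K$ and hence $\widehat s_n$ need not converge at all. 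What the paper proves instead is only a one-sided, eventual bound along an almost-sure representation: after perturbing a minimizer to $\ddot\gamma_q$ achieving the minimal activatable set and taking the minimum-norm KKT multipliers $\widehat\psi_q$, one gets the inclusion chain $L^\ast\subseteq\widehat L_q\subseteq\ddot K_q\subseteq\widehat K_q$ eventually, and then monotonicity of the DoF formula in the index set (Lemma \ref{lem:MB2}(c)) gives $\widehat s_{n_q}\ge\widehat t_{n_q}\ge t^\ast=r^\ast$ eventually. This lower bound suffices because a larger DoF only inflates the critical value; your proof should replace ``joint weak convergence'' with this asymmetric comparison, which is exactly where the careful constructions of $\ddot\gamma_q$ and $\widehat\psi_q$ (and the convergence of the multipliers via Lemma \ref{multiplier_convergence}) are needed.

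A second, smaller gap is your appeal to ``almost-sure continuity of the limit distribution at its critical value'' for the portmanteau step. When $r^\ast=0$ with positive probability, $T_\infty$ has an atom at $0=cv(0,\alpha)$, so this continuity fails. The paper circumvents this by a pathwise case analysis: if $\widehat s_{n_q}=0$ then $T_{n_q}=0$ exactly (Lemma \ref{s=0_implies_T=0}), so the test cannot reject; if $\widehat s_{n_q}\ge1$ but $r^\ast=0$ then $T_{n_q}\to T_\infty=0<cv(1,\alpha)$; and only in the case $r^\ast>0$ does one invoke the continuity of $\|(M_{C_{K^\ast}}I_{K^\ast}B\Sigma_\infty^{1/2})^+M_{C_{K^\ast}}I_{K^\ast}(BX-h_\infty)\|^2_{\Sigma_\infty^{-1}}$ (Lemma \ref{lem:MixedChi2}) to rule out $T_\infty=cv(s^\ast,\alpha)$. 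You should incorporate both fixes; with them, your outline coincides with the paper's proof.
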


\noindent{\textbf{Remarks:}} (1) {\it Theorem \textup{\ref{thm:level_CC}} establishes the uniform asymptotic validity of the GCC test. 
This extends the result for the sCC test from CS23 to allow $C$ to be estimated, as long as the estimator is consistent and asymptotically normal and a stable rank condition is satisfied. 
The generalization is essential for handling the applications discussed in Section} \ref{sec:setup}. 

(2) {\it The asymptotic validity of the GCC test is surprising because, intuitively, the active inequalities are not necessarily binding in population and even when all inequalities are binding, the limit distribution of $T_n$ is not $\chi^2$. 
Indeed, the set of active inequalities does not converge to the set of binding-in-population inequalities but remains random in the limit. 
The key to validity of the GCC test is that the limit {\em conditional} distribution of $T_n$ given the set of active inequalities is bounded by the $\chi^2$ distribution with the associated  DoF. }

(3) {\it When $P_F(\widehat{s}_n = 0)>0$, the GCC test can be slightly conservative: Its  null rejection probability is between $\alpha (1-P_F(\widehat{s}_n = 0))$ and $\alpha$ asymptotically. The refinement in CS23 can be used to remove the conservativeness. We define the refined GCC (RGCC) test in Appendix} \ref{app:refinement}{\it. The RGCC test  differs from the GCC test only when $\widehat{s}_n=1$ and is also tuning parameter and simulation free. However, the refinement requires calculating $A$ and $g$ such that $\{\mu\in\R^{d_\mu}:B\mu+\overline{C}_n \delta\leq d \text{ for some } \delta\in\R^{d_\delta}\} = \{\mu\in\R^{d_\mu}:A \mu\leq g\}$. 
The computation is relatively easy when $d_C$ and $d_\delta$ are small but gets exponentially harder when $d_C$ and $d_\delta$ increase. In particular, it can have a high memory requirement. We investigate the performance of the RGCC test along with the GCC test in the simulations and the empirical illustration.} 

\subsection{Assumption \ref{assu:rank:simple} and Instrumental Variable Regressions}\label{Assumption1Discussion}\label{sub:discussion_rank}
To better understand Assumption \ref{assu:rank:simple}, we now relate it to the linear IV regression model. 
\begin{example}[IV Regression]\label{ex:iv_regression} Let $Y$ be a scalar dependent variable and $X$ be a $d_x$-vector of potentially endogenous regressors. Consider the IV regression model: $Y = X'\beta +\varepsilon$, where $\varepsilon$ is the error term. 
Let $Z$ be a $d_z$-vector of instruments that satisfy $\E[Z\varepsilon]=\mathbf{0}$. Suppose we are interested in the first element of $\beta$, denoted by $\theta$. Then the rest of the elements of $\beta$ are nuisance parameters, denoted  by $\delta$. Let $X_1$ denote the first element of $X$ and $X_{-1}$ denote the rest of the elements. Then, the model can be represented by the following moment conditions: $\E[ZY] - \theta \E[ZX_1] - \E[ZX'_{-1}]\delta = \mathbf{0}$. 
If we conduct inference for $\theta$ by test inversion, the hypothesis to be tested for each $\theta$ value is  \begin{equation}
H_0: \E[ZY]-\theta \E[ZX_1]  - \E[ZX_{-1}']\delta = \mathbf{0} \text{ for some }\delta\in\mathbb{R}^{d_\delta}.\label{H0iv}
\end{equation}
This hypothesis is a special case of that in equation \textup{(\ref{generalH0})} with $B= \left(\begin{smallmatrix}I_{d_z}\\-I_{d_z}\end{smallmatrix}\right), \mu = \E[ZY]-\theta \E[ZX_1]$, $\Pi = -\E[ZX_{-1}']$, $D = \mathbb{O}$, and $d = \mathbf{0}.$
\end{example}

In this model, Assumption \ref{assu:rank:simple} allows $\E[ZX_{-1}']$ to change with $n$ as long as the rank does not change in the limit. 
Equivalently, the smallest nonzero eigenvalue of $\E[ZX_{-1}']\E[X_{-1}Z']$ does not converge to zero. 
Notably, zero eigenvalues are allowed. 
This happens, for example, when the number of instruments is smaller than the number of nuisance parameters. 
Assumption \ref{assu:rank:simple} is weaker than the usual rank condition for strong identification of $\delta$ (under a hypothesis that fixes a value of $\theta$), which is that the smallest eigenvalue of $\E[ZX_{-1}']\E[X_{-1}Z']$ is bounded away from zero. 
This means that Assumption \ref{assu:rank:simple} can be thought of as a ``no weak identification'' condition, where linear combinations of $\delta$ can be strongly identified or non-identified as long as they are not weakly identified. 

Even in the weak instruments/weak identification literature, strong identification of the nuisance parameters is a useful assumption. 
For example, \cite{StockWright2000}, \cite{Kleibergen2005},   and \cite{AndrewsMikusheva2016} propose identification-robust hypothesis tests for subvectors only when the nuisance parameters are strongly identified under the null. 
Papers that cover inference with weakly identified nuisance parameters, including \cite{ChaudhuriZivot2011}, \cite{Andrews2018}, and \cite{GKM2024}, recommend some version of two-step inference requiring a tuning parameter, among other complications.\footnote{An exception is \cite{GKMC2012}. They focus on a homoskedastic linear IV model and show that the plug-in Anderson-Rubin test remains valid with weakly identified nuisance parameters.} 
\cite{Cox2022} states separate limit theory depending on whether the nuisance parameters are strongly identified under the null. 
In this literature, strong identification of the nuisance parameters under the null is used to guarantee that the null-imposed estimator of the nuisance parameters is consistent and asymptotically normal. 
In contrast, we use Assumption \ref{assu:rank:simple} to ensure convergence of the constraint set in the QLR statistic to its limit in a setwise sense. 
These different purposes reflect the compounding complications that arise when trying to relax Assumption \ref{assu:rank:simple}. 

\section{Simulations}\label{sec:mc}

This section evaluates the finite-sample performance of the GCC test for testing inequalities that are linear in nuisance parameters. 
We also evaluate the RGCC test, defined in Appendix \ref{app:refinement}. 
Section \ref{Simple_Simulations} considers a simple one-sided hypothesis testing problem with one nuisance parameter. 
Section \ref{sub:IOR} considers a more realistic model with more nuisance parameters: an interval outcome IV regression model.
Also, a simulation of Example \ref{ex:PRTE} can be found in Section \ref{app:MTE}. 
The bottom line of all the simulations is that the GCC and RGCC tests are easy to compute and have good size and power. 

\subsection{A Simple One-Sided Model}
\label{Simple_Simulations}

Consider a simple one-sided hypothesis testing problem with one nuisance parameter and normally distributed randomness. 
The model is designed to abstract from computational and asymptotic complications, so as to focus on size and power. 
The validity of any test for linear inequalities in this simple specification should be a necessary condition for implementing the test in practice. 
Also note that, because the bound on the parameter of interest is one-sided, we can compare to methods from the literature on one-sided inference for the value of a LPP. 

The simple  model has one nuisance parameter, no equalities, and $J$ inequalities:
\begin{align}
\mu_1+c_1\delta+\theta&\leq 0\nonumber\\
\mu_2+c_2\delta+\theta&\leq 0\nonumber\\
\mu_3+c_3\delta\hspace{7.2mm}&\leq 0\nonumber\\
\vdots\hspace{10mm}&\hspace{5mm}\vdots\nonumber\\
\mu_J+c_J\delta\hspace{7.2mm}&\leq 0.\label{SimpleM}
\end{align}
This model is a special case of (\ref{generalH0}) with $B=I_J$, $\mu = (\mu_1,\mu_2,...,\mu_J)'$, $\Pi = C=(c_1,c_2,...,c_J)'$, $D = \mathbf{0}_J$, and $d = -(1, 1, 0,...,0)'\theta$. 
The first two inequalities give an upper bound on $\theta$, while the remaining $J-2$ inequalities only bound $\delta$. 

Suppose $\mu$ is estimated by $\overline{\mu}_n$ and $C$ by $\overline{C}_n$. 
Suppose $\overline{\mu}_n$ and $\overline{C}_n$ are sample means of independent random samples from $N(\mu,I_J)$ and $N(C,2I_J)$, respectively. 
The covariance matrix of $\overline{\mu}_n$ and $\overline{C}_n$ are estimated by the sample variances and covariances. 
Below, we consider $\mu = (-1,1,1-qn^{-1/2},...,1-qn^{-1/2})'$ and $C = (1,-1,-1,...,-1)'$ with $n=500$, $J\in\{3,10,50\}$, and $q\in\{0,4\}$. 
Inequalities 3 through $J$ may be binding or slack depending on the value of $q$. 
The identified set for $\theta$ is $(-\infty,0]$. 
For a fixed a value of $\theta$ in $(-\infty,0]$, the identified set for $\delta$ is $[\max(1+\theta,1-qn^{-1/2}), 1-\theta]$. 

\begin{table}[t]
\begin{center}
\scalebox{1}{
\begin{threeparttable}
\caption{Simulated Null Rejection Probabilities (NRP) and Computation Times of Various Tests of Nominal Level $5\%$ in the Simple Model}\label{Simple_Times}
\begin{tabular}{lcccccccc}
\hline\hline\vspace{-0.3cm}\\
&\multicolumn{2}{c}{$J=3$} &&\multicolumn{2}{c}{$J=10$}&&  \multicolumn{2}{c}{$J=50$} \\
\cline{2-3}\cline{5-6}\cline{8-9}\vspace{-0.3cm}\\
&NRP&Time (seconds)&&NRP&Time (seconds)& & NRP &Time (seconds)\\
\hline\vspace{-0.3cm}\\
GCC&0.035&(0.003)&&0.035&(0.003)&&0.045&(0.008)\\
RGCC&0.051&(0.003)&&0.038&(0.004)&&0.045&(0.008)\\
\hline\vspace{-0.3cm}\\
sCC&0.193&(0.025)&&0.324&(0.070)&&0.560&(0.343)\\
sRCC&0.207&(0.026)&&0.327&(0.071)&&0.560&(0.343)\\
ARP&0.204&(1.159)&&0.248&(1.097)&&0.476&(1.988)\\
\hline\vspace{-0.3cm}\\
CCT&0.045&(0.124)&&0.179&(0.138)&&0.430&(0.267)\\
BCS&0.025&(0.575)&&0.015&(1.048)&&0.004&(3.525)\\
Bei&0.052&(0.238)&&0.072&(0.267)&&0.112&(0.597)\\
\hline\vspace{-0.3cm}\\
FSST&0.231&(1.725)&&0.408&(3.585)&&0.609&(3.083)\\
Gaf&0.064&(0.129)&&0.179&(0.123)&&0.420&(0.127)\\
CR1&0.002&(25.83)&&0.002&(25.84)&&0.014&(26.22)\\
CR2&0.026&(25.87)&&0.055&(25.91)&&0.130&(26.33)\\
CR3&0.036&(25.85)&&0.074&(25.83)&&0.168&(26.08)\\
\hline
\end{tabular}
{\small
\begin{tablenotes}
\item {\em Note:} The ``NRP'' columns report the null rejection probabilities for the hypothesis $\theta=0$ in the simple specification with 1000 simulations. 
The ``Time'' columns report the median time in seconds to compute the test for all the tests except Gaf and CR1-CR3. For Gaf and CR1-CR3, the ``Time'' columns report the median number of seconds to compute a confidence interval. 
\end{tablenotes}
}
\end{threeparttable}
}
\end{center}
\end{table}

We first consider testing the null hypothesis $H_0: \theta=0$.
Table \ref{Simple_Times} reports the simulated rejection probabilities for various tests when $q=0$. This means that all $J$ inequalities are binding. 
We implement four groups of tests. 
The first group consists of the GCC and RGCC tests. 
The second group consists of the sCC and sRCC tests from CS23 and the hybrid test (ARP) from \cite{AndrewsRothPakes2023}. These tests are implemented using $\overline{C}_n$ as if it were the true value. 
The third group consists of the MR test (BCS) from \cite{BugniCanayShi2017}, procedure 3 (CCT) from \cite{CCT2018}, and the recommended test (Bei) from \cite{Bei2023}, which are designed for nonlinear inequalities. 
The fourth group consists of the recommended test (FSST) from \cite{FSST2023}, the recommended test (Gaf) from \cite{Gafarov2025}, and three tests (CR1-CR3) from \cite{ChoRussell2024} implemented with three choices of the tuning parameter.\footnote{CR1, CR2, and CR3 are implemented with $\underline{\epsilon}=0.1$, $\underline{\epsilon}=0.01$, and $\underline{\epsilon}=0.001$, respectively.} 
These tests are from the literature on inference for the value of a LPP. 
They are implemented for one-sided inference on the upper bound of $\theta$.\footnote{Gaf and CR1-CR3 report confidence intervals instead of hypothesis tests. For these methods, we say that a value of $\theta$ is rejected if $\theta$ does not belong to the confidence interval.} 

As we can see in Table \ref{Simple_Times}:\medskip 

 (1) {The GCC and RGCC tests are valid for every $J$. The GCC test is somewhat conservative when $J=3$, which is expected. Unexpected is that the GCC and RGCC tests are somewhat conservative when $J=10$. This could be due to simulation noise}. 

(2) {The sCC, sRCC, and ARP tests are invalid. This demonstrates the need to account for the estimation error in $\overline{C}_n$.\footnote{Another way to implement these tests is to consider the case that the inequalities in (\ref{SimpleM}) hold conditionally on $\overline{C}_n$. Then, the sCC, sRCC, and ARP tests can be implemented with the conditional variance of $\overline{\mu}_n$ given $\overline{C}_n$. Since $\overline{\mu}_n$ is independent of $\overline{C}_n$, the conditional variance is the same as the unconditional variance and the implementation is the same. Thus, Table \ref{Simple_Times} shows that neither way to implement these tests is valid. The problem is that the inequalities in (\ref{SimpleM}) do not hold conditionally on $\overline{C}_n$.} }

(3) {Concerning the tests in the third group, CCT is invalid for $J>3$. It is only valid under a high-level condition that is not satisfied in this model; see Assumption 4.7 in \cite{CCT2018}. 
BCS is valid and becomes quite conservative when $J=50$. 
Bei is a more computationally feasible version of BCS. 
Table \ref{Simple_Times} suggests Bei has mild to moderate over-rejection. It is unclear why the null rejection probabilities for BCS and Bei are so different. }

(4) {Concerning the tests in the fourth group, FSST is understandably invalid because it requires known Jacobian. 
Surprisingly, Gaf is also invalid for $J>3$. This could be because a rank condition is not satisfied; see Assumption 2 in \cite{Gafarov2025}. 
CR1-CR3 appear to be very sensitive to the choice of the tuning parameter. CR1 is very conservative while CR2 and CR3 have moderate over-rejection when $J=50$. }

(5) {Computationally, the GCC and RGCC tests are by far the fastest among the tests considered. 
Comparing their times to the sCC and sRCC tests demonstrates the computational improvement of the new algebraic formula for the DoF. 
For CR1-CR3 and Gaf, the computational time is for the calculation of a confidence interval and thus is not comparable to the computation time for a single hypothesis. }
\medskip

\begin{figure}
	\caption{Simulated Power Curves and Computation Times of Various Tests of Nominal Level 5\% in the Simple Model ($n=500$)}
	\label{Fig:powercurve_simple_sec}
	\begin{center}
		\begin{threeparttable}
			\begin{subfigure}[b]{0.465\textwidth}
				\begin{center}
					\begin{tikzpicture}
						\tikzmath{\n = 500; \xmin = -1; \xmax = 6; \gridmin = \xmin / sqrt(\n); \gridmax = \xmax / sqrt(\n); } 
						\begin{axis}
							[width=0.825\textwidth,
							ymin=0,ymax=1,xmin=\gridmin,xmax=\gridmax,
							xticklabel style={font=\scriptsize}, 
							yticklabel style={font=\scriptsize}, 
							legend style={at={(axis cs:\gridmax,0.985)},anchor=north west}]
							\draw [pattern=north west lines, pattern color=gray!70,draw=none] (\gridmin,0) rectangle (0,0.996);
							\draw[dotted] (\gridmin,0.05) -- (\gridmax,0.05);
							\addplot[very thick] table [x=theta, y=rr_GCC, col sep=semicolon]{Simple_Power_0_6_Sims_1000_q_0_n_500_J_3.txt};
							\addplot[dashed, very thick, gray] table [x=theta, y=rr_RGCC, col sep=semicolon]{Simple_Power_0_6_Sims_1000_q_0_n_500_J_3.txt};				
							\addplot[mark=*, mark size=0.7pt, mark repeat={5}, thick, red] table [x=theta, y=rr_BCS, col sep=semicolon]{Simple_Power_0_6_Sims_1000_q_0_n_500_J_3.txt};
							\addplot[mark=x, mark repeat={5}, thick, purple] table [x=theta, y=rr_Bei, col sep=semicolon]{Simple_Power_0_6_Sims_1000_q_0_n_500_J_3.txt};
        						\addplot[mark=square*, mark size=0.7pt, mark repeat={5}, thick, teal] table [x=theta, y=rr_CR1, col sep=semicolon]{Simple_Power_0_6_Sims_1000_q_0_n_500_J_3.txt};
        						\addplot[very thick, teal,dotted] table [x=theta, y=rr_CR2, col sep=semicolon]{Simple_Power_0_6_Sims_1000_q_0_n_500_J_3.txt};
							\addlegendentry{{\scriptsize GCC [0.09s]}};
							\addlegendentry{{\scriptsize RGCC [0.11s]}};
							\addlegendentry{{\scriptsize BCS [23s]}};
							\addlegendentry{{\scriptsize Bei [8.8s]}};
							\addlegendentry{{\scriptsize CR1 [20s]}};
							\addlegendentry{{\scriptsize CR2 [20s]}};
						\end{axis}
					\end{tikzpicture}
					\caption{$J=3$ and $q=0$}
					\label{Fig:powercurve_simple_sec_J3_q0}
				\end{center}		 
			\end{subfigure}
			\hfill
			\begin{subfigure}[b]{0.465\textwidth}  
				\begin{center}
					\begin{tikzpicture}
						\tikzmath{\n = 500; \xmin = -1; \xmax = 6; \gridmin = \xmin / sqrt(\n); \gridmax = \xmax / sqrt(\n); } 
						\begin{axis}
							[width=0.825\textwidth,
							ymin=0,ymax=1,xmin=\gridmin,xmax=\gridmax,
							xticklabel style={font=\scriptsize}, 
							yticklabel style={font=\scriptsize}, 
							legend style={at={(axis cs:\gridmax,0.985)},anchor=north west}]
							\draw [pattern=north west lines, pattern color=gray!70,draw=none] (\gridmin,0) rectangle (0,0.996);
							\draw[dotted] (\gridmin,0.05) -- (\gridmax,0.05);
							\addplot[very thick] table [x=theta, y=rr_GCC, col sep=semicolon]{Simple_Power_0_6_Sims_1000_q_4_n_500_J_3.txt};
							\addplot[dashed, very thick, gray] table [x=theta, y=rr_RGCC, col sep=semicolon]{Simple_Power_0_6_Sims_1000_q_4_n_500_J_3.txt};					
							\addplot[mark=*, mark size=0.7pt, mark repeat={5}, thick, red] table [x=theta, y=rr_BCS, col sep=semicolon]{Simple_Power_0_6_Sims_1000_q_4_n_500_J_3.txt};
							\addplot[mark=x, mark repeat={5}, thick, purple] table [x=theta, y=rr_Bei, col sep=semicolon]	{Simple_Power_0_6_Sims_1000_q_4_n_500_J_3.txt};	        						\addplot[mark=square*, mark size=0.7pt, mark repeat={5}, thick, teal] table [x=theta, y=rr_CR1, col sep=semicolon]{Simple_Power_0_6_Sims_1000_q_4_n_500_J_3.txt};
        						\addplot[very thick, teal,dotted] table [x=theta, y=rr_CR2, col sep=semicolon]{Simple_Power_0_6_Sims_1000_q_4_n_500_J_3.txt};
							\addlegendentry{{\scriptsize GCC [0.09s]}};
							\addlegendentry{{\scriptsize RGCC [0.11s]}};
							\addlegendentry{{\scriptsize BCS [22s]}};
							\addlegendentry{{\scriptsize Bei [8.7s]}};										
							\addlegendentry{{\scriptsize CR1 [20s]}};
							\addlegendentry{{\scriptsize CR2 [21s]}};
						\end{axis}
					\end{tikzpicture}
					\caption{$J=3$ and $q=4$}
					\label{Fig:powercurve_simple_sec_J3_q4}	
				\end{center}
			\end{subfigure}
			\vskip\baselineskip
			\begin{subfigure}[b]{0.465\textwidth}   
				\begin{center}
					\begin{tikzpicture}
						\tikzmath{\n = 500; \xmin = -1; \xmax =6; \gridmin = \xmin / sqrt(\n); \gridmax = \xmax / sqrt(\n); } 
						\begin{axis}
							[width=0.825\textwidth,
							ymin=0,ymax=1,xmin=\gridmin,xmax=\gridmax,
							xticklabel style={font=\scriptsize}, 
							yticklabel style={font=\scriptsize}, 
							legend style={at={(axis cs:\gridmax,0.985)},anchor=north west}]
							\draw [pattern=north west lines, pattern color=gray!70,draw=none] (\gridmin,0) rectangle (0,0.996);
							\draw[dotted] (\gridmin,0.05) -- (\gridmax,0.05);
							\addplot[very thick] table [x=theta, y=rr_GCC, col sep=semicolon]{Simple_Power_0_6_Sims_1000_q_0_n_500_J_10.txt};
							\addplot[dashed, very thick, gray] table [x=theta, y=rr_RGCC, col sep=semicolon]{Simple_Power_0_6_Sims_1000_q_0_n_500_J_10.txt};					
							\addplot[mark=*, mark size=0.7pt, mark repeat={5}, thick, red] table [x=theta, y=rr_BCS, col sep=semicolon]{Simple_Power_0_6_Sims_1000_q_0_n_500_J_10.txt};
							\addplot[mark=x, mark repeat={5}, thick, purple] table [x=theta, y=rr_Bei, col sep=semicolon]{Simple_Power_0_6_Sims_1000_q_0_n_500_J_10.txt};			\addplot[mark=square*, mark size=0.7pt, mark repeat={5}, thick, teal] table [x=theta, y=rr_CR1, col sep=semicolon]{Simple_Power_0_6_Sims_1000_q_0_n_500_J_10.txt};
        						\addplot[very thick, teal,dotted] table [x=theta, y=rr_CR2, col sep=semicolon]{Simple_Power_0_6_Sims_1000_q_0_n_500_J_10.txt};
							\addlegendentry{{\scriptsize GCC [0.10s]}};
							\addlegendentry{{\scriptsize RGCC [0.10s]}};
							\addlegendentry{{\scriptsize BCS [42s]}};
							\addlegendentry{{\scriptsize Bei [10s]}};
							\addlegendentry{{\scriptsize CR1 [20s]}};
							\addlegendentry{{\scriptsize CR2 [20s]}};
						\end{axis}
					\end{tikzpicture}
					\caption{$J=10$ and $q=0$}
					\label{Fig:powercurve_simple_sec_J10_q0}
				\end{center}
			\end{subfigure}
			\hfill
			\begin{subfigure}[b]{0.465\textwidth}   
				\begin{center}
					\begin{tikzpicture}
						\tikzmath{\n = 500; \xmin = -1; \xmax = 6; \gridmin = \xmin / sqrt(\n); \gridmax = \xmax / sqrt(\n); } 
						\begin{axis}
							[width=0.825\textwidth,
							ymin=0,ymax=1,xmin=\gridmin,xmax=\gridmax,
							xticklabel style={font=\scriptsize}, 
							yticklabel style={font=\scriptsize}, 
							legend style={at={(axis cs:\gridmax,0.985)},anchor=north west}]
							\draw [pattern=north west lines, pattern color=gray!70,draw=none] (\gridmin,0) rectangle (0,0.996);
							\draw[dotted] (\gridmin,0.05) -- (\gridmax,0.05);
							\addplot[very thick] table [x=theta, y=rr_GCC, col sep=semicolon]{Simple_Power_0_6_Sims_1000_q_4_n_500_J_10.txt};
							\addplot[dashed, very thick, gray] table [x=theta, y=rr_RGCC, col sep=semicolon]{Simple_Power_0_6_Sims_1000_q_4_n_500_J_10.txt};					
							\addplot[mark=*, mark size=0.7pt, mark repeat={5}, thick, red] table [x=theta, y=rr_BCS, col sep=semicolon]{Simple_Power_0_6_Sims_1000_q_4_n_500_J_10.txt};
							\addplot[mark=x, mark repeat={5}, thick, purple] table [x=theta, y=rr_Bei, col sep=semicolon]{Simple_Power_0_6_Sims_1000_q_4_n_500_J_10.txt};        											\addplot[mark=square*, mark size=0.7pt, mark repeat={5}, thick, teal] table [x=theta, y=rr_CR1, col sep=semicolon]{Simple_Power_0_6_Sims_1000_q_4_n_500_J_10.txt};
        						\addplot[very thick, teal,dotted] table [x=theta, y=rr_CR2, col sep=semicolon]{Simple_Power_0_6_Sims_1000_q_4_n_500_J_10.txt};
							\addlegendentry{{\scriptsize GCC [0.10s]}};
							\addlegendentry{{\scriptsize RGCC [0.12s]}};
							\addlegendentry{{\scriptsize BCS [40s]}};
							\addlegendentry{{\scriptsize Bei [10s]}};
							\addlegendentry{{\scriptsize CR1 [20s]}};
							\addlegendentry{{\scriptsize CR2 [20s]}};
						\end{axis}
					\end{tikzpicture}
					\caption{$J=10$ and $q=4$}
					\label{Fig:powercurve_simple_sec_J10_q4}
				\end{center}
			\end{subfigure}
			\vskip\baselineskip
			\begin{subfigure}[b]{0.465\textwidth}   
				\begin{center}
					\begin{tikzpicture}
						\tikzmath{\n = 500; \xmin = -1; \xmax = 6; \gridmin = \xmin / sqrt(\n); \gridmax = \xmax / sqrt(\n); } 
						\begin{axis}
							[width=0.825\textwidth,
							ymin=0,ymax=1,xmin=\gridmin,xmax=\gridmax,
							xticklabel style={font=\scriptsize}, 
							yticklabel style={font=\scriptsize}, 
							legend style={at={(axis cs:\gridmax,0.985)},anchor=north west}]
							\draw [pattern=north west lines, pattern color=gray!70,draw=none] (\gridmin,0) rectangle (0,0.996);
							\draw[dotted] (\gridmin,0.05) -- (\gridmax,0.05);
							\addplot[very thick] table [x=theta, y=rr_GCC, col sep=semicolon]{Simple_Power_0_6_Sims_1000_q_0_n_500_J_50.txt};
							\addplot[dashed, very thick, gray] table [x=theta, y=rr_RGCC, col sep=semicolon]{Simple_Power_0_6_Sims_1000_q_0_n_500_J_50.txt};					
							\addplot[mark=*, mark size=0.7pt, mark repeat={5}, thick, red] table [x=theta, y=rr_BCS, col sep=semicolon]{Simple_Power_0_6_Sims_1000_q_0_n_500_J_50.txt};
							\addplot[mark=x, mark repeat={5}, thick, purple] table [x=theta, y=rr_Bei, col sep=semicolon]{Simple_Power_0_6_Sims_1000_q_0_n_500_J_50.txt};
    \addplot[mark=square*, mark size=0.7pt, mark repeat={5}, thick, teal] table [x=theta, y=rr_CR1, col sep=semicolon]{Simple_Power_0_6_Sims_1000_q_0_n_500_J_50.txt};
        						\addplot[very thick, teal,dotted] table [x=theta, y=rr_CR2, col sep=semicolon]{Simple_Power_0_6_Sims_1000_q_0_n_500_J_50.txt};
							\addlegendentry{{\scriptsize GCC [0.32s]}};
							\addlegendentry{{\scriptsize RGCC [0.32s]}};
							\addlegendentry{{\scriptsize BCS [143s]}};
							\addlegendentry{{\scriptsize Bei [24s]}};
							\addlegendentry{{\scriptsize CR1 [19s]}};
							\addlegendentry{{\scriptsize CR2 [19s]}};
						\end{axis}
					\end{tikzpicture}
					\caption{$J=50$ and $q=0$}
					\label{Fig:powercurve_simple_sec_J50_q0}
				\end{center}
			\end{subfigure}
			\hfill
			\begin{subfigure}[b]{0.465\textwidth}   
				\begin{center}
					\begin{tikzpicture}
						\tikzmath{\n = 500; \xmin = -1; \xmax = 6; \gridmin = \xmin / sqrt(\n); \gridmax = \xmax / sqrt(\n); } 
						\begin{axis}
							[width=0.825\textwidth,
							ymin=0,ymax=1,xmin=\gridmin,xmax=\gridmax,
							xticklabel style={font=\scriptsize}, 
							yticklabel style={font=\scriptsize}, 
							legend style={at={(axis cs:\gridmax,0.985)},anchor=north west}]
							\draw [pattern=north west lines, pattern color=gray!70,draw=none] (\gridmin,0) rectangle (0,0.996);
							\draw[dotted] (\gridmin,0.05) -- (\gridmax,0.05);
							\addplot[very thick] table [x=theta, y=rr_GCC, col sep=semicolon]{Simple_Power_0_6_Sims_1000_q_4_n_500_J_50.txt};
							\addplot[dashed, very thick, gray] table [x=theta, y=rr_RGCC, col sep=semicolon]{Simple_Power_0_6_Sims_1000_q_4_n_500_J_50.txt};			
							\addplot[mark=*, mark size=0.7pt, mark repeat={5}, thick, red] table [x=theta, y=rr_BCS, col sep=semicolon]{Simple_Power_0_6_Sims_1000_q_4_n_500_J_50.txt};
							\addplot[mark=x, mark repeat={5}, thick, purple] table [x=theta, y=rr_Bei, col sep=semicolon]	{Simple_Power_0_6_Sims_1000_q_4_n_500_J_50.txt};
 \addplot[mark=square*, mark size=0.7pt, mark repeat={5}, thick, teal] table [x=theta, y=rr_CR1, col sep=semicolon]{Simple_Power_0_6_Sims_1000_q_4_n_500_J_50.txt};
        						\addplot[very thick, teal,dotted] table [x=theta, y=rr_CR2, col sep=semicolon]{Simple_Power_0_6_Sims_1000_q_4_n_500_J_50.txt};
							\addlegendentry{{\scriptsize GCC [0.30s]}};
							\addlegendentry{{\scriptsize RGCC [0.81s]}};
							\addlegendentry{{\scriptsize BCS [134s]}};
							\addlegendentry{{\scriptsize Bei [24s]}};
							\addlegendentry{{\scriptsize CR1 [19s]}};
							\addlegendentry{{\scriptsize CR2 [20s]}};   
						\end{axis}
					\end{tikzpicture}
					\caption{$J=50$ and $q=4$}
					\label{Fig:powercurve_simple_sec_J50_q4}
				\end{center}
			\end{subfigure}
		\end{threeparttable}
	\end{center}
{\small {\em Note:} The shaded regions indicate the identified set for $\theta$. The number of simulations is 1000. The numbers in the square brackets in the legends are the median time needed to calculate the test at all grid points for one simulation (measured in seconds).} 
\end{figure}
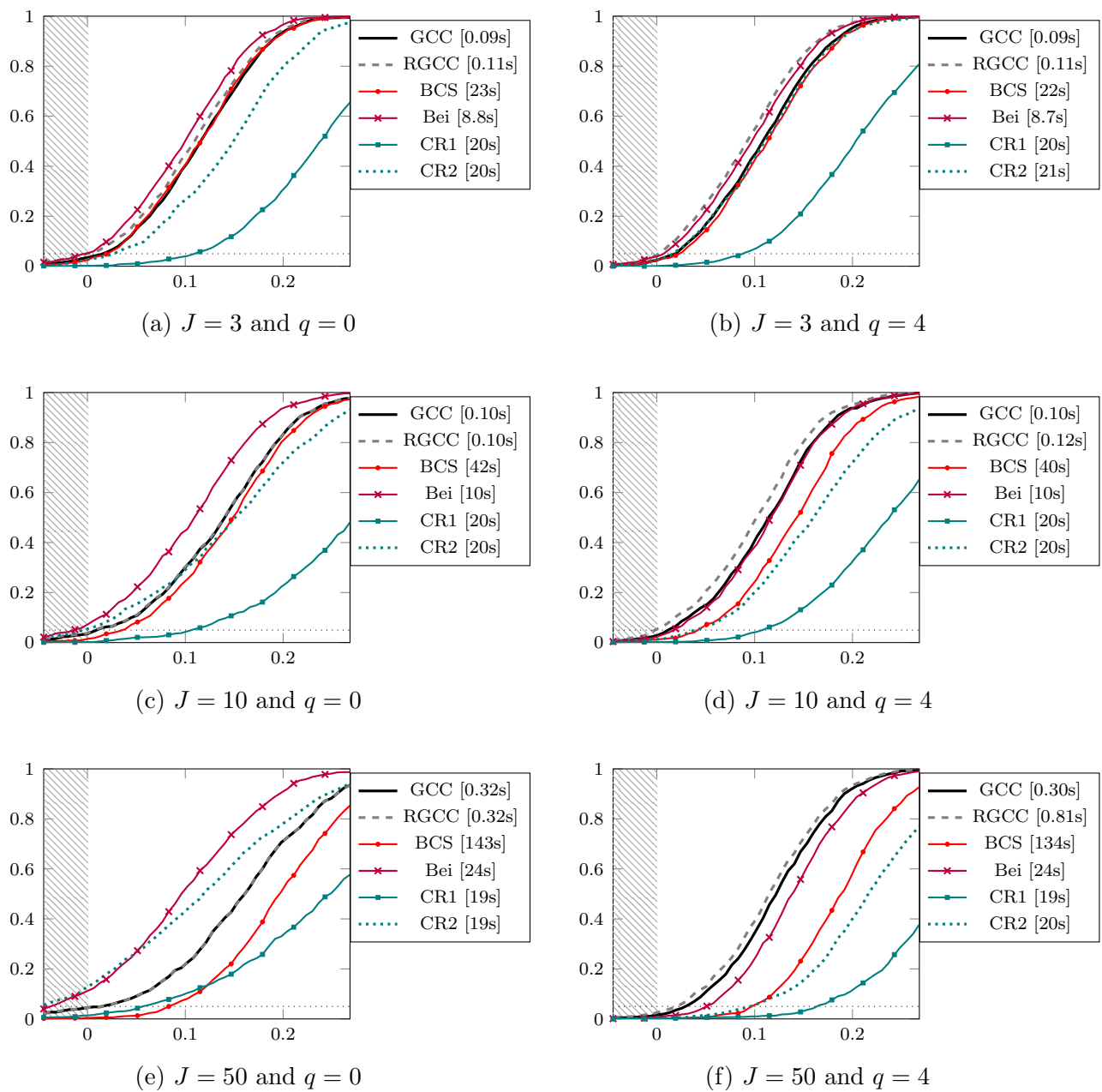

\addtocounter{figure}{-1} 

We also consider the power functions of some of the tests as a function of the hypothesized value of $\theta$. 
In addition to the GCC and RGCC tests, we include the BCS and Bei tests as benchmarks and the CR1 and CR2 tests to see whether the sensitivity of their NRPs to the tuning parameter carries over to the power functions. 
These tests are included because they are the ones that are valid or exhibit only moderate over-rejection in Table \ref{Simple_Times}. 

We test the inequalities in (\ref{SimpleM}) for a grid of values of $\theta$ between $-1/\sqrt{n}$ and $6/\sqrt{n}$. 
Dividing the grid by $\sqrt{n}$ ensures the power function approaches the asymptotic local power function as $n\rightarrow\infty$. 
For $\theta\le 0$ (the shaded region), the rejection probabilities are under the null and thus should be less than or equal to $\alpha=5\%$. 
For $\theta>0$, the rejection probabilities represent the power of the tests. 
Figure \ref{Fig:powercurve_simple_sec} reports the power curves for $q\in\{0,4\}$ and $J\in\{3,10,50\}$. 
When $q=0$, the 3rd to the $J$th inequalities are binding at $\delta=1$, and when $q=4$, these inequalities are slack at $\delta=1$. 

As we can see in Figure \ref{Fig:powercurve_simple_sec}: \medskip

(1) {The GCC and RGCC tests are very fast with good power for all $J\in\{3,10,50\}$ and $q\in\{0,4\}$. The power is especially impressive when $q=4$, so $J-2$ of the inequalities are slack. 
Also note that, when the number of inequalities increases, the difference between the GCC and RGCC power curves decreases, especially when all the inequalities bind.}

(2) {The BCS test has good power for $J=3$, but becomes more conservative, and therefore less powerful, for larger $J$.
The Bei test is more powerful than BCS, GCC, and RGCC, with especially high power when $q=0$ and $J\in\{10,50\}$. These specifications correspond to the cases of moderate over-rejection of the Bei test at $\theta=0$.}

(3) {The sensitivity that the CR tests show in Table \ref{Simple_Times} is reflected in power. CR1 has low power, and the power of CR2 is closely related to the over-rejection of the test at $\theta=0$.}

\subsection{An Interval Outcome IV Regression Model}\label{sub:IOR}

This subsection considers the interval outcome IV regression model from Example \ref{ex:interval}. 
We simulate the power curves of the GCC and RGCC tests. 
We also include the BCS and Bei tests as benchmarks. 
Since this is a two-sided problem, we do not include recommendations from the literature on inference for the value of a LPP. 

The model is  based on the aggregate demand model considered in GLS23, where the market shares are noisy measures of conditional choice probabilities and may contain zero values. The model boils down to the interval outcome IV regression in Example \ref{ex:interval}. 
Write out $X=(X_1, X_2, W')'$, where $X_1$ is a scalar endogenous regressor, $X_2$ is a scalar exogenous regressor, and $W$ is a $d_W$-vector of additional exogenous controls. 
Similarly, write out $\beta=(\theta_1, \theta_2, \gamma')'$ with $\gamma\in\R^{d_W}$. 
Also let $Z_e$ be an excluded exogenous instrument. 
We take $Z=\mathcal{I}(X_2, W, Z_e)$ to be a non-negative vector-valued instrumental function. 

In this model, there is a latent market share that satisfies a logit specification: 
\begin{equation}
s^\ast=\frac{\exp\left(\theta_{10}X_1+\theta_{20}X_2+\epsilon\right)}{1+\exp\left(\theta_{10}X_1+\theta_{20}X_2+\epsilon\right)}, \label{IV_reg_sim_1}
\end{equation}
where $\theta_{10}$ and $\theta_{20}$ denote the true values of $\theta_1$ and $\theta_2$ and $\epsilon$ is an error term.  
(The true value of $\gamma$ is zero.) 
In the framework of GLS23, $s^\ast$ is the (unobserved) conditional probability that a representative consumer buys a product, and $Y^\ast=\log(s^\ast)-\log(1-s^\ast)$ is the mean utility of the  product. 
We observe a market share $s_N\sim Binomial(N,s^\ast)/N$, where $N$ is the number of participants in the market. 
GLS23 argue that $Y^U = \log(s_N+2/N)-\log(1-s_N+0.00125)$ and $Y^L = \log(s_N+0.00125)-\log(1-s_N+2/N)$ satisfy $\mathbb{E}[Y^L|Z]\leq \mathbb{E}[Y^\ast|Z]\leq \mathbb{E}[Y^U|Z]$, which justifies the use of the model in Example \ref{ex:interval}. 

We take $X_2$, $Z_e$, and the components of $W$ to be independent $Bernoulli(0.5)$ random variables, except for the first element of $W$, which is taken to be the constant one. 
We take $Z=\mathcal{I}(X_2,W,Z_e)$ to be the vector of indicators for each point in the support of $(X_2,W,Z_e)$. 
When $d_W=1,2,3$, the dimension of $Z$ is $4,8,16$, respectively. 
Also, the fact that there are two inequalities in (\ref{uncmi}) means that there are $8$, $16$, and $32$ inequalities, respectively. 
Independently of $Z$, let $\epsilon\sim \min\left(\max\left(N(0,1),-4\right),4\right)$. 
Then, let $X_1=\mathds{1}\{Z_e+\epsilon/2>0\}$ be the endogenous regressor. 
We calculate $s^\ast$ according to (\ref{IV_reg_sim_1}) with $\theta_{10}=\theta_{20}=-1$. 
We simulate $s_N$ with $N=100$ independently from $Z$ and $\epsilon$. 
This specifies the data generating process for all the observed variables: $s_N$, $X_1$, $X_2$, $W$, and $Z_e$. 
We simulate a sample of size $n$ from this model for $n\in\{500, 1000\}$ and calculate confidence intervals for $\theta_2$ treating $\delta=(\theta_1,\gamma')'$ as nuisance parameters.\footnote{The same model is considered in Section 5.2 of CS23. However, CS23 construct confidence intervals for $\theta_1$, treating $(\theta_2,\gamma')'$ as nuisance parameters. Since $X_2$ and $W$ are both exogenous, it is valid to conduct inference conditional on $(X_2,W,Z_e)$ using the sCC test in CS23 because the Jacobian of the sample moments with respect to $(\theta_2,\gamma')'$ is known given the sample for $(X_2,W,Z_e)$. In contrast, this paper takes $(\theta_1,\gamma')'$ to be the nuisance parameters. Then, the Jacobian with respect to $\theta_1$ is not known even after conditioning on $(X_2,W,Z_e)$. Thus, the sCC test in CS23 is invalid.} 

Figure \ref{Fig:powercurve_gen} plots simulated power curves for the GCC, RGCC, BCS, and Bei tests. 
In each graph, the horizontal axis represents the value of $\theta_2$, while the vertical axis represents the rejection probability.\footnote{The rejection probabilities reported are frequencies that each $\theta_2$ value lies outside the confidence interval for $\theta_2$. The confidence intervals are computed using a bisection algorithm for the endpoints.} 
The shaded region indicates the identified set for $\theta_2$. 
Since the control variables have zero coefficients and are independent of the other random variables, they do not affect the identified set of $\theta_2$. 
In the legend, each number in the square brackets is the median computational time (in seconds) to compute one confidence interval. 
We make the following remark on Figure \ref{Fig:powercurve_gen}. 

\begin{figure}
	\caption{Power Curves and Computation Times of Various Tests of Nominal Level 5\% in the Interval Outcome IV Regression Model}
	\label{Fig:powercurve_gen}
	\begin{center}
		\begin{threeparttable}
			\begin{subfigure}[b]{0.475\textwidth}
				\begin{center}
					\begin{tikzpicture}
						\begin{axis}
							[width=0.79\textwidth,
							ymin=0,ymax=1,xmin=-1.55,xmax=-0.45,
							xticklabel style={font=\scriptsize}, 
							yticklabel style={font=\scriptsize}, 
							legend style={at={(axis cs:-0.45,0.985)},anchor=north west}]
							\draw [pattern=north west lines, pattern color=gray!70,draw=none] (-1.0986,0) rectangle (-0.8613,0.996);
							\draw[dotted] (-1.65,0.05) -- (-.35,0.05);
							\addplot[very thick] table [x=theta, y=rr_GCC, col sep=semicolon]				
							{IntIV_Rej_ALL_5000_dx_1_n_500.txt};
							\addplot[dashed, very thick, gray] table [x=theta, y=rr_RGCC, col sep=semicolon]
							{IntIV_Rej_ALL_5000_dx_1_n_500.txt};
							\addplot[mark=*, mark size=0.7pt, mark repeat={5}, thick, red] table [x=theta, y=rr_BCS, col sep=semicolon] 				
							{IntIV_Rej_ALL_5000_dx_1_n_500.txt};
							\addplot[mark=x, mark repeat={5}, thick, purple] table [x=theta, y=rr_Bei, col sep=semicolon]				
							{IntIV_Rej_ALL_5000_dx_1_n_500.txt};
							\addlegendentry{{\tiny GCC  [0.07s]}};
							\addlegendentry{{\tiny RGCC test [0.07s]}};
							\addlegendentry{{\tiny BCS  [96.54s]}};
							\addlegendentry{{\tiny Bei  [4.00s]}};
						\end{axis}
					\end{tikzpicture}
					\caption{$n=500$, $d_\delta=2$, $d_C = 8$}
					\label{Fig:powercurve_gen_n500_delta2}
				\end{center}		 
			\end{subfigure}
			\hfill \hspace{0.2in}
			\begin{subfigure}[b]{0.475\textwidth}  
				\begin{center}
					\begin{tikzpicture}
						\begin{axis}
							[width=0.79\textwidth,
							ymin=0,ymax=1,xmin=-1.55,xmax=-.45,
							xticklabel style={font=\scriptsize}, 
							yticklabel style={font=\scriptsize}, 
							legend style={at={(axis cs:-.45,0.985)},anchor=north west}]
							\draw [pattern=north west lines, pattern color=gray!70,draw=none] (-1.0986,0) rectangle (-0.8613,0.996);
							\draw[dotted] (-2,0.05) -- (0,0.05);
							\addplot[very thick] table [x=theta, y=rr_GCC, col sep=semicolon]				
							{IntIV_Rej_ALL_5000_dx_1_n_1000.txt};
							\addplot[dashed, very thick, gray] table [x=theta, y=rr_RGCC, col sep=semicolon]
							{IntIV_Rej_ALL_5000_dx_1_n_1000.txt};
							\addplot[mark=*, mark size=0.7pt, mark repeat={5}, thick, red] table [x=theta, y=rr_BCS, col sep=semicolon] 				
							{IntIV_Rej_ALL_5000_dx_1_n_1000.txt};
							\addplot[mark=x, mark repeat={5}, thick, purple] table [x=theta, y=rr_Bei, col sep=semicolon]				
							{IntIV_Rej_ALL_5000_dx_1_n_1000.txt};
							\addlegendentry{{\tiny GCC  [0.07s]}};
							\addlegendentry{{\tiny RGCC test [0.07s]}};
							\addlegendentry{{\tiny BCS  [142.69s]}};
							\addlegendentry{{\tiny Bei  [4.13s]}};
						\end{axis}
					\end{tikzpicture}
					\caption{$n=1000$, $d_\delta=2$, $d_C=8$}
					\label{Fig:powercurve_gen_n1000_delta2}	
				\end{center}
			\end{subfigure}
			\vskip\baselineskip
			\begin{subfigure}[b]{0.475\textwidth}   
				\begin{center}
					\begin{tikzpicture}
						\begin{axis}
							[width=0.79\textwidth,
							ymin=0,ymax=1,xmin=-1.55,xmax=-.45,
							xticklabel style={font=\scriptsize}, 
							yticklabel style={font=\scriptsize}, 
							legend style={at={(axis cs:-.45,0.985)},anchor=north west}]
							\draw [pattern=north west lines, pattern color=gray!70,draw=none] (-1.0986,0) rectangle (-0.8613,0.996);
							\draw[dotted] (-2,0.05) -- (0,0.05);
							\addplot[very thick] table [x=theta, y=rr_GCC, col sep=semicolon]			
							{IntIV_Rej_ALL_5000_dx_2_n_500.txt};
							\addplot[dashed, very thick, gray] table [x=theta, y=rr_RGCC, col sep=semicolon]
							{IntIV_Rej_ALL_5000_dx_2_n_500.txt};
							\addplot[mark=*, mark size=0.7pt, mark repeat={5}, thick, red] table [x=theta, y=rr_BCS, col sep=semicolon] 				
							{IntIV_Rej_ALL_5000_dx_2_n_500.txt};
							\addplot[mark=x, mark repeat={5}, thick, purple] table [x=theta, y=rr_Bei, col sep=semicolon]				
							{IntIV_Rej_ALL_5000_dx_2_n_500.txt};
							\addlegendentry{{\tiny GCC  [0.09s]}};
							\addlegendentry{{\tiny RGCC test [0.09s]}};
							\addlegendentry{{\tiny BCS  [311.16s]}};
							\addlegendentry{{\tiny Bei  [5.51s]}};
						\end{axis}
					\end{tikzpicture}
					\caption{$n=500$, $d_\delta=3$, $d_C=16$}
					\label{Fig:powercurve_gen_n500_delta3}
				\end{center}
			\end{subfigure}
			\hfill
			\begin{subfigure}[b]{0.475\textwidth}   
				\begin{center}
					\begin{tikzpicture}
						\begin{axis}
							[width=0.79\textwidth,
							ymin=0,ymax=1,xmin=-1.55,xmax=-0.45,
							xticklabel style={font=\scriptsize}, 
							yticklabel style={font=\scriptsize}, 
							legend style={at={(axis cs:-.45,0.985)},anchor=north west}]
							\draw [pattern=north west lines, pattern color=gray!70,draw=none] (-1.0986,0) rectangle (-0.8613,0.996);
							\draw[dotted] (-2,0.05) -- (0,0.05);
							\addplot[very thick] table [x=theta, y=rr_GCC, col sep=semicolon]					
							{IntIV_Rej_ALL_5000_dx_2_n_1000.txt};
							\addplot[dashed, very thick, gray] table [x=theta, y=rr_RGCC, col sep=semicolon]
							{IntIV_Rej_ALL_5000_dx_2_n_1000.txt};
							\addplot[mark=*, mark size=0.7pt, mark repeat={5}, thick, red] table [x=theta, y=rr_BCS, col sep=semicolon] 				
							{IntIV_Rej_ALL_5000_dx_2_n_1000.txt};
							\addplot[mark=x, mark repeat={5}, thick, purple] table [x=theta, y=rr_Bei, col sep=semicolon]				
							{IntIV_Rej_ALL_5000_dx_2_n_1000.txt};
							\addlegendentry{{\tiny GCC  [0.10s]}};
							\addlegendentry{{\tiny RGCC test [0.10s]}};
							\addlegendentry{{\tiny BCS  [501.00s]}};
							\addlegendentry{{\tiny Bei  [6.07s]}};
						\end{axis}
					\end{tikzpicture}
					\caption{$n=1000$, $d_\delta=3$, $d_C=16$}
					\label{Fig:powercurve_gen_n1000_delta3}
				\end{center}
			\end{subfigure}
			\vskip\baselineskip
			\begin{subfigure}[b]{0.475\textwidth}   
				\begin{center}
					\begin{tikzpicture}
						\begin{axis}
							[width=0.79\textwidth,
							ymin=0,ymax=1,xmin=-1.55,xmax=-0.45,
							xticklabel style={font=\scriptsize}, 
							yticklabel style={font=\scriptsize}, 
							legend style={at={(axis cs:-.45,0.985)},anchor=north west}]
							\draw [pattern=north west lines, pattern color=gray!70,draw=none] (-1.0986,0) rectangle (-0.8613,0.996);
							\draw[dotted] (-2,0.05) -- (0,0.05);
							\addplot[very thick] table [x=theta, y=rr_GCC, col sep=semicolon]				
							{IntIV_Rej_ALL_5000_dx_3_n_500.txt};
							\addplot[dashed, very thick, gray] table [x=theta, y=rr_RGCC, col sep=semicolon]
							{IntIV_Rej_ALL_5000_dx_3_n_500.txt};
							\addplot[mark=*, mark size=0.7pt, mark repeat={5}, thick, red] table [x=theta, y=rr_BCS, col sep=semicolon] 				
							{IntIV_Rej_ALL_5000_dx_3_n_500.txt};
							\addplot[mark=x, mark repeat={5}, thick, purple] table [x=theta, y=rr_Bei, col sep=semicolon]				
							{IntIV_Rej_ALL_5000_dx_3_n_500.txt};
							\addlegendentry{{\tiny GCC  [0.16s]}};
							\addlegendentry{{\tiny RGCC test [0.15s]}};
							\addlegendentry{{\tiny BCS  [960.48s]}};
							\addlegendentry{{\tiny Bei  [9.17s]}};
						\end{axis}
					\end{tikzpicture}
					\caption{$n=500$, $d_\delta=4$, $d_C=32$}
					\label{Fig:powercurve_gen_n500_delta4}
				\end{center}
			\end{subfigure}
			\hfill
			\begin{subfigure}[b]{0.475\textwidth}   
				\begin{center}
					\begin{tikzpicture}
						\begin{axis}
							[width=0.79\textwidth,
							ymin=0,ymax=1,xmin=-1.55,xmax=-.45,
							xticklabel style={font=\scriptsize}, 
							yticklabel style={font=\scriptsize}, 
							legend style={at={(axis cs:-.45,0.985)},anchor=north west}]
							\draw [pattern=north west lines, pattern color=gray!70,draw=none] (-1.0986,0) rectangle (-0.8613,0.996);
							\draw[dotted] (-2,0.05) -- (0,0.05);
							\addplot[very thick] table [x=theta, y=rr_GCC, col sep=semicolon]					
							{IntIV_Rej_ALL_5000_dx_3_n_1000.txt};
							\addplot[dashed, very thick, gray] table [x=theta, y=rr_RGCC, col sep=semicolon]
							{IntIV_Rej_ALL_5000_dx_3_n_1000.txt};
							\addplot[mark=*, mark size=0.7pt, mark repeat={5}, thick, red] table [x=theta, y=rr_BCS, col sep=semicolon] 				
							{IntIV_Rej_ALL_5000_dx_3_n_1000.txt};
							\addplot[mark=x, mark repeat={5}, thick, purple] table [x=theta, y=rr_Bei, col sep=semicolon]				
							{IntIV_Rej_ALL_5000_dx_3_n_1000.txt};
							\addlegendentry{{\tiny GCC  [0.18s]}};
							\addlegendentry{{\tiny RGCC test [0.17s]}};
							\addlegendentry{{\tiny BCS  [1717.74s]}};
							\addlegendentry{{\tiny Bei  [10.37s]}};
						\end{axis}
					\end{tikzpicture}
					\caption{$n=1000$, $d_\delta=4$, $d_C=32$}
					\label{Fig:powercurve_gen_n1000_delta4}
				\end{center}
			\end{subfigure}
		\end{threeparttable}
	\end{center}
{\small {\em Note:} The shaded regions indicate the identified set for $\theta_2$. The number of simulations is 5000. The numbers in the square brackets in the legends are the median time needed to calculate the confidence interval for one simulation (measured in seconds).} 
\end{figure}
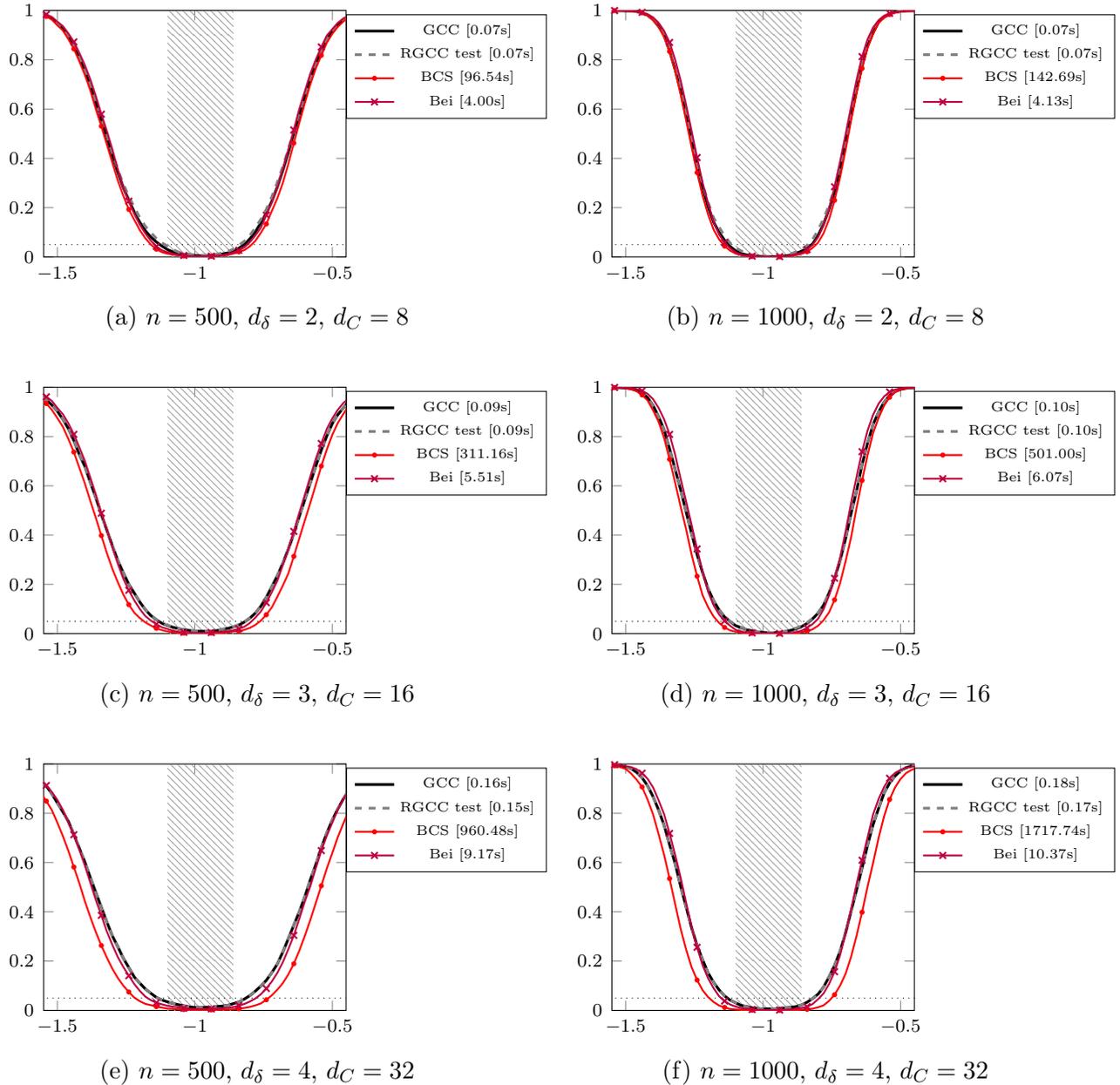

\addtocounter{figure}{-1} 

\noindent\textbf{Remark:} 
\textit{The power curves of the GCC, RGCC, BCS, and Bei tests are remarkably similar. 
All four have rejection probabilities below the nominal level 5\% in the shaded region. 
All four have increasing power as $\theta_2$ deviates from its identified set and as the sample size increases from 500 to 1000. 
Overall, the GCC and RGCC tests are able to match the size and power performance of the BCS and Bei tests while being much faster computationally.} 

\section{Empirical Illustration: Female Labor Supply}
\label{sec:empirical_KT}
This section demonstrates the GCC test in the model of female labor supply considered in KT16. 
KT16 use a model of state transition probabilities with states that indicate (1) the earnings of the individual (zero or not employed, positive and below the federal poverty line, or above the federal poverty line), (2) whether the individual participates in welfare, and (3) whether the individual under-reports her earnings in order to qualify for welfare. 
KT16 use data from the Manpower Development Research Corporation (MDRC) Jobs First study, which is a random experiment that assigned women with young children to one of two welfare programs: the Aid to Families with Dependent Children (AFDC) welfare program or the Jobs First Temporary Family Assistance (JF) program.\footnote{Instructions for accessing the datasets and the replication codes provided by KT16 can be found on the AEA webpage: \url{https://www.aeaweb.org/articles?id=10.1257/aer.20130824}.} 

Relative to AFDC, the JF program primarily changes how eligibility for and the amount of government transfers respond to earnings.\footnote{The JF welfare reform introduces other changes too, such as stricter work requirements and changed administration of the Food Stamps program. We focus on the changes that are relevant for bounding the state transition probabilities. For more information, see the description in KT16.} 
Under JF, for individuals earning under the federal poverty line (FPL), the government transfer does not decrease as earnings increase: everyone under the FPL receives the same government transfer. 
This is more generous than the policy under AFDC, which had government transfers decrease after earnings reached a threshold. 
This more generous policy may induce unemployed women (or women who would be unemployed under AFDC) to gain employment that pays below the FPL. This represents a labor supply response along the extensive margin. 
Another noteworthy feature of the JF program is that the government transfers abruptly drop to zero when earnings cross the FPL. 
This may incentivize some women who would be employed with earnings above the FPL (under AFDC) to decrease their earnings (or under-report their earnings) in order to be eligible for welfare. This represents a labor supply response along the intensive margin. 
The goal in KT16 is to distinguish these two responses without imposing strong assumptions on the utility functions of the individuals. 

KT16 distinguish 7 labor supply/welfare participation states: 
\begin{enumerate}
\item[0n:] zero earnings, welfare nonparticipation, 
\item[1n:] positive earnings below FPL, welfare nonparticipation, 
\item[2n:] earnings above FPL, welfare nonparticipation, 
\item[0r:] zero earnings, welfare participation, truthful reporting of earnings, 
\item[1r:] positive earnings below FPL, welfare participation, truthful reporting of earnings, 
\item[1u:] positive earnings below FPL, welfare participation, underreporting of earnings, and 
\item[2u:] earnings above FPL, welfare participation, underreporting of earnings. 
\end{enumerate}
In the label for each state, the number indicates the level of earnings: ``0'' indicates zero earnings or not employed, ``1'' indicates positive earnings below the FPL, and ``2'' indicates earnings above the FPL. 
The letter indicates welfare participation and under-reporting of earnings: ``n'' indicates nonparticipation in welfare, ``r'' indicates welfare participation with truthful reporting of earnings, and ``u'' indicates welfare participation with under-reporting of earnings. 

Each individual is associated with two states, the one she would choose under AFDC and the one she would choose under JF. 
Because AFDC is the status quo, we label the state transition probabilities with an individual's choice under AFDC first. 
For example, $\pi_{\text{0n},\text{1r}}$ is the conditional probability that an individual who chooses 0n under AFDC would choose 1r under JF. 
Because of the features of the JF reform, KT16 argue that only nine state transition probabilities need to be considered. 
The first group are flows out of 0r: $(\pi_{\text{0r},\text{0n}}, \pi_{\text{0r},\text{1n}}, \pi_{\text{0r},\text{2n}}, \pi_{\text{0r},\text{1r}}, \pi_{\text{0r},\text{2u}})$. 
Individuals with zero earnings and participating in welfare under AFDC may transition to any other state except 1u.\footnote{Under JF, no one chooses 1u because everyone earning under the FPL receives the same government transfer, so there is no reason to under-report earnings. Also note that $\pi_{\text{0r},\text{0r}}$ is not needed because it can be calculated as one minus the others. This is true in general. We do not need to include the transition probabilities from one state into itself because they are determined by the other transition probabilities.} 
The second group are flows into 1r: $(\pi_{\text{0n},\text{1r}}, \pi_{\text{1n},\text{1r}}, \pi_{\text{2n},\text{1r}}, \pi_{\text{0r},\text{1r}}, \pi_{\text{2u},\text{1r}})$. 
Individuals who choose 1r under JF may have chosen any state under AFDC.\footnote{Individuals who choose 1u under AFDC are guaranteed to choose 1r under JF, so $\pi_{\text{1u},\text{1r}}=1$ and there is no need to include it as a free parameter.} 
KT16 argue that all other transition probabilities can be set to zero either because the budget sets for the individuals is unchanged between the two policies or because the combination of choices would violate weak assumptions on individuals' utility functions. 
Let 
\[
\delta = (\pi_{\text{0n},\text{1r}}, \pi_{\text{1n},\text{1r}}, \pi_{\text{2n},\text{1r}}, \pi_{\text{0r},\text{0n}}, \pi_{\text{0r},\text{1n}}, \pi_{\text{0r},\text{2n}}, \pi_{\text{0r},\text{1r}}, \pi_{\text{0r},\text{2u}}, \pi_{\text{2u},\text{1r}})'
\]
collect the transition probabilities into a vector of nuisance parameters. (Note that one of the transition probabilities is common to both groups.) 

There is an additional problem: it is unobserved whether an individual under-reports her income. 
Thus, the marginal probabilities of only six observable states---combining 1r and 1u---is identified. 
KT16 show that after accounting for this problem, the resulting equalities are still linear in $\delta$ and they can be rewritten into five non-redundant equalities of the form $\Gamma\delta = m$, where 
\begin{equation}\label{eq:KlineTartari_eq}
\Gamma = \left(\begin{smallmatrix}-p_{\text{0n}}^{\text{A}}&0&0&p_{\text{0r}}^{\text{A}}&0&0&0&0&0\\
                                                            0&-p_{\text{1n}}^{\text{A}}&0&0&p_{\text{0r}}^{\text{A}}&0&0&0&0\\
                                                            0&0&-p_{\text{2n}}^{\text{A}}&0&0&p_{\text{0r}}^{\text{A}}&0&0&0\\
                                                            0&0&0&-p_{\text{0r}}^{\text{A}}&-p_{\text{0r}}^{\text{A}}&-p_{\text{0r}}^{\text{A}}&-p_{\text{0r}}^{\text{A}}&-p_{\text{0r}}^{\text{A}}&0\\
                                                            0&0&0&0&0&0&0&p_{\text{0r}}^{\text{A}}&-p_{\text{2u}}^{\text{A}}\end{smallmatrix}\right)
\text{ and }
m=\left(\begin{smallmatrix}p_{\text{0n}}^{\text{J}}-p_{\text{0n}}^{\text{A}}\\p_{\text{1n}}^{\text{J}} - p_{\text{1n}}^{\text{A}}\\p_{\text{2n}}^{\text{J}}-p_{\text{2n}}^{\text{A}}\\p_{\text{0r}}^{\text{J}}-p_{\text{0r}}^{\text{A}}\\p_{\text{2u}}^{\text{J}}-p_{\text{2u}}^{\text{A}}\end{smallmatrix}\right), 
\end{equation}
where $p_{\text{s}}^{\text{A}}$ denotes the marginal probability that the individual chooses state s under the AFDC policy for $s\in\{\text{0n}, \text{1n}, \text{2n}, \text{0r}, \text{2u}\}$, and similarly for $p_{\text{s}}^{\text{J}}$ for the JF policy. 
In addition, $\delta \in [0,1]^9$ and $\pi_{\text{0r},\text{0n}}+\pi_{\text{0r},\text{1n}}+\pi_{\text{0r},\text{2n}}+\pi_{\text{0r},\text{1r}}+\pi_{\text{0r},\text{2u}}\leq 1$, which can be imposed by an appropriate choice of $A$ and $b$ in (\ref{deltaID}).\footnote{In Section \ref{sec:verify}, we give a simple sufficient condition for Assumption \ref{assu:rank:simple} in this model.} 

We follow KT16 and report confidence intervals for each transition probability. 
In addition to the GCC test, we also implement the RGCC test, defined in Section \ref{app:refinement}. 
The GCC and RGCC tests are implemented by rewriting the restrictions in terms of $B$, $\mu$, $\Pi$, $D$, and $d$ using (\ref{known_gamma_translation}). We can similarly define estimators of $\mu$ and $\Pi$ from the sample averages that estimate $p_s^t$ for $s\in\{\text{0n}, \text{1n}, \text{2n}, \text{0r}, \text{2u}\}$ and $t\in\{\text{A},\text{J}\}$.\footnote{We copy KT16 and use weighted sample averages with propensity score weights to adjust for baseline differences.} 
We estimate the asymptotic variance by bootstrapping the sample averages with a cluster bootstrap, clustered at the case level and with 1000 bootstrap draws.\footnote{This is the same implementation of the bootstrap that KT16 use, except that we bootstrap the estimators of $p_s^t$ while they bootstrap the formulas for the bounds that they calculate after eliminating the nuisance parameters manually. This means that our variance estimators, while asymptotically equivalent, are numerically different.} 
The endpoints of the GCC and RGCC confidence intervals are calculated using a bisection algorithm. 

KT16 manually eliminate the nuisance parameters and work out explicit formulas for the bounds of each element of $\delta$. 
To avoid the dependence on tuning parameters that is prevalent in the literature on testing inequalities, they report two confidence intervals. 
One confidence interval, called \textit{Na\"ive}, is constructed by ignoring the uncertainty in which bounds bind. This interval is formed by the single lowest upper (highest lower) estimated bound plus (minus) its standard error. 
The asymptotic coverage probability of this interval is unknown. 
The other interval, called \textit{Conservative}, assumes all population bounds bind simultaneously, leading to asymptotically valid but often overly conservative inference.  

\begin{table}
	\begin{center}	
		\begin{threeparttable}
			\caption{Confidence Intervals for Transition Probabilities}
			\label{tab:KlineTartari2016}								
				\begin{tabular}{cccccc}
					\hline\hline
					& & \multicolumn{4}{c}{ 95\% CI } \\
					\cline { 3 - 6 }
					$\theta$ & Estimated bound & GCC &RGCC& Na\"ive & Conservative \\
					\hline
					$\pi_{\text{0n},\text{1r}}$ & $[0.055,~0.620]$ & $[0.000, ~0.764]$ &$[0.000, ~0.741]$& $[0.000, ~0.740]$ & $[0.000, ~0.782]$ \\
					$\pi_{\text{1n},\text{1r}}$ & $[0.382,~ 0.987]$ & $[0.303,~ 1.000]$ &$[0.316, ~1.000]$& $[0.318, ~1.000]$ & $[0.318, ~1.000]$ \\
					$\pi_{\text{2n},\text{1r}}$ & $[0.280,~ 1.000]$ & $[0.171,~ 1.000]$ &$[0.189, ~1.000]$& $[0.193, ~1.000]$ & $[0.193, ~1.000]$ \\
					$\pi_{\text{0r},\text{0n}}$ & $[0.000,~ 0.170]$ & $[0.000, ~0.210]$ &$[0.000, ~0.204]$& $[0.000, ~0.204]$ & $[0.000,~ 0.215]$ \\
					$\pi_{\text{0r},\text{1n}}$ & $[0.000, ~0.170]$ & $[0.000, ~0.195]$ & $[0.000, ~0.195]$& $[0.000,~0.211]$ & $[0.000, ~0.215]$ \\
					$\pi_{\text{0r},\text{2n}}$ & $[0.000, ~0.154]$ & $[0.000, ~0.179]$ &$[0.000, ~0.179]$ & $[0.000, ~0.171]$ & $[0.000, ~0.226]$ \\
					$\pi_{\text{0r},\text{1r}}$ & $[0.000, ~0.170]$ & $[0.000,~ 0.210]$ &$[0.000,~ 0.204]$& $[0.000, ~0.204]$ & $[0.000,~ 0.220]$ \\
					$\pi_{\text{0r},\text{2u}}$ & $[0.031,~ 0.051]$ & $[0.020,~ 0.060]$ &$[0.022, ~0.059]$& $[0.022,~ 0.059]$ & $[0.022, ~0.099]$ \\
					$\pi_{\text{2u},\text{1r}}$ & $[0.000, ~1.000]$ & $[0.000,~ 1.000]$ &$[0.000, ~1.000]$& $[0.000, ~1.000]$ & $[0.000, ~1.000]$ \\
					\hline
				\end{tabular}
				\begin{footnotesize}
					\begin{tablenotes}
						\item{Note:}
						``Estimated bound'' denotes an estimator of the bounds derived in KT16. 
						``GCC'' and ``RGCC'' refer to confidence intervals formed by inverting the GCC and RGCC tests using bisection. 
						``Na\"{i}ve'' and ``Conservative'' refer to the confidence intervals reported in Table 5 in KT16.
					\end{tablenotes}
				\end{footnotesize}	
		\end{threeparttable}
	\end{center}
\end{table}
Table \ref{tab:KlineTartari2016} reports the confidence intervals for each transition probability.\footnote{The values for the Na\"ive and Conservative confidence intervals are slightly different from the ones in the published version of KT16. We calculated these values using the KT16 replication code without changes. The differences are likely due to differences in random number generation across versions of Stata when implementing the bootstrap.} As Table \ref{tab:KlineTartari2016} shows: 

(1) { All the confidence intervals are qualitatively similar. 
They provide evidence for the same heterogeneous labor supply responses: statistically significant outflows from state 0r, corresponding to an increase in labor supply along the extensive margin, and statistically significant inflows into state 1r, especially from state 2n, corresponding to a decrease in labor supply along the intensive margin as women decrease their earnings to qualify for welfare. }

(2) { One would expect the endpoints of the GCC and RGCC confidence intervals to lie between the endpoints of the Na\"{i}ve and Conservative confidence intervals. 
That is mostly true, but there are a few noteworthy exceptions. }
(a) { For $\pi_{\text{0r},\text{1n}}$, the GCC and RGCC confidence intervals are narrower than the Na\"{i}ve confidence interval. 
In this case, the two smallest upper bounds are very close to each other. Together, they provide stronger statistical evidence than just the one that is active. 
The GCC and RGCC tests respond to this statistical evidence in a way that the Na\"{i}ve confidence interval does not. }
(b) {For $\pi_{\text{1n},\text{1r}}$ and $\pi_{\text{2n},\text{1r}}$, the GCC and RGCC confidence intervals are wider than the Conservative confidence intervals. 
This is not surprising for the GCC confidence interval because the GCC test is conservative when there is one binding inequality.\footnote{For all the transition probabilities in Table \ref{tab:KlineTartari2016}, there is only one nontrivial lower bound. This explains both why the Na\"{i}ve lower bounds are equal to the Conservative lower bounds and why the GCC confidence interval appears conservative for the lower bounds.} 
This is surprising for the RGCC confidence interval because with one binding inequality, the RGCC test is asymptotically equivalent to the optimal one-sided test. This is likely due to the numerical difference between the variance matrix estimators used in the RGCC and the Conservative confidence intervals.} 

(3) {To compute all nine confidence intervals, the GCC and RGCC methods took about 4 seconds and 220 seconds, respectively. 
Both approaches are quite feasible, especially considering the fact that manual elimination of the nuisance parameters is not needed to implement the GCC and RGCC tests. }

\section{Conclusion}
\label{sec:conclusion}
This paper proposes a simple, tuning-parameter-free test that is designed for inequality testing problems that are linear in nuisance parameters, including specification testing and subvector inference in moment (in)equality models and inference for parameters bounded by linear programs. 
We prove asymptotic uniform validity of the test under a stable rank condition and demonstrate its size, power, and computational performance in simulations and an empirical illustration. 

\appendix
\section{Proofs of Theorems \ref{lem:rhat}-\ref{thm:level_CC}}

The proofs of Theorems \ref{lem:rhat}-\ref{thm:level_CC} rely on lemmas that are stated and proved in Sections \ref{Section_Generalizations} and \ref{UsefulLemmas} in the Supplemental Appendix. 
The proofs also use the following notation/definitions. 
For a matrix $C$ and conformable vector $b$, let $\poly(C,b)=\{\delta\in\R^{d_\delta}: C\delta\le b\}$ be the polyhedral set defined by the inequalities with coefficients $C$ and intercepts $b$. 
For matrices $B$ and $C$ and a conformable vector $d$ let $\ppoly(B,d;C)=\{\mu\in\R^{d_\mu}: B\mu+C\delta\le d \text{ for some }\delta\in\R^{d_\delta}\}$ be the projection of a polyhedral set onto a subvector. 
A sequence of sets, $S_n$, Kuratowski converges to a limit set, $S_\infty$, denoted by $S_n\overset{K}{\rightarrow}S_\infty$, if (1) for every $x\in S_\infty$ there exists a sequence $x_n\rightarrow x$ such that $x_n\in S_n$ eventually, and (2) for every subsequence, $n_m$, and for every converging sequence $x_m\in S_{n_m}$ with limit $x_\infty$, we have $x_\infty\in S_\infty$. 
This definition of Kuratowski convergence agrees with common definitions in the setwise analysis literature as found in, for example, Definition 1.1.1 in \cite{AubinFrankowska2009}. 
The proofs also use definitions of activatable sets of indices for a collection of linear inequalities that are explained in Appendix \ref{activatability}. 

\subsection{Proof of Theorem \ref{lem:rhat}}\label{sec:proof_rhat}

By Lemma 1 in CS23, for the matrix $H$ whose rows are the vertices of the polytope $\{h\in {\mathbb R}^{d_C}|h\geq 0,h'\overline{C}_n =0, h'\mathbf{1} = 1\}$, and for $A=HB$ and $g=Hd$, 
\begin{equation}
\ppoly(B,d;\overline{C}_n)=\poly(A,g). 
\label{AbBCd_equivalence_new}
\end{equation} 
Let $d_A$ denote the number of rows in $A$. 
Let $\widehat{J}=\{j\in\{1,...,d_A\}: e'_j A\widehat\mu_n=e'_j g\}$. 
Also, for any $J\subseteq\{1,...,d_A\}$, let $A_J$ denote the submatrix of $A$ formed from the rows corresponding to the indices in $J$. 
Lemma 2 and its proof in CS23 show that $\widehat r_n=\textup{rk}(A_{\widehat J})$.

The KKT conditions associated with the CQPP in (\ref{QLRnew}) are: 
\begin{align}
2n\widetilde{\Sigma}_n^{-1}(\overline{\mu}_n-\widehat\mu_n)&=B'\widehat\psi_n\label{BCd_KKT_new}\\
0&=\overline{C}'_n\widehat\psi_n\label{BCd_KKT2_new}\\
\widehat\psi_n&\ge 0\label{BCd_nonnegativemultipliers_new}\\
d&\ge B\widehat\mu_n+\overline{C}_n\widehat\delta_n\label{BCd_KKT4_new}\\
0&=(d-B\widehat\mu_n-\overline{C}_n\widehat\delta_n)'\widehat\psi_n, \label{BCd_ComplementarySlackness_new}
\end{align}
where $\widehat\psi_n$ are the KKT multipliers on $B\mu+\overline{C}_n\delta\le d$. 
For CQPPs, the KKT conditions are necessary and sufficient; see Chapter 16 in \cite{NocedalWright2006}. 
If, instead, we write the constraints as $A\mu\le g$ with corresponding KKT multipliers $\widehat\lambda_n$, then the KKT conditions are: 
\begin{align}
2n\widetilde\Sigma^{-1}_n(\overline{\mu}_n-\widehat\mu_n)&=A'\widehat\lambda_n\label{Ab_KKT_new}\\
\widehat\lambda_n&\ge 0\label{Ab_KKT2_new}\\
g&\ge A\widehat\mu_n\\
0&=(g-A\widehat\mu_n)'\widehat\lambda_n. \label{Ab_ComplementarySlackness_new}
\end{align}
These are also necessary and sufficient for the same reason. 

Let $C_{K}$ be shortened notation for $I_{K}\overline{C}_n$ for any $K\subseteq\{1,...,d_C\}$. 

\subsubsection{Proof of Theorem \ref{lem:rhat}(a)}

We first prove that $\widehat t_n\le \widehat r_n$. 
Let $M_C=I_{d_C}-C(C'C)^+C'$ for a matrix $C$, where the superscript ``$+$'' stands for the Moore-Penrose generalized inverse.  We may also use the Moore-Penrose generalized inverse for the asymmetric matrix $C$: $C^{+} = (C'C)^+C' = C'(CC')^+$. 

The way we prove $\widehat t_n\le \widehat r_n$ is by showing that $\textup{span}(B'_{\widehat L}M_{C_{\widehat L}})\subseteq\textup{span}(A'_{\widehat J})$, where $\textup{span}(\cdot)$ denotes the span of the columns of a matrix in $\R^{d_\mu}$. 
This is sufficient because $\widehat r_n$ is the dimension of $\textup{span}(A'_{\widehat J})$ and, by Lemma \ref{lem:MB2}(a), $\widehat t_n$ is the dimension of $\textup{span}(B'_{\widehat L}M_{C_{\widehat L}})$. 

Let $(\widehat\mu_n,\widehat\delta_n,\widehat \psi_n)$ satisfy (\ref{BCd_KKT_new})-(\ref{BCd_ComplementarySlackness_new}) and $(\widehat \mu_n, \widehat\lambda_n)$ satisfy (\ref{Ab_KKT_new})-(\ref{Ab_ComplementarySlackness_new}). 
By the definition of $\widehat L$, $\widehat\psi_n=I'_{\widehat L}I_{\widehat L}\widehat\psi_n$. 
This implies that $B'\widehat \psi_n=B'_{\widehat L}I_{\widehat L}\widehat \psi_n$ and, using (\ref{BCd_KKT2_new}), $0=C'_{\widehat L}I_{\widehat L}\widehat \psi_n$, and therefore $I_{\widehat L}\widehat \psi_n=M_{C_{\widehat L}}I_{\widehat L}\widehat \psi_n$. 
By complementary slackness applied to $\widehat \lambda_n$, $\widehat\lambda_n=I'_{\widehat J}I_{\widehat J}\widehat\lambda_n$, which implies that $A'\widehat \lambda_n=A'_{\widehat J}I_{\widehat J}\widehat\lambda_n$. 
Putting these together, (\ref{BCd_KKT_new}) and (\ref{Ab_KKT_new}) imply that $B'_{\widehat L}M_{C_{\widehat L}}I_{\widehat L}\widehat\psi_n=A'_{\widehat J}I_{\widehat J}\widehat\lambda_n\in\textup{span}(A'_{\widehat J})$. 

Let $e_\ell$ be a standard normal basis vector in $\R^{|\widehat L|}$. 
We want to show that $B'_{\widehat L}M_{C_{\widehat L}}e_\ell\in\textup{span}(A'_{\widehat J})$. 
It is sufficient to show that $B'_{\widehat L}M_{C_{\widehat L}}e_\ell+\alpha B'_{\widehat L}M_{C_{\widehat L}}I_{\widehat L}\widehat\psi_n\in\textup{span}(A'_{\widehat J})$ for some $\alpha\in\R$ (because $\textup{span}(A'_{\widehat J})$ is a linear subspace of $\R^{d_\mu}$). 
Let $\alpha>0$ be large enough so that every element of $M_{C_{\widehat L}}e_\ell+\alpha I_{\widehat L}\widehat \psi_n$ is positive (by the definition of $\widehat L$, every element of $I_{\widehat L}\widehat \psi_n$ is positive). 
Let $\ddot\psi=I'_{\widehat L}(M_{C_{\widehat L}}e_\ell+\alpha I_{\widehat L}\widehat \psi_n)$ and $\ddot\mu=\widehat\mu_n+\widetilde\Sigma_nB'\ddot\psi/(2n)$. 
It follows that $(\widehat\mu_n, \widehat\delta_n, \ddot\psi)$ satisfy (\ref{BCd_KKT_new})-(\ref{BCd_ComplementarySlackness_new}) with $\overline{\mu}_n$ replaced by $\ddot \mu$. 
Therefore, by (\ref{AbBCd_equivalence_new}), $(\widehat\mu_n, \ddot\lambda)$ solves (\ref{Ab_KKT_new})-(\ref{Ab_ComplementarySlackness_new}) for some multipliers $\ddot\lambda$ (again, with $\overline{\mu}_n$ replaced by $\ddot\mu$). 
By complementary slackness applied to $\ddot\lambda$, $\ddot \lambda=I'_{\widehat J}I_{\widehat J}\ddot\lambda$, which implies that $A'\ddot \lambda=A'_{\widehat J}I_{\widehat J}\ddot\lambda$. 
Therefore, by (\ref{BCd_KKT_new}) and (\ref{Ab_KKT_new}) (with $\overline{\mu}_n$ replaced by $\ddot\mu$), 
\begin{equation}
B'_{\widehat L}M_{C_{\widehat L}}e_\ell+\alpha B'_{\widehat L}M_{C_{\widehat L}}I_{\widehat L}\widehat\psi_n=B'_{\widehat L}M_{C_{\widehat L}}e_\ell+\alpha B'_{\widehat L}I_{\widehat L}\widehat\psi_n=B'\ddot\psi=A'_{\widehat J}I_{\widehat J}\ddot\lambda\in\textup{span}(A'_{\widehat J}),   
\end{equation}
where the first equality follows because $M_{C_{\widehat L}}I_{\widehat L}\widehat\psi_n=I_{\widehat L}\widehat\psi_n$ (as in the previous paragraph). 
Since $e_\ell$ was arbitrary, $\textup{span}(B'_{\widehat L}M_{C_{\widehat L}})\subseteq\textup{span}(A'_{\widehat J})$, which shows that $\widehat t_n\le\widehat r_n$. 

We next prove that $\widehat r_n\le \widehat s_n$. 
The way we do this is by showing that $\textup{span}(A'_{\widehat J})\subseteq\textup{span}(B'_{\widehat K}M_{C_{\widehat K}})$. 
This is sufficient because $\widehat r_n$ is the dimension of $\textup{span}(A'_{\widehat J})$ and, by Lemma \ref{lem:MB2}(a), $\widehat s_n$ is the dimension of $\textup{span}(B'_{\widehat K}M_{C_{\widehat K}})$. 

Let $(\widehat\mu_n, \widehat\delta_n,\widehat\psi_n)$ satisfy (\ref{BCd_KKT_new})-(\ref{BCd_ComplementarySlackness_new}) and $(\widehat \mu_n, \widehat\lambda_n)$ satisfy (\ref{Ab_KKT_new})-(\ref{Ab_ComplementarySlackness_new}). 
Let $e_j$ be a standard normal basis vector in $\R^{|\widehat J|}$. 
We want to show that $A'_{\widehat J}e_j\in\textup{span}(B'_{\widehat K}M_{C_{\widehat K}})$. 
Recall from the definition of $H$ that every entry must be nonnegative and it must satisfy $H\overline{C}_n=0$. 
Note that $(d-B\widehat\mu_n-\overline{C}_n\widehat\delta_n)'H'I'_{\widehat J}e_j=(g-A\widehat\mu_n)'I_{\widehat J}e_j=0$ because $A=HB$, $g=Hd$, $H\overline{C}_n=0$, and the definition of $\widehat J$. 
This, combined with the fact that $H\ge 0$ and the definition of $\widehat K$, implies that $H'I'_{\widehat J}e_j=I'_{\widehat K}I_{\widehat K}H'I'_{\widehat J}e_j$. 
(For any $k\in\{1,...,d_C\}$, $e'_kH'I'_{\widehat J}e_j\ge 0$ and $e'_k(d-B\widehat \mu_n-\overline{C}_n\widehat\delta_n)\ge 0$, and the inequalities cannot both be strict for the same value of $k$.) 
This is the key step in showing that 
\begin{equation}
A'_{\widehat J}e_j=B'H'I_{\widehat J}e_j=B'_{\widehat K}I_{\widehat K}H'I_{\widehat J}e_j=B'_{\widehat K}M_{C_{\widehat K}}I_{\widehat K}H'I_{\widehat J}e_j, \label{abcd2}
\end{equation}
where the final equality follows from the fact that $0=\overline{C}'_nH'I'_{\widehat J}e_j=C'_{\widehat K}I_{\widehat K}H'I'_{\widehat J}e_j$, which shows that $I_{\widehat K}H'I'_{\widehat J}e_j=M_{C_{\widehat K}}I_{\widehat K}H'I'_{\widehat J}e_j$. 
Overall, (\ref{abcd2}) shows that $A'_{\widehat J}e_j\in\textup{span}(B'_{\widehat K}M_{C_{\widehat K}})$. 
Since $e_j$ was arbitrary, $\textup{span}(A'_{\widehat J})\subseteq\textup{span}(B'_{\widehat K}M_{C_{\widehat K}})$, which shows that $\widehat r_n\le \widehat s_n$. \qed

\subsubsection{Proof of Theorem \ref{lem:rhat}(b)}

Because of part (a), it suffices to prove that $\widehat s_n=\widehat t_n$. 
We prove part (b) under the additional assumption that $\widetilde\Sigma_n=I_{d_\mu}$. 
The theorem then follows for $\widetilde\Sigma_n\neq I_{d_\mu}$ by considering $B^\dagger=B\widetilde\Sigma^{-1/2}_n$, $\overline{\mu}^\dagger_n=\widetilde\Sigma^{1/2}_n\overline{\mu}_n$, and $\widehat{\mu}^\dagger_n=\widetilde\Sigma^{1/2}_n\widehat{\mu}_n$, where $\widetilde\Sigma^{1/2}_n$ denotes the symmetric matrix square root. 
Note that the definitions of $\widehat s_n$, and $\widehat t_n$ are unchanged. 

For any $K\subseteq\{1,...,d_C\}$, let $s_K=\textup{rk}(M_{C_K}B_K)$. 
With this definition, $\widehat s_n=s_{\widehat K}$ and $\widehat t_n=s_{\widehat L}$ by Lemma \ref{lem:MB2}(a). 
Let $\calS_K=\textup{span}(B'_KM_{C_K})$, a linear subspace of $\R^{d_\mu}$. 
Also let $\xi_K=(M_{C_K}B_{K})^{+} M_{C_K} I_K d$. 
For any $K, L\subseteq\{1,...,d_C\}$, let 
\begin{equation}
\M(K,L)=\left(\calS_K\cap\calS_L\right)\oplus\left(\left(\calS_K^\perp+\xi_K\right)\cap\left(\calS_L^\perp+\xi_L\right)\right), 
\end{equation}
where ``$\oplus$'' denotes the Minkowski sum and ``$\perp$'' denotes the orthogonal complement. 
For any $K, L$ with $s_L<s_K$, we have that $\calS_K\cap\calS_L$ is a linear subspace of $\R^{d_\mu}$ with dimension at most $s_L$, and $\left(\calS_K^\perp+\xi_K\right)\cap\left(\calS_L^\perp+\xi_L\right)$ is an affine subspace of $\R^{d_\mu}$ with dimension at most $d_\mu-s_K$. 
Therefore, $\M(K,L)$ is an affine subspace of $\R^{d_\mu}$ with dimension at most $d_\mu-1$ and therefore has Lebesgue measure zero. 
Let 
\begin{equation}
\M_0=\cup_{K, L: s_L<s_K}\M(K,L), 
\label{eq:M0}
\end{equation}
where $K$ and $L$ are arbitrary subsets of $\{1,...,d_C\}$. 
$\M_0$ has Lebesgue measure zero because it is the union of finitely many measure zero sets. 

To finish the proof, we show that if $\widehat t_n<\widehat s_n$, then $\overline{\mu}_n\in\M_0$. 
Fix $\overline{\mu}_n$ and suppose $\widehat K$ and $\widehat s_n$ are defined using some $(\widehat\mu_n, \widehat\delta_n,\widehat\psi_n)$ that satisfy (\ref{BCd_KKT_new})-(\ref{BCd_ComplementarySlackness_new}). 
Also suppose $\widehat L$ and $\widehat t_n$ are defined using some $(\widehat\mu_n, \ddot\delta_n, \ddot\psi_n)$ that satisfy (\ref{BCd_KKT_new})-(\ref{BCd_ComplementarySlackness_new}). 
(Since $\widehat\delta_n$ and $\widehat\psi_n$ are not unique, we want to allow $\widehat s_n$ and $\widehat t_n$ to be defined using different values of the delta and psi.) 
By Lemma \ref{lem:KKTB2_new} applied to $(\widehat\mu_n, \widehat\delta_n,\widehat\psi_n)$, $\overline{\mu}_n-\widehat\mu_n\in\calS_{\widehat K}$ and $\widehat\mu_n\in\calS^\perp_{\widehat K}+\xi_{\widehat K}$. 
By Lemma \ref{lem:KKTB2_new} applied to $(\widehat\mu_n, \ddot\delta_n,\ddot\psi_n)$, $\overline{\mu}_n-\widehat\mu_n\in\calS_{\widehat L}$ and $\widehat\mu_n\in\calS^\perp_{\widehat L}+\xi_{\widehat L}$. 
Therefore, $\overline{\mu}_n-\widehat\mu_n\in\calS_{\widehat K}\cap\calS_{\widehat L}$ and $\widehat\mu_n\in\left(\calS^\perp_{\widehat K}+\xi_{\widehat K}\right)\cap\left(\calS^\perp_{\widehat L}+\xi_{\widehat L}\right)$. 
This shows that $\overline{\mu}_n\in \M(\widehat K, \widehat L)$ with $s_{\widehat L}<s_{\widehat K}$. \qed

\subsection{Proof of Theorem \ref{lem:delta}}

We first show that 
\begin{equation}
\textup{ppoly}(B,d_{n_q};C_{F_{n_q}})\neq\emptyset \text{ and }\textup{ppoly}(B,d_{n_q};\overline{C}_{n_q})\neq\emptyset\textup{ for all }q.\label{BdcnNonEmp}
\end{equation} 
The fact that $F_{n_q}\in {\cal F}_{{n_q}0}$ implies $\mu_{F_{n_q}}\in~\ppoly(B,d_{n_q};C_{F_{n_q}})$. 
Then there exists a $\delta_q^+\in\R^{d_\delta}$ such that $B\mu_{F_{n_q}}+C_{F_{n_q}}\delta_q^+\leq d_{n_q}$. 
It follows that $\mu_q^0:= \mu_{F_{n_q}}-(\overline{\Pi}_{n_q} - \Pi_{F_{n_q}})\delta_q^+\in \textup{ppoly}(B,d_{n_q};\overline{C}_{n_q})$ because $B\mu_q^0+ \overline{C}_{n_q}\delta_q^+=B\mu_{F_{n_q}}+C_{F_{n_q}}\delta_q^+\leq d_{n_q}$. 
This verifies (\ref{BdcnNonEmp}).

Consider an arbitrary subsequence of $\{n_q\}$. 
There exists a further subsequence $\{n_a\}$ such that $(\overline{\mu}_{n_a}',\text{vec}(\overline{\Pi}_{n_a})', \text{vec}(\widehat\Upsilon_{n_a})')'\to (\mu_\infty',\text{vec}(\Pi_\infty)',\text{vec}(\Upsilon_\infty)')'$ almost surely and $\textup{rk}(I_K\overline{C}_{n_a})\to\textup{rk}(I_KC_\infty)$ and $\textup{rk}(I_KC_{F_{n_a}})\to\textup{rk}(I_KC_\infty)$ almost surely for $K=K^\dagger(\mu_\infty;B, C_\infty, d_\infty)$, where $K^\dagger(\mu_\infty;B, C_\infty, d_\infty)$ is the minimal activatable set defined in Section \ref{activatability}. 
Such a further subsequence exists under Assumption \ref{Assumption1}(i, ii, v) and Assumption \ref{assu:rank:relaxation1} (or Assumption \ref{assu:rank:simple}, using Lemma \ref{rank_sufficiency}) because every sequence that converges in probability has a subsequence that converges almost surely. 
We fix one such further subsequence and one realization from the sample space and show that $\widetilde{\delta}_{n_a}\rightarrow \delta^\ast_\infty$ and $\delta^\ast_{F_{n_a}}\rightarrow \delta^\ast_\infty$ deterministically. 
This is sufficient to show that $\widetilde{\delta}_{n_q}\rightarrow_p \delta^\ast_\infty$ and $\delta^\ast_{F_{n_q}}\rightarrow_p \delta^\ast_\infty$ along the original subsequence.  

We next show that, as $a\rightarrow\infty$, 
\begin{equation}
\textup{poly}([B,\overline{C}_{n_a}],d_{n_a}) \overset{K}\to \textup{poly}([B,C_\infty],d_\infty). \label{polyconvergence}
\end{equation}
We verify the conditions of Lemma \ref{K-convergence_nonempty2_new}. 
Condition (i) follows from $\overline{C}_{n_a}= B\overline\Pi_{n_a}+D\to B\Pi_\infty+D= C_\infty$ and $d_{n_a}\rightarrow d_\infty$. 
Condition (ii) follows from (\ref{BdcnNonEmp}). 
Condition (iii) follows  because  $\textup{rk}(I_K[B,\overline{C}_n])  = \textup{rk}([B,B\overline{\Pi}_n+D])=\textup{rk}(I_K[B,D]) = \textup{rk}(I_K[B,B\Pi_\infty+D]) = \textup{rk}(I_K[B,C_\infty])$ for all $K\subseteq \{1,\dots,d_C\}$, where the second and third equalities hold by Lemma \ref{lem:MB2}(b). Then
(\ref{polyconvergence}) follows from Lemma \ref{K-convergence_nonempty2_new}. 

We next show that 
\begin{equation}
\widetilde{\mu}_{n_a}\to \mu_\infty  \label{tildemuconvergence}
\end{equation} 
as $a\rightarrow\infty$. 
Recall $\delta^\ast_\infty = \arg\min_{\delta\in\textup{poly}(C_\infty, d_\infty-B\mu_\infty)}\|\delta\|$, where $\textup{poly}(C_\infty, d_\infty-B\mu_\infty)$ is not empty by Assumption \ref{Assumption1}(vi). 
Note that $(\mu_\infty,\delta^\ast_\infty)\in \textup{poly}([B,C_\infty],d_\infty)$. 
Then, by (\ref{polyconvergence}), there exists a sequence $(\mu^\dagger_{n_a},\delta^\dagger_{n_a})\in \textup{poly}([B,\overline{C}_{n_a}],d_{n_a})$ such that $\mu^\dagger_{n_a}\to \mu_\infty$ and $\delta_{n_a}^\dagger\to \delta^\ast_\infty$. 
Moreover, 
\begin{equation}
\|\widetilde{\mu}_{n_a} - \overline{\mu}_{n_a}\|^2_{\widehat\Upsilon_{n_a}}\leq \|\mu^\dagger_{n_a}-\overline{\mu}_{n_a}\|^2_{\widehat\Upsilon_{n_a}} \label{upsilon_bound}
\end{equation} 
since $\widetilde{\mu}_{n} = \arg\min_{\mu\in \textup{poly}(B,d_n;\overline{C}_n)}\|\overline{\mu}_n-\mu\|^2_{\widehat\Upsilon_n}$. 
The right-hand side in (\ref{upsilon_bound}) converges to 0 because $\mu^\dagger_{n_a}\to\mu_\infty$, with $\overline{\mu}_{n_a}\to\mu_\infty$ and $\widehat\Upsilon_n\rightarrow \Upsilon_\infty$ by Assumption \ref{Assumption1}(i, ii, v). 
This verifies (\ref{tildemuconvergence}). 

We next show that, as $a\rightarrow\infty$, 
\begin{align}
&\textup{poly}(C_{F_{n_a}},d_{n_a}-B\mu_{F_{n_a}})\overset{K}{\to} \textup{poly}({C}_{\infty},d_{\infty}-B{\mu}_\infty)
\textup{ and }\nonumber\\
&\textup{poly}(\overline{C}_{n_a},d_{n_a}-B\widetilde{\mu}_{n_a})\overset{K}{\to} \textup{poly}({C}_{\infty},d_{\infty}-B{\mu}_\infty).  \label{tildepolyconvergence} 
\end{align} 
We verify the conditions of Lemma \ref{K-convergence_nonempty2_new}. 
Condition (i) follows because $\overline{C}_{n_a}\rightarrow C_\infty$, $C_{F_{n_a}}\rightarrow C_\infty$, $d_{n_a}-B\mu_{F_{n_a}}\rightarrow d_\infty-B\mu_\infty$, and, using (\ref{tildemuconvergence}), $d_{n_a}-B\widetilde{\mu}_{n_a}\rightarrow d_{\infty}-B{\mu}_\infty$. 
Condition (ii) follows because $\delta^\ast_{F_{n_a}}\in \textup{poly}(C_{F_{n_a}},d_{n_a}-B\mu_{F_{n_a}})$ and $\widetilde\delta_{n_a}\in\textup{poly}(\overline{C}_{n_a},d_{n_a}-B\widetilde{\mu}_{n_a})$. 
Condition (iii) follows from the fact that $\textup{rk}(I_K\overline{C}_{n_a})\to\textup{rk}(I_KC_\infty)$ and $\textup{rk}(I_KC_{F_{n_a}})\to\textup{rk}(I_KC_\infty)$ for $K=K^\dagger(\mu_\infty; B, C_\infty, d_\infty)$. 
Note that, since the sequence of ranks is discrete, equality must hold eventually. 
Also note that, by Lemma \ref{minimal_activatable}, the only activatable $K\subseteq\{1,...,d_C\}$ such that $\poly(I_KC_\infty, I_K(d_\infty-B\mu_\infty))$ is an affine subspace of $\R^{d_\delta}$ is $K=K^\dagger(\mu_\infty; B, C_\infty, d_\infty)$. 
Therefore, (\ref{tildepolyconvergence}) follows from Lemma \ref{K-convergence_nonempty2_new}. 
Then, $\widetilde\delta_{n_a}\rightarrow\delta^\ast_\infty$ and $\delta^\ast_{F_{n_a}}\rightarrow\delta^\ast_\infty$  follow from Lemma \ref{Projection-convergence} with $x_n=x_\infty=0$ and $\Sigma_n=\Sigma_\infty=I_{d_\delta}$. 
\qed

\subsection{Proof of Theorem \ref{thm:level_CC}}

Let $n_q$ be a subsequence that achieves the limsup. 
Let $F_q\in\mathcal{F}_{n_q0}$ approximately achieve the sup so that $\lim_{q\rightarrow\infty}P_{F_q}\left(T_{n_q}>cv(\widehat{s}_{n_q},\alpha)\right)=\limsup_{n\rightarrow\infty}\sup_{F\in\mathcal{F}_{n0}} P_F\left(T_n>cv(\widehat{s}_n,\alpha)\right)$. 
It is sufficient to find a further subsequence along which this quantity is less than or equal to $\alpha$. 
For simplicity, we denote all further subsequences by $n_q$. 

First, we find a further subsequence along which the following convergence results hold: 
\begin{align}
\left(\begin{smallmatrix}X_{q,\mu}\\\text{vec}(X_{q,C})\end{smallmatrix}\right)&:=\left(\begin{smallmatrix}\sqrt{n_q}\left(\overline{\mu}_{n_q}-\mu_{F_q}\right)\\\sqrt{n_q}\text{vec}\left(\overline{\Pi}_{n_q}-\Pi_{F_q}\right)\end{smallmatrix}\right)\rightarrow_d \left(\begin{smallmatrix}X_\mu\\\text{vec}(X_C)\end{smallmatrix}\right)\sim N(\mathbf{0},\Omega_\infty) \label{distribution_convergence}\\
\widetilde{\Sigma}_{n_q}&= \left(\begin{smallmatrix} I\\ \widetilde{\delta}_{n_q}\otimes I\end{smallmatrix}\right)'\overline{\Omega}_{n_q} \left(\begin{smallmatrix} I\\ \widetilde{\delta}_{n_q}\otimes I\end{smallmatrix}\right) \to_p   \left(\begin{smallmatrix} I\\ \delta^\ast_\infty\otimes I\end{smallmatrix}\right)'\Omega\left(\begin{smallmatrix} I\\ \delta^\ast_\infty\otimes I\end{smallmatrix}\right) =\Sigma_\infty  \label{sigma_convergence}\\
\overline{C}_{n_q}&= B\overline{\Pi}_{n_q}+D\to_p C_\infty= B\Pi_\infty +D \label{C_convergence}\\
h_q&:=\sqrt{n_q}(d_{n_q}-B\mu_{F_q}-C_{F_q}\delta^\ast_{F_q})\to h_\infty \label{h_convergence}\\
\textup{rk}(I_K\overline{C}_{n_q}) &= \textup{rk}(I_KC_\infty), \text{ for every } K\in {\cal K}(B,C_\infty,h_\infty)\cup{\cal L}(B,C_\infty,h_\infty) \textup{ w.p.a.1}, \label{rank_convergence}
\end{align}
where $\mathcal{K}(B, C_\infty, h_\infty)$ and $\mathcal{L}(B, C_\infty, h_\infty)$ are defined in Section \ref{activatability} and w.p.a.1 stands for ``with probability approaching 1''. 
Equation (\ref{distribution_convergence}) follows from Assumption \ref{Assumption1}(ii) for some further subsequence and some $\Omega_\infty$. Equation (\ref{sigma_convergence}) follows from Assumption \ref{Assumption1}(iv) and Theorem \ref{lem:delta}. 
Equation (\ref{C_convergence}) follows from Assumption \ref{Assumption1}(i, ii). 
Equation (\ref{h_convergence}) holds elementwise along a further subsequence for some $h_\infty\in[0,\infty]^{d_C}$ because $F_q\in\mathcal{F}_{n_q,0}$ implies that $h_q\ge 0$. 
Equation (\ref{rank_convergence}) holds by Lemma \ref{rank_sufficiency} under Assumption \ref{assu:rank:simple}. 

We next define the limiting problem. 
Let $X = X_\mu+X_C\delta^\ast_\infty$. 
Consider the following CQPP: 
\begin{equation}
T_\infty = \min_{\eta,\gamma:B\eta+C_\infty\gamma\leq h_\infty}\|X - \eta\|_{\Sigma_\infty^{-1}}^2.\label{limitexperiment}
\end{equation}
Let $\eta^\ast_\infty$ be the unique value of $\eta$ that solves (\ref{limitexperiment}). 
(Recall $h_\infty\ge 0$, so the constraint set is not empty.) 
Let $\gamma^\ast_\infty\in\poly(C_\infty, h_\infty-B\eta^\ast_\infty)$ and achieve the minimal activatable set as defined in Section \ref{activatability}. 
Note that, trivially, 
\begin{equation}
\gamma^\ast _\infty= \underset{\gamma:B\eta^\ast_\infty+C_\infty\gamma\leq h_\infty}{\argmin}\|\gamma-\gamma^\ast_\infty\|.\label{gamma_infty^ast_def}
\end{equation}
Let $K^\ast=\{j\in\{1,...,d_C\}: e'_jB\eta^\ast_\infty+e'_jC_\infty\gamma^\ast_\infty=e'_jh_\infty\}$ and $s^\ast=\textup{rk}\left(I_{K^\ast}[B, D]\right)-\textup{rk}\left(I_{K^\ast}C_\infty\right)$. 
For any value of the multipliers, $\psi^\ast$, solving the KKT conditions for (\ref{limitexperiment}), we can define $L^\ast=\{j\in\{1,...,d_C\}: e'_j\psi^\ast>0\}$ and $t^\ast=\textup{rk}\left(I_{L^\ast}[B, D]\right)-\textup{rk}\left(I_{L^\ast}C_\infty\right)$. 
By Theorem \ref{lem:rhat}(b), there exists a Lebesgue measure zero set, $\mathcal{M}_0$, such that for all $X\notin\mathcal{M}_0$, $r^\ast=s^\ast=t^\ast$, where $r^\ast=\text{dim}\left(B'\{h\ge \mathbf{0}: h'C_\infty=\mathbf{0}, h'(B\eta^\ast_\infty-h_\infty)=0\}\right)$. 

By the almost sure representation theorem, there is a copy of $X_{q,\mu}, X_{q,C}, \widetilde{\Sigma}_{n_q}$ and $\overline{C}_{n_q}$ with identical joint distribution such that the convergence in (\ref{distribution_convergence})-(\ref{C_convergence}) holds almost surely and the equality in (\ref{rank_convergence}) holds eventually, almost surely.\footnote{Note that the eventuality occurs almost surely but not necessarily uniformly. That is, for almost every realization, $\omega$, there exists an $N(\omega)<\infty$ such that (\ref{rank_convergence}) holds for all $q\ge N(\omega)$. It need not be the case that there exists an $N$ such that for all $q\ge N$, (\ref{rank_convergence}) holds almost surely.} 
Abusing notation, we refer to the copies that converge almost surely using the original notation.  
Now fix a sample sequence along which the convergence in (\ref{distribution_convergence})-(\ref{C_convergence}) holds and the equality in (\ref{rank_convergence}) holds eventually. 
We can also take the sample sequence to satisfy $X\notin{\cal M}_0$ and 
\begin{align}
&\|(M_{C_{K}}I_KB\Sigma_\infty^{1/2})^+M_{C_{K}}I_{K}(BX - h_\infty)\|^2_{\Sigma^{-1}_\infty} \neq cv(\textup{rk}(M_{C_{K}}I_KB),\alpha) \nonumber\\
&~~~~~~~~~~~~~~~~~\text{ for all }K\subseteq\{1,\dots,d_C\}\text{ such that }M_{C_{K}}I_KB\neq\mathbf{0} \text{ and } I_K h_\infty<\infty,\label{samplepath}
\end{align}
where $M_{C_{K}} = I_{|K|}-I_{K}C_\infty (C'_\infty I'_{K}I_{K}C_\infty)^{+} C'_\infty I'_{K}$. 
Such sample sequences occur with probability one by the almost sure representation, Theorem \ref{lem:rhat}(b), and Lemma \ref{lem:MixedChi2}. 
Note that Lemma \ref{lem:MixedChi2} applies because $\Sigma_\infty^{-1/2}(M_{C_K}I_KB\Sigma_\infty^{1/2})^+M_{C_K}I_KB\Sigma^{1/2}_\infty $ is not a zero matrix when $M_{C_{K}}I_KB\neq\mathbf{0}$. 

In the rest of the proof, we show that, for the fixed sample sequence, we have
\begin{equation}
\limsup_{q\rightarrow\infty}1\{T_{n_q}>cv(\widehat s_{n_q},\alpha)\}\le 1\{T_\infty>cv(r^\ast,\alpha)\}. \label{goal0}
\end{equation}
If this is true, then (\ref{goal0}) is satisfied with probability 1. We can apply Fatou's lemma 
and conclude that 
\begin{equation}
\limsup_{q\rightarrow\infty}P_{F_q}(T_{n_q}>cv(\widehat s_{n_q},\alpha))\le P(T_\infty>cv(r^\ast,\alpha)). \label{goal2}
\end{equation}
(The probability on the left-hand side does not depend on which copy of the random variable is used because they have identical joint distribution.) 
Also, by Theorem 3(a) of CS23, we have $P(T_\infty>cv(r^\ast,\alpha))\leq \alpha$. 
This is sufficient to prove Theorem \ref{thm:level_CC}. 

Now we show (\ref{goal0}) for the fixed sample sequence. 
Take a further subsequence along which the limsup on the left-hand side is achieved. 
Then, it is sufficient to show that there exists a further subsequence along which (\ref{goal0}) holds. 
We proceed in four steps. 

\textbf{Step 1.} 
We rewrite the test statistic using a change of variables: 
\begin{equation}
T_{n_q}=\min_{\eta,\gamma:B\eta+\overline{C}_{n_q} \gamma\leq h_q}\|X_q - \eta\|_{\widetilde{\Sigma}_q^{-1}}^2, \label{Finitesample}
\end{equation}
where $X_q = X_{q,\mu}+X_{q,C}\delta_{F_q}^\ast$, $\eta= \sqrt{n_q}(\mu-\mu_{F_q})+X_{q,C}\delta_{F_q}^\ast$, and $\gamma = \sqrt{n_q}(\delta-\delta_{F_q}^\ast)$. 
The expression in (\ref{Finitesample}) should clarify that (\ref{limitexperiment}) represents the limit experiment. 
Let $\widehat{\eta}_q =\sqrt{n_q}( \widehat{\mu}_{n_q} - \mu_{F_q}) +X_{q,C}\delta_{F_q}^\ast$ be the unique value of $\eta$ that solves (\ref{Finitesample}). 
Also let 
\begin{equation}
\widehat{\gamma}_q = \underset{\gamma: B\widehat{\eta}_q +\overline{C}_{n_q}\gamma\leq h_q}{\argmin}\|\gamma - \gamma^\ast_\infty\|^2. \label{gamma_hat_def}
\end{equation}
Note that $\widehat\gamma_q$ does not have to be equal to $\sqrt{n_q}(\widehat \delta_{n_q}-\delta^\ast_{F_q})$ if the solution to (\ref{Finitesample}) is not unique. 
Also note that $\widehat\gamma_q$ need not achieve the minimal activatable set, as defined in Section \ref{activatability}. 
By Lemma \ref{delta_perturbation}, we can take $\ddot\gamma_q$ to be within $n_q^{-1}$ of $\widehat\gamma_q$, solve (\ref{Finitesample}), and achieve the minimal activatable set of $\poly(\overline{C}_{n_q}, h_q-B\widehat\eta_q)$. The $\ddot\gamma_q$ defined here is used in Step 3 below. 
Moreover, let $\widehat{\psi}_q$ be the minimum norm Lagrange multiplier that satisfies the KKT conditions: 
\begin{align}
\widehat{\psi}_q& = \underset{\psi\ge 0}{\argmin}\|\psi\|\text{ s.t. }\label{KKT_finite_q1}\\
&2\widetilde{\Sigma}_{n_q}^{-1}(X_q-\widehat{\eta}_q) - B'\psi = 0 \label{KKT_finite_q2}\\
&\overline{C}_{n_q}'\psi=0\label{KKT_finite_q3}\\
&(h_q-B\widehat{\eta}_q -\overline{C}_{n_q}\ddot\gamma_q)'\psi=0.\label{KKT_finite_q4}
\end{align}
Note that (\ref{KKT_finite_q2})-(\ref{KKT_finite_q4}) are the KKT conditions for (\ref{Finitesample}). 

\textbf{Step 2.} 
We next show that there exists a further subsequence such that: 
\begin{equation}
T_{n_q}\to T_\infty, ~\widehat{\eta}_q\to \eta^\ast_\infty, ~\widehat{\gamma}_q\to \gamma^\ast_\infty,~ \ddot\gamma_q\rightarrow \gamma^\ast_\infty, \text{ and }
\widehat{\psi}_q\to\psi^\ast_\infty, \label{95}
\end{equation}
for some $\psi^\ast_\infty$ that, along with $\eta^\ast_\infty$ and $\gamma^\ast_\infty$, satisfy the KKT conditions for (\ref{limitexperiment}). 

First note that 
\begin{equation}
\ppoly(B,h_q;\overline{C}_{n_q})\overset{K}{\to}\ppoly(B,h_\infty;C_\infty)\label{ppolyconv}
\end{equation}
by Lemma \ref{projected_poly_convergence}, which applies because $\overline{C}_{n_q}\to C_\infty$, $h_q\to h_\infty$, $\rk(I_J[B,\overline{C}_{n_q}]) = \rk(I_J[B,B\overline{\Pi}_{n_q}+D]) = \rk(I_J[B,D]) = \rk(I_J[B,B\Pi_\infty +D]) = \rk(I_J[B,C_\infty])$ for any $J\subseteq \{1,\dots,d_C\}$, and for any $\eta\in \R^{d_\mu}$, the minimal activatable set $K^\dagger(\eta)$ of $\poly(C_\infty,h_\infty-B\eta)$ satisfies $\rk(I_{K^\dagger(\eta)} \overline{C}_{n_q}) = \rk(I_{K^\dagger(\eta)}C_\infty)$ eventually. 
That $\rk(I_{K^\dagger(\eta)} \overline{C}_{n_q}) = \rk(I_{K^\dagger(\eta)}C_\infty)$ eventually follows from (\ref{rank_convergence}) because $K^\dagger(\eta)\in{\cal K}(B,C_\infty,h_\infty)$. 
Using (\ref{ppolyconv}) and (\ref{sigma_convergence}), we can apply Lemma \ref{Projection-convergence} to conclude that 
\begin{equation}
\widehat{\eta}_q\to \eta_\infty^\ast. \label{etaconv}
\end{equation}
Next, observe that $T_{n_q} = \|X_q-\widehat{\eta}_q\|^2_{\widetilde{\Sigma}_{n_q}^{-1}}$ and $T_\infty  = \|X - \eta_\infty^\ast\|^2_{\Sigma_\infty^{-1}}$. 
We have that $X_q\to X$ because of (\ref{distribution_convergence}) and the fact that $\delta_{F_q}^\ast \to \delta^\ast_\infty$ from Theorem \ref{lem:delta}. 
Therefore, by (\ref{etaconv}), the invertibility of $\Sigma_\infty$ from Assumption \ref{Assumption1}(iii), and $X_q\to X$, we have $T_{n_q}\to T_\infty$. 

Next, we show that $\widehat\gamma_q\rightarrow\gamma_\infty^\ast$. 
We verify the conditions of Lemma \ref{K-convergence_nonempty2_new} to get 
\begin{equation}
\poly(\overline{C}_{n_q}, h_q-B\widehat\eta_q)\overset{K}{\to} \poly(C_\infty, h_\infty-B\eta^\ast_\infty). \label{K-convergence_poly}
\end{equation}
Condition (i) is satisfied because $\overline{C}_{n_q}\rightarrow C_\infty$ and $h_q-B\widehat\eta_q\rightarrow h_\infty-B\eta^\ast_\infty$, using the previous paragraph. 
Condition (ii) is satisfied because $\widehat\gamma_q\in\poly(\overline{C}_{n_q}, h_q-B\widehat\eta_q)$. 
Condition (iii) is satisfied because any $K\subseteq\{1,...,d_C\}$ that is activatable for $\poly(C_\infty, h_\infty-B\eta^\ast_\infty)$ for which $\poly(I_K C_\infty, I_K h_\infty-I_KB\eta_\infty^\ast)$ defines an affine subspace of $\R^{d_\delta}$ also belongs to $\mathcal{K}(B, C_\infty, h_\infty)$ (with $x=\eta^\ast_\infty$) by Lemma \ref{minimal_activatable}(c). 
Condition (iii) then follows from (\ref{rank_convergence}). 
Therefore, by Lemma \ref{K-convergence_nonempty2_new}, (\ref{K-convergence_poly}) holds. 
We then verify the conditions of Lemma \ref{Projection-convergence} to get $\widehat\gamma_q\rightarrow\gamma_\infty^\ast$. 
Note that $\widehat\gamma_q$ is the projection of $\gamma^\ast_\infty$ onto $\poly(\overline{C}_{n_q}, h_q-B\widehat\eta_q)$ and $\gamma_\infty^\ast$ is trivially the projection of $\gamma^\ast_\infty$ onto $\poly(C_\infty, h_\infty-B\eta^\ast_\infty)$ by (\ref{gamma_hat_def}) and (\ref{gamma_infty^ast_def}), respectively. 
The K-convergence condition is satisfied by (\ref{K-convergence_poly}) and $\poly(C_\infty, h_\infty-B\eta^\ast_\infty)$ is nonempty because it includes $\gamma_\infty^\ast$. 
Therefore, $\widehat\gamma_q\rightarrow\gamma_\infty^\ast$. 
Since $\ddot\gamma_q$ is within $n_q^{-1}$ distance from $\widehat{\gamma}_q$, we also have $\ddot\gamma_q\rightarrow\gamma_\infty^\ast$. 

To show convergence of $\widehat{\psi}_q$, we use Lemma \ref{multiplier_convergence}. 
Since $\widehat\psi_q$ is defined to satisfy (\ref{KKT_finite_q1})-(\ref{KKT_finite_q4}), it is the minimum norm multiplier that satisfies (\ref{multiplier_lemma_KKT1})-(\ref{multiplier_lemma_KKT4}). 
Note that, by complementary slackness, $I_{\ddot{K}_q^c}\widehat\psi_q=0$, where $\ddot{K}_q=\{j: e'_j(h_q-B\widehat\eta_q-\overline{C}_{n_q}\ddot\gamma_q)=0\}$. 
Also note that $(\widehat\eta_q, \ddot\gamma_q)\rightarrow(\eta^\ast_\infty, \gamma^\ast_\infty)$ by the previous two paragraphs. 
Consider a further subsequence along which $\ddot{K}_q$ does not depend on $q$. 
Along this further subsequence, $\widehat\psi_q\rightarrow\psi_\infty^\ast$, where $\psi^\ast_\infty\ge \mathbf{0}$ satisfies: 
$\Sigma_\infty^{-1}{(X-\eta^\ast_\infty)}=B'\psi^\ast_\infty$, $C'_\infty\psi^\ast_\infty=\mathbf{0}$, and $I_{(K^\ast)^c}\psi^\ast_\infty=\mathbf{0}$, where $K^\ast=\{j: e'_j(h_\infty-B\eta^\ast_\infty-C_\infty\gamma^\ast_\infty)=0\}$. 
The last condition implies that $(h_\infty-B\eta^\ast_\infty-C_\infty\gamma^\ast_\infty)'\psi^\ast_\infty=0$, which shows that $(\eta^\ast_\infty,\gamma^\ast_\infty,\psi^\ast_\infty)$ satisfies all the KKT conditions for (\ref{limitexperiment}). 

\textbf{Step 3.} 
We next bound the limit of $\widehat s_{n_q}$. 
Recall $\widehat{K}_q = \{j: e_j'(B\widehat{\mu}_{n_q}+\overline{C}_{n_q}\widehat{\delta}_{n_q}-d_{n_q})=0\}$ and $\widehat{L}_q =\{j: e_j'\widehat{\psi}_q>0\}$. 
Let $L^\ast =\{j: e_j'\psi^\ast_\infty>0\}$ and $t^\ast=\textup{rk}(I_{L^\ast}[B, C_\infty])-\textup{rk}(I_{L^\ast}C_\infty)$. 
Note that 
\begin{equation}
L^\ast\subseteq \widehat{L}_q\subseteq \ddot{K}_q\subseteq \widehat K_q\label{rankordering}
\end{equation}
eventually, where the first set inclusion follows from the convergence of $\widehat{\psi}_q$ to $\psi^\ast_\infty$, the middle set inclusion follows from complementary slackness, and the final set inclusion follows from the fact that $\ddot\gamma_q$ was chosen to achieve the minimal activatable set. 
We next show that 
\begin{equation}
s^\ast=r^\ast=t^\ast=\textup{rk}(I_{L^\ast}[B, \overline{C}_q])-\textup{rk}(I_{L^\ast}\overline{C}_q)\le\widehat t_{n_q}\le\widehat s_{n_q}\label{r1conv}
\end{equation}
eventually as $q\rightarrow\infty$. 
The first two equalities follow because $X\not\in{\cal M}_0$, the third equality follows from (\ref{rank_convergence}) and $L^\ast\in\mathcal{L}(B, C_\infty, h_\infty)$ so that $\textup{rk}(I_{L^\ast}\overline{C}_q)=\textup{rk}(I_{L^\ast}C_\infty)$ eventually, and the two inequalities follow from applying Lemma \ref{lem:MB2}(c) to (\ref{rankordering}).\footnote{Note that $L^\ast\in\mathcal{L}(B, C_\infty, h_\infty)$ because $(\eta^\ast_\infty, \gamma^\ast_\infty,\psi^\ast_\infty)$ satisfy the KKT conditions for (\ref{limitexperiment}) and $\gamma^\ast_\infty$ is defined to achieve the minimal activatable set. 
The third equality also uses $\textup{rk}(I_{L^\ast}[B, \overline{C}_q])=\textup{rk}(I_{L^\ast}[B,D])=\textup{rk}(I_{L^\ast}[B, C_\infty])$.}

\textbf{Step 4.} 
To finish the proof, we verify (\ref{goal0}) in three cases. 
(A) Suppose $\widehat s_{n_q}=0$ along a subsequence. 
Fix that subsequence. 
By Lemma \ref{s=0_implies_T=0}, $T_{n_q}=0$. 
Therefore, $1\{T_{n_q}>cv(\widehat s_{n_q}, \alpha)\}=0$ along the subsequence, which satisfies (\ref{goal0}). 

(B) Suppose $\widehat s_{n_q}\ge 1$ eventually and $r^\ast=0$. 
By Lemma \ref{s=0_implies_T=0}, $T_\infty=0$. 
Therefore, $T_{n_q}\rightarrow 0$ must satisfy $1\{T_{n_q}>cv(\widehat s_{n_q}, \alpha)\}=0$ eventually. 
This satisfies (\ref{goal0}). 

(C) Suppose $r^\ast>0$. 
This implies that $s^\ast\neq 0$ by Theorem \ref{lem:rhat}(a). 
Therefore, $M_{C_{K^\ast}}I_{K^\ast}B\neq \mathbf{0}$ because $s^\ast=\textup{rk}\left(M_{C_{K^\ast}}I_{K^\ast}B\right)$
by Lemma \ref{lem:MB2}(a,b). 
By (\ref{samplepath}), 
\begin{equation}
\|(M_{C_{K^\ast}}I_{K^\ast}B\Sigma_\infty^{1/2})^+M_{C_{K^\ast}}I_{K^\ast}(BX - h_\infty)\|^2_{\Sigma^{-1}_\infty} \neq cv(s^\ast,\alpha), 
\end{equation}
using the fact that $I_{K^\ast}h_\infty<\infty$ by the definition of $K^\ast$. 
Also note that, using Lemma \ref{lem:KKTB2_new}, 
\begin{equation}
T_\infty = \|X-\eta^\ast_\infty\|_{\Sigma^{-1}_\infty}^2 = \|(M_{C_{K^\ast}}I_{K^\ast}B\Sigma_\infty^{1/2})^+M_{C_{K^\ast}}I_{K^\ast}(BX-h_\infty)\|^2_{\Sigma_\infty^{-1}}. 
\end{equation}
Therefore, $T_{n_q}\rightarrow T_\infty\neq cv(s^\ast,\alpha)$. 
This implies that $1\{T_{n_q}>cv(\widehat s_{n_q},\alpha)\}\le 1\{T_{n_q}>cv(s^\ast,\alpha)\} = 1\{T_\infty>cv(r^\ast,\alpha)\}$ eventually, where the inequality uses (\ref{r1conv}). 
This verifies (\ref{goal0}) in this case. 
\qed

\bibliographystyle{apalike}
\bibliography{references}

\begin{thebibliography}{}

\bibitem[Andrews, 2018]{Andrews2018}
Andrews, I. (2018).
\newblock Valid two-step identification-robust confidence sets for {G}{M}{M}.
\newblock {\em Review of Economics and Statistics}, 100:337--348.

\bibitem[Andrews and Mikusheva, 2016]{AndrewsMikusheva2016}
Andrews, I. and Mikusheva, A. (2016).
\newblock Conditional inference with a functional nuisance parameter.
\newblock {\em Econometrica}, 84:1571--1612.

\bibitem[Andrews et~al., 2023]{AndrewsRothPakes2023}
Andrews, I., Roth, J., and Pakes, A. (2023).
\newblock Inference for linear conditional moment inequalities.
\newblock {\em Review of Economic Studies}.

\bibitem[Angrist and Evans, 1998]{Angrist1998}
Angrist, J. and Evans, W. (1998).
\newblock Children and their parents' labor supply: Evidence from exogenous
  variation in family size.
\newblock {\em American Economic Review}, 88(3):450--77.

\bibitem[Angrist and Krueger, 1991]{AngristKrueger1991}
Angrist, J. and Krueger, A. (1991).
\newblock Does compulsory school attendance affect schooling and earnings.
\newblock {\em Quarterly Journal of Economics}, 106:979--1014.

\bibitem[Aubin and Frankowska, 2009]{AubinFrankowska2009}
Aubin, J.-P. and Frankowska, H. (2009).
\newblock {\em Set-Valued Analysis}.
\newblock Birkh\"{a}user.

\bibitem[Bai et~al., 2022]{BaiSantosShaikh2022}
Bai, Y., Santos, A., and Shaikh, A. (2022).
\newblock On testing systems of linear inequalities with known coefficients.
\newblock Unpublished manuscript.

\bibitem[Bei, 2023]{Bei2023}
Bei, X. (2023).
\newblock Local linearization based subvector inference in moment inequality
  models.
\newblock {\em Available at SSRN 4475065}.

\bibitem[Belloni et~al., 2018]{BelloniBugniChernozhukov2018}
Belloni, A., Bugni, F., and Chernozhukov, V. (2018).
\newblock Subvector inference in partially identified models with many moment
  inequalities.
\newblock Unpublished manuscript, Department of Economics, Duke University.

\bibitem[Bugni et~al., 2015]{BugniCanayShi2015}
Bugni, F., Canay, I., and Shi, X. (2015).
\newblock Specification tests for partially identified models defined by moment
  inequalities.
\newblock {\em Journal of Econometrics}, 185:259--282.

\bibitem[Bugni et~al., 2017]{BugniCanayShi2017}
Bugni, F., Canay, I., and Shi, X. (2017).
\newblock Inference for subvectors and other functions of partially identified
  parameters in moment inequality models.
\newblock {\em Quantitative Economics}, 8:1--38.

\bibitem[Canay and Shaikh, 2017]{CanayShaikh2017}
Canay, I.~A. and Shaikh, A. (2017).
\newblock Practical and theoretical advances for inference in partially
  identified models.
\newblock In B. Honor\'{e}, A. Pakes, M. Piazzesi, and L. Samuelson (Eds) {\em
  Advances in Economics and Econometrics}: Volume 2: Eleventh World Congress,
  (Econometric Society Monographs, pp. 271-306). Cambridge University Press.

\bibitem[Chaudhuri and Zivot, 2011]{ChaudhuriZivot2011}
Chaudhuri, S. and Zivot, E. (2011).
\newblock A new method of projection-based inference in {G}{M}{M} with weakly
  identified nuisance parameters.
\newblock {\em Journal of Econometrics}, 164:239--251.

\bibitem[Chen et~al., 2018]{CCT2018}
Chen, X., Christensen, T.~M., and Tamer, E. (2018).
\newblock Monte {C}arlo confidence sets for identified sets.
\newblock {\em Econometrica}, 86:1965--2018.

\bibitem[Cho and Russell, 2024]{ChoRussell2024}
Cho, J. and Russell, T.~M. (2024).
\newblock Simple inference on functionals of set-identified parameters defined
  by linear moments.
\newblock {\em Journal of Business and Economics Statistics}.

\bibitem[Cox, 2025]{Cox2022}
Cox, G. (2025).
\newblock Weak identification with bounds in a class of minimum distance
  models.
\newblock {\em Journal of Econometrics}.
\newblock Forthcoming.

\bibitem[Cox and Shi, 2023]{CoxShi2023}
Cox, G. and Shi, X. (2023).
\newblock Simple adaptive size-exact testing for full-vector and subvector
  inference in moment inequality models.
\newblock {\em The Review of Economic Studies}, 90:201--228.

\bibitem[Fang et~al., 2023]{FSST2023}
Fang, Z., Santos, A., Shaikh, A., and Torgovitsky (2023).
\newblock Inference for large-scale linear systems with known coefficients.
\newblock {\em Econometrica}, 91:299--327.

\bibitem[Freyberer and Horowitz, 2015]{FreybergerHorowitz2015}
Freyberer, J. and Horowitz, J. (2015).
\newblock Identification and shape restrictions in nonparametric instrumental
  variables estimation.
\newblock {\em Journal of Econometrics}, 189:41--53.

\bibitem[Gafarov, 2025]{Gafarov2025}
Gafarov, B. (2025).
\newblock Simple subvector inference on sharp identified set in affine models.
\newblock {\em Journal of Econometrics}, 249.
\newblock 105952.

\bibitem[Gandhi et~al., 2023]{GandhiLuShi2023}
Gandhi, A.~K., Lu, Z., and Shi, X. (2023).
\newblock Estimating demand for differentiated products with zeroes in market
  share data.
\newblock {\em Quantitative Economics}, 14:381--418.

\bibitem[Goff and Mbakop, 2025]{GoffMbakop2025}
Goff, L. and Mbakop, E. (2025).
\newblock Inference on the value of a linear program.
\newblock ar{X}iv:2506.06776.

\bibitem[Guggenberger et~al., 2008]{GuggenbergerHahnKim2008}
Guggenberger, P., Hahn, J., and Kim, K. (2008).
\newblock Specification testing under moment inequalities.
\newblock {\em Economics Letters}, 99:375--378.

\bibitem[Guggenberger et~al., 2024]{GKM2024}
Guggenberger, P., Kleibergen, F., and Mavroeidis, S. (2024).
\newblock A powerful subvector {A}nderson-{R}ubin test in linear instrumental
  variables regression with conditional heteroskedasticity.
\newblock {\em Econometric Theory}, 40:957--1002.

\bibitem[Guggenberger et~al., 2012]{GKMC2012}
Guggenberger, P., Kleibergen, F., Mavroeidis, S., and Chen, L. (2012).
\newblock On the asymptotic sizes of subset {A}nderson-{R}ubin and {L}agrange
  multiplier tests in linear instrumental variables regression.
\newblock {\em Econometrica}, 80:2649--2666.

\bibitem[Ho and Rosen, 2017]{HoRosen2017}
Ho, K. and Rosen, A. (2017).
\newblock Partial identification in applied research: Benefits and challenges.
\newblock In B. Honor\'{e}, A. Pakes, M. Piazzesi, and L. Samuelson (Eds) {\em
  Advances in Economics and Econometrics}: Volume 2: Eleventh World Congress,
  (Econometric Society Monographs, pp. 307-359). Cambridge University Press.

\bibitem[Kaido et~al., 2019]{KMS2019}
Kaido, H., Molinari, F., and Stoye, J. (2019).
\newblock Confidence intervals for projections of partially identified
  parameters.
\newblock {\em Econometrica}, 87:1397--1432.

\bibitem[Kaido et~al., 2022]{KMS2022}
Kaido, H., Molinari, F., and Stoye, J. (2022).
\newblock Constraint qualifications in partial identification.
\newblock {\em Econometric Theory}, 38:596--619.

\bibitem[Kaido and Santos, 2014]{KaidoSantos2014}
Kaido, H. and Santos, A. (2014).
\newblock Asymptotically efficient estimation of models defined by convex
  moment inequalities.
\newblock {\em Econometrica}, 82:387--413.

\bibitem[Kalouptsidi et~al., 2021]{KKLS2021}
Kalouptsidi, M., Kitamura, Y., Lima, L., and Souza-Rodrigues, E. (2021).
\newblock Counterfactual analysis for structural dynamic choice models.
\newblock Unpublished Manuscript, Department of Economics, Harvard University.

\bibitem[Karmarkar, 1984]{Karmarkar1984}
Karmarkar, N. (1984).
\newblock A new polynomial-time algorithm for linear programming.
\newblock {\em Combinatorica}, 4:373--395.

\bibitem[Kleibergen, 2005]{Kleibergen2005}
Kleibergen, F. (2005).
\newblock Testing parameters in {G}{M}{M} without assuming that they are
  identified.
\newblock {\em Econometrica}, 73:1103--1123.

\bibitem[Kline and Tartari, 2016]{KlineTartari2016}
Kline, P. and Tartari, M. (2016).
\newblock Bounding the labor supply responses to a randomized welfare
  experiment: A revealed preference approach.
\newblock {\em American Economic Review}, 106:972--1014.

\bibitem[Manski and Tamer, 2002]{ManskiTamer2002}
Manski, C.~F. and Tamer, E. (2002).
\newblock Inference on regressions with interval data on a regressor or
  outcome.
\newblock {\em Econometrica}, 70:519--546.

\bibitem[Mogstad et~al., 2018]{MogstadSantosTorgovitsky2018}
Mogstad, M., Santos, A., and Torgovitsky, A. (2018).
\newblock Using instrumental variables for inference about policy relevant
  treatment parameters.
\newblock {\em Econometrica}, 86:1589--1619.

\bibitem[Molinari, 2020]{Molinari2020}
Molinari, F. (2020).
\newblock Econometrics with partial identification.
\newblock In S. Durlauf, L. Hansen, J. Heckman, and R. Matzkin (Eds) {\em
  Handbook of Econometrics}: Volume 7A, 1st Edition. North Holland.

\bibitem[Nocedal and Wright, 2006]{NocedalWright2006}
Nocedal, J. and Wright, S. (2006).
\newblock {\em Numerical Optimization}.
\newblock Springer.

\bibitem[Pakes et~al., 2015]{PPHI2015}
Pakes, A., Porter, J., Ho, K., and Ishii, J. (2015).
\newblock Moment inequalities and their applications.
\newblock {\em Econometrica}, 83:315--334.

\bibitem[Permutt and Hebel, 1989]{PermuttHebel1989}
Permutt, T. and Hebel, J.~R. (1989).
\newblock Simultaneous-equation estimation in a clinical trial of the effect of
  smoking on birth weight.
\newblock {\em Biometrics}, 45:619--622.

\bibitem[Shi et~al., 2018]{ShiShumSong2018}
Shi, X., Shum, M., and Song, W. (2018).
\newblock Estimating semi-parametric panel multinomial choice models using
  cyclic monotonicity.
\newblock {\em Econometrica}, 86:731--761.

\bibitem[Stock and Wright, 2000]{StockWright2000}
Stock, J.~H. and Wright, J.~H. (2000).
\newblock {GMM} with weak identification.
\newblock {\em Econometrica}, 68:1055--1096.

\bibitem[Syrgkanis et~al., 2021]{SyrgkanisTamerZiani2021}
Syrgkanis, V., Tamer, E., and Ziani, J. (2021).
\newblock Inference on auctions with weak assumptions on information.
\newblock Unpublished manuscript, Department of Economics, Harvard University.

\bibitem[Voronin, 2025]{Voronin2025}
Voronin, A. (2025).
\newblock Linear programming approach to partially identified econometric
  models.
\newblock ar{X}iv:2503.14940.

\bibitem[Ye and Tse, 1989]{YeTse1989}
Ye, Y. and Tse, E. (1989).
\newblock An extension of {K}armarkar's projective algorithm for convex
  quadratic programming.
\newblock {\em Mathematical Programming}, 44:157--179.

\end{thebibliography}


\begin{thebibliography}{}

\bibitem[Ben-{I}srael and Greville, 2003]{Ben-IsraelGreville2003}
Ben-{I}srael, A. and Greville, T. (2003).
\newblock {\em Generalized Inverses: Theory and Applications}.
\newblock Springer, New York.

\bibitem[Bugni et~al., 2017]{BugniCanayShi2017}
Bugni, F., Canay, I., and Shi, X. (2017).
\newblock Inference for subvectors and other functions of partially identified
  parameters in moment inequality models.
\newblock {\em Quantitative Economics}, 8:1--38.

\bibitem[Cho and Russell, 2024]{ChoRussell2024}
Cho, J. and Russell, T.~M. (2024).
\newblock Simple inference on functionals of set-identified parameters defined
  by linear moments.
\newblock {\em Journal of Business and Economics Statistics}.

\bibitem[Cox, 2022]{CoxArgmax}
Cox, G. (2022).
\newblock A generalized argmax theorem with applications.
\newblock ar{X}iv:2209.08793.

\bibitem[Goff and Mbakop, 2025]{GoffMbakop2025}
Goff, L. and Mbakop, E. (2025).
\newblock Inference on the value of a linear program.
\newblock ar{X}iv:2506.06776.

\bibitem[Horn and Johnson, 2012]{HornJohnson2012}
Horn, R.~A. and Johnson, C.~R. (2012).
\newblock {\em Matrix Analysis}.
\newblock Cambridge University Press, Cambridge, 2nd edition.

\bibitem[Imbens and Angrist, 1994]{ImbensAngrist1994}
Imbens, G. and Angrist, J. (1994).
\newblock Identification and estimation of local average treatment effects.
\newblock {\em Econometrica}, 62:467--475.

\bibitem[Mogstad et~al., 2018]{MogstadSantosTorgovitsky2018}
Mogstad, M., Santos, A., and Torgovitsky, A. (2018).
\newblock Using instrumental variables for inference about policy relevant
  treatment parameters.
\newblock {\em Econometrica}, 86:1589--1619.

\bibitem[Pakes et~al., 2015]{PPHI2015}
Pakes, A., Porter, J., Ho, K., and Ishii, J. (2015).
\newblock Moment inequalities and their applications.
\newblock {\em Econometrica}, 83:315--334.

\bibitem[Szil\'{a}gyi, 2006]{Szilagyi2006}
Szil\'{a}gyi, P. (2006).
\newblock On the uniqueness of the optimal solution in linear programming.
\newblock {\em Revue D'Analyse Num\'{e}rique et de Th\'{e}orie de
  L'Approximation}, 35:225--244.

\bibitem[Voronin, 2025]{Voronin2025}
Voronin, A. (2025).
\newblock Linear programming approach to partially identified econometric
  models.
\newblock ar{X}iv:2503.14940.

\bibitem[Ziegler, 1995]{Ziegler1995}
Ziegler, G. (1995).
\newblock {\em Lectures on Polytopes, Graduate Texts in Mathematics}, volume
  152.
\newblock Springer, Berlin.

\end{thebibliography}

\newpage
\appendix

\begin{center}
{\LARGE{Supplemental Appendix for\\\bigskip ``Testing Inequalities Linear in Nuisance Parameters''}}

\bigskip

{\large{Gregory Fletcher Cox~~~~~~Xiaoxia Shi~~~~~~Yuya Shimizu}}

\bigskip

{\large{\today}}
\end{center}
This supplemental appendix is organized as follows:
\begin{itemize}
\item Section \ref{Section_Generalizations} extends and generalizes Theorems \ref{lem:delta} and \ref{thm:level_CC}, including a discussion of Assumption \ref{assu:rank:simple}. 

\item Section \ref{UsefulLemmas} states and proves lemmas used in the proofs. 

\item Section \ref{A1iiiDiscussion} discusses Assumption \ref{Assumption1}(vi). 

\item Section \ref{app:MTE} presents simulations for the policy relevant treatment effect example. 
\end{itemize}

\setcounter{section}{1}
\section{Extensions/Generalizations of Theorems \ref{lem:delta}-\ref{thm:level_CC}}\label{Section_Generalizations}

This section extends and generalizes the GCC test and Theorems \ref{lem:delta}-\ref{thm:level_CC}. 
Section \ref{activatability} gives precise definitions for activatability of a set of indices for a system of linear inequalities. 
Section \ref{relaxations} states theorems that relax Assumption \ref{assu:rank:simple}. 
Section \ref{DiscussAssu2} discusses Assumption \ref{assu:rank:simple}. 
Section \ref{app:refinement} defines the Refined GCC (RGCC) test and proves its validity. 
Section \ref{LMGCC} defines the GCC test with KKT multipliers and proves its validity. 

\subsection{Definitions of Activatability}\label{activatability}
Assumption \ref{assu:rank:simple} requires the rank stability equation to hold for all $K\in\mathcal{A}(C_\infty, b_\infty)$ with $K^=\subseteq K$. 
In this section, we define two subcollections of $\{K\in {\cal A}(C_\infty, b_\infty): K^{=}\subseteq K\}$ and state weaker conditions where the rank stability equation is only required to hold on these subcollections.  
We are especially careful to cover potential non-uniqueness of the solution to the CQPP as well as potential non-uniqueness of the KKT multipliers. 

The first concept of activatability is defined with respect to the polyhedral set $\poly(C,b)$. 
We say that a collection of inequalities, $J\subseteq\{1,...,d_C\}$ is \underline{activatable for $\poly(C,b)$} if there exists a $\delta\in \poly(C,b)$ such that $J=\{j\in\{1,...,d_C\}: e'_j C\delta=e'_j b\}$. 
Let $\mathcal{J}$ denote the set of all $J\subseteq \{1,...,d_C\}$ that are activatable for $\poly(C,b)$. 
Note that ${\cal J}$ depends on the initial $C$ and $b$, which we keep implicit. 
We also define 
\begin{equation}
J^\dagger=\cap_{J\in{\cal J}}J.
\end{equation}
We call $J^\dagger$ the minimal activatable set of $\poly(C,b)$. 
Lemma \ref{minimal_activatable}, below, explains the sense in which it is ``minimal'' and ``activatable''. 
The lemma is proved at the end of this subsection. 
\begin{lemma}\label{minimal_activatable}
\begin{enumerate}
\item[\textup{(a)}] $J^\dagger\in {\cal J}$. 
\item[\textup{(b)}] The inequalities specified by $J^\dagger$ define an affine subspace of $\R^{d_\delta}$. That is, we have $\textup{poly}(I_{J^\dagger}C, I_{J^\dagger}b) = \{\delta\in \R^{d_\delta}: I_{J^\dagger}C\delta=I_{J^\dagger}b\}$.
\item[\textup{(c)}] For any $\ddot{J}\in {\cal J}$ such that $\textup{poly}(I_{\ddot{J}}C,I_{\ddot{J}}b) = \{\delta\in \R^{d_\delta}:I_{\ddot{J}}C\delta = I_{\ddot{J}}b\}$, we have $\ddot{J} = J^{\dagger}$.  
\end{enumerate}
\end{lemma}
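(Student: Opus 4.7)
The plan is to prove (a) first by an averaging argument, then deduce (b) from (a), and finally use both to establish (c). A preliminary observation driving everything is that $\mathcal{J}$ is a finite collection, since each activatable set is a subset of $\{1,\ldots,d_C\}$. This finiteness is what makes the averaging trick in (a) work.

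For part (a), for each $J\in\mathcal{J}$ I would pick a witness $\delta_J\in\poly(C,b)$ whose active set is exactly $J$, and consider the average $\bar\delta = |\mathcal{J}|^{-1}\sum_{J\in\mathcal{J}}\delta_J$. Convexity of $\poly(C,b)$ gives $\bar\delta\in\poly(C,b)$. The key point is that for each $j$, the residual $e_j'C\bar\delta - e_j'b$ is the average of the nonpositive residuals $e_j'C\delta_J - e_j'b$, so it equals zero iff every summand does, iff $j\in J$ for every $J\in\mathcal{J}$, iff $j\in J^\dagger$. Hence $J^\dagger$ is exactly the active set at $\bar\delta$, so $J^\dagger\in\mathcal{J}$.

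For part (b), I would take $\delta^\ast$ from part (a) with active set exactly $J^\dagger$. Given any $\delta_0$ with $I_{J^\dagger}C\delta_0\le I_{J^\dagger}b$, consider $\delta_\epsilon = \delta^\ast + \epsilon(\delta_0-\delta^\ast)$. Since the inequalities indexed by $(J^\dagger)^c$ are strictly slack at $\delta^\ast$, they remain slack at $\delta_\epsilon$ for all sufficiently small $\epsilon>0$; meanwhile the $J^\dagger$ inequalities at $\delta_\epsilon$ equal $I_{J^\dagger}b + \epsilon(I_{J^\dagger}C\delta_0 - I_{J^\dagger}b) \le I_{J^\dagger}b$. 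So $\delta_\epsilon\in\poly(C,b)$, and because every $j\in J^\dagger$ must be active at any feasible point, the perturbation term must vanish, giving $I_{J^\dagger}C\delta_0 = I_{J^\dagger}b$.

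For part (c), one inclusion is immediate: any witness $\ddot\delta$ for $\ddot J$ lies in $\poly(C,b)$ and hence has every index of $J^\dagger$ among its active constraints, so $J^\dagger\subseteq\ddot J$. For the reverse inclusion I would use the hypothesis directly: for any $\delta\in\poly(C,b)$ we have $I_{\ddot J}C\delta\le I_{\ddot J}b$, which by the affine-subspace assumption on $\poly(I_{\ddot J}C,I_{\ddot J}b)$ forces $I_{\ddot J}C\delta = I_{\ddot J}b$; this shows every $j\in\ddot J$ is an implicit equality of $\poly(C,b)$, so $\ddot J\subseteq J^\dagger$. The arguments are elementary; the only real subtlety is tracking active sets carefully under convex combinations, and the main idea worth isolating is the averaging construction in (a) that produces a single point witnessing $J^\dagger$ as an activatable set.
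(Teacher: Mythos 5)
Your proof is correct and follows essentially the same route as the paper's: the averaging construction over witnesses for part (a), a convex-combination perturbation of the point $\delta^\dagger$ for part (b) (the paper phrases this as a contradiction, you argue it directly, but the mechanism is identical), and the observation that the affine-subspace hypothesis forces $\ddot J$ to be contained in every activatable set for part (c). No gaps.
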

Intuitively, the minimal activatable set corresponds to a $\delta\in\poly(C,b)$ in the interior or relative interior for which any inequality that can be slack is slack. Any inequality that cannot be slack is implicitly an equality that defines an affine subspace of $\R^{d_\delta}$. 

The next concept of activatability is with respect to the CQPP: 
\begin{equation}
\min_{(\mu,\delta)\in \poly([B,C],d)} (x-\mu)'\Sigma^{-1}(x-\mu), \label{QP}
\end{equation}
for some positive definite $\Sigma$. 
For each $x\in \R^{d_\mu}$, let $\mu(x)$ denote the unique minimizing value of $\mu$. 
The minimizing value of $\delta$ is not unique. Instead, the set of possible minimizing values for $\delta$ are $\Delta(x):=\poly(C,d-B\mu(x))$. For each $K\subseteq\{1,\dots,d_C\}$, we call $K$ a \underline{K-activatable set} for the CQPP (\ref{QP}) if $K$ is activatable for $\Delta(x)$ in the sense defined above. By definition, any K-activatable set for (\ref{QP}) is of the form  $K(x,\delta)=\{k\in\{1,...,d_C\}: e'_kB\mu(x)+e'_kC\delta=e'_kd\}$ for some $\delta \in \Delta(x)$.  
Let the  collection of K-activatable sets for (\ref{QP}) be denoted $\mathcal{K}(x)=\{K(x,\delta): \delta\in\Delta(x)\}$. 

By Lemma \ref{minimal_activatable}, for each $x\in\R^{d_\mu}$, there exists a unique minimal K-activatable set for (\ref{QP}), and we denote it by $K^\dagger(x)$. 
Furthermore,  let the collection of minimal K-activatable sets be  $\mathcal{K}=\{K^\dagger(x): x\in\R^{d_\mu}\}$.  
Note that $K^\dagger(x)$ and $\mathcal{K}$ depend on the original specification of $B$, $C$, $d$, and $\Sigma$. 
We can make this dependence explicit with $K^\dagger(x; B, C, d, \Sigma)$ and $\mathcal{K}(B, C, d, \Sigma)$. 

The following lemma shows that for any $\delta\in \Delta(x)$, there is a point in $\Delta(x)$ arbitrarily close to it that achieves the minimal K-activatable set. 
It is used in the proof of Theorem \ref{thm:level_CC}, and its proof is at the end of this subsection. 
\begin{lemma}\label{delta_perturbation}
For any $\delta\in\Delta(x)$ and for any $\epsilon>0$, there exists a $\ddot\delta\in\Delta(x)$ such that $\|\delta-\ddot\delta\|<\epsilon$ and $K(x,\ddot\delta)=K^\dagger(x)$. 
\end{lemma}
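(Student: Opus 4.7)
The plan is to use a convex combination argument. By Lemma \ref{minimal_activatable}(a), the minimal activatable set $K^\dagger(x)$ is itself activatable, so there exists some $\delta^\dagger\in\Delta(x)$ with $K(x,\delta^\dagger)=K^\dagger(x)$. The candidate perturbation is
\[
\ddot\delta = (1-\alpha)\delta + \alpha\delta^\dagger
\]
for a small $\alpha\in(0,1]$ to be chosen. Convexity of $\Delta(x)$ guarantees $\ddot\delta\in\Delta(x)$, and $\|\ddot\delta-\delta\|=\alpha\|\delta^\dagger-\delta\|$ can be made less than $\epsilon$ by choosing $\alpha$ small (if $\delta=\delta^\dagger$, we are trivially done, so assume otherwise).

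The main step is to verify that $K(x,\ddot\delta)=K^\dagger(x)$ for every $\alpha\in(0,1]$. I would argue inequality by inequality. For $k\in K^\dagger(x)$: since $K^\dagger(x)=\cap_{J\in\mathcal{J}}J$ and $K(x,\delta)\in\mathcal{J}$, we have $K^\dagger(x)\subseteq K(x,\delta)$, so both $\delta$ and $\delta^\dagger$ saturate the $k$th constraint; hence the convex combination does as well, so $k\in K(x,\ddot\delta)$. For $k\notin K^\dagger(x)=K(x,\delta^\dagger)$, we have $e'_k C\delta^\dagger < e'_k(d-B\mu(x))$ strictly, while $e'_k C\delta\le e'_k(d-B\mu(x))$. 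Taking the convex combination with weight $\alpha>0$ on the strict inequality yields $e'_k C\ddot\delta<e'_k(d-B\mu(x))$, so $k\notin K(x,\ddot\delta)$. Combining these two observations gives $K(x,\ddot\delta)=K^\dagger(x)$.

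This is essentially the standard relative-interior argument for polyhedra: a strict convex combination with an interior-like point (here, a point that activates exactly the minimally activatable inequalities) inherits the slackness pattern of the interior point on the slack coordinates while preserving the equality pattern on the implicit equalities. I do not anticipate a serious obstacle; the only thing to be careful about is invoking Lemma \ref{minimal_activatable}(a) to guarantee the existence of $\delta^\dagger$ and keeping track of the fact that $K^\dagger(x)\subseteq K(x,\delta)$ holds for every $\delta\in\Delta(x)$ because $K^\dagger(x)$ is defined as an intersection. Choosing $\alpha=\min\{1,\epsilon/(2\|\delta^\dagger-\delta\|)\}$ then delivers a $\ddot\delta$ satisfying both conclusions.
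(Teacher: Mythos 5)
Your proof is correct and follows essentially the same route as the paper's: both take a $\delta^\dagger\in\Delta(x)$ achieving the minimal activatable set (via Lemma \ref{minimal_activatable}(a)) and form the convex combination $(1-\alpha)\delta+\alpha\delta^\dagger$, noting that strict slackness of $\delta^\dagger$ off $K^\dagger(x)$ is inherited while the equalities on $K^\dagger(x)$ are preserved. The only cosmetic difference is that the paper concludes $K(x,\ddot\delta)=K^\dagger(x)$ from $K(x,\ddot\delta)\subseteq K^\dagger(x)$ plus minimality, whereas you verify both inclusions coordinatewise; these are interchangeable.
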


There is another way to define activatability with respect to (\ref{QP}), based on the KKT multipliers. 
Write out the KKT conditions associated with (\ref{QP}): 
\begin{align}
0&=2\Sigma^{-1}(\mu(x)-x)+B'\psi \label{QP_KKT1}\\
0&=C'\psi\\
0&\le \psi\label{QP_KKT3}\\
0&=\psi'(B\mu(x)+C\delta-d) \label{QP_KKT4}\\
0&\ge B\mu(x)+C\delta-d,\label{QP_KKT5}
\end{align}
where $\psi$ denotes the KKT multipliers. 
For any $x\in\R^{d_\mu}$ and for any $\delta\in\Delta(x)$, (\ref{QP_KKT1})-(\ref{QP_KKT4}) specify a system of linear equalities/inequalities (a polyhedral set) for values of $\psi$ that solve the KKT problem. 
Denote that set by $\Psi(x,\delta)$. 
Notice that this set does not depend on $\delta$ except through $K(x,\delta)$. 
That is, for any $K$, let $\Psi(x,K)$ (abusing notation) denote the set of values of $\psi$ that satisfy (\ref{QP_KKT1})-(\ref{QP_KKT3}), together with 
\begin{equation}
0=I_{K^c}\psi, \label{QP_KKT6}
\end{equation}
where $K^c=\{1,...,d_C\}/K$. 
We then have $\Psi(x,\delta)=\Psi(x,K(x,\delta))$. 

We only concern ourselves with KKT multipliers associated with the minimal K-activatable set for (\ref{QP}) in the proofs. Let 
\begin{equation}
\Psi^\dagger(x)=\Psi(x,K^\dagger(x)). \label{mini_L}
\end{equation}
For any $\psi\in\Psi^\dagger(x)$, let $L(x,\psi)=\{\ell\in\{1,...,d_C\}: e'_\ell \psi>0\}$ be an \underline{L-activatable set} for (\ref{QP}). 
Let the collection of L-activatable sets be given by $\mathcal{L}(x)=\{L(x,\psi):\psi\in\Psi^\dagger(x)\}$ and $\mathcal{L}=\cup_{x\in\R^{d_\mu}}\mathcal{L}(x)$.\footnote{For the purpose of proving Theorem \ref{thm:level_CC}, we could further restrict $\mathcal{L}(x)$ to be $\{L(x,\psi):\psi\in\Psi^\dagger(x) \text{ and } \rk\left(I_{L(x,\psi)}[B,C]\right)-\rk\left(I_{L(x,\psi)}C\right)=\rk\left(I_{K^\dagger(x)}[B,C]\right)-\rk\left(I_{K^\dagger(x)}C\right)\}$. Then, all the L-activatable sets give the same DoF as $K^\dagger(x)$.} 

Note that $\mathcal{L}$ depends on the specification of $B$, $C$, $d$, and $\Sigma$.  We can write it as $\mathcal{L}(B, C, d, \Sigma)$. The following lemma shows that $\Sigma$ is redundant in the notation. The same is true for the collection of K-activatable sets: ${\cal K}(B,C,d,\Sigma)$, and for some $x$, for the minimal K-activatable set $K^\dagger(x;B,C,d,\Sigma)$. The lemma is proved at the end of this subsection. 
\begin{lemma}\label{L-invariance} The sets $\mathcal{K}(B, C, d, \Sigma)$ and $\mathcal{L}(B, C, d, \Sigma)$ do not depend on $\Sigma$. 
Furthermore, if $x\in \ppoly(B, d; C)$, then $K^\dagger(x; B, C, d, \Sigma)$ does not depend on $\Sigma$. 
\end{lemma}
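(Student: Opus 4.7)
The plan rests on two observations: the $\mu$-component of the minimizer of (\ref{QP}) is just the $\Sigma^{-1}$-projection of $x$ onto $\ppoly(B, d; C)$, and once a minimizer--multiplier pair $(\mu, \psi)$ is fixed, the stationarity equation forces a unique choice of $x$.

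First I would establish that the range of $x \mapsto \mu(x)$ equals $\ppoly(B, d; C)$, independent of $\Sigma$. For any $\mu_0 \in \ppoly(B, d; C)$, setting $x = \mu_0$ makes the objective vanish at $\mu = \mu_0$, so uniqueness of the minimizing $\mu$ forces $\mu(\mu_0) = \mu_0$ for every $\Sigma$; the reverse inclusion is feasibility. Hence as $x$ ranges over $\R^{d_\mu}$, the family $\{\Delta(x)\} = \{\poly(C, d - B\mu_0) : \mu_0 \in \ppoly(B, d; C)\}$ is invariant under changes of $\Sigma$, and since $\mathcal{K}(B, C, d, \Sigma)$ consists of the minimal activatable sets of the members of this family, it too is invariant. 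The ``furthermore'' claim is the specialization to $x \in \ppoly(B, d; C)$: then $\mu(x) = x$ regardless of $\Sigma$, so $\Delta(x)$ and hence $K^\dagger(x; B, C, d, \Sigma)$ depend only on $(B, C, d, x)$.

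For $\mathcal{L}$-invariance, the plan is a symmetric inclusion. Fix $\Sigma, \Sigma'$ and take any $L \in \mathcal{L}(B, C, d, \Sigma)$, realized by $x \in \R^{d_\mu}$ and $\psi \in \Psi^\dagger(x; B, C, d, \Sigma)$ with $L = L(x, \psi)$. Write $\mu^* = \mu(x; \Sigma)$ and $K = K^\dagger(x; B, C, d, \Sigma)$, and define
\begin{equation*}
\tilde{x} = \mu^* + \tfrac{1}{2}\, \Sigma' B' \psi.
\end{equation*}
I would then check that $(\mu^*, \delta, \psi)$ satisfies the KKT system (\ref{QP_KKT1})--(\ref{QP_KKT5}) for (\ref{QP}) at $\tilde{x}$ under $\Sigma'$, for any $\delta \in \Delta(\mu^*)$: stationarity $2(\Sigma')^{-1}(\mu^* - \tilde{x}) + B'\psi = 0$ holds by construction of $\tilde{x}$; $C'\psi = 0$ and $\psi \ge 0$ are inherited from $\psi \in \Psi^\dagger(x; B, C, d, \Sigma)$; primal feasibility is immediate from $\delta \in \Delta(\mu^*)$; and complementary slackness $\psi'(B\mu^* + C\delta - d) = 0$ holds because $\psi$ is supported on $K$ while Lemma \ref{minimal_activatable}(b), applied to $\poly(C, d - B\mu^*)$, says every $\delta \in \Delta(\mu^*)$ satisfies $I_K(B\mu^* + C\delta - d) = \mathbf{0}$. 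Since the KKT conditions are sufficient for the CQPP, $(\mu^*, \delta)$ solves (\ref{QP}) at $\tilde{x}$ under $\Sigma'$, so $\mu(\tilde{x}; \Sigma') = \mu^*$, $\Delta(\tilde{x}; \Sigma') = \Delta(x; \Sigma)$, $K^\dagger(\tilde{x}; B, C, d, \Sigma') = K$, and therefore $\psi \in \Psi^\dagger(\tilde{x}; B, C, d, \Sigma')$ with $L(\tilde{x}, \psi) = L$. This shows $L \in \mathcal{L}(B, C, d, \Sigma')$; swapping the roles of $\Sigma$ and $\Sigma'$ gives the reverse inclusion.

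The main obstacle is the construction of $\tilde{x}$ for the $\mathcal{L}$ part, but it is essentially forced: if one insists that $(\mu^*, \psi)$ be the minimizer--multiplier pair under $\Sigma'$, the stationarity equation uniquely determines $\tilde{x}$. The rest is routine KKT bookkeeping, with Lemma \ref{minimal_activatable}(b) doing the work of making complementary slackness automatic.
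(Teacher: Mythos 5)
Your proposal is correct and follows essentially the same route as the paper: your $\tilde{x}=\mu^\ast+\tfrac{1}{2}\Sigma' B'\psi$ is algebraically identical to the paper's shifted point $y=\mu(x)-\widetilde\Sigma\Sigma^{-1}(\mu(x)-x)$, and both arguments rest on verifying that the same triple $(\mu^\ast,\delta,\psi)$ satisfies the KKT system at the shifted point under the new weight matrix. Your treatment of the $\mathcal{K}$ part via the $\Sigma$-invariance of the range of $x\mapsto\mu(x)$, and your explicit use of Lemma \ref{minimal_activatable}(b) to get complementary slackness for every $\delta\in\Delta(\mu^\ast)$, are just more detailed versions of what the paper leaves implicit.
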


In light of Lemma \ref{L-invariance}, we can denote $\mathcal{K}(B, C, d, \Sigma)$ by $\mathcal{K}(B, C, d)$, $\mathcal{L}(B, C, d, \Sigma)$ by $\mathcal{L}(B, C, d)$, and $K^\dagger(x; B, C, d, \Sigma)$ by $K^\dagger(x; B, C, d)$ for $x\in \ppoly(B,d;C)$. 
\subsubsection{Proofs of Lemmas \ref{minimal_activatable}-\ref{L-invariance}}

\begin{proof}[Proof of Lemma \textup{\ref{minimal_activatable}}]
(a) For each $J\in\mathcal{J}$, let $\delta_J\in \textup{poly}(C,b)$ be such that $J=\{j\in\{1,...,d_C\}: e'_jC\delta_J=e'_jb\}$. 
Take $\delta^\dagger=\frac{1}{|\mathcal{J}|}\sum_{J\in\mathcal{J}}\delta_J$. 
This satisfies $J^\dagger=\{j\in\{1,...,d_C\}: e'_jC\delta^\dagger = e'_j b\}$. Thus $J^\dagger \in\mathcal{J}$.

For part (b), suppose $\poly(I_{J^\dagger}C, I_{J^\dagger}b)$ is not an affine subspace of $\R^{d_\delta}$. Then there exists a $\widetilde{\delta}\in \poly(I_{J^\dagger}C, I_{J^\dagger}b)$ and a nonempty $J^+\subseteq J^\dagger$ such that every element of $I_{J^+}C\widetilde\delta$ is strictly less than the corresponding one in $I_{J^{+}}b$. Define $\delta^\dagger$ as in the proof of part (a). Then $\delta(\eta):=\eta \widetilde{\delta}+(1-\eta)\delta^{\dagger}\in \textup{poly}(C,b)$ for  small enough $\eta>0$. Also, $\{j\in\{1,\dots,d_C\}: e_j'C\delta(\eta)=e_j'b \}$ does not include elements in $J^+$ and thus is a strict subset of $J^\dagger$. This contradicts the definition of $J^\dagger$. Thus, $\poly(I_{J^\dagger}C, I_{J^\dagger}b)$ is an affine subspace of $\R^{d_\delta}$.

For part (c), $\textup{poly}(I_{\ddot J}C,I_{\ddot J}b) = \{\delta\in\R^{d_\delta}:I_{\ddot J} C\delta = I_{\ddot J}b\}$ implies that $\ddot{J}\subseteq J$ for any $J\in {\cal J}$. Thus, $J^\dagger = \ddot J$.
\end{proof}
\begin{proof}[Proof of Lemma \textup{\ref{delta_perturbation}}]
Let $\delta^\dagger$ be a point in $\Delta(x)$ such that $I_{K^\dagger(x)}C\delta^\dagger = I_{K^\dagger(x)}(d-B\mu(x))$ and for any $j\notin K^\dagger(x)$, $e_j'C\delta^\dagger<e_j'(d-B\mu(x))$. Then $\ddot\delta:=(1-\eta)\delta+\eta\delta^\dagger \in \Delta(x)$ and for any $\eta\in(0,1)$, $K(x,\ddot\delta)\subseteq K^\dagger(x)$. Since $K^\dagger(x)$ is the minimal K-activatable set of $\Delta(x)$, we have that $K(x,\ddot\delta)=K^\dagger(x)$. Choosing a $\eta$ small enough so that $\|\delta-\ddot\delta\|<\epsilon$ proves the lemma. 
\end{proof}
\begin{proof}[Proof of Lemma \textup{\ref{L-invariance}}]
If, for some $x$ and $\psi$, $L(x,\psi)\in\mathcal{L}(B, C, d, \Sigma)$, then $L(x,\psi)=L(y,\psi)\in\mathcal{L}(B, C, d, \widetilde\Sigma)$ for any positive definite $\widetilde\Sigma$ by taking $y=\mu(x)-\widetilde\Sigma\Sigma^{-1}(\mu(x)-x)$. 
This follows because the same $\psi$ (as well as the same $\delta$ and $\mu = \mu(x)$) satisfies the KKT conditions in (\ref{QP_KKT1}) - (\ref{QP_KKT5}) with $y$ in place of $x$ and $\widetilde\Sigma$ in place of $\Sigma$. 
Similarly, if for some $x$, $K^\dagger(x; B, C, d, \Sigma)\in\mathcal{K}(B, C, d, \Sigma)$, then $K^\dagger(y; B, C, d, \widetilde\Sigma)\in \mathcal{K}(B, C, d, \widetilde\Sigma)$. 
If $x\in\ppoly(B, d; C)$, then $\mu(x)=x$ and $\Delta(x)$ does not depend on $\Sigma$. 
\end{proof}

\subsection{Relaxing Assumption \ref{assu:rank:simple}}\label{relaxations}

Theorems \ref{lem:delta}-\ref{thm:level_CC} use Assumption \ref{assu:rank:simple} to ensure Kuratowski convergence of the constraint set that defines $T_n$. 
Also, Theorem \ref{thm:level_CC} uses Assumption \ref{assu:rank:simple} to bound the limit of $\widehat s_n$. 
We can relax Assumption \ref{assu:rank:simple} by clarifying precisely for which sets of indices, $K$, the rank condition in Assumption \ref{assu:rank:simple} needs to hold. 

Assumptions \ref{assu:rank:relaxation1} and \ref{assu:rank:relaxation3} state relaxed versions of Assumption \ref{assu:rank:simple} for certain collections of activatable sets. 
Lemma \ref{rank_sufficiency}, below, shows that Assumption \ref{assu:rank:simple} is sufficient for Assumptions \ref{assu:rank:relaxation1} and \ref{assu:rank:relaxation3}. 
These assumptions use the definitions of activatable sets from Section \ref{activatability}. 
\begin{assumption}\label{assu:rank:relaxation1}
For every sequence $\{F_n\}_{n=1}^{\infty}$ with $F_n\in\mathcal{F}_{n0}$ and for every subsequence, $\{n_m\}$, satisfying Assumption \textup{\ref{Assumption1}(i)} with $C_\infty=B\Pi_\infty+D$ and $b_\infty=d_\infty-B\mu_\infty$, there exists a further subsequence, $\{n_q\}$, such that 
\[
P_{F_{n_q}}(\textup{rk}(I_K\overline{C}_{n_q}) =\rk(I_KC_{F_{n_q}})=\textup{rk}(I_KC_\infty))\to 1 
\]
as $q\rightarrow\infty$ for $K=K^\dagger(\mu_\infty; B, C_\infty, d_\infty)$. 
\end{assumption}

\begin{assumption}\label{assu:rank:relaxation3}
For every sequence $\{F_n\}_{n=1}^{\infty}$ with $F_n\in\mathcal{F}_{n0}$ and for every subsequence, $\{n_m\}$, satisfying Assumption \textup{\ref{Assumption1}(i)} with $C_\infty=B\Pi_\infty+D$ and also satisfying $h_m=\sqrt{n_m}(d_{n_m}-B\mu_{F_{n_m}}-C_{F_{n_m}}\delta^\ast_{F_{n_m}})\rightarrow h_\infty$ elementwise for some vector $h_\infty\in[0,\infty]^{d_C}$, there exists a further subsequence, $\{n_q\}$, such that 
\[
P_{F_{n_q}}(\textup{rk}(I_K\overline{C}_{n_q}) =\textup{rk}(I_KC_\infty))\to 1 
\]
as $q\rightarrow\infty$ for all $K\in\mathcal{K}(B, C_\infty, h_\infty)\cup\mathcal{L}(B, C_\infty,h_\infty)$. 
\end{assumption}

\noindent\textbf{Remarks:} (1) \textit{Assumptions \ref{assu:rank:relaxation1} and \ref{assu:rank:relaxation3} only require the stable rank condition to hold for sets of inequalities that are activatable for the limit polyhedral set. In Assumption \ref{assu:rank:relaxation1}, the limit polyhedral set is defined without scaling the slackness, while for Assumption \ref{assu:rank:relaxation3}, the limit polyhedral set is defined by scaling the slackness at the $\sqrt{n}$ rate. (The difference between K-activatable and L-activatable sets of inequalities is practically irrelevant.) Furthermore, Assumptions \ref{assu:rank:relaxation1} and \ref{assu:rank:relaxation3} only require the stable rank condition to hold for minimal activatable sets. By Lemma \ref{minimal_activatable}, minimal activatable sets are the inequalities associated with equalities, either explicitly or implicitly. Any inequality that can possibly be slack is excluded. Of course, several different limits can arise depending on the sequence $\{F_n\}_{n=1}^{\infty}$ and uniformity requires the stable rank condition to hold for minimal activatable sets of inequalities in all of these limits. See Section \ref{DiscussAssu2}, below, for more discussion in some simple examples.} 

(2) \textit{\citeapp{Voronin2025} considers the problem of uniformly consistent estimation of the value of a LPP. They point out that no uniformly consistent estimator exists if the optimal vertex is allowed to become arbitrarily ``sharp'' as the sample size increases. {This is related to the rank stability condition because the problematic set of inequalities defines a linear subspace in the limit, and the rank stablility equation does not hold for the $K$ corresponding to this set.}}\medskip 

The following lemma shows that these assumptions are weaker than Assumption \ref{assu:rank:simple}. 
\begin{lemma}\label{rank_sufficiency}
Assumption \textup{\ref{assu:rank:simple}} implies Assumptions \textup{\ref{assu:rank:relaxation1}} and \textup{\ref{assu:rank:relaxation3}}. 
\end{lemma}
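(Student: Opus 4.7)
The proof proceeds by a reduction: I will show that every index set $K$ whose rank equality is claimed by Assumption \textup{\ref{assu:rank:relaxation1}} or Assumption \textup{\ref{assu:rank:relaxation3}} actually lies in the collection $\{K\in\mathcal{A}(C_\infty,b_\infty):K^{=}\subseteq K\}$ to which Assumption \textup{\ref{assu:rank:simple}} applies. Once this membership is established, the rank equality follows by directly invoking Assumption \textup{\ref{assu:rank:simple}} along the same further subsequence.

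For Assumption \textup{\ref{assu:rank:relaxation1}} the set in question is $K=K^\dagger(\mu_\infty;B,C_\infty,d_\infty)$. Feasibility of $F_{n_q}\in\mathcal{F}_{n_q 0}$ together with the convergence in Assumption \textup{\ref{Assumption1}(i)} gives $\mu_\infty\in\ppoly(B,d_\infty;C_\infty)$ after a further subsequence (by selecting a bounded, e.g., minimum-norm, feasible $\delta_{n_q}$ and passing to a convergent subsequence). Consequently the projection $\mu(\mu_\infty)$ in the CQPP that defines $K^\dagger$ equals $\mu_\infty$, so $\Delta(\mu_\infty)=\poly(C_\infty,b_\infty)$. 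By Lemma \textup{\ref{minimal_activatable}}(a), $K$ is the minimal activatable set of this polyhedron, so $K\in\mathcal{A}(C_\infty,b_\infty)$. Indices in $K^{=}$ correspond to pairs of opposing inequalities that originate from equalities and are therefore active at every feasible point; they belong to every $J\in\mathcal{J}$ and hence to $K^\dagger=\cap_{J\in\mathcal{J}}J$, giving $K^{=}\subseteq K$.

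For Assumption \textup{\ref{assu:rank:relaxation3}} take $K\in\mathcal{K}(B,C_\infty,h_\infty)\cup\mathcal{L}(B,C_\infty,h_\infty)$, where some coordinates of $h_\infty$ may be $+\infty$. I first observe $K\subseteq\widetilde K:=\{j:e_j'h_\infty<\infty\}$, because an inequality with $e_j'h_\infty=+\infty$ is vacuous in the $h_\infty$-polyhedron, cannot be active at any feasible point, and cannot carry a positive KKT multiplier at an optimal solution. Next, I pass to a further subsequence along which $\delta^\ast_{F_{n_q}}\to\delta_\infty$. For each $j\in\widetilde K$, convergence of $h_{n_q,j}=\sqrt{n_q}\,e_j'(b_{n_q}-C_{F_{n_q}}\delta^\ast_{F_{n_q}})$ to a finite limit forces $e_j'(b_{n_q}-C_{F_{n_q}}\delta^\ast_{F_{n_q}})=O(n_q^{-1/2})$, and taking the limit yields $\delta_\infty\in\poly(C_\infty,b_\infty)$ with $I_{\widetilde K}(C_\infty\delta_\infty-b_\infty)=\mathbf{0}$. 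Hence $\widetilde K\in\mathcal{J}$ for $\poly(C_\infty,b_\infty)$, and $K\subseteq\widetilde K$ gives $K\in\mathcal{A}(C_\infty,b_\infty)$; the inclusion $K^{=}\subseteq K$ follows as in the previous step.

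\textbf{Main obstacle.} The delicate point is justifying the subsequential convergence of $\delta^\ast_{F_{n_q}}$, since its minimum-norm definition does not by itself imply boundedness when $C_{F_n}$ degenerates in the limit. I expect to handle this by contradiction: if $\|\delta^\ast_{F_{n_q}}\|\to\infty$ along a subsequence, the normalized sequence $\delta^\ast_{F_{n_q}}/\|\delta^\ast_{F_{n_q}}\|$ has a unit subsequential limit $v$ satisfying $C_\infty v\le\mathbf{0}$ and $I_{\widetilde K}C_\infty v=\mathbf{0}$. The stable rank condition of Assumption \textup{\ref{assu:rank:simple}}, applied at a minimal activatable set containing $\widetilde K$, then transfers this kernel structure to $\overline{C}_{n_q}$ for $n_q$ large and produces a bounded correction of $\delta^\ast_{F_{n_q}}$ that remains feasible but has strictly smaller norm, contradicting the minimum-norm definition. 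This recession-direction argument is where Assumption \textup{\ref{assu:rank:simple}} does real work beyond its final direct invocation; the remaining bookkeeping about $\mathcal{A}$, $\mathcal{J}$, $\mathcal{K}$, and $\mathcal{L}$ is routine given Lemmas \textup{\ref{minimal_activatable}} and \textup{\ref{L-invariance}}.
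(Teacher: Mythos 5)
Your overall skeleton is the same as the paper's: reduce both relaxed assumptions to membership of the relevant index sets in $\{K\in\mathcal{A}(C_\infty,b_\infty):K^{=}\subseteq K\}$ and then invoke Assumption \ref{assu:rank:simple} directly. The bookkeeping for Assumption \ref{assu:rank:relaxation1} and the observation that $e_j'h_\infty<\infty$ for every $j$ in a K- or L-activatable set are both essentially what the paper does (modulo a small slip: what you actually establish is $\widetilde K\in\mathcal{A}(C_\infty,b_\infty)$, not $\widetilde K\in\mathcal{J}$, since the active set at $\delta_\infty$ may strictly contain $\widetilde K$; only membership in $\mathcal{A}$ is needed, so this is harmless).

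The genuine gap is exactly the step you flag as the ``main obstacle,'' and your proposed fix does not close it. To conclude $K\in\mathcal{A}(C_\infty,b_\infty)$ for Assumption \ref{assu:rank:relaxation3} you need a convergent (hence bounded) subsequence of $\delta^\ast_{F_{n_q}}$ with a feasible limit, and your recession-direction argument misapplies Assumption \ref{assu:rank:simple}: the index set on which you would need rank stability is $K_v=\{j:e_j'C_\infty v=0\}$ for the normalized divergence direction $v$, and this set need not belong to $\mathcal{A}(C_\infty,b_\infty)$ at all (a row of $C_\infty$ can annihilate a recession direction without ever being active at any feasible point --- e.g.\ a row that is identically zero with a strictly positive intercept), so the stable rank condition gives you nothing there; the phrase ``a minimal activatable set containing $\widetilde K$'' is also internally inconsistent, since minimal activatable sets are contained in, not containing, other activatable sets. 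Moreover, even with a usable $K$, the construction of a feasible strictly-shorter correction must control the rows whose slack neither diverges nor vanishes, which your sketch does not address. The same unproven boundedness is silently used again in your argument for Assumption \ref{assu:rank:relaxation1}, where you justify $\mu_\infty\in\ppoly(B,d_\infty;C_\infty)$ by ``selecting a bounded minimum-norm feasible $\delta_{n_q}$.'' The paper avoids all of this by first establishing Kuratowski convergence $\poly(C_{F_{n_q}},b_{F_{n_q}})\overset{K}{\to}\poly(C_\infty,b_\infty)$ via Lemma \ref{K-convergence_nonempty2_new} --- whose proof confines the recession argument to activatable sets that define affine subspaces, precisely the sets Assumption \ref{assu:rank:simple} covers --- and then obtains $\delta^\ast_{F_{n_q}}\to\delta^\ast_\infty$ from Lemma \ref{Projection-convergence}, where boundedness falls out of the approximation half of Kuratowski convergence. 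You should route your argument through those two lemmas (or reproduce their content) rather than attempting a direct recession argument on the minimum-norm solutions.
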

\begin{proof}[Proof of Lemma \textup{\ref{rank_sufficiency}}]
Fix a sequence $\{F_n\}_{n=1}^\infty$ with $F_n\in\mathcal{F}_{n0}$ and fix a subsequence $n_m$ satisfying Assumption \ref{Assumption1}(i) with $C_\infty=B\Pi_\infty+D$ and $b_\infty=d_\infty-B\mu_\infty$. 
If $\poly(C_\infty, b_\infty)$ is empty, then Assumptions \ref{assu:rank:relaxation1} and \ref{assu:rank:relaxation3} hold trivially. 
Therefore, suppose $\poly(C_\infty, b_\infty)$ is not empty. 
For verifying Assumption \ref{assu:rank:relaxation3}, suppose further that $h_m\rightarrow h_\infty$ (elementwise). 
By Assumption \ref{assu:rank:simple}, there exists a further subsequence, $n_q$, such that 
\begin{equation}
P_{F_{n_q}}(\textup{rk}(I_K\overline{C}_{n_q}) =\textup{rk}(I_KC_{F_{n_q}})=\textup{rk}(I_KC_\infty))\to 1 \label{lemma_5_goal} 
\end{equation}
for all $K\in \mathcal{A}(C_\infty, b_\infty)$ with $K^=\subseteq K$. 

To verify Assumption \ref{assu:rank:relaxation1}, note that for $K = K^\dagger(\mu_\infty;B,C_\infty,d_\infty)$, there exists a $\delta$ such that $K = \{j\in\{1,\dots,d_C\}:e_j'(B\mu_\infty+C_\infty\delta -d_\infty) = 0\}$. 
This implies that $K\in{\cal A}(C_\infty, b_\infty)$. 
Also note that $K^=\subseteq K$ because any activatable set contains $K^=$ and $K$ is an activatable set. 
Therefore, (\ref{lemma_5_goal}) holds. 
This verifies Assumption \ref{assu:rank:relaxation1}. 

To verify Assumption \ref{assu:rank:relaxation3}, let $K\in {\cal K}(B,C_\infty,h_\infty)\cup {\cal L}(B,C_\infty,h_\infty)$. 
By the definition of ${\cal K}(B,C_\infty,h_\infty)$ and ${\cal L}(B,C_\infty,h_\infty)$, there exists an $x\in \R^{d_\mu}$ and a $\delta\in \textup{poly}(C,h_\infty-B\mu(x))$ such that $e_j'(B\mu(x) + C_\infty\delta -h_\infty)=0$ for all $j\in K$. 
Here the function $\mu(x)$ is the solution to the CQPP $\min_{\mu,\delta: B\mu+C_\infty\delta \leq h_\infty}(x-\mu)'\Sigma^{-1}(x-\mu)$ for some positive definite matrix $\Sigma$. 
The solution $\mu(x)$ is finite because $(\mu',\delta')' = \mathbf{0}$ satisfies $B\mu+C_\infty\delta \leq h_\infty$ and hence $(x-\mu(x))'\Sigma^{-1}(x-\mu(x))\leq x'\Sigma^{-1}x<\infty$. 
Therefore, $\textup{ for any }j\in K\textup{, we have }e_j'h_\infty = e_j'(B\mu(x)+C_\infty\delta)<\infty$. 
This and the definition of $h_\infty$ together imply that 
\begin{equation}
e_j'(d_{n_q}-B\mu_{F_{n_q}}-C_{F_{n_q}}\delta^\ast_{F_{n_q}})\to 0 ~ \textup{ for any }j\in K. \label{finiteh}
\end{equation} 
Next, we invoke Lemma \ref{K-convergence_nonempty2_new} to get $\poly(C_{F_{n_q}}, d_{n_q}-B\mu_{F_{n_q}})\rightarrow \poly(C_\infty, b_\infty)$, where the conditions of the lemma are guaranteed by Assumptions \ref{Assumption1}(i), \ref{assu:rank:simple} and the fact that $\poly(C_\infty, b_\infty)$ is nonempty. 
It then follows from Lemma \ref{Projection-convergence} that $\delta^\ast_{F_{n_q}}\rightarrow \delta^\ast_\infty$, which, together with Assumption \ref{Assumption1}(i), implies that $d_{n_q}-B\mu_{F_{n_q}}-C_{F_{n_q}}\delta^\ast_{F_{n_q}}\rightarrow b_\infty - C_\infty \delta_\infty^\ast$. 
This combined with (\ref{finiteh}) implies that for $x=\delta^\ast_\infty$, $I_K(C_\infty x-b_\infty)=0$, and hence $K\in\mathcal{A}(C_\infty, b_\infty)$. 
Also note that $K^=\subseteq K$ because any activatable set contains $K^=$. 
Therefore, (\ref{lemma_5_goal}) holds verifying Assumption \ref{assu:rank:relaxation3}. 
\end{proof}

We next state the generalizations of Theorems \ref{lem:delta} and \ref{thm:level_CC}. 
\begin{theorem}\label{Relaxed1}
Theorem \textup{\ref{lem:delta}} continues to hold if Assumption \textup{\ref{assu:rank:simple}} is replaced by Assumption  \textup{\ref{assu:rank:relaxation1}}. 
\end{theorem}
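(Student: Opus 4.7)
The plan is to observe that the proof of Theorem \ref{lem:delta} as presented in Appendix A only invokes a rank stability condition at the single minimal activatable index set $K = K^\dagger(\mu_\infty; B, C_\infty, d_\infty)$, which is precisely what Assumption \ref{assu:rank:relaxation1} supplies. Hence the proof can be rerun verbatim, with Assumption \ref{assu:rank:relaxation1} replacing the combination of Assumption \ref{assu:rank:simple} and Lemma \ref{rank_sufficiency}, to yield the conclusion.

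Concretely, I would first pass to an almost sure representation and extract a subsequence $\{n_a\}$ along which $\overline{\mu}_{n_a}, \overline{\Pi}_{n_a}, \widehat{\Upsilon}_{n_a}$ converge almost surely and the rank equalities $\textup{rk}(I_K \overline{C}_{n_a}) = \textup{rk}(I_K C_{F_{n_a}}) = \textup{rk}(I_K C_\infty)$ hold eventually for the single index set $K = K^\dagger(\mu_\infty; B, C_\infty, d_\infty)$; Assumption \ref{assu:rank:relaxation1} guarantees such a subsequence directly. Next, I would verify the two applications of Lemma \ref{K-convergence_nonempty2_new} exactly as in the original proof: for the Kuratowski convergence of the joint polyhedron $\textup{poly}([B,\overline{C}_{n_a}],d_{n_a})$ to $\textup{poly}([B,C_\infty],d_\infty)$, condition (iii) reduces via Lemma \ref{lem:MB2}(b) to constancy of $\textup{rk}(I_K[B,D])$, so no rank stability is needed at all; for the subsequent Kuratowski convergences of the $\delta$-level polyhedra $\textup{poly}(\overline{C}_{n_a}, d_{n_a} - B\widetilde\mu_{n_a})$ and $\textup{poly}(C_{F_{n_a}}, d_{n_a} - B\mu_{F_{n_a}})$, condition (iii) must be verified only for those activatable $K$ on which the restricted polyhedron forms an affine subspace in $\R^{d_\delta}$, and by Lemma \ref{minimal_activatable}(c) the unique such $K$ is exactly $K^\dagger(\mu_\infty; B, C_\infty, d_\infty)$. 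From these two Kuratowski convergences, the original proof's appeals to Lemma \ref{Projection-convergence} then deliver $\widetilde{\mu}_{n_a} \to \mu_\infty$, and in turn $\widetilde{\delta}_{n_a} \to \delta^\ast_\infty$ and $\delta^\ast_{F_{n_a}} \to \delta^\ast_\infty$. Convergence in probability along the original subsequence $\{n_q\}$ follows by the standard subsequence principle.

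The only subtlety worth flagging, rather than a genuine obstacle, is checking that the minimal activatable set of the limit $\delta$-polyhedron $\textup{poly}(C_\infty, d_\infty - B\mu_\infty)$ agrees with the set $K^\dagger(\mu_\infty; B, C_\infty, d_\infty)$ referenced in Assumption \ref{assu:rank:relaxation1}. This follows because Assumption \ref{Assumption1}(vi) gives $\mu_\infty \in \textup{ppoly}(B, d_\infty; C_\infty)$, so in the CQPP defining $K^\dagger(\mu_\infty; B, C_\infty, d_\infty)$ the solution satisfies $\mu(\mu_\infty) = \mu_\infty$; consequently $\Delta(\mu_\infty)$ coincides with $\textup{poly}(C_\infty, d_\infty - B\mu_\infty)$, and by Lemma \ref{L-invariance} the two notions of minimal activatable set agree. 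With this identification in hand, the proof of Theorem \ref{lem:delta} transfers line by line.
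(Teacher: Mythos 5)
Your proposal is correct and matches the paper's own (very short) proof: the authors likewise observe that Assumption \ref{assu:rank:simple} enters the proof of Theorem \ref{lem:delta} only through the rank stability at the single set $K=K^\dagger(\mu_\infty;B,C_\infty,d_\infty)$, which Assumption \ref{assu:rank:relaxation1} supplies directly, so the argument transfers verbatim. Your additional checks (the reduction of condition (iii) of Lemma \ref{K-convergence_nonempty2_new} via Lemma \ref{lem:MB2}(b) for the joint polyhedron, the use of Lemma \ref{minimal_activatable} for the $\delta$-level polyhedra, and the identification of $K^\dagger$ with the minimal activatable set of $\textup{poly}(C_\infty,d_\infty-B\mu_\infty)$ via Assumption \ref{Assumption1}(vi)) are all consistent with how the original proof of Theorem \ref{lem:delta} is structured.
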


\begin{proof}[Proof of Theorem \textup{\ref{Relaxed1}}] 
The proof of Theorem \textup{\ref{Relaxed1}} follows the same argument as the proof of Theorem \ref{lem:delta}.  Assumption \ref{assu:rank:simple} is only used in the proof of Theorem \ref{lem:delta} to ensure $\textup{rk}(I_K\overline{C}_{n_a})=\textup{rk}(I_KC_\infty)$ eventually for $K=K^\dagger(\mu_\infty; B, C_\infty, d_\infty)$, which follows from Assumption \textup{\ref{assu:rank:relaxation1}}. 
\end{proof}

\begin{theorem}\label{Relaxed2}
Theorem \textup{\ref{thm:level_CC}} continues to hold if Assumption \textup{\ref{assu:rank:simple}} is replaced by Assumptions  \textup{\ref{assu:rank:relaxation1} and \ref{assu:rank:relaxation3}}. 
\end{theorem}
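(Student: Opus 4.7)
The plan is to retrace the original proof of Theorem \ref{thm:level_CC} and identify the precise places where Assumption \ref{assu:rank:simple} enters, then show that each such use is already covered by the weaker pair (Assumption \ref{assu:rank:relaxation1}, Assumption \ref{assu:rank:relaxation3}). Inspecting the original proof, Assumption \ref{assu:rank:simple} is invoked in exactly two ways: (i) indirectly, to apply Theorem \ref{lem:delta} so that $\widetilde\Sigma_{n_q}\to_p\Sigma_\infty$ via $\widetilde\delta_{n_q}\to_p\delta_\infty^\ast$ and also to ensure $\delta^\ast_{F_q}\to\delta_\infty^\ast$; and (ii) directly, via Lemma \ref{rank_sufficiency}, to establish the rank-stability statement (\ref{rank_convergence}) for every $K\in\mathcal{K}(B,C_\infty,h_\infty)\cup\mathcal{L}(B,C_\infty,h_\infty)$.

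For use (i), the replacement is immediate: by Theorem \ref{Relaxed1}, Assumption \ref{assu:rank:relaxation1} alone is sufficient for the conclusion of Theorem \ref{lem:delta}, so the consistency of $\widetilde\delta_{n_q}$ and the convergence $\delta^\ast_{F_q}\to\delta_\infty^\ast$, and hence (\ref{sigma_convergence}), still follow. For use (ii), the replacement is also direct: after passing to a subsequence along which $h_q=\sqrt{n_q}(d_{n_q}-B\mu_{F_q}-C_{F_q}\delta^\ast_{F_q})\to h_\infty$ elementwise in $[0,\infty]^{d_C}$ (which is where the proof already passes in establishing (\ref{h_convergence})), Assumption \ref{assu:rank:relaxation3} yields a further subsequence along which $\textup{rk}(I_K\overline{C}_{n_q})=\textup{rk}(I_KC_\infty)$ w.p.a.1 for every $K\in\mathcal{K}(B,C_\infty,h_\infty)\cup\mathcal{L}(B,C_\infty,h_\infty)$. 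This is exactly the content of (\ref{rank_convergence}).

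With these two substitutions in place, the remainder of the original argument, namely Steps 1--4 (the change of variables yielding (\ref{Finitesample}), the Kuratowski convergence of $\ppoly(B,h_q;\overline{C}_{n_q})$ to $\ppoly(B,h_\infty;C_\infty)$ via Lemma \ref{projected_poly_convergence}, the Kuratowski convergence of $\poly(\overline{C}_{n_q},h_q-B\widehat\eta_q)$ via Lemma \ref{K-convergence_nonempty2_new}, the resulting convergence $\widehat\eta_q\to\eta_\infty^\ast$, $\widehat\gamma_q\to\gamma_\infty^\ast$, $\ddot\gamma_q\to\gamma_\infty^\ast$, and $\widehat\psi_q\to\psi_\infty^\ast$, the bound $s^\ast=r^\ast=t^\ast\le\widehat s_{n_q}$ using $L^\ast\in\mathcal{L}(B,C_\infty,h_\infty)$ together with Theorem \ref{lem:rhat} and Lemma \ref{lem:MB2}, and the case analysis (A)--(C) yielding (\ref{goal0})), proceeds verbatim, since each invocation of the rank condition in those steps is for a set $K$ belonging to either $\mathcal{K}(B,C_\infty,h_\infty)$ or $\mathcal{L}(B,C_\infty,h_\infty)$ and is therefore supplied by Assumption \ref{assu:rank:relaxation3}. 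The concluding application of Fatou's lemma and Theorem 3(a) of CS23 to deduce the asymptotic level $\alpha$ bound is unchanged.

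The only step requiring any genuine care is verifying that the two uses above exhaust the role of Assumption \ref{assu:rank:simple}, and in particular checking that every index set $K$ at which rank stability is invoked lies in $\mathcal{K}(B,C_\infty,h_\infty)\cup\mathcal{L}(B,C_\infty,h_\infty)$; this is the main obstacle, but it is a bookkeeping exercise on the original proof rather than a new argument, since the relevant $K$'s are respectively $K^\dagger(\widehat\eta_q)$-type sets (K-activatable for $(B,C_\infty,h_\infty)$) and $L^\ast$-type sets (L-activatable, associated with the minimum-norm KKT multipliers). With this verification, Theorem \ref{Relaxed2} follows.
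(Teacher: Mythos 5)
Your proposal is correct and follows essentially the same route as the paper's own proof: the paper likewise observes that Assumption \ref{assu:rank:simple} enters only through Theorem \ref{lem:delta} (replaced via Theorem \ref{Relaxed1} under Assumption \ref{assu:rank:relaxation1}) and through equation (\ref{rank_convergence}) for $K\in\mathcal{K}(B,C_\infty,h_\infty)\cup\mathcal{L}(B,C_\infty,h_\infty)$ (supplied directly by Assumption \ref{assu:rank:relaxation3}), with the rest of the argument unchanged. Your additional bookkeeping check that every rank-stability invocation in Steps 1--4 concerns a $K$ in this union is accurate and matches the paper's reasoning.
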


\begin{proof}[Proof of Theorem \textup{\ref{Relaxed2}}]
The proof of Theorem \textup{\ref{Relaxed2}} follows the same argument as the proof of Theorem \ref{thm:level_CC} with two minor modifications. 
First, Assumption \textup{\ref{assu:rank:relaxation1}} is sufficient for Theorem \ref{lem:delta}, by Theorem \ref{Relaxed1}. 
Second, when (\ref{rank_convergence}) is required to hold for every $K\in \mathcal{K}(B, C_\infty, h_\infty)\cup \mathcal{L}(B, C_\infty, h_\infty)$ along a subsequence, this follows from Assumption \textup{\ref{assu:rank:relaxation3}}. 
\end{proof}

\noindent\textbf{Remark:} \textit{The relaxation in Theorem \ref{Relaxed2} is a substantial improvement over requiring the stable rank condition to hold for all activatable sets of inequalities in Assumption \ref{assu:rank:simple}. In a moment equality model, weak identification is determined by the Jacobian of the moments. The relaxation to Assumptions \ref{assu:rank:relaxation1} and \ref{assu:rank:relaxation3} means that in a model with both moment equalities and inequalities, the same rank condition that would be required for strong identification of the model defined only by the equalities is sufficient for validity of the GCC (and RGCC) tests. The addition of possibly slack inequalities to a model does not change the rank condition. 
Section \ref{DiscussAssu2} gives further discussion on the relaxation of Assumption \ref{assu:rank:simple}.} 

\subsection{Assumption \ref{assu:rank:simple} Discussion}\label{DiscussAssu2}
This section discusses Assumption \ref{assu:rank:simple} and its relaxations and relates it to existing assumptions in the literature. 
Section \ref{sec:verify} gives a sufficient condition for Assumption \ref{assu:rank:simple} in the empirical illustration of Example \ref{ex:labor}. 
Section \ref{sec:compare} demonstrates the content of Assumptions \ref{assu:rank:relaxation1} and \ref{assu:rank:relaxation3} in simple moment inequality examples. 
Section \ref{nesting_cho_russell} shows that the regularity conditions in \citeapp{ChoRussell2024} are stronger than Assumption \ref{assu:rank:simple}. 

\subsubsection{A Sufficient Condition for the Empirical Illustration of Example \ref{ex:labor}}\label{sec:verify}

Suppose the parameter of interest is $\theta = e'_1\delta$. 
Then, $C_F=\left(\begin{matrix}e_1&-e_1&\Gamma'_F&-\Gamma'_F&A'\end{matrix}\right)'$,  
where $\Gamma_{F}$ is defined in (\ref{eq:KlineTartari_eq}). A subscript $F$ is included to indicate dependence on the true distribution of the sample. Similarly, let $p^{t}_{s,F}$ denote the true marginal probabilities for $t\in\{\text{A}, \text{J}\}$ and $s\in\{\text{0n}, \text{1n}, \text{2n}, \text{0r}, \text{2u}\}$. 
Let $\overline{\Gamma}_n$ be the estimated version, so that 
$\overline{C}_n=\left(\begin{matrix}e_1&-e_1&\overline{\Gamma}'_n&-\overline{\Gamma}'_n&A'\end{matrix}\right)'$. 
In this case, a sufficient condition for Assumption \ref{assu:rank:simple} is that
\begin{itemize}
\item $\exists c>0$ such that $p^{\text{A}}_{s,F}>c$ for all $s\in\{\text{0n}, \text{1n}, \text{2n}, \text{0r}, \text{2u}\}$, $F\in {\cal F}_{n0}$, and $n\ge 1$. 
\end{itemize}
With this condition imposed, under the subsequence in Assumption \ref{assu:rank:simple}, the exact sparse structure of $\overline{\Gamma}_n$ is preserved in the probability limit. As a result, the linear dependence of the rows of $\overline{\Gamma}_n$ with the rows of $A$ is also preserved.\footnote{Note that the row of the Jacobian associated with the inequality ``$\pi_{\text{0r},\text{0n}}+\pi_{\text{0r},\text{1n}}+\pi_{\text{0r},\text{2n}}+\pi_{\text{0r},\text{1r}}+\pi_{\text{0r},\text{2u}}\leq 1$'' is perfectly collinear with the rows of the Jacobian associated with the fourth row in $\Gamma$. This demonstrates the importance of allowing the rank to be singular as long as it is stable.} This ensures that the rank of any collection of the rows of $\overline{C}_n$ is unchanged in the limit.

\subsubsection{Assumption \ref{assu:rank:simple} in Simple Moment Inequality Examples}\label{sec:compare}
This section demonstrates the content of Assumptions \ref{assu:rank:simple}-\textup{\ref{assu:rank:relaxation3}} in simple moment inequality examples. 

\begin{example}\label{ex:strongID0}
Consider the moment inequality model in \textup{(\ref{mi})} with no equalities, $\mathcal{B}=\mathbb{R}^2$, and inequalities specified by 
\begin{align}
\beta_1+\beta_2+\eta_{1F}&\geq 0\nonumber\\
\beta_1-c_F\beta_2+\eta_{2F}&\geq 0\nonumber\\
-\beta_1+\eta_{3F}&\geq 0,\label{eq:strongID0}
\end{align}
where $\beta=(\beta_1,\beta_2)'$ are structural parameters and $\eta_{jF}$ and $c_F$ are reduced-form parameters for $j\in\{1,2,3\}$. 
If the reduced-form parameters are expectations of observed variables, then these inequalities represent moment inequalities; for example, $\eta_{jF} = \E_F[Y_j]$ for observed random variables $Y_j$. 
If $\eta_{jF}$ are all estimated at the $\sqrt{n}$ rate with positive definite asymptotic variance-covariance matrix, then \textup{(\ref{eq:strongID0})} can be written in the form of \textup{(\ref{generalH0})} with $B=-I_3$, $D=\mathbb{O}_{3\times 2}$, $d=\mathbf{0}$, $\Pi=\left[\begin{smallmatrix}1&1\\1&-c_F\\-1&0\end{smallmatrix}\right]$, $\mu=(\eta_{1F},\eta_{2F},\eta_{3F})'$, and $\delta=(\beta_1,\beta_2)'$.\footnote{Alternatively, if only some of the $\eta_{jF}$ are estimated, then \textup{(\ref{eq:strongID0})} can be written in the form of \textup{(\ref{generalH0})} in other ways. For example, if $\eta_{1F}$ and $\eta_{3F}$ are known to be zero, then \textup{(\ref{eq:strongID0})} can be written in the form of \textup{(\ref{generalH0})} with $B=-(0,1,0)'$, $D=-\left[\begin{smallmatrix}1&1\\0&0\\-1&0\end{smallmatrix}\right]$, $d=\mathbf{0}$, $\Pi=(1,-c_F)$, $\mu=\eta_{2F}$, and $\delta=(\beta_1,\beta_2)'$. If $\eta_{2F}$ is known to be zero, then the strategy in Remark (2) above Example \textup{\ref{ex:npiv}} can be used.} 
We maintain that $\eta_{1F}=\eta_{2F}=0$, $\eta_{3F}=1$, and $c_F\ge 0$. 
For fixed $c_F>0$, the inequalities in (\ref{eq:strongID0}) define a triangular region in $\mathbb{R}^2$ of possible values of $\delta$; see Figure \textup{\ref{fig:strongID2}(a)}.  

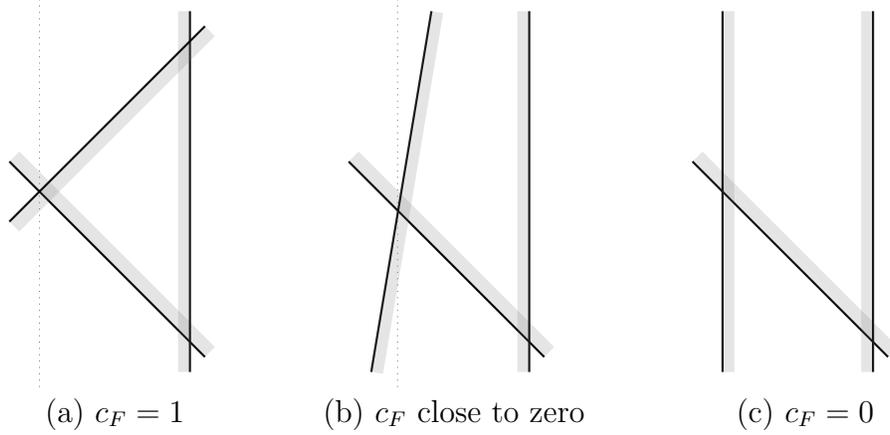
\begin{figure}
\begin{center}
\caption{Illustration of Example \ref{ex:strongID0}}
\label{fig:strongID2}
\centering
\begin{tikzpicture}
\draw[gray, dotted] (0,-2.6) -- (0,2.6);
\draw[thick] (-.4,.4) -- (2.2,-2.2);
\draw[gray,opacity=0.2,line width = 5pt](-.33,.47) -- (2.27, -2.13);
\draw[thick] (2,2.4)--(2,-2.4);
\draw[gray,opacity=0.2,line width = 5pt](1.93,2.4)--(1.93,-2.4);
\draw[thick] (-.4,-.4) -- (2.2,2.2);
\draw[gray, opacity = 0.2, line width = 5pt](-.33,-.47) -- (2.27,2.13);
\node[align=center, below] at (1,-2.6) {(a) $c_F=1$};
\end{tikzpicture}
\hspace{1cm}
\begin{tikzpicture}
\draw[gray, dotted] (.25,-2.6) -- (0.25,2.6);
\draw[thick] (-.4,.4) -- (2.2,-2.2);
\draw[gray,opacity=0.2,line width = 5pt](-.33,.47) -- (2.27, -2.13);
\draw[thick] (2,2.4)--(2,-2.4);
\draw[gray,opacity=0.2,line width = 5pt](1.93,2.4)--(1.93,-2.4);
\draw[thick] (0.7,2.4)--(-.1,-2.4);
\draw[gray,opacity=0.2,line width = 5pt](.77,2.39)--(-.03,-2.41);
\node[align=center, below] at (1,-2.6) {(b) $c_F$ close to zero};
\end{tikzpicture}
\hspace{1cm}
\begin{tikzpicture}
\draw[thick] (-.4,.4) -- (2.2,-2.2);
\draw[gray,opacity=0.2,line width = 5pt](-.33,.47) -- (2.27, -2.13);
\draw[thick] (2,2.4)--(2,-2.4);
\draw[gray,opacity=0.2,line width = 5pt](1.93,2.4)--(1.93,-2.4);
\draw[thick] (0,2.4)--(0,-2.4);
\draw[gray,opacity=0.2,line width = 5pt](.07,2.4)--(.07,-2.4);
\node[align=center, below] at (1.1,-2.6) {(c) $c_F=0$};
\end{tikzpicture}
\end{center}
{\small{\em Note:} Sets of $(\beta_1,\beta_2)'$ defined by (\ref{eq:strongID0}) for various values of $c_F$. The coordinate axes are omitted to avoid clutter. The horizontal axis is $\beta_1$ and the vertical axis is $\beta_2$. Shade indicates the direction of the inequalities. The vertical dotted line in (a) and (b) represents imposing the null hypothesis that $\beta_1=0$.} 
\end{figure}

We walk through the content of Assumptions \textup{\ref{assu:rank:simple}-\ref{assu:rank:relaxation3}} for specification testing the inequalities in \textup{(\ref{eq:strongID0})} in two cases: \textup{(a)} $c_F$ converges to a  $c_\infty>0$, and \textup{(b)} $c_F$ converges to $0$. 

\textup{(a)} $c_\infty>0$.  
Start with Assumption \textup{\ref{assu:rank:simple}} and the definition of $\mathcal{A}(C_\infty,b_\infty)$. 
Note that $C_\infty=\left[\begin{smallmatrix}-1&-1\\-1&c_\infty\\1&0\end{smallmatrix}\right]$ and $b_\infty=(0,0,1)'$. 
We seek the collections of inequalities that are simultaneously activatable. 
Consider the vertices of $\poly(C_\infty,b_\infty)$, depicted in Figure \textup{\ref{fig:strongID2}(a)}. 
At the leftmost point, the first two inequalities are active. 
At the upper-right point, the second and the third inequalities are active. 
At the lower-right point, the first and the third inequalities  are active. 
Thus, $\mathcal{A}(C_\infty,b_\infty)$ includes $\{1,2\}$, $\{2,3\}$, and $\{1,3\}$. 
Also, $\mathcal{A}(C_\infty,b_\infty)$ includes all subsets of activatable sets, so $\mathcal{A}(C_\infty,b_\infty)=\{\emptyset, \{1\}, \{2\}, \{3\}, \{1,2\}, \{1,3\}$, $\{2,3\}\}$. 
Next, note that there are no equalities, so $K^==\emptyset$. 
Assumption \textup{\ref{assu:rank:simple}} considers submatrices of $C_\infty$ formed by the rows associated with indices in $K$ for all $K\in\mathcal{A}(C_\infty,b_\infty)$. 
For any two rows of $C_\infty$, the rank of the submatrix is $2$. 
Also, any one row of $C_\infty$ is nonzero and thus has rank $1$. 
In both cases, the rank of the limit is full, and therefore the rank stability equation holds automatically for any sequence of matrices converging to $C_\infty$. 

Next, consider Assumption \textup{\ref{assu:rank:relaxation1}}. 
For Assumption \textup{\ref{assu:rank:relaxation1}}, we view the inequalities in \textup{(\ref{eq:strongID0})} as functions of both $\delta=(\beta_1,\beta_2)'$ and $\mu=(\eta_{1F},\eta_{2F},\eta_{3F})'$. 
Plug in $\mu=\mu_\infty$, and the resulting collection of inequalities is still represented by Figure \textup{\ref{fig:strongID2}(a)}. 
In this figure, all the inequalities are possibly inactive, and therefore the minimal collection of inequalities is $K^\dagger(\mu_\infty; B,C_\infty,d_\infty)=\emptyset$. 
Therefore, the rank stability equation for Assumption \textup{\ref{assu:rank:relaxation1}} holds trivially. 

Next, consider Assumption \textup{\ref{assu:rank:relaxation3}}. 
We need to rescale the inequalities by $\sqrt{n}$. 
Notice that $\delta^\ast_F=(0,0)'$ because that is the minimum norm value of $\delta$ that satisfies the inequalities. 
Also note that $\mu_F=(0,0,1)'$. 
Therefore, after rescaling, the limit is $h_\infty=(0,0,\infty)'$. 
Because the third inequality is infinitely slack, it cannot be K- or L- activatable for $(B,C_\infty,h_\infty)$. 
For the first and second inequalities, we consider the CQPP \textup{(\ref{QP})} with some ``shift'' $x\in\mathbb{R}^3$. For any $x$, we can select a $\mu(x)=x$ that solves the CQPP while satisfying the inequalities strictly with some $\delta$. Hence, both inequalities can be slack. 
Therefore, the minimal activatable set for any shift is $K^\dagger(x; B,C_\infty,h_\infty)=\emptyset$. 
Also note that $L(x,\psi)\subseteq K^\dagger(x)$ for any $\psi\in\Psi^\dagger(x)$. 
Therefore, $\mathcal{K}(B,C_\infty,h_\infty)=\mathcal{L}(B,C_\infty,h_\infty)=\{\emptyset\}$, and the rank stability equation in Assumption \textup{\ref{assu:rank:relaxation3}} is satisfied trivially.\footnote{The fact that the K- and L- activatable sets are empty is a general property when the identified set for $\delta$ has positive volume or measure. All inequalities are possibly slack, so the minimal K- and L- activatable sets of inequalities are empty. Thus, the rank stability equations (and Assumptions \textup{\ref{assu:rank:relaxation1}} and \ref{assu:rank:relaxation3}) hold trivially.} 

\textup{(b)} $c_\infty=0$. The limit for this case is depicted in Figure \textup{\ref{fig:strongID2}(c)}, while Figure \ref{fig:strongID2}(b) depicts the case that $c_F$ is small but positive. 
Concerning Assumption \textup{\ref{assu:rank:simple}}, the definition of $\mathcal{A}(C_\infty,b_\infty)$ is changed because there is no longer a vertex associated with the second and the third inequalities. Therefore, $\mathcal{A}(C_\infty,b_\infty)=\{\emptyset,\{1\},\{2\}, \{3\}, \{1,2\}, \{1,3\}\}$. The difference is crucial. The rank stability equation holds for $K\in\mathcal{A}(C_\infty,b_\infty)$, but it would not hold for $K=\{2,3\}$. That is, the rank of $\left[\begin{smallmatrix}1&-c_F\\-1&0\end{smallmatrix}\right]$ is $2$ for $c_F$ positive while the rank of the limit is $1$. 
Still, Assumption \textup{\ref{assu:rank:simple}} holds because $K=\{2,3\}$ is excluded from $\mathcal{A}(C_\infty,b_\infty)$. 

Next, consider Assumptions \textup{\ref{assu:rank:relaxation1}} and \textup{\ref{assu:rank:relaxation3}}. 
For Assumption \textup{\ref{assu:rank:relaxation1}}, all the inequalities are still possibly inactive, and therefore $K^\dagger(\mu_\infty; B, C_\infty, d_\infty)=\emptyset$. 
Similarly, for Assumption \ref{assu:rank:relaxation3}, the limit of the rescaled inequalities is the same. Also, for any shift of the constants, the resulting inequalities are still feasible. Therefore, the minimal activatable set is always the empty set, and $\mathcal{K}(B, C_\infty,h_\infty)=\mathcal{L}(B,C_\infty,h_\infty)=\{\emptyset\}$. 
\end{example}

The next example uses the same inequalities as Example \textup{\ref{ex:strongID0}}. The difference is that it focuses on subvector inference for $\beta_1$. 

\begin{example}\label{ex:strongID0_2}
Consider the inequalities in \textup{(\ref{eq:strongID0})}, but suppose we are testing the lower bound on the identified set for $\beta_1$. 
That is, $H_0: \beta_1=0$. 
This hypothesis is depicted by the vertical dotted line in Figure \textup{\ref{fig:strongID2}(a)}. 
This problem can be written in the form of \textup{(\ref{generalH0})} with $B=-I_3$, $D=\mathbf{0}$, $d=\mathbf{0}$, $\Pi=(1,-c_F,0)'$, $\mu=(\eta_{1F},\eta_{2F},\eta_{3F})'$, and $\delta=\beta_2$. 
Suppose $\eta_{1F}=\eta_{2F}=0$ and $\eta_{3F}=1$. 
We walk through the content of Assumptions \textup{\ref{assu:rank:simple}-\ref{assu:rank:relaxation3}} in two cases: \textup{(a)} $c_F$ converges to a positive value, say $c_\infty$, and \textup{(b)} $c_F$ converges to zero. 

\textup{(a)} $c_\infty>0$. 
In this example, it is obvious that the rank of every submatrix formed by the rows of $C_F=(-1,c_F,0)'$ is unchanged in the limit, and Assumptions \textup{\ref{assu:rank:simple}-\ref{assu:rank:relaxation3}} hold. 
Still, it is instructive to go through the definitions of $\mathcal{A}(C_\infty,b_\infty)$, $K^\dagger(\mu_\infty;B,C_\infty,d)$, $\mathcal{K}(B,C_\infty,h_\infty)$, and $\mathcal{L}(B,C_\infty,h_\infty)$. 

Start with Assumption \textup{\ref{assu:rank:simple}}. 
The set of values for $\beta_2$ defined by the inequalities is just $\{0\}$. The first and second inequalities are active while the third is inactive. 
Therefore, $\mathcal{A}(C_\infty,b_\infty)=\{\emptyset,\{1\},\{2\},\{1,2\}\}$. 

Second, consider Assumption \textup{\ref{assu:rank:relaxation1}}. 
When we plug in $\mu=\mu_\infty$, we get a set of values for $\beta_2$ that is just $\{0\}$, same as above. 
Both the first and second inequalities are implicitly equalities because they cannot be slack. Therefore, the minimal activatable set is $K^\dagger(\mu_\infty; B, C_\infty,d_\infty)=\{1,2\}$.\footnote{We point out a general result that, if $\poly(C_\infty,b_\infty-B\mu_\infty)$ is a singleton, then the rank of $I_{K^\dagger}C_\infty$ must be full. Then, the rank stability equation holds for any sequence of matrices converging to $I_{K^\dagger}C_\infty$.} 

Next, consider Assumption \textup{\ref{assu:rank:relaxation3}}.
When we rescale the inequalities, we get $h_\infty=(0,0,\infty)'$, as before. 
Then, we adjust the constants in the first two inequalities from $\mu_\infty$ to an arbitrary $x$. 
After adjustment, there are three possible cases for the set of $\beta_2$ values that satisfy the inequalities: \textup{(i)} the set is an interval, \textup{(ii)} the set is empty, and \textup{(iii)} the set is a point. 
In case \textup{(i)} both inequalities are possibly slack, and the minimal activatable set is $K^\dagger(x)=\emptyset$. Also, in this case, $L(x,\psi)=\emptyset$ for all $\psi\in\Psi^\dagger(x)$. 
In case \textup{(ii)}, a value of $\mu(x)$ is found solving the CQPP in \textup{(\ref{QP})}. The resulting set of $\beta_2$ values will be a point. Then, both inequalities are implicitly equalities and the minimal activatable set is $K^\dagger(x)=\{1,2\}$. 
Also, since both inequalities must have positive KKT multipliers for the problem in \textup{(\ref{QP})}, $L(x,\psi)=\{1,2\}$ for any $\psi\in\Psi^\dagger(x)$. 
Finally, in case \textup{(iii)}, as in case \textup{(ii)}, both inequalities are implicitly equalities and the minimal activatable set is $K^\dagger(x)=\{1,2\}$. 
The difference here is that the inequalities both have zero KKT multipliers for the problem in \textup{(\ref{QP})}, and therefore $L(x,\psi)=\emptyset$ for any $\psi\in\Psi^\dagger(x)$. 
Overall, $\mathcal{K}(B,C_\infty,h_\infty)=\mathcal{L}(B,C_\infty,h_\infty)=\{\emptyset,\{1,2\}\}$. 

\textup{(b)} $c_\infty=0$. 
When $c_\infty=0$, then the second inequality does not restrict $\beta_2$. 
At the same time, the second inequality is active in that it holds with equality. 
Thus, $\mathcal{A}(C_\infty,b_\infty)=\{\emptyset,\{1\}, \{2\}, \{1,2\}\}$. 
One can see that when $K=\{2\}$, the rank stability equation does not hold, and therefore Assumption \textup{\ref{assu:rank:simple}} does not hold. 

Turning now to Assumption \textup{\ref{assu:rank:relaxation1}}, note that the set of possible values for $\beta_2$ is $[0,\infty)$. 
This means that the first inequality is possibly inactive, while the second inequality must be active and is implicitly an equality (even though it is always zero). 
Thus, the minimal activatable set is $K^\dagger(\mu_\infty; B,C_\infty,d_\infty)=\{2\}$. 
For this set of indices, the rank stability equation does not hold, and therefore Assumption \textup{\ref{assu:rank:relaxation1}} does not hold. 
Similarly for Assumption \textup{\ref{assu:rank:relaxation3}}: there are shifts of the constants that make the second inequality active. Also, the first inequality is possibly inactive for any shift. Therefore, $\mathcal{K}(B,C_\infty,h_\infty)=\mathcal{L}(B,C_\infty,h_\infty)=\{\emptyset, \{2\}\}$. (The empty set arises when a shift makes the second inequality slack.) 
As before, for $K=\{2\}$, the rank stability equation does not hold, and Assumption \textup{\ref{assu:rank:relaxation3}} does not hold. 

To summarize, Assumptions \textup{\ref{assu:rank:simple}-\ref{assu:rank:relaxation3}} are not satisfied when $c_F$ converges to zero. 
Essentially, Assumptions \textup{\ref{assu:rank:simple}-\ref{assu:rank:relaxation3}} restrict the possible values of $c_F$ to belong to $\{0\}\cup[\epsilon,\infty)$ for some $\epsilon>0$, where the case $c_F=0$ requires $c_F$ to be known. In this case, Assumptions \textup{\ref{assu:rank:relaxation1} and \ref{assu:rank:relaxation3}} help to pinpoint the collection of inequalities that violate the rank stability equation. 

This restriction can be compared to assumptions imposed by other approaches to testing inequalities. 
For this example, the regularity conditions in \citeapp{PPHI2015} as well as the regularity conditions in \citeapp{ChoRussell2024} both imply that $c_F$ is bounded away from zero.\footnote{This comes from evaluating Assumptions A1-A4 in \citeapp{PPHI2015} in Example \ref{ex:strongID0}. Similarly, for \citeapp{ChoRussell2024}, it follows from Conditions CQ and US in their Definition 3.1.} 
Similarly, Assumption A.3 in \citeapp{BugniCanayShi2017} and Assumption $4.27$ in \citeapp{GoffMbakop2025} can be shown to imply that $c_F$ belongs to $\{0\}\cup[\underline{c},\infty)$ for some $\underline{c}>0$. 
Hence, it is common in the literature on subvector inference for moment inequality models to rule out the case that $c_F\rightarrow 0$ when doing subvector inference on $\beta_1$. 
\end{example}

Examples \ref{ex:strongID0} and \ref{ex:strongID0_2} demonstrate the content of Assumptions \ref{assu:rank:simple}-\ref{assu:rank:relaxation3} in simple moment inequality models. 
Next, we give an example where Assumption \ref{assu:rank:simple} is not satisfied, but Assumptions \ref{assu:rank:relaxation1} and \ref{assu:rank:relaxation3} are. 
This motivates the relaxation of Assumption \ref{assu:rank:simple} in Assumptions \ref{assu:rank:relaxation1} and \ref{assu:rank:relaxation3}. 

\begin{example}\label{Counter_Example}
Consider the moment inequality model in (\ref{mi}) with no equalities, $\mathcal{B}=\mathbb{R}^2$, and inequalities specified by
\begin{align}
\beta_1+\beta_2+\eta_{1F}&\geq 0\nonumber\\
\beta_1-c_F\beta_2+\eta_{2F}&\geq 0\nonumber\\
\beta_1-\beta_2+\eta_{3F}&\geq 0,\label{eq:Counter_Example}
\end{align}
where $\beta=(\beta_1,\beta_2)'$ are structural parameters and $\eta_{jF}$ and $c_F$ are reduced-form parameters for $j\in\{1,2,3\}$. 
The inequalities in (\ref{eq:Counter_Example}) are similar to the inequalities in (\ref{eq:strongID0}) except there is no upper bound and there is an extra lower bound for $\beta_1$. 

Suppose $\eta_{1F}=\eta_{3F}=0$ and $c_F\rightarrow 0$. 
Also suppose $\eta_{2F}$ is positive and converges to zero but at a rate that is slower than $n^{-1/2}$. 
In this case, we evaluate Assumptions \ref{assu:rank:simple}-\ref{assu:rank:relaxation3} in the context of subvector inference for $\beta_1$. 
Suppose we test the hypothesis that $H_0: \beta_1=0$, which is the lower bound on the identified set for $\beta_1$. 
If the $\eta_{jF}$ for $j\in\{1,2,3\}$ are estimated at the $\sqrt{n}$ rate with positive definite asymptotic variance-covariance matrix, then (\ref{eq:Counter_Example}) under $H_0$ can be written in the form of (\ref{generalH0}) with $B=-I_3$, $D=\mathbf{0}$, $d=\mathbf{0}$, $\Pi=(1,-c_F,-1)'$, $\mu=(\eta_{1F},\eta_{2F},\eta_{3F})'$, and $\delta=\beta_2$.
Note that $C_F=(-1, c_F, 1)'$, and the second row is converging to zero. 

We first evaluate Assumption \ref{assu:rank:simple}. 
The inequalities define the set of possible values of $\beta_2$ to be $\{0\}$. 
All three inequalities are active. 
Thus, $\mathcal{A}(C_\infty,b_\infty)=\{\emptyset,\{1\}, \{2\}, \{3\}, \{1,2\}, \{1,3\}$, $\{2,3\}, \{1,2,3\}\}$. 
It follows that Assumption \ref{assu:rank:simple} is not satisfied because $K=\{2\}\in\mathcal{A}(C_\infty,b_\infty)$.  

Turning to Assumption \ref{assu:rank:relaxation1}, the three active inequalities delineate a point for $\beta_2$, which is an affine subspace. 
Therefore, those active inequalities are implicitly equalities. 
The minimal activatable set is given by $K^\dagger(\mu_\infty; B, C_\infty,d_\infty)=\{1,2,3\}$. 
We see that the rank stability equation is satisfied, and therefore Assumption \ref{assu:rank:relaxation1} holds. 

Finally, consider Assumption \ref{assu:rank:relaxation3}. 
Note that the slackness of the second inequality is converging to zero at a rate slower than $n^{-1/2}$. 
Therefore, after rescaling the slackness of the inequalities by $\sqrt{n}$, the second inequality has an infinite limiting slackness: $h_\infty=(0,\infty,0)'$. 
Then, as in Example \ref{ex:strongID0}, for any arbitrary shift of the inequalities, the K- and L- activatable sets are either empty or $\{1,3\}$. Therefore, $\mathcal{K}(B,C_\infty,h_\infty)=\mathcal{L}(B,C_\infty,h_\infty)=\{\emptyset,\{1,3\}\}$ and Assumption \ref{assu:rank:relaxation3} holds. 
\end{example}

\subsubsection{Nesting with \cite{ChoRussell2024} Assumptions}
\label{nesting_cho_russell}

Consider a moment inequality model as in (\ref{mi}) with no equalities. 
Suppose $m(W,\beta)$ is linear in $\beta$ and thus $\E_F[m(W,\beta)]$ can be written as $\Gamma_F \beta + \eta_F$, where $\Gamma_F = \E_F[\partial m(W,\beta)/\partial\beta']$ and $\eta_F = \E_F[m(W,\mathbf{0})]$. 
Let the parameter of interest be $\theta = \lambda'\beta$, where $\lambda\neq 0$ is a fixed known vector. 
Without loss of generality, we take $\|\lambda\|$ $=1$. 
In this setting, we can apply the confidence interval in \citeapp{ChoRussell2024} for $\theta_{\max}= \max_{\beta:\E_F[m(W,\beta)]\leq 0}\lambda'\beta$. 

Following Remark (2) above Example \ref{ex:interval} in Section \ref{sub:momineq}, let $\Lambda^c$ be a column-augmenting matrix so that $[\lambda,\Lambda^c]$ is full rank. It is convenient to take $\Lambda^c$ to be orthonormal and orthogonal to $\lambda$ because then $\beta = \lambda\theta+\Lambda^c\delta$ with $\delta = (\Lambda^c)'\beta$. 
In our framework, inference for $\theta$ amounts to testing $H_0:\exists \delta \text{ s.t. } B\mu_F+C_F\delta\leq d$ with $B = I$, $C_F = \Gamma_F \Lambda^c$, $\mu_F = \eta_F+\Gamma_F \lambda{\theta}_0$, and $d = \mathbf{0}$, where $\theta_0$ is a hypothesized value of $\theta$. 
The following assumption adapts assumptions from \citeapp{ChoRussell2024} for this model. 

\begin{assumption}\label{assu:suppfun}
Given a matrix $\Gamma_\infty$ and vector $\eta_\infty$, we have that \textup{(a)} $\arg\max_{\beta:\Gamma_\infty\beta+\eta_\infty\leq 0}\lambda'\beta$ is unique (denoted by $\beta^\ast_\infty$), and \textup{(b)} $I_{K}\Gamma_\infty$ has full row rank for $K =\left \{k:e_k'\left(\Gamma_\infty\beta^\ast_\infty +\eta_\infty\right) =0\right\}$. 
\end{assumption}
\noindent\textbf{Remark:} \textit{Assumption \ref{assu:suppfun} adapts the CQ and the US conditions from Definition 3.1 in \citeapp{ChoRussell2024} for the moment inequality model. Part (a) requires the set of $\beta$ values that achieve $\theta_{\max}$ to be a singleton. Part (b) requires the active inequalities at the maximum to have full row rank. In particular, the number of active inequalities cannot be larger than $d_\beta$ (the dimension of $\beta$).}\medskip

Let $\overline{\Gamma}_n$ and $\overline{\eta}_n$ be estimators of $\Gamma_{F}$ and $\eta_{F}$, respectively. 
The next lemma shows that Assumption \ref{assu:suppfun}, applied to the limit along a sequence of $F$ values, implies Assumption \ref{assu:rank:simple}. 
\begin{lemma}\label{lem:suppfun} 
Suppose for any sequence $\{F_n\in {\cal F}_{n0}\}$ and any subsequence $\{n_m\} $ of $\{n\}$, there exists a further subsequence, $n_q$, along which $\Gamma_{F_{n_q}}\to \Gamma_\infty$, $\overline{\Gamma}_{n_q}\to_p\Gamma_\infty$, $\eta_{F_{n_q}}\to \eta_\infty$, and $\overline{\eta}_{n_q}\to_p \eta_\infty$ as $q\to\infty$, and Assumption \textup{\ref{assu:suppfun}} holds for $\Gamma_\infty$ and $\eta_\infty$. Then, Assumption \textup{\ref{assu:rank:simple}} holds. 
\end{lemma}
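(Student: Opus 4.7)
The plan is to verify Assumption \ref{assu:rank:simple} directly by exploiting the structural rigidity imposed by Assumption \ref{assu:suppfun}. First, I fix an arbitrary sequence $F_n\in\mathcal{F}_{n0}$ and subsequence $\{n_m\}$, and invoke the hypothesis to extract a further subsequence $\{n_q\}$ along which $\Gamma_{F_{n_q}},\overline\Gamma_{n_q}\to\Gamma_\infty$ and $\eta_{F_{n_q}},\overline\eta_{n_q}\to\eta_\infty$ (the in-probability limits become almost-sure along a yet further subsequence), with Assumption \ref{assu:suppfun} holding at $(\Gamma_\infty,\eta_\infty)$. Since $C_F=\Gamma_F\Lambda^c$ in the reparametrization above the lemma, the task becomes: for every $K\in\mathcal{A}(C_\infty,b_\infty)$ with $K^=\subseteq K$ (here $K^==\emptyset$), show $\textup{rk}(I_K\Gamma\Lambda^c)$ is the same for $\Gamma\in\{\overline\Gamma_{n_q},\Gamma_{F_{n_q}},\Gamma_\infty\}$ with probability approaching one.

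Second, I will extract the key structural consequences of Assumption \ref{assu:suppfun}. The KKT conditions at the unique maximizer $\beta^\ast_\infty$ yield $\lambda=\Gamma_\infty'\psi^\ast$ with $\psi^\ast\geq\mathbf{0}$ supported on $K^\ast$, so $\lambda\in\textup{rowspan}(I_{K^\ast}\Gamma_\infty)$. Uniqueness of $\beta^\ast_\infty$ (Assumption \ref{assu:suppfun}(a)) forces the face of the polyhedron cut out by the active constraints to be zero-dimensional, which together with Assumption \ref{assu:suppfun}(b) gives $|K^\ast|=d_\beta$ and $\textup{rk}(I_{K^\ast}\Gamma_\infty)=d_\beta$. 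Any $K\in\mathcal{A}(C_\infty,b_\infty)$ corresponds to a set of inequalities simultaneously active at some feasible $\beta$ with $\lambda'\beta=\theta_{0,\infty}$; when $\theta_{0,\infty}=\theta_{\max,\infty}$, uniqueness forces $K\subseteq K^\ast$, and when $\theta_{0,\infty}<\theta_{\max,\infty}$, the feasible slice has positive dimension so activatable $K$ correspond to boundary facets of this slice.

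Third, I will apply the identity $\textup{rk}(I_K\Gamma\Lambda^c)=\textup{rk}(I_K\Gamma)-\mathds{1}[\lambda\in\textup{rowspan}(I_K\Gamma)]$, which follows because $\Lambda^c$ orthonormally spans $\lambda^\perp$. For $K\subseteq K^\ast$, $I_K\Gamma_\infty$ has full row rank $|K|$; if $\lambda\notin\textup{rowspan}(I_K\Gamma_\infty)$ the rank is $|K|$, which is maximal and hence preserved under small perturbations. For $K=K^\ast$, the rank equals $d_\beta-1=d_\delta$, which is the column-dimension upper bound on any $|K^\ast|\times d_\delta$ matrix, so lower semi-continuity of rank pins the perturbed rank exactly at $d_\delta$. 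The strict subset $K\subsetneq K^\ast$ with $\lambda\in\textup{rowspan}(I_K\Gamma_\infty)$ corresponds to a KKT degeneracy; I will argue this cannot occur under Assumption \ref{assu:suppfun}, by showing that such a $K$ would force $\beta^\ast_\infty$ to be the unique optimizer of the LP restricted to the constraints indexed by $K$, yielding a strict subface on which $\lambda'\beta$ is constant -- contradicting uniqueness once the remaining non-degenerate multiplier structure is examined.

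The main obstacle will be the subtle degeneracy case above, where $\lambda$ could in principle lie in the row span of a strict subset of the active rows. Handling it cleanly will require a careful KKT argument that uniqueness in Assumption \ref{assu:suppfun}(a), combined with full row rank in (b), precludes a positive-dimensional face on which $\lambda'\beta$ is constant and hence precludes strict-subset degeneracy. A secondary obstacle is the case $\theta_{0,\infty}<\theta_{\max,\infty}$, where activatable $K$ are less constrained; there I will exploit that along the feasible slice, any activatable $K$ has $|K|\leq d_\beta-1=d_\delta$ with $I_K\Gamma_\infty$ full row rank (inherited from $I_{K^\ast}\Gamma_\infty$'s full rank), so $\textup{rk}(I_K\Gamma_\infty\Lambda^c)=|K|$ and stability is automatic.
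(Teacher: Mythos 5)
Your core argument is correct and, in substance, the same as the paper's. Both proofs rest on the same four facts: (i) because the hypothesis is at $\theta_{\max}$ and the maximizer is unique, $\poly(C_\infty,-\mu_\infty)$ is the singleton $\{{\Lambda^c}'\beta^\ast_\infty\}$, so every activatable set is a subset of $K^\ast$; (ii) uniqueness plus Assumption \ref{assu:suppfun}(b) makes $I_{K^\ast}\Gamma_\infty$ a square invertible matrix; (iii) $\lambda$ does not lie in the row span of any strict subset of the rows of $I_{K^\ast}\Gamma_\infty$; (iv) consequently every relevant submatrix rank sits at its dimension-theoretic maximum and is preserved under perturbation by lower semicontinuity of rank. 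Your rank identity $\rk(I_K\Gamma\Lambda^c)=\rk(I_K\Gamma)-\mathds{1}[\lambda\in\textup{rowspan}(I_K\Gamma)]$ is a clean way to package (iv), and your geometric exclusion of the degenerate case in (iii) --- a positive-dimensional face through $\beta^\ast_\infty$ on which $\lambda'\beta$ is constant at $\theta_{\max}$, contradicting uniqueness --- is exactly the content of the strict-complementarity results (Theorems 4.8--4.9 of Szil\'agyi) that the paper cites instead; to close it you only need to exhibit a nonzero $v\in\ker(I_K\Gamma_\infty)$ with $I_{K^\ast\setminus K}\Gamma_\infty v<0$, which the invertibility of $I_{K^\ast}\Gamma_\infty$ supplies directly, so no further ``multiplier structure'' needs to be examined.

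The one genuine problem is your secondary case $\theta_{0,\infty}<\theta_{\max,\infty}$. First, it is outside the scope of the lemma: the paper's proof fixes $\mu_F=\eta_F+\Gamma_F\lambda\theta_{\max}$, i.e., the null is tested at the upper bound, matching the \cite{ChoRussell2024} setting being compared to. Second, your argument for it is wrong. An activatable set $K$ for the slice $\{\beta \text{ feasible}: \lambda'\beta=\theta_0\}$ with $\theta_0<\theta_{\max}$ is the active set at some feasible point away from the maximizer, and such a $K$ need not be contained in $K^\ast$; full row rank of $I_K\Gamma_\infty$ is therefore not ``inherited'' from $I_{K^\ast}\Gamma_\infty$, and Assumption \ref{assu:suppfun} imposes no restriction whatsoever on the constraint gradients active elsewhere on the polyhedron. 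If the lemma were read as covering arbitrary $\theta_0$ in the identified set, neither your sketch nor Assumption \ref{assu:suppfun} itself would deliver the conclusion. You should simply drop that case and state that the hypothesis under consideration is $\theta_0=\theta_{\max}$.
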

\noindent\textbf{Remark:} \textit{\citeapp{ChoRussell2024} show that their assumptions can be satisfied by perturbing the inequalities by a random amount. That, combined with Lemma \ref{lem:suppfun}, implies that Assumption \ref{assu:rank:simple} can be satisfied by perturbing the inequalities by a random amount.} 

\begin{proof}[Proof of Lemma \textup{\ref{lem:suppfun}}] 
Fix a sequence, $F_n\in\mathcal{F}_{n0}$, and a subsequence, $n_m$, satisfying Assumption \ref{Assumption1}(i), so $\mu_{F_{n_m}}=\eta_{F_{n_m}}+\Gamma_{F_{n_m}}\lambda\theta_{\max}\rightarrow\mu_\infty$ and $\Gamma_{F_{n_m}}\rightarrow\Gamma_\infty$. 
Let $C_\infty=\Gamma_\infty\Lambda^c$. 
By the condition in the statement of Lemma \ref{lem:suppfun}, there exists a further subsequence along which $\Gamma_{F_{n_q}}\rightarrow \Gamma_\infty$, $\overline{\Gamma}_{n_q}\rightarrow_p \Gamma_\infty$, $\eta_{F_{n_q}}\rightarrow\eta_\infty$, and $\overline{\eta}_{n_q}\rightarrow_p \eta_\infty$. 
First, we show that, under Assumption \ref{assu:suppfun},
\begin{equation}
\textup{poly}(C_\infty, -\mu_\infty) = \{{\Lambda^c}'\beta_\infty^\ast\}.\label{uniqueness}
\end{equation}
In particular, this implies that $\textup{poly}(C_\infty, -\mu_\infty)$ is nonempty. 

To show (\ref{uniqueness}), observe that 
\begin{align}
\textup{poly}(C_\infty, -\mu_\infty)&  = \{\delta: C_\infty\delta+\mu_\infty\leq 0\}\nonumber\\
&=\{\delta: \Gamma_\infty \Lambda^c \delta +\eta_\infty+\Gamma_\infty \lambda\theta_{\max}\leq 0\}\nonumber\\
& = \{\delta:\Gamma_\infty(\Lambda^c\delta+\lambda\theta_{\max})+\eta_\infty\leq 0\}.
\end{align}
We know that $\delta = {\Lambda^c}'\beta_\infty^\ast\in \textup{poly}(C_\infty,-\mu_\infty)$ because $\Gamma_\infty(\Lambda^c{\Lambda^c}'\beta_\infty^\ast+\lambda\theta_{\max})+\eta_\infty = \Gamma_\infty(\Lambda^c{\Lambda^c}' + \lambda \lambda')\beta_\infty^\ast+\eta_\infty = \Gamma_\infty\beta_\infty^\ast +\eta_\infty \leq 0$. 
To verify uniqueness, consider a $\delta_0\in \textup{poly}(C_\infty,-\mu_\infty)$; then $\Gamma_\infty \Lambda^c\delta_0+\eta_\infty+\Gamma_\infty \lambda\theta_{\max}\leq 0$. 
Let $\beta^{\dagger} =\Lambda^c\delta_0+\lambda\theta_{\max}$. 
Then, $\Gamma_\infty \beta^\dagger + \eta_\infty\leq 0$. 
Since the columns of $\Lambda^c$ are orthogonal to $\lambda$, we have $\lambda'\Lambda^c=\mathbf{0}$, and hence $\lambda'\beta^\dagger = \lambda'(\Lambda^c\delta_0+\lambda\theta_{\max}) = \theta_{\max}$. 
Therefore, $\beta^\dagger$ is an $\argmax$ of $\lambda'\beta$ subject to the constraint $\Gamma_\infty\beta+\eta_\infty\leq 0$. 
This, together with Assumption \ref{assu:suppfun}(a), implies that $\beta^\dagger = \beta^\ast_\infty$. 
Multiplying ${\Lambda^c}'$ on both sides, we get: $\delta_0 = {\Lambda^c}'\beta^\ast_\infty$. 
Thus, (\ref{uniqueness}) holds. 

Next, consider
\begin{align}
K&=\{k:e'_k(\Gamma_\infty \beta_\infty^\ast+\eta_\infty) =0\}\nonumber\\
&=\{k:e'_k(\Gamma_\infty \Lambda^c {\Lambda^c}'\beta_\infty^\ast +\Gamma_\infty \lambda\lambda'\beta_\infty^\ast+\eta_\infty) =0\}\nonumber\\
&=\{k:e'_k(\Gamma_\infty \Lambda^c {\Lambda^c}'\beta_\infty^\ast +\mu_\infty) = 0\}\nonumber\\
&=\{k:e'_k(C_\infty {\Lambda^c}'\beta_\infty^\ast +\mu_\infty) = 0\},\label{activatableK}
\end{align}
where the second equality follows because $I=\Lambda^c {\Lambda^c}'+\lambda\lambda'$. 
This shows that $K$ is the set of active inequalities for $\max_{\beta:\Gamma_\infty\beta+\eta_\infty\leq 0}g'\beta$. 
By Theorem 4.5 in \citeapp{Szilagyi2006}, Assumption \ref{assu:suppfun}(a) implies that $I_K\Gamma_\infty$ has full column rank. 
This, combined with Assumption \ref{assu:suppfun}(b), implies that $I_K\Gamma_\infty$ is an invertible square matrix. 
Then, for any $\Gamma_q\rightarrow \Gamma_\infty$, $I_K\Gamma_q$ is also an invertible square matrix eventually as $q\rightarrow\infty$. 
This also implies that 
\begin{equation}
\textup{rk}(I_KC_q) = \textup{rk}(I_K\Gamma_q\Lambda^c) = \textup{rk}(\Lambda^c) = \textup{rk}(I_K\Gamma_\infty \Lambda^c) =\textup{rk}(I_KC_\infty). \label{StableK}
\end{equation}
where $C_q=\Gamma_q\Lambda^c$. 
Combined with the fact that $\Gamma_{F_{n_q}}\rightarrow \Gamma_\infty$ and $\overline{\Gamma}_{n_q}\rightarrow_p\Gamma_\infty$, (\ref{StableK}) implies that the rank stability equation holds with probability approaching one for $K$, where $\overline{C}_n=\overline{\Gamma}_n\Lambda^c$ and $C_{F_n}=\Gamma_{F_n}\Lambda^c$. 

To verify Assumption \textup{\ref{assu:rank:simple}}, let $\widetilde{K}\in{\cal A}(C_\infty, -\mu_\infty)$. 
It follows from the definition of ${\cal A}(C_\infty, -\mu_\infty)$ that $\widetilde{K}\subseteq K$. 
We have already seen that if $\widetilde{K}=K$, then the rank stability equation in Assumption \ref{assu:rank:simple} holds with probability approaching one. 
Below, we show that for any $\widetilde{K}\subsetneq K$,
\begin{equation}
\textup{rk}(I_{\widetilde{K}}C_\infty) = |\widetilde{K}|.\label{K0rank}
\end{equation}
Then, since $\overline{C}_{n_q} = \overline\Gamma_{n_q}\Lambda^c\to_p\Gamma_{\infty}\Lambda^c = C_\infty$ and $C_{F_{n_q}}=\Gamma_{F_{n_q}}\Lambda^c\rightarrow \Gamma_\infty\Lambda^c=C_\infty$, we have $\textup{rk}(I_{\widetilde{K}}\overline{C}_{n_q}) = \textup{rk}(I_{\widetilde{K}}C_{F_{n_q}}) = |\widetilde{K}|$ with probability approaching 1, verifying Assumption \ref{assu:rank:simple}. 

To show (\ref{K0rank}), consider the KKT conditions for $\max_{\beta:\Gamma_\infty\beta+\eta_\infty\leq 0}\lambda'\beta$:
\begin{equation}
\lambda = -\Gamma'_\infty\psi_\infty, ~ \psi_\infty\geq 0,~\Gamma_\infty\beta_\infty^\ast+\eta_\infty\leq 0\text{ and }\psi_\infty'( \Gamma_\infty\beta_\infty^\ast+\eta_\infty) = 0,
\end{equation}
where $\psi_\infty$ is the vector of KKT multipliers. 
The KKT conditions imply that for $k\notin K$, we have $e_k' \psi_\infty=0$. 
This implies that, $\lambda = -\Gamma'_\infty I'_KI_K\psi_\infty$. 
Moreover, Theorems 4.8 and 4.9 of \citeapp{Szilagyi2006} imply that, under Assumption \ref{assu:suppfun}(b), $e_k'\psi_\infty>0$ for all $k\in K$. 
Thus, $\lambda$ is a linear combination of the rows of $I_K\Gamma_\infty$, where all rows receive positive weights. 
Since the rows of $I_K\Gamma_\infty$ are linearly independent, that means $\lambda$ is linearly independent with any strict subset of rows of $I_K\Gamma_\infty$. 
In other words, 
\begin{equation}
\lambda'\neq c'I_{\widetilde{K}}\Gamma_\infty\text{ for any }\widetilde{K}\subsetneq K \textup{ and any }c\in \R^{|\widetilde{K}|}.\label{gindep}
\end{equation} 
To reach a contradiction, suppose for some $\widetilde{K}\subsetneq K$, we have $\textup{rk}(I_{\widetilde{K}}\Gamma_\infty \Lambda^c)<|\widetilde{K}|$. 
Then, there exists a vector $c\neq \mathbf{0}$ such that $c'I_{\widetilde{K}}\Gamma_\infty \Lambda^c = \mathbf{0}$. 
Since the columns of $\Lambda^c$ span the orthogonal complement of $\lambda$, any $\gamma$ such that $\gamma'\Lambda^c=\mathbf{0}$ must be a scalar multiple of $\lambda$. 
Also, since $c\neq \mathbf{0}$, we must have that $c'I_{\widetilde{K}}\Gamma_\infty = \alpha\lambda'$ for some $\alpha\neq 0$. 
This implies $\lambda'=\alpha^{-1}c'I_{\widetilde{K}}\Gamma_\infty$, which violates (\ref{gindep}). This contradiction shows that (\ref{K0rank}) holds. 
\end{proof}

\subsection{The Refined GCC Test}\label{app:refinement}
This subsection defines the refined GCC (RGCC) test. The RGCC test uses the same refinement procedure as the RCC test in CS23. 
To implement the RGCC test, one first calculates the GCC test statistic and critical value. 
If $T_n\notin[cv(1,2\alpha),cv(1,\alpha)]$ or $\widehat{s}_n\neq 1$, then the outcome of the RGCC test is the same as the GCC test. 
However, when $T_n\in[cv(1,2\alpha),cv(1,\alpha)]$ and $\widehat{s}_n=1$, then the critical value is reduced from the original critical value $cv(1,\alpha)$ to $cv(1,\widehat{\beta}_n)$, where $\widehat{\beta}_n$ is defined below. 
This refines the GCC test to reject slightly more often. 

The new ``level'' of the test $\widehat{\beta}_n$ is constructed  based on the inactivity of the inequalities after vertex enumeration, which finds a $d_A \times d_C$ matrix $H$ such that $H\overline{C}_n=0$ and $\{\mu\in\R^{d_\mu}: B\mu+\overline{C}_n\delta\leq d \text{ for some }\delta\in\R^{d_\delta}\}=\{\mu\in\R^{d_\mu}: A\mu \leq g\}$, where $A=HB$ and $g=Hd$. 
The matrix $H$ may be random in that it depends on $\overline{C}_n$. 
Lemma 1 in CS23 ensures that such an $H$ exists and the vertex enumeration algorithm in CS23 calculates it. 
When $\widehat r_n=1$, there exists a $\bar j\in\widehat J$ such that $e'_{\bar j}A\neq 0$.\footnote{Note that $\widehat r_n$ and $\widehat\beta_n$ are only calculated in the case that $\widehat s_n=1$. By Theorem \ref{lem:rhat}(a), this implies that $\widehat r_n\in\{0,1\}$. When $\widehat r_n=0$, then $T_n=0$, so the definition of $\widehat\beta_n$ does not matter (the test does not reject).} 
For simplicity, we take $\bar j=1$. 
We define a measure of inactivity of the $j$-th inequality at any $x\in\R^{d_\mu}$ for $j\in\{2,\dots, d_A\}$: 
\begin{align}
	&\tau_j(x)= \begin{cases}
		\frac{\sqrt{n}\left\|a_1\right\|_{\widetilde{\Sigma}_n}\left(g_j-a_j^{\prime} x\right)}
		{\left\|a_1\right\|_{\widetilde{\Sigma}_n} \| a_j \|_{\widetilde{\Sigma}_n}-a_1^{\prime} \widetilde{\Sigma}_n a_j} 
		& \text { if }\left\|a_1\right\|_{\widetilde{\Sigma}_n}\left\|a_j\right\|_{\widetilde{\Sigma}_n} - a_1^{\prime} \widetilde{\Sigma}_n a_j > 0 \\ 
		\infty 
		& \text { otherwise }
	\end{cases},\label{92}
\end{align}
where $a_j$ and $g_j$ are the $j$-th row of $A$ and $g$, respectively, and $\|a\|_{\Sigma}=\left(a^{\prime} \Sigma a\right)^{1 / 2}$. 
Also define 
\begin{equation}
	\beta(x) = \begin{cases} 
			2\alpha \Phi(\tau(x)) & \text { if } \widehat{r}=1 \\
			\alpha & \text { otherwise }
		\end{cases},\label{93}
\end{equation}
where $\Phi(\cdot)$ is the cumulative distribution function of the standard normal distribution and
\begin{equation}
	\tau(x)=\inf _{j \in\left\{2, \ldots, d_A\right\}} \tau_j(x) \label{94}
\end{equation}
is the smallest inactivity of the inequalities. 
The refined level is $\widehat{\beta}_n=\beta(\widehat\mu)$. 
Since $\tau(\widehat\mu)$ is nonnegative, $\widehat{\beta}_n\in[\alpha,2\alpha]$. 
The refined level is close to its maximum possible value, $2\alpha$, when all inactive inequalities are far from binding. 
For further details of the refinement procedure, including geometric intuition, see CS23. 

The following theorem gives theoretical justification for the refinement. 
\begin{theorem}\label{thm:level_RCC} If Assumptions \textup{\ref{Assumption1}} and \textup{\ref{assu:rank:simple}} hold, or if Assumptions \textup{\ref{Assumption1}} and \textup{\ref{assu:rank:relaxation1}-\ref{assu:rank:relaxation3}} hold, then 
\begin{equation*}
\underset{n\to\infty}{\textup{limsup}}\sup_{F\in{\cal F}_{n0}}P_F(T_n>cv(\widehat s_n,\widehat\beta_n)) \leq\alpha. 
\end{equation*}
\end{theorem}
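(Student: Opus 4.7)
The plan is to graft a refinement argument onto the subsequence/almost-sure-representation framework used in the proof of Theorem \ref{thm:level_CC}. First, take a subsequence $\{n_q\}$ along which the sup is (approximately) achieved with some sequence $F_q \in \mathcal{F}_{n_q 0}$, pass to a further subsequence so that all the convergences in display \eqref{distribution_convergence}--\eqref{rank_convergence} hold, and invoke the almost sure representation theorem to work pathwise. Fix a ``good'' sample path on which: $T_{n_q} \to T_\infty$, $\widehat{\eta}_{n_q} \to \eta^*_\infty$, $\widehat{s}_{n_q}$ is eventually sandwiched as in \eqref{r1conv}, and $T_\infty$ is not a continuity-failure point of the critical value map (so the sample path avoids the measure-zero set identified by Theorem \ref{lem:rhat}(b) and Lemma \ref{lem:MixedChi2}, plus an additional measure-zero set defined below). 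By Fatou's lemma, it then suffices to show
\begin{equation*}
\limsup_{q \to \infty} \mathbf{1}\{T_{n_q} > cv(\widehat{s}_{n_q}, \widehat{\beta}_{n_q})\} \le \mathbf{1}\{T_\infty > cv(r^*, \beta^*_\infty)\},
\end{equation*}
for some appropriate limit $\beta^*_\infty \in [\alpha, 2\alpha]$, and then to apply Theorem 3(b) of CS23 to the limit experiment \eqref{limitexperiment} to conclude that $P(T_\infty > cv(r^*, \beta^*_\infty)) \le \alpha$.

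Second, I handle the new ingredient, namely the convergence of $\widehat{\beta}_{n_q}$. The vertex enumeration of CS23 produces an $H_q$ with rows on the vertices of $\{h \ge 0 : h'\overline{C}_{n_q} = 0, h'\mathbf{1} = 1\}$; the columns of this polytope depend continuously on $\overline{C}_{n_q}$, but the vertex assignment may be non-unique. Since $\overline{C}_{n_q} \to C_\infty$ along the fixed path and there are only finitely many combinatorial types of vertex enumerations, pass to a further subsequence along which the combinatorial structure of $H_q$ is constant and the selected rows $H_q \to H_\infty$, giving $A_q = H_q B \to A_\infty = H_\infty B$ and $g_q = H_q d_{n_q} \to g_\infty = H_\infty d_\infty$. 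On the same further subsequence, $\widetilde{\Sigma}_{n_q} \to \Sigma_\infty$ by \eqref{sigma_convergence} and $\widehat{\mu}_{n_q} \to \mu^*_\infty := \mu_\infty + (\eta^*_\infty - \mu_\infty - X_C \delta^*_\infty)/\sqrt{n_q}\text{-scaled adjustment}$, which in the pathwise formulation translates via the change of variables in \eqref{Finitesample} into convergence of the relevant arguments of $\beta(\cdot)$. The denominator $\|a_1\|_{\widetilde{\Sigma}_{n_q}}\|a_j\|_{\widetilde{\Sigma}_{n_q}} - a_1' \widetilde{\Sigma}_{n_q} a_j$ in \eqref{92} is nonnegative by Cauchy--Schwarz, so the piecewise definition is continuous at every limit point where this denominator is strictly positive; I add to the exceptional null set the event that the limit denominator equals zero for any $j$ with $e_j' A_\infty$ in the relevant span, and for sample paths off this null set, $\tau_j(\widehat{\mu}_{n_q}) \to \tau_j^\infty$ for each $j$, yielding $\widehat{\beta}_{n_q} \to \beta^*_\infty := \beta(\eta^*_\infty; A_\infty, g_\infty, \Sigma_\infty)$ whenever $\widehat{r}_{n_q} = 1$ eventually.

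Third, I handle the three cases as in Step 4 of the proof of Theorem \ref{thm:level_CC}: if $\widehat{s}_{n_q} = 0$ along a subsequence, then $T_{n_q} = 0$ by Lemma \ref{s=0_implies_T=0} and there is no rejection; if $\widehat{s}_{n_q} \ge 1$ eventually and $r^* = 0$, then $T_\infty = 0$ and $T_{n_q} \to 0$ gives no rejection eventually; if $r^* \ge 1$, then $T_{n_q} \to T_\infty \ne cv(s^*, \beta^*_\infty)$ (adding the measure-zero set $\{T_\infty = cv(1, \beta^*_\infty)\}$ to the null-set exclusions in \eqref{samplepath}), together with $\widehat{s}_{n_q} \to$ something bounded below by $s^* = r^*$ and $\widehat{\beta}_{n_q} \to \beta^*_\infty$, yields the desired inequality through the monotonicity of $cv$ in its arguments. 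A subtlety is that the refinement only kicks in when $\widehat{s}_{n_q} = 1$: on the subsequence where $\widehat{s}_{n_q} \ge 2$ eventually, $\widehat{\beta}_{n_q} = \alpha$ and the argument is literally the one in the proof of Theorem \ref{thm:level_CC}; on the subsequence where $\widehat{s}_{n_q} = 1$ eventually, the previous paragraph gives $\widehat{\beta}_{n_q} \to \beta^*_\infty$ and Theorem 3(b) of CS23 applies directly to the limit.

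The main obstacle will be the second paragraph: guaranteeing pathwise convergence of $\widehat{\beta}_{n_q}$ despite the non-uniqueness of vertex enumeration, the piecewise structure of $\tau_j$, and the fact that the ``distinguished'' active index $\bar j$ underlying the refinement may switch along the sequence. The finite combinatorial type of $H_q$ and the discreteness of the index $\bar j$ let me reduce to subsequences on which all of these choices are constant; once they are constant, continuity and the additional measure-zero exclusions yield the desired convergence. The rest reduces to a bookkeeping exercise that parallels the proof of Theorem \ref{thm:level_CC} very closely, with CS23's RCC-validity statement replacing their sCC-validity statement at the final step.
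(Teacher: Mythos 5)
Your overall architecture matches the paper's: both proofs graft the refinement onto the subsequence/almost-sure-representation framework of Theorem \ref{thm:level_CC}, extract a limiting refined level $\beta^\ast$, and close with the bound $P(T_\infty>cv(r^\ast,\beta^\ast))\le\alpha$ from CS23, splitting on $\widehat s_{n_q}>1$ versus $\widehat s_{n_q}=1$ exactly as the paper does. However, there are two genuine gaps in how you control $\widehat\beta_{n_q}$.

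First, your plan to obtain full convergence $\widehat\beta_{n_q}\to\beta^\ast_\infty$ by ``adding to the exceptional null set the event that the limit denominator equals zero'' does not work. Whether $\|a_1\|_{\Sigma_\infty}\|a_j\|_{\Sigma_\infty}-a_1'\Sigma_\infty a_j=0$ is determined by the deterministic objects $A^\dagger_\infty$ and $\Sigma_\infty$, not by the Gaussian $X$; the event has probability zero or one, and it has probability one precisely when some limiting row $a_j$ is $\Sigma_\infty$-parallel to the reference row $a_1$ --- the index set the paper calls $J^=$ --- which genuinely occurs (e.g.\ when distinct rows of $A_q$ merge in the limit). For such $j$, $\tau^\infty_j=\infty$ by the ``otherwise'' branch of (\ref{92}), but $\tau^q_j$ need not diverge, since the numerator and denominator can vanish together; hence $\widehat\beta_{n_q}$ need not converge to $\beta^\ast$ at all. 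The paper's resolution is to prove only the one-sided bound $\limsup_{q\to\infty}\widehat\beta_{n_q}\le\beta^\ast$ in (\ref{beta_convergence}), via $\inf_{j}\tau^q_j\le\inf_{j\in J^{\neq}}\tau^q_j\to\inf_{j\in J^{\neq}}\tau^\infty_j=\tau^\infty$, and this suffices because a smaller level yields a larger critical value. You should restructure your second paragraph around this one-sided bound rather than full convergence.

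Second, you assert that $\{T_\infty=cv(1,\beta^\ast_\infty)\}$ can be excluded as a measure-zero set, but $\beta^\ast_\infty=\beta(\eta^\ast_\infty)$ is itself a function of $X$, so continuity of the distribution of $T_\infty$ alone does not deliver this. The paper proves the analogous exclusion (\ref{samplepath_new}) by conditioning on $\widetilde M_K\Sigma_\infty^{-1/2}X$ --- which pins down $\eta^\ast_\infty$ and hence the random critical value --- and exploiting that $\widetilde P_K\Sigma_\infty^{-1/2}X$ is independent of it, so the quadratic form on the left retains a continuous conditional distribution by Lemma \ref{lem:MixedChi2}. Without this orthogonal-decomposition argument, your exclusion of the tie event is unjustified. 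Your treatment of the vertex-enumeration limit is also looser than the paper's (which routes through Lemma 9 of CS23, uniqueness of Kuratowski limits, and Lemma \ref{equivalent_representation} for representation invariance), but that part is repairable along the lines you sketch; the two items above are the substantive gaps.
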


\begin{proof}[Proof of Theorem \textup{\ref{thm:level_RCC}}] 
The proof of Theorem \ref{thm:level_RCC} follows the same argument as the proof of Theorem \ref{thm:level_CC}. 
We describe the differences here. 
Note that the differences are compatible with the changes in the proof of Theorem \ref{Relaxed2}, so either Assumptions \textup{\ref{Assumption1}} and \textup{\ref{assu:rank:simple}} or Assumptions \textup{\ref{Assumption1}} and \textup{\ref{assu:rank:relaxation1}-\ref{assu:rank:relaxation3}} are sufficient. 

(1) Following (\ref{limitexperiment}), we insert a definition of $\beta^\ast$. 
By Lemma 1 in CS23, there exist $A_\infty$ and $g_\infty$ such that $\poly(A_\infty, g_\infty)=\ppoly(B,h_\infty; C_\infty)$. 
Then, we can define $\beta^\ast$ using (\ref{92})-(\ref{94}) applied to the problem in (\ref{limitexperiment}) with the constraint set given by $\poly(A_\infty, g_\infty)$, so $\beta^\ast=\beta(\eta^\ast_\infty)$. 
Note that by Lemma \ref{equivalent_representation}, any $A_\infty$ and $g_\infty$ such that $\poly(A_\infty, g_\infty)=\ppoly(B,h_\infty; C_\infty)$ yield the same definition of $\beta^\ast$. 

Since $\eta^\ast$ depends on the random variable $X$,  $\beta^\ast$ is  random. Thus, when we fix a sample sequence satisfying (\ref{samplepath}), we have to account for the randomness in $\beta^\ast$. 
For any $K\subseteq\{1,...,d_C\}$, let $\widetilde{P}_K=\Sigma_\infty^{1/2}B'_KM_{C_K}\left(M_{C_K}B_K\Sigma_\infty B'_KM_{C_K}\right)^{+}M_{C_K}B_K\Sigma_\infty^{1/2}$ be the projection matrix onto the span of $\Sigma_\infty^{1/2}B'_KM_{C_K}$, and let $\widetilde{M}_K=I_{d_\mu}-\widetilde{P}_K$ be residual matrix. 
Replace (\ref{samplepath}) with 
\begin{align}
&\|(M_{C_{K}}I_KB\Sigma_\infty^{1/2})^+M_{C_{K}}I_{K}(BX - h_\infty)\|^2_{\Sigma^{-1}_\infty} \neq cv(\textup{rk}(M_{C_{K}}I_KB),\beta({\Sigma_\infty^{1/2}}(\widetilde{M}_K\Sigma^{-1/2}_\infty X+\widetilde{P}_K\xi_K))) \nonumber\\
&~~~~~~~~~~~~~~~~~\text{ for all }K\subseteq\{1,\dots,d_C\}\text{ such that }M_{C_{K}}I_KB\neq\mathbf{0} \text{ and } I_K h_\infty<\infty,\label{samplepath_new}
\end{align}
where $\xi_K$ is defined in Lemma \ref{lem:KKTB2_new}. 
It follows from Lemma \ref{lem:KKTB2_new} that $\eta^\ast_\infty={\Sigma_\infty^{1/2}}(\widetilde{M}_K\Sigma_\infty^{-1/2}X+\widetilde{P}_K\xi_K)$ with $K=K^\ast$ where $K^\ast$ is defined below (\ref{gamma_infty^ast_def}). 
If $s^\ast=1$, then (\ref{samplepath_new}) implies that $T_\infty\neq cv(1,\beta^\ast)$ by plugging in $K=K^\ast$. 

To show that (\ref{samplepath_new}) holds with probability 1, fix a $K\subseteq\{1,\dots,d_C\}$ such that $M_{C_{K}}I_KB\neq\mathbf{0}$ and $I_K h_\infty<\infty$. 
We condition on $\widetilde{M}_K\Sigma^{-1/2}_\infty X$, so the right-hand side is a constant. 
For the left-hand side, note that the conditional distribution of $(M_{C_{K}}I_KB\Sigma_\infty^{1/2})^+M_{C_{K}}I_{K}BX=(M_{C_{K}}I_KB\Sigma_\infty^{1/2})^+M_{C_{K}}I_{K}B\Sigma^{1/2}_\infty \widetilde{P}_K\Sigma^{-1/2}_\infty X$ is equal to the unconditional distribution because $\widetilde{P}_K\Sigma^{-1/2}_\infty X$ and $\widetilde{M}_K\Sigma^{-1/2}_\infty X$ are independent. 
By Lemma \ref{lem:MixedChi2}, that unconditional distribution is continuous because $M_{C_{K}}I_KB\neq\mathbf{0}$. 
This implies that the conditional probability of (\ref{samplepath_new}) is one, and therefore the unconditional probability is also one. 

(2) The second difference is replacing $\alpha$ with $\widehat\beta_n$ and $\beta^\ast$ in various places in the proof of Theorem \ref{thm:level_CC}. 
In the second sentence of the proof, replace $\alpha$ with $\widehat\beta_{n_q}$ twice. 
In (\ref{samplepath}), replace $\alpha$ with $\beta^\ast$. 
In (\ref{goal0}) and (\ref{goal2}), replace the $\alpha$ on the left-hand side of both expressions with $\widehat\beta_{n_q}$ and replace the $\alpha$ on the right-hand side of both expressions with $\beta^\ast$. 
Two sentences after (\ref{goal2}), replace  ``$P(T_\infty>cv(r^\ast,\alpha))\leq \alpha$'' by ``$P(T_\infty>cv(r^\ast,\beta^\ast))\leq \alpha$.''
Finally, in Step 4 parts (A) and (B), replace both occurrences of $\alpha$ with $\widehat\beta_{n_q}$. 
(We modify Step 4 part (C) separately, below.) 

(3) In Step 1, apply Lemma 1 from CS23 to get matrices $H_q$ such that $A_q=H_q B$, $g_q=H_qh_q$, and $\poly(A_q, g_q)=\ppoly(B, h_q; \overline{C}_{n_q})$. 
It follows from Lemma \ref{equivalent_representation} that the definition of $\widehat r_n$ and $\widehat \beta_n$ is the same as using this $A_q$ and $g_q$. 
As in CS23, we can use McMullen's upper bound theorem (see \citeapp{Ziegler1995}) to ensure that the number of rows of $A_q$ is bounded. 
We can take a subsequence so that the number of rows of $A_q$ does not depend on $q$, and denote it by $d_A$. 
We seek to apply Lemma 9 in CS23. 
Without loss of generality, we can take the rows of $A_q$ to either belong to the unit circle or be zero. 
Also note that the $g_q$ are nonnegative (because $\poly(A_q, g_q)$ includes zero). 
Therefore, there exists a sequence of matrices, $G_q$, a sequence of nonnegative vectors, $f_q$, and a further subsequence, $n_q$, such that conditions (a)-(d) in Lemma 9 of CS23 hold. 
In particular, we have elementwise convergence of $[A_q; G_q]$ and $[g_q;f_q]$ to some limit, say $[A^\dagger_\infty; G^\dagger_\infty]$ and $[g^\dagger_\infty; f^\dagger_\infty]$. 
We also have $\poly([A_q; G_q], [g_q;f_q])\overset{K}{\to} \poly([A^\dagger_\infty; G^\dagger_\infty], [g^\dagger_\infty; f^\dagger_\infty])$ and $\poly(A_q, g_q)\subseteq \poly(G_q, f_q)$. 
Again, by Lemma \ref{equivalent_representation}, the definition of $\widehat r_{n_q}$ and $\widehat\beta_{n_q}$ could have been made with respect to this collection of inequalities. 
For simplicity, we can ignore the $G_q$ and $f_q$ and just suppose without loss of generality that the original $A_q$ and $g_q$ satisfy $A_q\rightarrow A^\dagger_\infty$ and $g_q\rightarrow g^\dagger_\infty$ and that $\poly(A_q, g_q)\overset{K}{\to} \poly(A^\dagger_\infty, g^\dagger_\infty)$. 
Note that $\poly(A^\dagger_\infty, g^\dagger_\infty)=\ppoly(B, h_\infty, C_\infty)$. 
To see this, note that $\ppoly(B, h_q; \overline{C}_{n_q})\overset{K}{\to}\ppoly(B, h_\infty, C_\infty)$ by Lemma \ref{projected_poly_convergence}, combined with the fact that $\poly(A_q, g_q)=\ppoly(B, h_q; \overline{C}_{n_q})$ for all $q$ and $\poly(A_q, g_q)\overset{K}{\to} \poly(A^\dagger_\infty, g^\dagger_\infty)$. 
(Among the set of all closed subsets of any set, the Kuratowski limit of a sequence of sets is unique.) 
Therefore, we can define $r^\ast$ and $\beta^\ast$ using $\poly(A^\dagger_\infty, g^\dagger_\infty)$. 
By Lemma 1 in CS23, this definition of $r^\ast$ agrees with the definition following (\ref{gamma_infty^ast_def}). 
By Lemma \ref{equivalent_representation}, this definition of $\beta^\ast$ agrees with the earlier definition given in (1) that used $A_\infty$ and $g_\infty$. 

(4) The final difference is in Step 4, part (C): $r^\ast>0$ and $\widehat s_{n_q}\ge 1$. 
If $r^\ast>1$, then by (\ref{r1conv}), $\widehat{s}_{n_q}>1$ and the proof is unchanged. 
A change is needed if $r^\ast=1$. 
If $\widehat s_{n_q}>1$ along a subsequence, then $\widehat{\beta}_{n_q} = \alpha$ and hence (\ref{goal0}) holds because $cv(\widehat s_{n_q}, \alpha)> cv(1, \alpha)\ge cv(1, \beta^\ast)$, combined with the fact that $T_{n_q}\rightarrow T_\infty\neq cv(1, \beta^\ast)$ by (\ref{samplepath_new}) and (\ref{95}). 
Thus, we focus on the case that $\widehat s_{n_q}=1$. 
Also let $\widehat r_q=\textup{rk}(I_{J_q}A_q)$, where $J_q = \{j\in\{1,\dots,d_A\}: e_j'A_q\widehat{\eta}_q = e_j'g_q\}$. 
We can take a further subsequence such that $\widehat r_q$ does not depend on $q$. 
By Theorem \ref{lem:rhat}(a), $\widehat r_q\in\{0,1\}$. 
If $\widehat r_q=0$, then $T_q=0$ along this subsequence. 
This implies that $T_\infty=0$ as well, so (\ref{goal0}) holds. 
Thus, we focus on the case that $\widehat r_q=1$ as well. 

In this case, we show that 
\begin{equation}
\limsup_{q\rightarrow\infty}\widehat\beta_{n_q}\le \beta^\ast \label{beta_convergence}
\end{equation}
along a further subsequence. 
Take a subsequence so that $\widehat\beta_{n_q}$ converges. 
If $\widehat\beta_{n_q}\rightarrow\alpha$, then (\ref{beta_convergence}) holds simply because $\beta^\ast\ge\alpha$. 
Suppose $\lim_{q\rightarrow\infty}\widehat\beta_{n_q}>\alpha$. 
For every $q$ large enough, consider a $\bar j_q\in J_q$ such that $e'_{\bar j_q}A_q\neq 0$. 
We can take a further subsequence so that $\bar j_q$ does not depend on $q$, and for simplicity, we suppose it is 1. 
Note that $e'_1A_q\rightarrow e'_1A^\dagger_\infty$, and the rows of $A_q$ all belong to the unit circle, so $e'_1A^\dagger_\infty\neq 0$. 
(The proof of Lemma 9 in CS23 shows that the rows of $G_q$ also can be taken to belong to the unit circle.) 
Also note that $e'_1A_q\widehat\eta_q=e'_1g_q$ for all $q$, so $e'_1A^\dagger_\infty\eta^\ast_\infty=e'_1g^\dagger_\infty$, and $1\in J^\ast$, where $J^\ast=\{j\in\{1,...,d_A\}: e'_jA^\dagger_\infty\eta^\ast_\infty=e'_j g^\dagger_\infty\}$. (By Lemma 1 in CS23, $r^\ast=\textup{rk}(I_{J^\ast}A^\dagger_\infty)$.) 
Therefore, $\widehat\beta_{n_q}$ and $\beta^\ast$ can both be defined using $e_1$ as the reference row. 
Note that $\widehat\beta_{n_q}$ is defined using equations (\ref{92})-(\ref{94}) applied to $A_q, g_q, \widetilde\Sigma_{n_q}$, and $\widehat\eta_q$. 
Let $\tau^q_j$ and $\tau^q$ be the objects defined by (\ref{92}) and (\ref{94}), respectively, evaluated at $\widehat\eta_q$. 
Also note that $\beta^\ast$ is defined using the same equations applied to $A^\dagger_\infty, g^\dagger_\infty, \Sigma_\infty$, and $\eta^\ast_\infty$. 
Let $\tau^\infty_j$ and $\tau^\infty$ be the objects defined by (\ref{92}) and (\ref{94}), respectively, evaluated at $\eta^\ast_\infty$. 

Let $J^==\{j\in\{2,...,d_A\}: \|e'_jA^\dagger_\infty\|\|e'_1A^\dagger_\infty\|=e'_jA^\dagger_\infty (A^\dagger_\infty)'e_1\}$. 
Let $J^{\neq}=\{2,...,d_A\}/J^=$. 
Fix $j\in\{1,...,d_A\}$ and consider two cases. 
(a) If $j\in J^=$, then $\tau^\infty_j=\infty$. 
(b) If $j\in J^{\neq}$, then $\tau^q_j\rightarrow \tau^\infty_j$, using the convergence of $A_q$, $g_q$, $\widetilde\Sigma_{n_q}$, and $\widehat\eta_q$ to $A^\dagger_\infty$, $g^\dagger_\infty$, $\Sigma_\infty$, and $\eta^\ast_\infty$. 
We can then calculate that 
\begin{align}
\lim_{q\rightarrow\infty}\tau^q=\lim_{q\rightarrow\infty}\inf_{j\in\{2,...,d_A\}}\tau^q_j\le \lim_{q\rightarrow\infty}\inf_{j\in J^{\neq}}\tau^q_j=\inf_{j\in J^{\neq}} \tau^\infty_j=\inf_{j\in\{2,...,d_A\}}\tau^\infty_j=\tau^\infty. 
\end{align}
It then follows from the formula for $\widehat\beta_{n_q}$ and $\beta^\ast$ that (\ref{beta_convergence}) holds. 

We finish by showing that 
\begin{equation}
\limsup_{q\rightarrow\infty}1\{T_{n_q}>cv(\widehat s_{n_q}, \widehat\beta_{n_q})\}\le 1\{T_\infty>cv(r^\ast, \beta^\ast)\} \label{goal_modified}
\end{equation}
in this case (which has $\widehat s_{n_q}=1$ and $r^\ast=1$). 
Note that $T_\infty\neq cv(r^\ast, \beta^\ast)$ by (\ref{samplepath_new}). 
If $T_\infty>cv(r^\ast, \beta^\ast)$, then (\ref{goal_modified}) holds automatically. 
If $T_\infty<cv(r^\ast, \beta^\ast)$, then (\ref{beta_convergence}) implies that $cv(r^\ast, \beta_{n_q})\ge cv(r^\ast, \beta^\ast)-\epsilon$ eventually for some $\epsilon<cv(r^\ast, \beta^\ast)-T_\infty$. 
Then (\ref{goal_modified}) follows because $T_{n_q}\rightarrow T_\infty$ and $\widehat s_{n_q}=1=r^\ast$. 
\end{proof}

\subsection{Using KKT Multipliers in the GCC Test}\label{LMGCC}

We can state Theorems \ref{thm:level_CC} and \ref{Relaxed2} with $\widehat t_n$ instead of $\widehat s_n$ as long as the multipliers are chosen carefully. 
The use of the minimum norm multiplies in (\ref{KKT_finite_q1}) in the proof of Theorem \ref{thm:level_CC} shows that, as long as $\widehat t_n$ is defined using the minimum norm multipliers satisfying the KKT conditions, the GCC test defined with $\widehat t_n$ will be valid. 
The key idea is that this sequence of multipliers is guaranteed to converge to some limit, while a different choice of the multipliers would not be. 
We state the validity of the GCC test with $\widehat{t}_n$ in Theorem \ref{Relaxed3} below. 
However, this version of the GCC test is not recommended because it involves additional computational steps: making sure the delta chosen yields the minimal K-activatable set and then minimizing the norm for choosing psi. 
Let 
\[
\psi^\ddagger(x) = \arg\min_{\psi\in\Psi^\dagger(x)} \|\psi\|. 
\]
be a generic definition of the minimum norm multipliers satisfying the KKT conditions for (\ref{QP}), where $\Psi^\dagger(x)$ is defined in \ref{mini_L}. 

\begin{theorem}\label{Relaxed3}
Theorems \ref{thm:level_CC} and \textup{\ref{Relaxed2}} continue to hold for the GCC test defined by $\widehat t_n$ if $L(\overline{\mu}_n,\psi^\ddagger(\overline{\mu}_n))$ is used for $\widehat L$ in \textup{(\ref{rtildeformula})}. 
\end{theorem}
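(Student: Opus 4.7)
The plan is to rerun the proof of Theorem \ref{thm:level_CC} (respectively Theorem \ref{Relaxed2}) essentially verbatim, substituting $\widehat{t}_{n_q}$ for $\widehat{s}_{n_q}$ at every occurrence. The crucial observation is that the multipliers $\widehat{\psi}_q$ constructed through the minimum-norm problem (\ref{KKT_finite_q1})--(\ref{KKT_finite_q4}) in Step 1 of the existing proof are precisely the minimum-norm KKT multipliers associated with a $\ddot{\gamma}_q$ that achieves the minimal K-activatable set. By Lemma \ref{L-invariance}, this coincides (up to a positive rescaling of the multipliers, which leaves the support $\{j : e_j'\psi > 0\}$ unchanged) with $\psi^\ddagger(\overline{\mu}_n)$ from the theorem statement. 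Therefore the set $\widehat{L}_q = \{j : e_j'\widehat{\psi}_q > 0\}$ used throughout the existing proof already equals $L(\overline{\mu}_n, \psi^\ddagger(\overline{\mu}_n))$, so the $\widehat{t}_{n_q}$ computed from it is exactly the DoF specified in the theorem.

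With this identification, Steps 1--3 of the original proof carry over unchanged. In particular, Step 2 delivers the convergence $\widehat{\psi}_q \to \psi^\ast_\infty$ for a minimum-norm limit multiplier $\psi^\ast_\infty$ solving the KKT conditions of (\ref{limitexperiment}), and Step 3 yields the inclusion chain $L^\ast \subseteq \widehat{L}_q \subseteq \ddot{K}_q \subseteq \widehat{K}_q$ eventually, together with the rank equalities and inequalities in (\ref{r1conv}). That chain already contains $t^\ast \leq \widehat{t}_{n_q}$ eventually and $s^\ast = r^\ast = t^\ast$ on the good event $X \notin \mathcal{M}_0$, which is all that is needed.

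The only step that requires new justification is case (A) of Step 4, which in the original proof invoked Lemma \ref{s=0_implies_T=0} to dispatch $\widehat{s}_{n_q} = 0$. I would replace it with the following direct analogue for $\widehat{t}_{n_q} = 0$. By Lemma \ref{lem:MB2}(a,b), $\widehat{t}_{n_q} = \textup{rk}(M_{C_{\widehat{L}_q}} I_{\widehat{L}_q} B)$, so $\widehat{t}_{n_q} = 0$ implies $B' I'_{\widehat{L}_q} M_{C_{\widehat{L}_q}} = 0$. Complementary slackness (\ref{KKT_finite_q4}) gives $I_{\widehat{L}_q^c}\widehat{\psi}_q = 0$, while (\ref{KKT_finite_q3}) gives $C'_{\widehat{L}_q} I_{\widehat{L}_q}\widehat{\psi}_q = 0$, so $I_{\widehat{L}_q}\widehat{\psi}_q$ lies in the kernel of $C'_{\widehat{L}_q}$, i.e., in the range of $M_{C_{\widehat{L}_q}}$. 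Hence $B'\widehat{\psi}_q = B' I'_{\widehat{L}_q} I_{\widehat{L}_q}\widehat{\psi}_q = B' I'_{\widehat{L}_q} M_{C_{\widehat{L}_q}} I_{\widehat{L}_q}\widehat{\psi}_q = 0$, and (\ref{KKT_finite_q2}) then forces $X_q = \widehat{\eta}_q$, so $T_{n_q} = 0$. Cases (B) and (C) of Step 4 carry over verbatim, using $cv(\widehat{t}_{n_q}, \alpha) \geq cv(t^\ast, \alpha) = cv(r^\ast, \alpha)$ eventually in place of the analogous inequality for $\widehat{s}_{n_q}$, together with $T_\infty \neq cv(r^\ast, \alpha)$ from (\ref{samplepath}).

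The main obstacle --- and the reason for the minimum-norm selection built into $\psi^\ddagger$ --- is ensuring that $\widehat{\psi}_q$ converges to a single limit whose strictly positive support is eventually contained in $\widehat{L}_q$. An arbitrary KKT selection could oscillate across faces of the dual polyhedron and destroy the inclusion $L^\ast \subseteq \widehat{L}_q$ on which the rank bound $t^\ast \leq \widehat{t}_{n_q}$ rests. Once the minimum-norm convention is fixed, this inclusion follows from componentwise convergence of $\widehat{\psi}_q$ to $\psi^\ast_\infty$, and no analytical input beyond what is already established in the proof of Theorem \ref{thm:level_CC} is needed.
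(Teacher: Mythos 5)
Your proposal is correct and matches the paper's (omitted) argument: the proof of Theorem \ref{thm:level_CC} already establishes $t^\ast\le\widehat t_{n_q}\le\widehat s_{n_q}$ in (\ref{r1conv}) through the minimum-norm multipliers defined in (\ref{KKT_finite_q1})--(\ref{KKT_finite_q4}), and only that inequality is used in Step 4, so replacing $\widehat s_{n_q}$ by $\widehat t_{n_q}$ changes nothing. Two minor remarks: your re-derivation in Step 4(A) is unnecessary because Lemma \ref{s=0_implies_T=0} is already stated in terms of $\widehat t=0$, and the identification of $\widehat L_q$ with $L(\overline{\mu}_n,\psi^\ddagger(\overline{\mu}_n))$ follows from the positive rescaling of the KKT system under the change of variables in (\ref{Finitesample}) (which preserves the support of the minimum-norm multiplier) rather than from Lemma \ref{L-invariance}, which concerns invariance to $\Sigma$.
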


The proof of Theorem \ref{Relaxed3} is omitted because it is the same as the proof of Theorem \ref{thm:level_CC}. 
Note that the proof of Theorem \ref{thm:level_CC} only uses the fact that $\widehat s_n\ge \widehat t_n$ in (\ref{r1conv}). 
This is not affected by the modifications in the proof of Theorem \textup{\ref{Relaxed2}}. 

\section{Lemmas}\label{UsefulLemmas}

Section \ref{E.1_old} states lemmas used in the proof of Theorem \ref{lem:rhat}. 
Section \ref{E.2} states lemmas used in the proofs of Theorems \ref{lem:delta} and \ref{thm:level_CC}. 
Section \ref{E.3} states lemmas used in Section \ref{Section_Generalizations}. 
Section \ref{E.4} proves the lemmas. 

\subsection{Lemmas for the Proof of Theorem \ref{lem:rhat}}\label{E.1_old}

\begin{lemma}\label{lem:MB2} 
Let $B$ be a $d_C\times d_\mu$ matrix and $C$ a $d_C\times d_\delta$ matrix. 
Then
\begin{itemize}
\item[\textup{(a)}] $\textup{rk}(M_CB) = \textup{rk}([B,C])-\textup{rk}(C)$, where $M_C=I_{d_C}-C(C'C)^+C'$.
\item[\textup{(b)}] For any $d_\mu\times d_\delta$ matrix $\Pi$, we have $\textup{rk}([B,C - B\Pi]) = \textup{rk}([B,C])$. 
\item[\textup{(c)}] If $L\subseteq K\subseteq\{1,\dots,d_C\}$, then $\textup{rk}(M_{C_{L}}B_{L})\leq \textup{rk}(M_{C_{K}}B_{K})$, where $C_K=I_KC$. 
\end{itemize}
\end{lemma}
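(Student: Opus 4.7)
The three parts are essentially linear algebra facts, and I would prove them in order using the explicit form $M_C = I - C(C'C)^+ C'$ as the orthogonal projection onto $\textup{col}(C)^\perp$.

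For part (a), the clean way is a dimension count on column spaces. Since $M_C$ is the orthogonal projection onto $\textup{col}(C)^\perp$, the column space of $M_C B$ equals the image of $\textup{col}(B)$ under that projection, which has dimension $\textup{rk}(B) - \dim(\textup{col}(B)\cap \textup{col}(C))$. Combined with the standard identity $\textup{rk}([B,C]) = \textup{rk}(B) + \textup{rk}(C) - \dim(\textup{col}(B)\cap\textup{col}(C))$, rearranging gives $\textup{rk}(M_C B) = \textup{rk}([B,C]) - \textup{rk}(C)$.

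For part (b), I would write the change-of-basis identity
\[
[B,\, C - B\Pi] \;=\; [B,\, C]\begin{pmatrix} I_{d_\mu} & -\Pi \\ \mathbf{0} & I_{d_\delta}\end{pmatrix}.
\]
The block matrix on the right is unit upper triangular and hence invertible, so right-multiplication by it preserves column space and rank.

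For part (c), part (a) alone is inconclusive because dropping rows decreases both $\textup{rk}([B_K,C_K])$ and $\textup{rk}(C_K)$, and the two effects pull against each other. The cleanest route is to use the left null space characterization
\[
\textup{rk}(M_C B) \;=\; d_\mu - \dim\{v\in\R^{d_\mu}: Bv\in \textup{col}(C)\},
\]
which follows from $M_C Bv = 0 \iff Bv \in \textup{col}(C)$. Writing $\tilde I_{L,K}$ for the row-selection matrix with $B_L = \tilde I_{L,K} B_K$ and $C_L = \tilde I_{L,K} C_K$, any $v$ with $B_K v = C_K w$ also satisfies $B_L v = \tilde I_{L,K} C_K w = C_L w$, giving the set inclusion $\{v: B_K v\in\textup{col}(C_K)\} \subseteq \{v: B_L v\in\textup{col}(C_L)\}$. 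Taking dimensions and using the displayed identity yields $\textup{rk}(M_{C_L} B_L) \le \textup{rk}(M_{C_K} B_K)$.

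The only step requiring any ingenuity is (c); (a) and (b) are routine once one has the right formulas in hand. The main obstacle is resisting the temptation to combine (a) with the two obvious rank inequalities $\textup{rk}([B_L,C_L]) \le \textup{rk}([B_K,C_K])$ and $\textup{rk}(C_L)\le\textup{rk}(C_K)$, since they go in opposite directions for the claim; one must instead pass through the null space.
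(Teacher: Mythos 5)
Your proof is correct, and parts (a) and (c) take genuinely different routes from the paper. For (a), the paper first notes $\textup{rk}([B,C])=\textup{rk}([M_CB,C])$ and then computes the Gram matrix $[M_CB,C]'[M_CB,C]$, which is block diagonal because $M_CC=\mathbf{0}$, so the rank splits as $\textup{rk}(M_CB)+\textup{rk}(C)$; your argument instead counts dimensions of column spaces directly via $\textup{rk}(M_CB)=\textup{rk}(B)-\dim(\textup{col}(B)\cap\textup{col}(C))$ and the inclusion--exclusion formula for $\dim(\textup{col}(B)+\textup{col}(C))$. Both are clean; the Gram-matrix route avoids invoking the subspace dimension formula, while yours makes the geometric content more transparent. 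For (c), the paper writes $M_{C_K}B_K=B_K-C_K\Delta$ with $\Delta=(C_K'C_K)^+C_K'B_K$, selects the rows in $L$, and hits the result with $M_{C_L}$ to obtain the factorization $M_{C_L}B_L=M_{C_L}I_LI_K'M_{C_K}B_K$, whence the rank inequality is immediate from submultiplicativity of rank; you instead pass through the null-space characterization $\ker(M_CB)=\{v:Bv\in\textup{col}(C)\}$ and show the null spaces are nested. Your route requires one extra observation (that $\ker M_C=\textup{col}(C)$) but is arguably more conceptual; the paper's is a shorter algebraic verification. Part (b) is the same idea in both (your explicit unit upper-triangular block matrix is exactly what "elementary column operations" means). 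Your closing remark about why (a) plus rank monotonicity cannot be combined naively for (c) is accurate and identifies the real content of that part.
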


\begin{lemma}\label{lem:KKTB2_new}
Fix $\overline{\mu}_n\in\R^{d_\mu}$ and let $(\widehat\mu_n, \widehat\delta_n, \widehat\psi_n)$ satisfy \textup{(\ref{BCd_KKT_new})-(\ref{BCd_ComplementarySlackness_new})}. 
Fix $K\subseteq\{1,...,d_C\}$ satisfying $\widehat L\subseteq K\subseteq\widehat K$, where $\widehat K=\{j\in\{1,...,d_C\}: e'_j B \widehat \mu_n+e'_j\overline{C}_n\widehat\delta=e'_j d\}$ and $\widehat L=\{j\in\{1,...,d_C\}: e'_j\widehat\psi_n>0\}$. 
Let $\xi_K=\left(M_{C_K}B_{K}\widetilde\Sigma_n^{1/2}\right)^{+} M_{C_K} I_K d$, where $C_K=I_KC_\infty$, $B_K=I_KB$, and $M_{C_K}=I_{|K|}-C_K(C'_KC_K)^+C'_K$. 
It follows that 
\begin{align}
\widetilde\Sigma_n^{-1/2}(\overline{\mu}_n-\widehat \mu_n)&=\left(M_{C_K}B_K\widetilde\Sigma_n^{1/2}\right)^+M_{C_K}B_K\overline{\mu}_n-\xi_K\in \textup{span}(\widetilde\Sigma_n^{1/2}B'_KM_{C_K}), \text{ and }\label{eq:KKTB2_new}\\
\widetilde\Sigma_n^{-1/2}\widehat \mu_n-\xi_K&\in \textup{span}(\widetilde\Sigma_n^{1/2}B'_KM_{C_K})^\perp. \label{eq:KKTB1_new}
\end{align}
\end{lemma}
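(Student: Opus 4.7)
The plan is to exploit the three KKT relations that survive after restricting attention to indices in $K$, then use the pseudoinverse identity $A^{+}A = \text{Proj}_{\text{span}(A')}$ with $A = M_{C_K}B_K\widetilde\Sigma_n^{1/2}$ to read off the two claims as ``in span'' and ``in orthogonal complement'' statements respectively. The two hypotheses $\widehat L \subseteq K \subseteq \widehat K$ each contribute exactly one of these directions.

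First I would use $\widehat L \subseteq K$ to rewrite $B'\widehat\psi_n$ as $B_K'(I_K\widehat\psi_n)$ (the multipliers outside $K$ are zero by definition of $\widehat L$), and similarly $\overline{C}_n'\widehat\psi_n = 0$ as $C_K'(I_K\widehat\psi_n) = 0$, which gives $I_K\widehat\psi_n = M_{C_K}I_K\widehat\psi_n$. Substituting into the stationarity condition \eqref{BCd_KKT_new} yields
\begin{equation*}
2n\widetilde\Sigma_n^{-1/2}(\overline{\mu}_n-\widehat\mu_n) = \widetilde\Sigma_n^{1/2}B_K'M_{C_K}(I_K\widehat\psi_n) \in \textup{span}(\widetilde\Sigma_n^{1/2}B_K'M_{C_K}),
\end{equation*}
which is the membership half of \eqref{eq:KKTB2_new}.

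Next I would use $K \subseteq \widehat K$ to conclude $B_K\widehat\mu_n + C_K\widehat\delta_n = I_K d$; premultiplying by $M_{C_K}$ and using $M_{C_K}C_K = 0$ gives $M_{C_K}B_K\widehat\mu_n = M_{C_K}I_K d$. Writing this as $A\widetilde\Sigma_n^{-1/2}\widehat\mu_n = M_{C_K}I_K d$ and applying $A^{+}$ produces $A^{+}A\,\widetilde\Sigma_n^{-1/2}\widehat\mu_n = \xi_K$. Since $A^{+}A$ is the orthogonal projection onto $\textup{span}(A') = \textup{span}(\widetilde\Sigma_n^{1/2}B_K'M_{C_K})$ and $\xi_K$ itself lies in that span (because $A^{+} = A'(AA')^{+}$), the residual $\widetilde\Sigma_n^{-1/2}\widehat\mu_n - \xi_K$ must lie in $\textup{span}(\widetilde\Sigma_n^{1/2}B_K'M_{C_K})^\perp$, which is \eqref{eq:KKTB1_new}.

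Finally, to get the closed-form in \eqref{eq:KKTB2_new} I would compute the projection of $\widetilde\Sigma_n^{-1/2}\overline{\mu}_n$ onto $\textup{span}(A')$ in two ways: directly it equals $A^{+}A\,\widetilde\Sigma_n^{-1/2}\overline{\mu}_n = A^{+}M_{C_K}B_K\overline{\mu}_n$; via the decomposition $\widetilde\Sigma_n^{-1/2}\overline{\mu}_n = \widetilde\Sigma_n^{-1/2}(\overline{\mu}_n-\widehat\mu_n) + \widetilde\Sigma_n^{-1/2}\widehat\mu_n$, where the first summand already lies in $\textup{span}(A')$ (by step one) and the second splits as $\xi_K$ in the span plus a piece in the orthogonal complement (by step two), so the projection equals $\widetilde\Sigma_n^{-1/2}(\overline{\mu}_n-\widehat\mu_n) + \xi_K$. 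Equating the two expressions for the projection yields the identity. The only potential obstacle is making the pseudoinverse bookkeeping precise, in particular verifying $\xi_K \in \textup{span}(A')$ and that $A^{+}A$ is indeed the relevant projection when $A$ is not full rank; both are standard facts about the Moore-Penrose inverse and require no additional hypotheses beyond those already imposed.
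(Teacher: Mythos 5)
Your proof is correct and rests on the same ingredients as the paper's: using $\widehat L\subseteq K$ to write $B'\widehat\psi_n=B_K'M_{C_K}I_K\widehat\psi_n$, using $K\subseteq\widehat K$ to get $M_{C_K}B_K\widehat\mu_n=M_{C_K}I_Kd$, and the Moore--Penrose facts that $A^+A$ projects onto $\textup{span}(A')$ and $A^+$ maps into $\textup{span}(A')$ for $A=M_{C_K}B_K\widetilde\Sigma_n^{1/2}$. The only difference is organizational --- the paper first derives the closed form in \textup{(\ref{eq:KKTB2_new})} by substitution and left-multiplication by $(M_{C_K}B_K\widetilde\Sigma_n^{1/2})^+M_{C_K}$ and then obtains \textup{(\ref{eq:KKTB1_new})} by rearranging, whereas you establish \textup{(\ref{eq:KKTB1_new})} directly from the active-set identity and recover the closed form by decomposing the projection of $\widetilde\Sigma_n^{-1/2}\overline{\mu}_n$ --- and both orderings are valid.
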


\subsection{Lemmas for the Proofs of Theorems \ref{lem:delta} and \ref{thm:level_CC}}\label{E.2}

The proofs of Theorems \ref{lem:delta} and \ref{thm:level_CC} rely on the convergence of the argmin of a sequence of CQPPs. 
The following lemma gives a general statement of such convergence. 
It implies that the key condition is Kuratowski convergence of the sequence of constraint sets. 
In the following lemmas, we give sufficient conditions for Kuratowski convergence of the constraint sets. 
This lemma generalizes Lemma 7 in CS23 in that it no longer requires the constraint sets to be polyhedral sets with non-negative constants. 
The non-negativity condition is replaced by the non-emptiness of the limit constraint set. 

\begin{lemma}\label{Projection-convergence}
For $n\in\mathbb{N}\cup\{\infty\}$, let $x_n\in\mathbb{R}^{d_x}$, let $\Sigma_n$ be positive definite and symmetric $d_x\times d_x$ matrices, and let $S_n\subseteq\mathbb{R}^{d_x}$ be closed and convex. 
If, as $n\rightarrow\infty$, $x_n\rightarrow x_\infty$, $\Sigma_n\rightarrow\Sigma_\infty$, and $S_n\rightarrow_K S_\infty$ with $S_\infty\neq\emptyset$, then 
$\argmin_{x\in S_n}(x_n-x)'\Sigma_n^{-1}(x_n-x)\rightarrow \argmin_{x\in S_\infty}(x_\infty-x)'\Sigma_\infty^{-1}(x_\infty-x)$ as $n\rightarrow\infty$. 
\end{lemma}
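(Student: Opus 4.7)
The plan is to exploit strict convexity of the Mahalanobis objective together with the two clauses of Kuratowski convergence. Let $f_n(x) = (x_n-x)'\Sigma_n^{-1}(x_n-x)$ and $f_\infty(x) = (x_\infty-x)'\Sigma_\infty^{-1}(x_\infty-x)$. Because $\Sigma_n$ and $\Sigma_\infty$ are positive definite and $S_n$, $S_\infty$ are closed and convex, whenever the feasible set is nonempty the minimizer is unique. Denote these minimizers by $\widehat x_n$ and $\widehat x_\infty$; the latter exists since $S_\infty \neq \emptyset$ by hypothesis, and the former exists eventually because clause (1) of Kuratowski convergence applied to any fixed $\bar x \in S_\infty$ yields a sequence $\bar x_n \in S_n$ with $\bar x_n \to \bar x$, so $S_n$ is nonempty for large $n$.

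First I would show that $\{\widehat x_n\}$ is bounded. Fix $\bar x\in S_\infty$ and the associated $\bar x_n\in S_n$ from above. By optimality, $f_n(\widehat x_n)\le f_n(\bar x_n)$, and the right-hand side converges to $f_\infty(\bar x) < \infty$. Since $\Sigma_n\to\Sigma_\infty$ with $\Sigma_\infty$ positive definite, the smallest eigenvalue of $\Sigma_n^{-1}$ is bounded below by some $\lambda > 0$ eventually, giving $\lambda\|x_n-\widehat x_n\|^2 \le f_n(\widehat x_n) = O(1)$. Combined with $x_n\to x_\infty$, this forces $\{\widehat x_n\}$ to be bounded.

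Next I would take an arbitrary convergent subsequence $\widehat x_{n_k}\to \widetilde x$ and identify $\widetilde x$ with $\widehat x_\infty$. Clause (2) of Kuratowski convergence gives $\widetilde x \in S_\infty$. To show optimality, fix an arbitrary $y\in S_\infty$; by clause (1), choose $y_{n_k}\in S_{n_k}$ with $y_{n_k}\to y$. Then optimality of $\widehat x_{n_k}$ gives $f_{n_k}(\widehat x_{n_k}) \le f_{n_k}(y_{n_k})$; passing to the limit using $x_{n_k}\to x_\infty$, $\Sigma_{n_k}\to\Sigma_\infty$, $\widehat x_{n_k}\to\widetilde x$, and $y_{n_k}\to y$ yields $f_\infty(\widetilde x)\le f_\infty(y)$. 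Since $y \in S_\infty$ was arbitrary, $\widetilde x$ minimizes $f_\infty$ over $S_\infty$, and strict convexity forces $\widetilde x = \widehat x_\infty$. Because $\{\widehat x_n\}$ is bounded and every convergent subsequence has the same limit $\widehat x_\infty$, the full sequence converges to $\widehat x_\infty$.

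I do not anticipate a genuine obstacle, as this is a standard stability result for projections onto Kuratowski-convergent convex sets; the only technicalities are confirming that $S_n\neq\emptyset$ for large $n$ (which uses $S_\infty\neq\emptyset$ plus clause (1)) and that the quadratic form converges jointly in its three arguments, both of which follow from continuity and the bounded-below eigenvalues of $\Sigma_n^{-1}$. The strict convexity of the Mahalanobis norm is essential in two places: it supplies uniqueness of the minimizers and it converts subsequential optimality into full-sequence convergence.
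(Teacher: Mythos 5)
Your proposal is correct and follows essentially the same route as the paper's proof: establish boundedness of $\widehat x_n$ by comparing against a recovering sequence for a point of $S_\infty$, extract a convergent subsequence, use the two clauses of Kuratowski convergence to show the subsequential limit is feasible and optimal for the limit problem, and invoke uniqueness from strict convexity to conclude full-sequence convergence. The only cosmetic difference is that you identify the subsequential limit by comparing against recovering sequences for arbitrary $y\in S_\infty$, whereas the paper compares the limiting objective value directly against the limsup bound obtained from the recovering sequence for $\widehat x_\infty$; you also spell out the eigenvalue lower bound on $\Sigma_n^{-1}$ that the paper leaves implicit.
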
 

The next lemma gives sufficient conditions for Kuratowski convergence of a sequence of polyhedral sets. 
\begin{lemma}\label{K-convergence_nonempty2_new} Consider a sequence of $d_A\times d_\mu$ matrices $\{A_n\}$ and $d_A$-dimensional vectors $h_n$. 
Suppose 
\begin{itemize}
\item[\textup{(i)}] $A_n\to A_0$ and $h_n\to h_0$ (elementwise) for a finite matrix $A_0$ and a $(-\infty,+\infty]^{d_A}$-valued vector $h_0$ as $n\to\infty$, 
\item[\textup{(ii)}] \textup{poly}$(A_n,h_n)\neq \emptyset$ eventually, and 
\item[\textup{(iii)}] $\textup{rk}(I_JA_n) = \textup{rk}(I_JA_0)$ eventually for any $J\subseteq \{1,...,d_A\}$ that is activatable for $\poly(A_0, h_0)$ and such that $\poly(I_JA_0, I_Jh_0)$ is an affine subspace of $\R^{d_\mu}$. 
\end{itemize}
Then $\textup{poly}(A_n,h_n)\overset{K}{\to}\textup{poly}(A_0,h_0)$. 
\end{lemma}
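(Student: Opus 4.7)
Plan: I need to verify the two halves of Kuratowski convergence of the polyhedra. The outer half---every convergent subsequence $x_m\in\poly(A_{n_m},h_{n_m})$ with limit $x_\infty$ satisfies $x_\infty\in\poly(A_0,h_0)$---follows at once from continuity of the linear map $x\mapsto Ax$: the componentwise inequality $A_{n_m}x_m\le h_{n_m}$ passes to the limit entry by entry, trivially satisfied at coordinates where $e_j'h_0=+\infty$. The substantive content is the inner half: every $x\in\poly(A_0,h_0)$ is the limit of some sequence $x_n\in\poly(A_n,h_n)$.

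For the inner half, relative-interior points are dense in any polyhedron, so a diagonal argument lets me reduce to the case where $x$ lies in the relative interior of $\poly(A_0,h_0)$; then the active set at $x$ is precisely the minimal activatable set $J^\dagger$. By Lemma \ref{minimal_activatable}(b), $\poly(I_{J^\dagger}A_0,I_{J^\dagger}h_0)$ equals the affine subspace $W=\{y:I_{J^\dagger}A_0 y=I_{J^\dagger}h_0\}$, so hypothesis (iii) applies to $J^\dagger$ and yields $r:=\rk(I_{J^\dagger}A_n)=\rk(I_{J^\dagger}A_0)$ eventually. After passing to a subsequence, I fix a row basis $J_0\subseteq J^\dagger$ of size $r$ with $I_{J_0}A_0$ of full row rank; lower semicontinuity of rank makes $I_{J_0}A_n$ full row rank too, so its right pseudoinverse $(I_{J_0}A_n)^+$ varies continuously in $n$. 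Set $x_n^{(0)}:=x+(I_{J_0}A_n)^+(I_{J_0}h_n-I_{J_0}A_n x)$: then $x_n^{(0)}\to x$, the $J_0$-inequalities hold with equality at $x_n^{(0)}$, the strict inequalities for $j\notin J^\dagger$ persist for large $n$, and for $j\in J^\dagger\setminus J_0$---using that on $W$ each row $e_j'A_0$ is a fixed linear combination of the rows of $I_{J_0}A_0$, with the same coefficients applied to $h_0$---a short calculation shows the violations $(e_j'A_n x_n^{(0)}-e_j'h_n)_+$ vanish as $n\to\infty$.

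Finally, I take $x_n$ to be the projection of $x_n^{(0)}$ onto $\poly(A_n,h_n)$, which is nonempty by condition (ii). Hoffman's error bound for linear inequalities yields $\|x_n-x_n^{(0)}\|\le L(A_n)\,\|(A_n x_n^{(0)}-h_n)_+\|$, which tends to zero as long as the Hoffman constants $L(A_n)$ remain bounded. The main obstacle is exactly this last step: one must ensure that $L(A_n)$ does not blow up under the perturbation of the defining matrix. The right intuition is that $L(A_n)$ admits a bound in terms of reciprocals of the smallest nonzero singular values of the rank-stable submatrices of $A_n$, and rank stability guarantees these singular values converge to positive limits. So condition (iii), applied to the minimal activatable sets of all potential limiting constraint structures, supplies precisely the structural control needed to keep $L(A_n)$ bounded and close the inner-limit argument.
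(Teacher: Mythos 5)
Your outer limit and the reduction to relative-interior points are fine, and the construction of $x_n^{(0)}$ is sound as far as it goes: it converges to $x$, satisfies the $J_0$-rows with equality and the rows outside $J^\dagger$ strictly for large $n$, and its violations on $J^\dagger\setminus J_0$ tend to zero. The genuine gap is the step you yourself flag: restoring feasibility by projecting $x_n^{(0)}$ onto $\poly(A_n,h_n)$ and invoking Hoffman's error bound requires the Hoffman constants $L(A_n)$ to stay bounded along the sequence, and condition (iii) does not deliver this. The Hoffman constant of $A_n$ is governed by the conditioning of \emph{all} row subsets that can be active at a projection onto $\poly(A_n,h_n)$ --- in particular subsets that are activatable for the finite-$n$ polyhedra but \emph{not} activatable for $\poly(A_0,h_0)$ (nearly parallel rows of $A_n$ whose common face disappears in the limit are exactly the degeneracy this lemma is designed to survive), as well as activatable subsets that do not define affine subspaces. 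Condition (iii) constrains only the activatable $J$ for which $\poly(I_JA_0,I_Jh_0)$ is an affine subspace, so it says nothing about the near-singular submatrices that can drive $L(A_n)\to\infty$. The argument then becomes an unresolved race between the blow-up of $L(A_n)$ and the decay of the violations, so the proof is incomplete at its decisive step; the closing assertion that rank stability ``supplies precisely the structural control needed'' is the statement that would have to be proved, and it is not true of the Hoffman constant as usually defined.

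The paper closes exactly this hole by avoiding feasibility restoration altogether. Its auxiliary Lemma \ref{K-convergence_nonempty2_new2} starts from an \emph{arbitrary feasible} point $x_n\in\poly(A_n,h_n)$ and maps it to $z_n=A_n'I_K'(I_KA_nA_n'I_K')^{-1}I_KA_nx_n$ for a row basis $K$ of the active set: then $I_JA_nz_n=I_JA_nx_n\le I_Jh_n$ holds identically (feasibility of the critical rows is inherited from $x_n$, with no error bound needed), the remaining rows are strictly slack in the limit, and $z_n\to z_0$ follows from a compactness argument showing that any limit of $z_n/\|z_n\|$ would lie in $S\cap S^{\perp}=\{0\}$. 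The reduction there is also slightly different from yours --- to points whose active set defines a linear subspace after translation, via Lemmas 11 and 13 of CS23, rather than to relative-interior points --- but that difference is cosmetic. If you want to keep your architecture, you must replace the Hoffman step with a construction of this inheritance type, or prove a uniform error bound tailored to the specific violated rows; that is where the real work of the lemma lies.
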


\textbf{Remarks: }(1) \textit{The statement of Lemma \ref{K-convergence_nonempty2_new} is very similar to Lemma 8 in CS23. 
Relative to Lemma 8 in CS23, Lemma \ref{K-convergence_nonempty2_new} does not require $h_n$ to be nonnegative, but instead requires $\poly(A_n,h_n)$ to be nonempty (eventually). 
Lemma \ref{K-convergence_nonempty2_new} also reduces the sets $J\subseteq\{1,...,d_A\}$ for which $\textup{rk}(I_JA_n) = \textup{rk}(I_JA_0)$ eventually from all subsets of $\{1,...,d_A\}$ to only those subsets which are activatable for $\poly(A_0, h_0)$ and for which $\poly(I_JA_0, I_J h_0)$ is an affine subspace of $\R^{d_\mu}$.} 

(2) \textit{The condition that ``$\poly(I_JA_0, I_Jh_0)$ is an affine subspace of $\R^{d_\mu}$'' says that this collection of inequalities is implicitly  a collection of equalities. 
It is equivalent to saying $\{x\in\R^{d_\mu}: I_J A_0x\le I_J h_0\}=\{x\in\R^{d_\mu}: I_J A_0x=I_J h_0\}$. 
That is, all of the inequalities are implicitly forced to hold with equality. 
It is impossible for any one of them to be slack. 
This is a very small subset of the set of all $J\subseteq\{1,...,d_A\}$, even among those that are activatable for $\poly(A_0, h_0)$. 
Lemma \ref{minimal_activatable} implies that this is the set of minimally activatable inequalities for $\poly(A_0,h_0)$. 
Practically, this includes any equalities specified using two inequalities, together with any accidental equalities specified among the inequalities.}\medskip

While Lemma \ref{K-convergence_nonempty2_new} is sufficient for a sequence of polyhedral sets, the next lemma gives Kuratowski convergence of a sequence of polyhedral sets after projecting out nuisance parameters. 
Recall that $\ppoly(B,h;C)=\{x\in\R^{d_\mu}: Bx+C\delta\le h \text{ for some }\delta\in\R^{d_\delta}\}$ for a $d_C\times d_\mu$ matrix $B$, $d_C\times d_\delta$ matrix $C$ and a $d_C\times 1$ vector $h$. 
Here, we allow some of the elements of $h$ be $+\infty$. 
Recall the definition of the minimal activatable set  given in Section \ref{activatability}. 

\begin{lemma}\label{projected_poly_convergence}
Let $C_n$ be a sequence of $d_C\times d_\delta$ matrices converging to $C_\infty$. 
Let $h_n$ be a sequence of nonnegative vectors converging to $h_\infty$. 
Let $B$ be a fixed $d_C\times d_\mu$ matrix. 
Suppose $\textup{rk}(I_J[B, C_n])=\textup{rk}(I_J[B, C_\infty])$ eventually for all $J\subseteq\{1,2,\dots,d_C\}$. 
Also suppose for every $x\in\ppoly(B, h_\infty; C_\infty)$, $\rk(I_JC_n)=\rk(I_JC_\infty)$ eventually as $n\rightarrow\infty$, where $J$ is the minimal activatable set for $\poly(C_\infty, h_\infty-Bx)$. 
Then, $\ppoly(B, h_n; C_n)\overset{K}{\to} \ppoly(B, h_\infty; C_\infty)$. 
\end{lemma}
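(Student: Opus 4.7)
I would prove Kuratowski convergence by establishing the inner and outer limits separately, using Lemma \ref{K-convergence_nonempty2_new} applied to the joint polyhedral set $\poly([B, C_n], h_n) \subseteq \R^{d_\mu+d_\delta}$ as the starting point. The three conditions of that lemma are easily verified: elementwise convergence $[B, C_n] \to [B, C_\infty]$ and $h_n \to h_\infty$ is given; non-emptiness holds because $h_n \geq \mathbf{0}$ forces $\mathbf{0} \in \poly([B, C_n], h_n)$; and the rank-stability condition for activatable subsets whose restricted polyhedra define affine subspaces of $\R^{d_\mu+d_\delta}$ is subsumed by the first rank hypothesis of the present lemma. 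Thus $\poly([B, C_n], h_n) \overset{K}{\to} \poly([B, C_\infty], h_\infty)$. The inner limit for the projections then follows by a straightforward lift: for any $\mu_\infty \in \ppoly(B, h_\infty; C_\infty)$, pick any $\delta_\infty$ with $(\mu_\infty, \delta_\infty) \in \poly([B, C_\infty], h_\infty)$, apply the inner limit of the joint set to obtain $(\mu_n, \delta_n) \in \poly([B, C_n], h_n)$ converging to $(\mu_\infty, \delta_\infty)$, and project onto the first coordinate.

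The outer limit is the main obstacle. Given $\mu_n \in \ppoly(B, h_n; C_n)$ with $\mu_n \to \mu_\infty$, some $\delta_n$ exists with $(\mu_n, \delta_n) \in \poly([B, C_n], h_n)$, but a priori $\|\delta_n\|$ may diverge, so the outer limit of the joint set cannot be invoked directly---projection does not in general preserve Kuratowski convergence when the fiber direction is unbounded. My plan is a Farkas-style proof by contradiction: suppose $\mu_\infty \notin \ppoly(B, h_\infty; C_\infty)$, so by Farkas' lemma there exists $\lambda \geq \mathbf{0}$ with $\lambda' C_\infty = \mathbf{0}$ and $\lambda'(B\mu_\infty - h_\infty) > 0$. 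If one can construct a sequence $\lambda_n \geq \mathbf{0}$ with $\lambda_n' C_n = \mathbf{0}$ and $\lambda_n \to \lambda$, then combining $\lambda_n \geq \mathbf{0}$ with $B\mu_n + C_n \delta_n \leq h_n$ yields $\lambda_n'(B\mu_n - h_n) \leq 0$ for every $n$, while the right-hand side converges to $\lambda'(B\mu_\infty - h_\infty) > 0$, a contradiction.

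The hardest part is the lifting step $\lambda \mapsto \lambda_n$, and this is exactly where the second rank hypothesis enters. Let $J = \textup{supp}(\lambda)$, so $I_J\lambda$ is strictly positive and $(I_J\lambda)' I_J C_\infty = \mathbf{0}$. Relating $J$ to the minimal activatable set at a reference point $x \in \ppoly(B, h_\infty; C_\infty)$---for instance $x = \mathbf{0}$, which belongs to this set because $h_\infty \geq \mathbf{0}$---via Lemma \ref{minimal_activatable}, the second rank hypothesis forces $\rk(I_J C_n) = \rk(I_J C_\infty)$ eventually. Stability of rank then allows perturbing $I_J\lambda$ to a sequence in the left null space of $I_J C_n$ converging to $I_J\lambda$, and strict positivity is preserved for large $n$; setting the off-support entries to zero yields the required $\lambda_n$. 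The most delicate technical point is justifying the connection between the support of $\lambda$ and a minimal activatable set---this will likely require first replacing $\lambda$ with a suitable relative-interior element of the polyhedral cone $\{w \geq \mathbf{0} : w' C_\infty = \mathbf{0}\}$, possibly combined with a small perturbation of $\mu_\infty$ along a direction that preserves the Farkas infeasibility, so that the support coincides with a minimally activatable subset to which the second rank hypothesis directly applies.
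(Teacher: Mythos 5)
Your inner-limit argument is correct and is the same as the paper's: Lemma \ref{K-convergence_nonempty2_new} applied to the joint sets $\poly([B,C_n],h_n)$ (with non-emptiness from $(\mathbf{0},\mathbf{0})$ and condition (iii) subsumed by the first rank hypothesis), followed by projection onto the $\mu$-coordinate. Your outer-limit strategy is genuinely different from the paper's, which avoids duality altogether: it projects $x_q$ onto a box-truncated version of $\ppoly(B,h_{n_q};C_{n_q})$, passes to the limit inside the box via Lemma \ref{BoxLemma}, and then uses the second rank hypothesis through Lemma \ref{K-convergence_nonempty2_new} applied to the fibers $\poly(C_{n_q},h_{n_q}-Bz_q)$ to show the box constraint is eventually slack, so the truncated projection equals $x_q$ itself. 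Your Farkas route is attractive because $\lambda_n'C_n=\mathbf{0}$ eliminates the possibly divergent $\delta_n$ cleanly, but as written it has a gap exactly where you flag one, and your proposed repair does not close it.

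The gap is the construction of a certificate $\lambda$ whose support is a set $J$ to which the second rank hypothesis applies. That hypothesis covers only the minimal activatable sets $K^\dagger(x)$ of the fibers $\poly(C_\infty,h_\infty-Bx)$ at \emph{feasible} points $x$; these depend on $B$, $h_\infty$, and $x$. A relative-interior element of the cone $\{w\ge\mathbf{0}:w'C_\infty=\mathbf{0}\}$ has support equal to the maximal support over that cone, which is determined by $C_\infty$ alone and need not coincide with any $K^\dagger(x)$ (nor does perturbing $\mu_\infty$ change this cone), so the rank stability you need is simply not available for that $J$. Note also that you cannot substitute a superset: if rank stability holds for some $K\supsetneq\textup{supp}(\lambda)$, the Moore--Penrose perturbation produces $\lambda_n$ whose entries on $K\setminus\textup{supp}(\lambda)$ converge to zero from either side, and a negative entry there destroys the inequality $\lambda_n'(B\mu_n+C_n\delta_n-h_n)\le 0$ because the corresponding slack is uncontrolled. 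The repair that works is to route the certificate through the projection: let $y$ be the projection of $\mu_\infty$ onto $\ppoly(B,h_\infty;C_\infty)$, choose a fiber witness $\delta^\ast$ achieving $K=K^\dagger(y)$ (Lemma \ref{delta_perturbation}), and take the associated KKT multiplier $\psi$, which is supported in $K$ and satisfies $C_\infty'\psi=\mathbf{0}$ and $\psi'(B\mu_\infty-h_\infty)=2\|\mu_\infty-y\|^2>0$ by complementary slackness. Since $\poly(I_KC_\infty,I_K(h_\infty-By))$ is a nonempty affine subspace (Lemma \ref{minimal_activatable}(b)), the implicit-equality form of Farkas' lemma supplies a strictly positive $u$ with $u'I_KC_\infty=\mathbf{0}$ and $u'I_K(h_\infty-By)=0$, so $\lambda=\psi+\epsilon I_K'u$ for small $\epsilon>0$ is a certificate with support exactly $K$, strictly positive there (and $I_Kh_\infty<\infty$ automatically, since infinitely slack constraints are never activatable). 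Only then does your lift go through, via $\rk(I_KC_n)=\rk(I_KC_\infty)$ and Lemma \ref{MP_convergence}. These steps are substantive and missing from the proposal; with them added, your argument would constitute a valid alternative proof of the outer limit.
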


The next lemma shows that the structure of $C=B\Pi+D$, together with convergence of the minimizers of a CQPP, implies the convergence of the multipliers on the inequalities along a subsequence. 
\begin{lemma}\label{multiplier_convergence}
Let $B$ and $D$ be fixed $d_C\times d_\mu$ and $d_C\times d_\delta$ matrices, respectively. 
Let $\Pi_n$ be a sequence of $d_\mu\times d_\delta$ matrices converging to $\Pi_\infty$. 
Let $h_n$ be a sequence of $d_C$-dimensional vectors converging to $h_\infty$. 
Let $x_n$ be a sequence of $d_\mu$-dimensional vectors converging to $x_\infty$. 
Let $\Sigma_n$ be a sequence of $d_\mu\times d_\mu$ symmetric and positive definite matrices converging to $\Sigma_\infty$, also a symmetric and positive definite matrix. 
For any $n\in\mathbb{N}\cup\{\infty\}$, consider the CQPP
\begin{equation}
\underset{{\{(\mu,\delta): B\mu+C_n\delta\le h_n\}}}{\textup{min}} (x_n-\mu)'\Sigma_n^{-1}(x_n-\mu), \label{multiplier_lemma_CQPP}
\end{equation}
where $C_n=B\Pi_n+D$. 
Let $(\widehat\mu_n, \widehat\delta_n)$ solve \textup{(\ref{multiplier_lemma_CQPP})} with active set $\widehat{K}_n=\{k\in\{1,...,d_C\}: e'_k(h_n-B\widehat\mu_n-C_n\widehat\delta_n)=0\}$. 
For $n\in\mathbb{N}$, let $\widehat\psi_n$ denote the minimum (Euclidean) norm KKT multipliers satisfying the KKT conditions for \textup{(\ref{multiplier_lemma_CQPP})}: 
\begin{align}
\widehat\psi_n&\ge 0, \label{multiplier_lemma_KKT1}\\
2\Sigma_n^{-1}(x_n-\widehat\mu_n)&=B'\widehat\psi_n \label{multiplier_lemma_KKT2}\\
C'_n\widehat\psi_n&=0 \label{multiplier_lemma_KKT3}\\
I_{\widehat K_n^c}\widehat\psi_n&=0 \label{multiplier_lemma_KKT4}
\end{align}
where $\widehat K_n^c=\{1,...,d_C\}/\widehat K_n$. 
 Assume $(\widehat\mu_n,\widehat\delta_n)\rightarrow(\widehat\mu_\infty, \widehat\delta_\infty)$ as $n\rightarrow\infty$. 
Let $n_q$ be a subsequence along which $\widehat K_{n_q}$ does not depend on $q$. 
Then, there exists a $\widehat\psi_\infty$ satisfying \textup{(\ref{multiplier_lemma_KKT1})-(\ref{multiplier_lemma_KKT4})} with $n=\infty$ such that $\widehat \psi_{n_q}\rightarrow \widehat\psi_\infty$ as $q\rightarrow\infty$. 
\end{lemma}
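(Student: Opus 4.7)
The plan hinges on the structural identity $C_n = B\Pi_n + D$, which lets us rewrite the KKT system with a coefficient matrix that does not depend on $n$. Let $c_n = 2\Sigma_n^{-1}(x_n - \widehat\mu_n)$, so $c_n \to c_\infty := 2\Sigma_\infty^{-1}(x_\infty - \widehat\mu_\infty)$. For any $\psi$, the identity $C'_n\psi = \Pi'_n B'\psi + D'\psi$ shows that (\ref{multiplier_lemma_KKT2})--(\ref{multiplier_lemma_KKT3}) are together equivalent to $B'\psi = c_n$ and $D'\psi = -\Pi'_n c_n$. Writing $M = [B, D]'$ and $r_n = (c_n', -c_n'\Pi_n)'$, and setting $K = \widehat K_{n_q}$, the full KKT system (\ref{multiplier_lemma_KKT1})--(\ref{multiplier_lemma_KKT4}) becomes $\psi \geq 0$, $M\psi = r_n$, and $I_{K^c}\psi = 0$. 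Denote this feasible set by $S_n$ and its limit analog (with $r_\infty = (c_\infty', -c_\infty'\Pi_\infty)'$) by $S_\infty$.

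First I would show $S_\infty \neq \emptyset$. The image $\{M\psi : \psi \geq 0,\ I_{K^c}\psi = 0\}$ is a polyhedral, hence closed, cone; since $\widehat\psi_{n_q} \in S_{n_q}$ each $r_{n_q}$ lies in this cone, and so its limit $r_\infty$ does too, yielding $S_\infty \neq \emptyset$. Next I would establish $S_{n_q} \overset{K}{\to} S_\infty$. Since all $S_n$ are defined by the same coefficient matrix $M$ and selector $I_{K^c}$, with only the right-hand side changing, Hoffman's lemma supplies a constant $\kappa$ (depending only on $M$ and $I_{K^c}$) such that for every $\psi^\ast \in S_\infty$ there exists $\psi_q \in S_{n_q}$ with $\|\psi_q - \psi^\ast\| \leq \kappa\|r_{n_q} - r_\infty\| \to 0$; conversely, any convergent subsequence $\psi_q \in S_{n_q}$ with limit $\psi$ satisfies $M\psi = r_\infty$, $I_{K^c}\psi = 0$, and $\psi \geq 0$ by continuity, hence $\psi \in S_\infty$.

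Having established both facts, Lemma \ref{Projection-convergence} applied with $x_n = x_\infty = 0$ and $\Sigma_n = \Sigma_\infty = I$ yields $\widehat\psi_{n_q} = \argmin_{\psi \in S_{n_q}}\|\psi\|^2 \to \widehat\psi_\infty := \argmin_{\psi \in S_\infty}\|\psi\|^2$. To verify $\widehat\psi_\infty$ satisfies (\ref{multiplier_lemma_KKT1})--(\ref{multiplier_lemma_KKT4}) at $n = \infty$, the only nontrivial check is complementary slackness: passing to the limit in $e'_k(h_{n_q} - B\widehat\mu_{n_q} - C_{n_q}\widehat\delta_{n_q}) = 0$ for each $k \in K$ gives $K \subseteq \widehat K_\infty$, so $\widehat K_\infty^c \subseteq K^c$ and $I_{K^c}\widehat\psi_\infty = 0$ implies $I_{\widehat K_\infty^c}\widehat\psi_\infty = 0$.

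The delicate step is the Kuratowski convergence of $S_{n_q}$, which would be subtle if the coefficient matrix varied with $n$ (small changes in the matrix can cause rank drops and feasibility failures, as in the rank stability discussion around Assumption \ref{assu:rank:simple}). The structural decomposition $C_n = B\Pi_n + D$ is precisely what allows one to absorb all $n$-dependence into the right-hand side $r_n$, so that stability of the feasible set of multipliers reduces to a standard Hoffman-type argument with fixed coefficient matrix.
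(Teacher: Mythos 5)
Your proof is correct, and it rests on the same structural insight as the paper's --- that $C_n=B\Pi_n+D$ is what stabilizes the multiplier set --- but it executes that insight along a genuinely different route. The paper keeps the $n$-dependent coefficient matrix: it stacks the KKT conditions into $\poly(\widetilde U_q,\widetilde b_q)$ with $\widetilde U_q$ containing $C'_{n_q}$, verifies the rank-stability hypothesis of Lemma \ref{K-convergence_nonempty2_new} by column operations ($\rk([-I'_{J_0},B,B\Pi_{n_q}+D,I'_{\widehat K^c}])=\rk([-I'_{J_0},B,D,I'_{\widehat K^c}])$), obtains nonemptiness of the limit set from Lemma \ref{K-convergence_nonempty}, and then concludes with Lemma \ref{Projection-convergence}. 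You instead use the identity $C'_n\psi=\Pi'_nB'\psi+D'\psi$ to replace $\{B'\psi=c_n,\ C'_n\psi=0\}$ by the equivalent system $\{B'\psi=c_n,\ D'\psi=-\Pi'_nc_n\}$, so that the coefficient matrix $[B,D]'$ and the selector $I_{K^c}$ are fixed and all $n$-dependence sits in the right-hand side $r_n$. Nonemptiness of $S_\infty$ then follows from closedness of the finitely generated cone $\{M\psi:\psi\ge 0,\ I_{K^c}\psi=0\}$, and Kuratowski convergence follows from Hoffman's error bound with a single constant $\kappa$ valid for all $q$; the final appeal to Lemma \ref{Projection-convergence} and the complementary-slackness check $K\subseteq\widehat K_\infty$ coincide with the paper's. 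What your version buys is a shorter, more self-contained argument that bypasses the paper's general polyhedral convergence lemmas (and their activatable-set/rank bookkeeping) in favor of a classical fixed-matrix perturbation bound; what the paper's version buys is uniformity of technique, since Lemma \ref{K-convergence_nonempty2_new} is already needed elsewhere and the rank computation here is a one-line instance of the same column-operation trick used throughout. Your closing observation --- that the decomposition is precisely what prevents the rank-drop pathologies that Assumption \ref{assu:rank:simple} guards against elsewhere --- is the right diagnosis of why no additional rank hypothesis is needed in this lemma.
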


\noindent\textbf{Remark:} {\it The conditions of Lemma \textup{\ref{multiplier_convergence}} can be verified using Lemmas \textup{\ref{K-convergence_nonempty2_new}} and \textup{\ref{projected_poly_convergence}}, which, combined with Lemma \textup{\ref{Projection-convergence}}, lead to the convergence of $(\widehat\mu_n, \widehat\delta_n)$ to $(\widehat\mu_\infty, \widehat\delta_\infty)$. The $\widehat{\psi}_n$ is well defined because the KKT conditions are necessary conditions for the CQPP.} \medskip

The following lemma shows that a convex quadratic form of a Gaussian random vector has a continuous distribution.
\begin{lemma}\label{lem:MixedChi2} 
For any $k\times \ell $ matrix $A$, $k\times 1$ vector $b$, and $k\times k$ positive semi-definite matrix $\Upsilon$, and a random vector $X\sim N(\mathbf{0},\Sigma)$ with a positive semi-definite variance matrix $\Sigma$, if $\Upsilon^{1/2}A\Sigma^{1/2}$ is not a zero matrix, then $\|AX+b\|^2_{\Upsilon}$ has a continuous distribution. 
\end{lemma}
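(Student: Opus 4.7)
The plan is to reduce $\|AX+b\|^2_{\Upsilon}$ to the squared Euclidean norm of a Gaussian vector with a known (possibly singular) covariance, then isolate the non-degenerate directions via spectral decomposition.

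First I would define $W = \Upsilon^{1/2}(AX+b)$, so that $\|AX+b\|^2_{\Upsilon} = W'W = \|W\|^2$, and $W \sim N(\Upsilon^{1/2}b, V)$ where
\begin{equation*}
V = \Upsilon^{1/2}A\Sigma A'\Upsilon^{1/2} = (\Upsilon^{1/2}A\Sigma^{1/2})(\Upsilon^{1/2}A\Sigma^{1/2})'.
\end{equation*}
The hypothesis that $\Upsilon^{1/2}A\Sigma^{1/2}$ is not the zero matrix immediately implies $V \neq 0$, so $V$ is positive semi-definite with rank $r \geq 1$.

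Second, I would take a spectral decomposition $V = U\Lambda U'$ with $U$ orthogonal and $\Lambda = \textup{diag}(\lambda_1,\ldots,\lambda_r,0,\ldots,0)$, where $\lambda_1,\ldots,\lambda_r>0$. Setting $Z = U'W$, the components of $Z$ are independent, the first $r$ being $N(c_i,\lambda_i)$ (where $c_i$ denotes the $i$th component of $U'\Upsilon^{1/2}b$) and the remainder almost surely equal to their means. Then
\begin{equation*}
\|AX+b\|^2_{\Upsilon} = \|W\|^2 = \|Z\|^2 = \sum_{i=1}^{r} Z_i^2 + \kappa \quad \textup{almost surely},
\end{equation*}
where $\kappa = \sum_{i=r+1}^{k} c_i^2$ is a deterministic constant.

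Third, I would observe that $Z_1^2$ is a scaled non-central $\chi^2_1$ random variable and therefore admits a density on $(0,\infty)$. Since $Z_1$ is independent of $Z_2,\ldots,Z_r$, conditioning on the latter shows that $\sum_{i=1}^{r} Z_i^2$ is a convolution of the continuous distribution of $Z_1^2$ with the distribution of $\sum_{i=2}^{r} Z_i^2$, hence is absolutely continuous. Adding the constant $\kappa$ preserves continuity, establishing the claim.

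The argument is essentially routine linear algebra and basic facts about Gaussian distributions; the only substantive point is handling the fact that $V$ may be rank-deficient, which is dispatched by explicitly separating the deterministic directions in the spectral basis. The nontrivial hypothesis, $\Upsilon^{1/2}A\Sigma^{1/2}\neq \mathbf{0}$, enters exactly to guarantee $r \geq 1$, so that at least one genuinely random $Z_i^2$ appears in the decomposition.
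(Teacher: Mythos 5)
Your proof is correct and follows essentially the same route as the paper's: both reduce the problem to a sum of independent one-dimensional Gaussian quadratics by diagonalizing along the singular directions of $\Upsilon^{1/2}A\Sigma^{1/2}$ (your spectral decomposition of $V=(\Upsilon^{1/2}A\Sigma^{1/2})(\Upsilon^{1/2}A\Sigma^{1/2})'$ is the dual of the paper's SVD of $\Upsilon^{1/2}A\Sigma^{1/2}$), and both conclude via independence and the fact that convolution with a continuous distribution preserves continuity. The only cosmetic difference is that you absorb the shift $b$ into a noncentral $\chi^2_1$ term, whereas the paper keeps the linear cross term explicit; the substance is identical.
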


\begin{lemma}\label{s=0_implies_T=0}
Consider 
\begin{equation}
T=\argmin_{\{(\mu,\delta): B\mu+C\delta\le d\}}\|X-\mu\|^2_{\Sigma^{-1}}, \label{generic_CQPP}
\end{equation}
where $B$ and $C$ are matrices, $d$ and $X$ are vectors, and $\Sigma$ is a symmetric and positive definite matrix. 
Let $(\widehat\mu, \widehat\delta, \widehat\psi)$ satisfy the KKT conditions for \textup{(\ref{generic_CQPP})}, where $\widehat\psi$ is the vector of KKT multipliers. 
Let $\widehat L=\{j: e'_j\widehat\psi>0\}$ and $\widehat t=\textup{rk}(I_{\widehat L}[B,C])-\textup{rk}(I_{\widehat L}C)$. 
If $\widehat t=0$, then $\widehat\mu=X$ and $T=0$. 
\end{lemma}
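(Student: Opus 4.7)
The plan is to combine the KKT conditions with the rank characterization from Lemma \ref{lem:MB2}(a) to force the gradient condition to collapse to $X=\widehat{\mu}$. Since $\widehat{t}$ equals $\textup{rk}(M_{C_{\widehat L}}B_{\widehat L})$ by Lemma \ref{lem:MB2}(a), the assumption $\widehat{t}=0$ says precisely that $M_{C_{\widehat L}}B_{\widehat L}=\mathbf{0}$, i.e.\ every column of $B_{\widehat L}$ lies in the column space of $C_{\widehat L}$. Equivalently, $B'_{\widehat L}M_{C_{\widehat L}}=\mathbf{0}$. This will be the one algebraic fact doing the work.

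Next I would exploit the definition of $\widehat{L}$: since $\widehat\psi_j=0$ for $j\notin\widehat{L}$, we have $\widehat\psi=I'_{\widehat L}I_{\widehat L}\widehat\psi$, and therefore $B'\widehat\psi=B'_{\widehat L}(I_{\widehat L}\widehat\psi)$ and $C'\widehat\psi=C'_{\widehat L}(I_{\widehat L}\widehat\psi)$. The KKT condition $C'\widehat\psi=\mathbf{0}$ thus yields $C'_{\widehat L}(I_{\widehat L}\widehat\psi)=\mathbf{0}$. That is exactly the statement that $I_{\widehat L}\widehat\psi$ is annihilated by the projection onto the column space of $C_{\widehat L}$, so $(I-M_{C_{\widehat L}})(I_{\widehat L}\widehat\psi)=\mathbf{0}$, which gives $I_{\widehat L}\widehat\psi=M_{C_{\widehat L}}(I_{\widehat L}\widehat\psi)$.

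Putting the two pieces together, $B'\widehat\psi=B'_{\widehat L}M_{C_{\widehat L}}(I_{\widehat L}\widehat\psi)=\mathbf{0}$, using $B'_{\widehat L}M_{C_{\widehat L}}=\mathbf{0}$ from the $\widehat{t}=0$ assumption. The gradient KKT condition $2\Sigma^{-1}(X-\widehat\mu)=B'\widehat\psi$ then reduces to $\Sigma^{-1}(X-\widehat\mu)=\mathbf{0}$, and since $\Sigma$ is positive definite, $\widehat\mu=X$. Consequently $T=\|X-\widehat\mu\|^2_{\Sigma^{-1}}=0$.

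I do not expect real difficulty here: the whole argument is a short chain of linear-algebraic manipulations, and the rank identity from Lemma \ref{lem:MB2}(a) is the only nontrivial input. The one subtlety worth flagging is the need to work with the Moore-Penrose inverse in defining $M_{C_{\widehat L}}$, since $C_{\widehat L}$ need not have full column rank; but this only means one should verify the identity $(I-M_{C_{\widehat L}})v=\mathbf{0}$ whenever $C'_{\widehat L}v=\mathbf{0}$ using the defining property of the generalized inverse, rather than rely on invertibility of $C'_{\widehat L}C_{\widehat L}$.
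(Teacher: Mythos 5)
Your proof is correct and follows essentially the same route as the paper's: both use the definition of $\widehat L$ to write $\widehat\psi=I'_{\widehat L}I_{\widehat L}\widehat\psi$, deduce $I_{\widehat L}\widehat\psi=M_{C_{\widehat L}}I_{\widehat L}\widehat\psi$ from $C'\widehat\psi=\mathbf{0}$, and then invoke Lemma \ref{lem:MB2}(a) to conclude $B'_{\widehat L}M_{C_{\widehat L}}=\mathbf{0}$, which kills the gradient term and forces $\widehat\mu=X$. Your remark about the Moore--Penrose inverse is a fair point of care, but the identity $M_{C_{\widehat L}}v=v$ for any $v$ with $C'_{\widehat L}v=\mathbf{0}$ holds immediately from the definition $M_{C_{\widehat L}}=I-C_{\widehat L}(C'_{\widehat L}C_{\widehat L})^+C'_{\widehat L}$, exactly as the paper uses it.
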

\noindent\textbf{Remark:} {\it One can combine Lemma \textup{\ref{s=0_implies_T=0}} with Theorem \textup{\ref{lem:rhat}(a)} to get that $T=0$ if $\widehat s=0$ or $\widehat r=0$.}\medskip

The next two lemmas are useful linear algebra facts. 
\begin{lemma}\label{LimitRankNonincreasing}
If $A_n$ is a sequence of matrices converging to $A$, then $\textup{rk}(A_n)\ge \textup{rk}(A)$ eventually. 
\end{lemma}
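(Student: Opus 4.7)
The plan is to use the characterization of rank via nonvanishing minors together with the continuity of the determinant. Specifically, $\textup{rk}(A) = r$ if and only if there exists an $r \times r$ submatrix of $A$ with nonzero determinant and no $(r+1) \times (r+1)$ submatrix has nonzero determinant. I will exploit only one direction of this characterization.

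First, I would set $r = \textup{rk}(A)$. If $r = 0$, the conclusion is trivial since rank is always nonnegative, so assume $r \ge 1$. By the characterization above, I can select index sets $I \subseteq \{1,\dots,d_1\}$ and $J \subseteq \{1,\dots,d_2\}$ with $|I| = |J| = r$ such that the submatrix $A_{I,J}$ (formed by taking rows indexed by $I$ and columns indexed by $J$) satisfies $\det(A_{I,J}) \neq 0$, where $d_1 \times d_2$ is the shape of $A$.

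Next, since $A_n \to A$ elementwise, the corresponding submatrices satisfy $(A_n)_{I,J} \to A_{I,J}$ elementwise. The determinant is a polynomial in the entries of a matrix and hence continuous, so $\det((A_n)_{I,J}) \to \det(A_{I,J}) \neq 0$. Therefore $\det((A_n)_{I,J}) \neq 0$ for all $n$ large enough, which implies $\textup{rk}(A_n) \ge r = \textup{rk}(A)$ eventually.

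There is no main obstacle here; the proof is essentially a two-line invocation of the continuity of determinant applied to a witnessing minor. The only thing to be careful about is handling the trivial case $r = 0$ separately so that one can legitimately select a nonzero $r \times r$ minor.
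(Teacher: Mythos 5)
Your proof is correct, but it takes a different route from the paper. The paper proves this lemma via singular values: it notes that the rank equals the number of nonzero singular values and invokes its Lemma \ref{ContinuousSingularValues} (continuity of singular values in the matrix entries, itself derived from continuity of eigenvalues of Hermitian matrices) to conclude that $A_n$ eventually has at least $\rk(A)$ nonzero singular values. You instead fix a witnessing $r\times r$ minor of $A$ with nonzero determinant and use the continuity of the determinant as a polynomial in the entries to conclude that the corresponding minor of $A_n$ is eventually nonzero. Both arguments are standard and complete; yours is more elementary and self-contained, requiring nothing beyond the minor characterization of rank and determinant continuity, while the paper's choice is economical in context because Lemma \ref{ContinuousSingularValues} is already stated and reused elsewhere (e.g., in the proof of Lemma \ref{MP_convergence}). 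Your handling of the trivial case $r=0$ is appropriate and the argument has no gaps.
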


\begin{lemma}\label{ContinuousSingularValues}
Let $A_n$ be a sequence of matrices converging to $A$. 
Let $s_n$ denote a vector of the left singular values of $A_n$ and let $s$ denote a vector of the left singular values of $A$, both in non-increasing order. 
Then, $s_n\rightarrow s$. 
\end{lemma}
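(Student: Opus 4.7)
The plan is to reduce the lemma to Weyl's perturbation inequality for singular values. First, since $A_n \to A$ entrywise and the matrices share a fixed finite dimension, all matrix norms are equivalent on this finite-dimensional space, so in particular $\|A_n - A\|_{\textup{op}} \to 0$, where $\|\cdot\|_{\textup{op}}$ denotes the spectral (operator) norm induced by the Euclidean norm.

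Next, I would invoke Weyl's inequality for singular values: for any two matrices $M$ and $N$ of identical dimensions with singular values sorted in non-increasing order, $|\sigma_i(M) - \sigma_i(N)| \leq \|M - N\|_{\textup{op}}$ for every index $i$. Applied with $M = A_n$ and $N = A$, this gives $|\sigma_i(A_n) - \sigma_i(A)| \leq \|A_n - A\|_{\textup{op}} \to 0$ for each $i$, which is exactly the claimed componentwise, hence vector, convergence $s_n \to s$.

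If a self-contained derivation is preferred over citing Weyl's inequality, I would prove it directly from the Courant-Fischer-style min-max characterization $\sigma_i(M) = \max_{\dim V = i} \min_{x \in V,\, \|x\|=1} \|Mx\|$, where $V$ ranges over subspaces of the domain: the pointwise bound $\bigl|\|Mx\| - \|Nx\|\bigr| \leq \|(M-N)x\| \leq \|M-N\|_{\textup{op}}$ (valid for unit $x$) passes through both the inner minimization and the outer maximization and yields the required two-sided estimate.

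The main obstacle is negligible; this is a classical fact about continuous dependence of singular values on the matrix entries. The only subtlety one might anticipate is if $A_n$ and $A$ were allowed to have differing dimensions, which would require padding the singular value vectors with zeros, but the hypothesis forces a common fixed shape, so no such complication arises.
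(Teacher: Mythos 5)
Your proof is correct, but it takes a different route from the paper's. The paper reduces the claim to the continuity of eigenvalues of Hermitian matrices: it writes the singular values as the non-negative square roots of the eigenvalues of $A_n'A_n$, notes that $A_n'A_n\rightarrow A'A$, and cites Theorem 2.4.9.2 of Horn and Johnson (2012) for continuity of the (ordered) Hermitian spectrum, after which continuity of $\sqrt{\cdot}$ finishes the argument. You instead apply Weyl's perturbation inequality directly to the singular values, $|\sigma_i(A_n)-\sigma_i(A)|\le\|A_n-A\|_{\textup{op}}$, after observing that entrywise convergence gives operator-norm convergence in fixed finite dimension; your fallback derivation via the max--min characterization $\sigma_i(M)=\max_{\dim V=i}\min_{x\in V,\,\|x\|=1}\|Mx\|$ is a standard and complete proof of that inequality, so the argument is self-contained. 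The trade-off is mild: your approach yields an explicit Lipschitz estimate and avoids the detour through $A_n'A_n$ and the square root, while the paper's is shorter on the page but leans on a cited eigenvalue-continuity theorem. Your remark about the ``left'' singular values and zero-padding is apt but, as you note, moot here since the dimensions are fixed. Either proof is acceptable.
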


\subsection{Lemma for the Proof of Theorem \ref{thm:level_RCC}}\label{E.3}

The next lemma says that two different representations of the same polyhedral set lead to the same values of the projection onto that set, the rank of the active inequalities, and the minimal slackness of the additional (inactive) inequalities. 

\begin{lemma}\label{equivalent_representation}
Let $A$ and $B$ be two matrices and let $c, d$ be two vectors. 
Let $x\in\R^{d_\mu}$ and $\Sigma$ be a positive definite $d_\mu\times d_\mu$ matrix. 
Let $\widehat\mu_1=\argmin_{\mu\in\poly(A,c)} \|x-\mu\|^2_{\Sigma^{-1}}$ and $\widehat\mu_2=\argmin_{\mu\in\poly(B,d)} \|x-\mu\|^2_{\Sigma^{-1}}$. 
Let $\widehat{J}_1=\{j\in\{1,...,d_A\}: e'_jA\widehat\mu_1=e'_jc\}$ and $\widehat{J}_2=\{j\in\{1,...,d_B\}: e'_jB\widehat\mu_1=e'_jd\}$, where $d_A$ and $d_B$ are the number of rows of $A$ and $B$, respectively. 
Let $\widehat r_1=\rk(I_{\widehat{J}_1}A)$ and $\widehat r_2=\rk(I_{\widehat{J}_2}B)$. 
Let $\widehat\beta_1=\beta(\widehat\mu_1)$ with $\beta(x)$ defined in \textup{(\ref{92})-(\ref{94})} with respect to $\poly(A,c)$ and $\widehat\beta_2=\beta(\widehat\mu_2)$ with $\beta(x)$ defined in \textup{(\ref{92})-(\ref{94})} with respect to $\poly(B,d)$. 
If $\textup{poly}(A,c)=\textup{poly}(B,d)$, then $\widehat\mu_1=\widehat\mu_2$, $\widehat r_1=\widehat r_2$, and $\widehat\beta_1=\widehat\beta_2$. 
\end{lemma}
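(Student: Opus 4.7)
The plan is to verify the three equalities in turn; the first two are structural, and the third is the main technical step. For $\widehat{\mu}_1=\widehat{\mu}_2$, the objective $\mu\mapsto\|x-\mu\|^2_{\Sigma^{-1}}$ is strictly convex and both minimize it over the common feasible set $\poly(A,c)=\poly(B,d)$, so they coincide. For $\widehat{r}_1=\widehat{r}_2$, set $P:=\poly(A,c)=\poly(B,d)$ and $\widehat{\mu}:=\widehat{\mu}_1$; the affine hull of the smallest face of $P$ containing $\widehat{\mu}$ admits the descriptions $\{y:I_{\widehat{J}_1}Ay=I_{\widehat{J}_1}c\}$ and $\{y:I_{\widehat{J}_2}By=I_{\widehat{J}_2}d\}$, and since this affine subspace is intrinsic to $(P,\widehat{\mu})$, its codimension in $\R^{d_\mu}$---which equals $\widehat{r}_1$ in the first description and $\widehat{r}_2$ in the second---must be the same.

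For $\widehat{\beta}_1=\widehat{\beta}_2$, the case $\widehat{r}:=\widehat{r}_1=\widehat{r}_2\neq 1$ is immediate from (\ref{93}). Suppose $\widehat{r}=1$. Then every active row at $\widehat{\mu}$, in either representation, is a positive scalar multiple of a common outward normal $v$ to the unique active facet of $P$ at $\widehat{\mu}$. Direct substitution shows that $\tau_j(x)$ in (\ref{92}) is invariant under positive scaling of the reference row $a_1$ and, independently, under positive scaling of any non-reference pair $(a_j,g_j)$. I therefore rescale every row of $A$ and $B$ to have unit $\Sigma$-norm and choose reference rows $a_1=b_1=v$ with $\|v\|_\Sigma=1$, without altering any $\tau_j$ value.

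The key claim is $\tau^A(\widehat{\mu})=\tau^B(\widehat{\mu})$ for the infima in (\ref{94}). Fix a non-reference row $b$ of $B$ with constant $d_b$; since $\{y:b'y\le d_b\}\supseteq P=\poly(A,c)$, LP duality (Farkas' lemma) produces nonnegative weights $\{\lambda_k\}$ with $b=\sum_k\lambda_k a_k$ and $d_b\ge\sum_k\lambda_k c_k$. Using $c_1-a_1'\widehat{\mu}=0$ yields $d_b-b'\widehat{\mu}\ge\sum_{k\ge 2}\lambda_k s_k$ where $s_k:=c_k-a_k'\widehat{\mu}$. Substituting $s_k=\tau_k^A(\widehat{\mu})(1-a_1'\Sigma a_k)/\sqrt{n}$ and bounding $\tau_k^A(\widehat{\mu})\ge\tau^A(\widehat{\mu})$, the desired inequality $\tau_b(\widehat{\mu})\ge\tau^A(\widehat{\mu})$ reduces algebraically to $\sum_k\lambda_k\ge\|b\|_\Sigma=\|\sum_k\lambda_k a_k\|_\Sigma$, which is just the triangle inequality under $\|a_k\|_\Sigma=1$. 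Taking the infimum over non-reference rows $b$ of $B$ yields $\tau^B\ge\tau^A$; the symmetric argument (swapping $A$ and $B$) gives the reverse, so $\tau^A=\tau^B$ and $\widehat{\beta}_1=\widehat{\beta}_2$.

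The principal obstacle is the algebraic reduction in the third step: one must carefully unpack both $d_b-b'\widehat{\mu}$ and the denominator $\|b\|_\Sigma-a_1'\Sigma b$ in terms of $\{\lambda_k\}$ and invoke the bound $\tau_k^A\ge\tau^A$ only in the numerator, where it tightens the inequality. Edge cases---rows $b$ parallel to $a_1$ give $\tau_b=\infty$ automatically, and the degenerate situation where some active row is a negative multiple of $a_1$ (forcing $P$ into a hyperplane and $\tau^A=\tau^B=0$)---are handled by the same inequality and require no separate argument.
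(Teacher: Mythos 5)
Your proof is correct, but it takes a genuinely different route from the paper's. The paper handles $\widehat r_1=\widehat r_2$ and $\widehat\beta_1=\widehat\beta_2$ by a reduction: it stacks the two systems into $F=[A;B]$, $g=[c;d]$, notes $\poly(F,g)=\poly(A,c)=\poly(B,d)$, and then invokes the special case where one row set contains the other, which is already established in Lemmas 9(c) and 10 of CS23. You instead give a direct, self-contained argument: for the rank, you identify $\{y:I_{\widehat J_i}\cdot\,=\,\cdot\}$ with the affine hull of the minimal face of $P$ at $\widehat\mu$, an object intrinsic to $(P,\widehat\mu)$, so the two codimensions agree; for $\widehat\beta$, you show via the affine Farkas lemma that each non-reference row of one representation has slackness measure $\tau_b\ge\tau^A$, with the comparison reducing (after normalizing rows to unit $\Sigma$-norm) to $\sum_k\lambda_k\ge\|\sum_k\lambda_k a_k\|_\Sigma$, i.e.\ the triangle inequality. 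I checked the algebra and it goes through. What each approach buys: the paper's proof is two sentences but opaque without CS23 in hand; yours is longer but self-contained and reveals that both $\widehat r$ and $\tau$ are intrinsic to the pair $(P,\widehat\mu)$ rather than to its inequality description, which is really the content of the lemma.

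Two edge cases in your third step deserve one more sentence each if this were written out in full. First, your normalization ``choose reference rows $a_1=b_1=v$'' presumes all admissible reference rows are \emph{positive} multiples of $v$; when $P$ lies in the hyperplane $\{v'y=c_1\}$ a representation may be forced to use a reference row proportional to $-v$, but then some non-reference active row is a positive multiple of the reference row's negative, giving $\tau=0$ on both sides, so the conclusion survives (you flag this, correctly, as yielding $\tau^A=\tau^B=0$). Second, the displayed reduction multiplies through by $\tau^A$, so it needs $\tau^A$ finite; when $\tau^A=\infty$ every row of $A$ is a nonnegative multiple of $v$, hence $P$ is a halfspace, hence every row of $B$ is too and $\tau^B=\infty$ directly from the Farkas decomposition rather than from the inequality. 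Neither issue is a gap in substance, only in the write-up.
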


\subsection{Proof of the Lemmas}\label{E.4}
\begin{proof}[Proof of Lemma \textup{\ref{lem:MB2}}]
For part (a), observe that the linear span of the columns of $[B,C]$ is the same as that of $[M_CB, C]$. 
Thus
\begin{align}
\textup{rk}([B,C]) = \textup{rk}([M_CB,C]).\label{rkBC2}
\end{align}
Consider the matrix $A = [M_CB,C]'[M_CB,C]$. 
The eigenvalues of $A$ are the squares of the column singular values of $[M_CB,C]$. 
Thus, $\textup{rk}(A)=\textup{rk}([M_CB,C])$. 
Now observe that
\begin{align}
A=\left(\begin{smallmatrix}B'M_CB&\mathbf{0}\\\mathbf{0}&C'C\end{smallmatrix}\right). 
\end{align}
This block-diagonal feature implies that $\textup{rk}(A) = \textup{rk}(B'M_CB)+\textup{rk}(C'C)$. 
The same argument for $\textup{rk}(A)=\textup{rk}([M_CB,C])$ also shows that $\textup{rk}(B'M_CB)=\textup{rk}(M_CB)$ and $\textup{rk}(C'C)=\textup{rk}(C)$. 
Therefore
\begin{align}
\textup{rk}([M_CB,C])=\textup{rk}(M_CB)+\textup{rk}(C). \label{rkMBC}
\end{align}
This, combined with (\ref{rkBC2}), proves part (a).

Part (b) holds because $[B,C-B\Pi]$ can be obtained from $[B,C]$ by elementary column operations.

For part (c), note that 
\begin{align}
M_{C_K}B_K=B_K-C_K\Delta, 
\end{align}
where $\Delta=(C'_KC_K)^+C'_KB_K$. 
Left-multiplying on both sides by $I_LI_K'$, we have 
\begin{align}
I_LI'_KM_{C_K}B_K=B_L-C_L\Delta,\label{BC2} 
\end{align}
where $I_LI'_KB_K=B_L$ and $I_LI'_KC_K=C_L$ hold because $L\subseteq K$. 
Next, left-multiply (\ref{BC2}) by $M_{C_L}$ to get 
\begin{align}
M_{C_L}I_LI'_KM_{C_K}B_K=M_{C_L}B_L. \label{BC22}
\end{align}
Therefore, $\textup{rk}(M_{C_L}B_L)=\textup{rk}(M_{C_L}I_LI'_KM_{C_K}B_K)\leq \textup{rk}(M_{C_K}B_K)$. 
\end{proof}

\begin{proof}[Proof of \textup{Lemma \ref{lem:KKTB2_new}}]
First note that $\widehat\psi_n=I'_KI_K\widehat\psi_n$ because $K\supseteq \widehat L$. 
It then follows from (\ref{BCd_KKT2_new}) that $C'_KI_K\widehat\psi_n=0$, and therefore $I_K\widehat\psi_n=M_{C_K}I_K\widehat\psi_n$. 
From (\ref{BCd_KKT_new}), it follows that $2n\widetilde\Sigma^{-1}_n(\overline{\mu}_n-\widehat\mu_n)=B'_KI_K\widehat\psi_n=B'_KM_{C_K}I_K\widehat\psi_n$. 
Rearranging, this proves that 
\begin{equation}
\widetilde\Sigma_n^{-1/2}(\overline{\mu}_n-\widehat\mu_n)=\widetilde\Sigma_n^{1/2}B'_KM_{C_K}I_K\widehat\psi_n/(2n)\in\textup{span}(\widetilde\Sigma_n^{1/2}B'_KM_{C_K}). \label{KKTB2_eq_new}
\end{equation}

Next, notice that $0=I_K(B\widehat\mu_n+\overline{C}_n\widehat\delta_n-d)=B_K\widetilde\Sigma_n^{1/2}\widetilde\Sigma_n^{-1/2}\widehat\mu_n+C_K\widehat\delta_n-I_Kd$ because $K\subseteq\widehat K$. 
Plugging in for $\widetilde\Sigma_n^{-1/2}\widehat\mu_n$ using (\ref{KKTB2_eq_new}) gives 
\begin{equation}
B_K\overline{\mu}_n-B_K\widetilde\Sigma_nB'_K M_{C_K} I_K\widehat\psi_n/(2n)+C_K\widehat\delta_n-I_Kd=0. \label{KKTB3_eq_new}
\end{equation}
Left-multiply (\ref{KKTB3_eq_new}) by $(M_{C_K}B_K\widetilde\Sigma_n^{1/2})^+M_{C_K}$ to get 
\begin{equation}
(M_{C_K}B_K\widetilde\Sigma_n^{1/2})^+M_{C_K}B_K\overline{\mu}_n-\widetilde\Sigma_n^{1/2}B'_K M_{C_K} I_K\widehat\psi_n/(2n)-(M_{C_K}B_K\widetilde\Sigma_n^{1/2})^+M_{C_K}I_Kd=0, \label{KKTB4_eq_new}
\end{equation}
which uses the fact that $(M_{C_K}B_K\widetilde\Sigma_n^{1/2})^+M_{C_K}B_K\widetilde\Sigma_nB'_K M_{C_K}=\widetilde\Sigma_n^{1/2}B'_K M_{C_K}$. 
Rearrange, and plug (\ref{KKTB4_eq_new}) into (\ref{KKTB2_eq_new}) to get 
\begin{equation}
\widetilde\Sigma_n^{-1/2}\left(\overline{\mu}_n-\widehat\mu_n\right)=(M_{C_K}B_K\widetilde\Sigma_n^{1/2})^+M_{C_K}B_K\overline{\mu}_n-\xi_K, \label{KKTB5_eq_new}
\end{equation}
which uses the definition of $\xi_K$. 
Combining (\ref{KKTB2_eq_new}) and (\ref{KKTB5_eq_new}) proves (\ref{eq:KKTB2_new}). 

To prove (\ref{eq:KKTB1_new}), rearrange (\ref{KKTB5_eq_new}) to get 
\begin{equation}
\widetilde\Sigma_n^{-1/2}\widehat\mu_n-\xi_K=\left(\widetilde\Sigma_n^{-1/2}-(M_{C_K}B_K\widetilde\Sigma_n^{1/2})^+M_{C_K}B_K\right)\overline{\mu}_n. 
\end{equation}
This, combined with the fact that $M_{C_K}B_K\widetilde\Sigma_n^{1/2}\left(\widetilde\Sigma_n^{-1/2}-(M_{C_K}B_K\widetilde\Sigma_n^{1/2})^+M_{C_K}B_K\right)=0$, implies that $\widetilde\Sigma_n^{-1/2}\widehat\mu_n-\xi_K\in\textup{span}(\widetilde\Sigma_n^{1/2}B'_KM_{C_K})^\perp$, proving (\ref{eq:KKTB1_new}). 
\end{proof}

\begin{proof}[{Proof of Lemma \textup{\ref{Projection-convergence}}}]
Let $\hat x_\infty=\argmin_{x\in S_\infty}(x_\infty-x)'\Sigma_\infty^{-1}(x_\infty-x)$, which exists and is unique because $S_\infty$ is nonempty, closed, convex, and $\Sigma_\infty$ is symmetric and positive definite. 
Let $z_n\in S_n$ such that $z_n\rightarrow \hat x_\infty$, which exists because $S_n\rightarrow_K S_\infty$. 
For $n\in\mathbb{N}$, let $\hat x_n$ denote $\argmin_{x\in S_n}(x_n-x)'\Sigma_n^{-1}(x_n-x)$, which exists and is unique because $S_n$ is nonempty, closed, convex, and $\Sigma_n$ is symmetric and positive definite. 
We want to show that $\hat x_n\rightarrow\hat x_\infty$. 

First note that 
\begin{equation}
(x_n-\hat x_n)'\Sigma_n^{-1}(x_n-\hat x_n)\le (x_n-z_n)'\Sigma_n^{-1}(x_n-z_n)\rightarrow (x_\infty-\hat x_\infty)'\Sigma_\infty^{-1}(x_\infty-\hat x_\infty)
\end{equation}
as $n\rightarrow\infty$. 
Taking the limsup, we get 
\begin{equation}
\limsup_{n\rightarrow\infty} ~ (x_n-\hat x_n)'\Sigma_n^{-1}(x_n-\hat x_n)\le (x_\infty-\hat x_\infty)'\Sigma_\infty^{-1}(x_\infty-\hat x_\infty). \label{lowerbd2}
\end{equation}
It follows from (\ref{lowerbd2}) that $\hat x_n$ is bounded. 
Let $n_m$ be an arbitrary subsequence. 
There exists a further subsequence, say $n_q$, such that $\hat x_{n_q}$ converges to some limit, say $y_\infty$. 
It is sufficient to show that $y_\infty=\hat x_\infty$. 
(If every subsequence has a further subsequence that converges to $\hat x_\infty$, then the original sequence must converge to $\hat x_\infty$.) 

It follows from $S_n\rightarrow_K S_\infty$ that $y_\infty\in S_\infty$. 
Next, note that 
\begin{equation}
\lim_{q\rightarrow\infty} (x_{n_q}-\hat x_{n_q})'\Sigma_{n_q}^{-1}(x_{n_q}-\hat x_{n_q})=(x_\infty-y_\infty)'\Sigma_\infty^{-1}(x_\infty-y_\infty)\ge (x_\infty-\hat x_\infty)'\Sigma_\infty^{-1}(x_\infty-\hat x_\infty), \label{upperbd02}
\end{equation}
where the inequality follows from the fact that $y_\infty\in S_\infty$. 
Combining (\ref{lowerbd2}) and (\ref{upperbd02}), we have that equality holds in (\ref{upperbd02}). 
This implies that $y_\infty=\hat x_\infty$ by the uniqueness of $\hat x_\infty$. 
\end{proof}

The following lemma is used in the proof of Lemma \ref{K-convergence_nonempty2_new}. 
It is convenient to state and prove it separately. 
\begin{lemma}\label{K-convergence_nonempty2_new2} Consider a sequence of $d_A\times d_\mu$ matrices $\{A_n\}$ and $d_A$-vectors $h_n$. 
Suppose 
\begin{itemize}
\item[\textup{(i)}] $A_n\to A_0$ and $h_n\to h_0$ (elementwise) for a finite matrix $A_0$ and a $(-\infty,+\infty]^{d_A}$-valued vector $h_0$ as $n\to\infty$, and 
\item[\textup{(ii)}] \textup{poly}$(A_n,h_n)\neq \emptyset$ eventually and $\poly(A_0, h_0)\neq \emptyset$. 
\end{itemize}
Fix $z_0\in \poly(A_0, h_0)$ and let $J=\{j\in\{1,...,d_A\}: e'_j A_0 z_0=e'_j h_0\}$. 
Further suppose 
\begin{itemize}
\item[\textup{(iii)}] $\textup{rk}(I_JA_n) = \textup{rk}(I_JA_0)$ eventually, and 
\item[\textup{(iv)}] $\poly(I_JA_0, I_Jh_0)$ is a linear subspace of $\R^{d_\mu}$. 
\end{itemize}
Then, there exists a sequence $z_n\in\textup{poly}(A_n,h_n)$ eventually such that $z_n\rightarrow z_0$.\footnote{The ``eventually'' here means that the sequence $z_n$ must eventually belong to the set $\poly(A_n, h_n)$. This accommodates the possibility that $\poly(A_n, h_n)$ is empty for finitely many values of $n$.} 
\end{lemma}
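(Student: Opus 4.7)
My plan is to decouple the proof by the active set $J = \{j : e'_j A_0 z_0 = e'_j h_0\}$ at $z_0$: handle the inactive constraints by continuity and the active constraints by a perturbation argument. For $j \notin J$, the strict inequality $e'_j A_0 z_0 < e'_j h_0$ combined with $A_n \to A_0$ and $h_n \to h_0$ elementwise ensures that $e'_j A_n z < e'_j h_n$ for every $z$ in a small ball around $z_0$ and every $n$ sufficiently large. Consequently, it suffices to produce $\delta_n \to 0$ such that $I_J A_n (z_0 + \delta_n) \le I_J h_n$, equivalently $I_J A_n \delta_n \le e_n$ where $e_n := I_J h_n - I_J A_n z_0 \to 0$ (using that every index in $J$ is active at $z_0$, so $I_J A_0 z_0 = I_J h_0$).

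\textbf{Construction of $\delta_n$.} The set $T_n := \{\delta : I_J A_n \delta \le e_n\}$ is nonempty eventually: by condition (ii) some $y_n \in \poly(A_n, h_n)$ exists, and $y_n - z_0 \in T_n$. I take $\delta_n$ to be the Euclidean projection of $0$ onto $T_n$, which exists and is unique since $T_n$ is a nonempty closed convex polyhedron, so that $z_n := z_0 + \delta_n$ is the candidate sequence. Condition (iii) yields $\rk(I_J A_n) = \rk(I_J A_0)$ eventually, and Lemma \ref{ContinuousSingularValues} then keeps the smallest nonzero singular value of $I_J A_n$ bounded away from $0$, so $(I_J A_n)^+ \to (I_J A_0)^+$; combined with $e_n \to 0$ this should drive $\|\delta_n\| \to 0$.

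\textbf{Main obstacle.} The heart of the argument is bounding $d(0, T_n)$ uniformly in $n$. A naive application of Hoffman's lemma gives $d(0, T_n) \le c(I_J A_n) \|(-e_n)_+\|$, but the Hoffman constant $c(I_J A_n)$ depends on the smallest nonzero singular values of \emph{all} full-row-rank row-submatrices $I_K A_n$ with $K \subseteq J$, and these can degenerate since (iii) only stabilizes the rank of $I_J A_n$ itself, not of its proper row-submatrices. This is where condition (iv) will be decisive: the fact that $\poly(I_J A_0, I_J h_0)$ is an (affine/linear) subspace forces, by a Farkas-type argument, the existence of strictly positive weights $\nu > 0$ with $\nu^\top I_J A_0 = 0$ and $\nu^\top I_J h_0 = 0$. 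Rank stability (iii) transfers this certificate to $\nu_n > 0$ with $\nu_n^\top I_J A_n = 0$, while nonemptiness of $\poly(A_n, h_n)$ in (ii) forces $\nu_n^\top I_J h_n \ge 0$ with $\nu_n^\top I_J h_n \to 0$. Combining this positive-multiplier certificate with the convergent pseudoinverse, I plan to build $\delta_n$ explicitly as $(I_J A_n)^+(e_n^\ast - I_J A_n z_0)$ for a carefully chosen target $e_n^\ast \in \textup{image}(I_J A_n)$ satisfying $e_n^\ast \le I_J h_n$ and $e_n^\ast \to I_J h_0$; a natural candidate is a convex combination of $I_J A_n y_n$ and $I_J A_n z_0$ with interpolation parameter driven to $1$, and verifying that this choice both stays feasible and yields $\|\delta_n\| = O(\|e_n\|)$ will be the crux of the proof. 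Once $\delta_n \to 0$ is established, Stage 1 gives $z_n \in \poly(A_n, h_n)$ eventually with $z_n \to z_0$, completing the argument.
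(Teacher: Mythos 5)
Your setup is sound and you have correctly located where all the difficulty lives, but the proposal stops exactly there: you write that verifying feasibility of your candidate and the bound $\|\delta_n\|=O(\|e_n\|)$ ``will be the crux of the proof,'' and that crux is never carried out. As it stands this is a plan, not a proof. Worse, the one concrete candidate you offer has a feasibility bug: if $e_n^\ast=\lambda_n I_JA_ny_n+(1-\lambda_n)I_JA_nz_0$ with $\lambda_n<1$, then $e_n^\ast\le I_Jh_n$ requires $(1-\lambda_n)e_n\ge 0$, and $e_n=I_Jh_n-I_JA_nz_0$ can have strictly negative components because $z_0$ need not be feasible for the finite-$n$ system (only $e_n\to 0$ is guaranteed). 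Only $\lambda_n=1$ is safely feasible, i.e.\ $\delta_n=(I_JA_n)^+I_JA_n(y_n-z_0)$, the orthogonal projection of $y_n-z_0$ onto the row space of $I_JA_n$ --- and for that choice the convergence $\delta_n\to 0$ is precisely the nontrivial claim that still needs an argument; it does not follow from $(I_JA_n)^+\to(I_JA_0)^+$ together with $e_n\to 0$, because $I_JA_n(y_n-z_0)$ is only bounded \emph{above} by $e_n$ and could a priori be very negative in some components.

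For comparison, the paper takes essentially this $\lambda_n=1$ construction (written via a full-row-rank submatrix $I_KA_n$, $K\subseteq J$, whose rank is pinned down by condition (iii) and Lemma \ref{LimitRankNonincreasing}) and closes the gap with a normalization--contradiction argument: if $\|z_n\|\not\to 0$, the unit vectors $z_n/\|z_n\|$ have a limit $u_0$ that lies both in $S=\poly(I_JA_0,\mathbf{0})$ (dividing the constraints by $\|z_n\|\ge\epsilon$) and, because $z_n$ lives in the row space of $I_KA_n$, in $S^\perp$; condition (iv) makes $S$ a linear subspace so $S\cap S^\perp=\{0\}$, contradicting $\|u_0\|=1$. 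Your Farkas route is salvageable and genuinely different in flavor: the strictly positive $\nu_n$ with $\nu_n'I_JA_n=\mathbf{0}$ gives $\nu_n'v_n=0$ for $v_n=I_JA_n(y_n-z_0)$, which combined with $v_n\le e_n$ and $\min_j\nu_{n,j}$ bounded away from zero yields a componentwise lower bound $v_{n,j}\ge -\nu_{n,j}^{-1}\sum_{k\ne j}\nu_{n,k}e_{n,k}$, hence $\|v_n\|=O(\|e_n\|)$, and then the stability of the smallest nonzero singular value of $I_JA_n$ gives $\|\delta_n\|=O(\|v_n\|)\to 0$. That would be a quantitative alternative to the paper's compactness argument, but none of these steps appear in your writeup, so the proof is incomplete at its decisive point.
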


\noindent\textbf{Remark:} \textit{The intuition for the proof of Lemma \textup{\ref{K-convergence_nonempty2_new2}} is as follows. 
In the usual case, where the inequalities define a polyhedron that does not belong to a proper subspace of $\R^{d_\mu}$, no rank condition is needed. 
Any point in the interior can be easily approximated by a sequence of points in the sequence of polyhedrons. 
Any point not on the interior can be approximated by a sequence of points in the interior. 
The rank condition is only needed when the limit polyhedron belongs to a proper linear subspace of $\R^{d_\mu}$. 
Then, we look at the relative interior of the polyhedron (relative to the subspace). 
Every point on the relative boundary can be approximated by a sequence of points in the relative interior. 
The only problem is approximating points in the relative interior. 
For this we use the rank condition to ensure that the finite-n inequalities also define a subspace of the same dimension. 
The convergence of the subspaces then allows us to approximate any point in the limit subspace by a sequence of points in the finite-n subspaces.}

\begin{proof}[Proof of Lemma \textup{\ref{K-convergence_nonempty2_new2}}] 
First, let $S=\poly(I_JA_0, I_Jh_0)$ be a linear subspace of $\R^{d_\mu}$. 
We show that it is without loss of generality to assume $z_0=0$ and $h_0\geq 0$. 
If not, we can let $g_n=h_n-A_nz_0$ and $g_0=h_0-A_0z_0$. 
Notice that $g_n\rightarrow g_0$. Also notice that $g_0\ge 0$ because $z_0\in \poly(A_0,h_0)$. 
If there exists a sequence $x_n\in\poly(A_n, g_n)$ eventually converging to $0\in\poly(A_0,g_0)$, then $x_n+z_0\in\poly(A_n,h_n)$ eventually and converges to $z_0$. 

When $z_0=0$, $J=\{j\in\{1,...,d_A\}: e'_jh_0=0\}$. 
Let $r=\textup{rk}(I_J A_0)$. 
Let $K$ be a subset of $J$ with $r$ elements such that $\textup{rk}(I_KA_0)=r$. 
We first note that $\textup{rk}(I_K A_n)\ge r$ eventually by Lemma \ref{LimitRankNonincreasing}. 
Second, we note that $\textup{rk}(I_K A_n)\le \textup{rk}(I_J A_n)= r$ eventually by condition (iii). 
Therefore, $\textup{rk}(I_K A_n)=r$ eventually. 

For every $n$, let $x_n\in\poly(A_n, h_n)$, which exists eventually by condition (ii). 
Let 
\[
z_n=A'_nI'_K(I_KA_nA'_nI'_K)^{-1}I_K A_n x_n, 
\]
which is well defined whenever $I_KA_n$ has rank $r$. 
We show that $z_n\in\poly(I_JA_n, I_Jh_n)$ eventually. 
Note that the rows of $I_JA_n$ belong to the linear span of the rows of $I_KA_n$ whenever $\textup{rk}(I_KA_n)=\textup{rk}(I_JA_n)$, which happens eventually. In that case, there is a $d_A\times r$ matrix $U_n$ such that $I_JA_n = U_nI_KA_n$. Therefore
\begin{align}
I_JA_nz_n = U_n(I_KA_nA_n'I_K')(I_KA_nA_n'I_K')^{-1}I_KA_nx_n = U_nI_KA_nx_n = I_JA_nx_n.
\end{align}
Combined with $I_JA_nx_n\leq I_Jh_n$ (since $x_n\in \poly(A_n, h_n)$), we have $z_n\in\poly(I_JA_n, I_Jh_n)$ eventually. 

We show by contradiction that $z_n\rightarrow 0$. 
If not, then there exists an $\epsilon>0$ and a subsequence $n_m$ such that $\|z_{n_m}\|\ge\epsilon$ for all $m$. 
Let $u_m=\|z_{n_m}\|^{-1}z_{n_m}$, which belongs to the unit circle. 
Take a further subsequence, $m_q$, such that $u_{m_q}$ converges to $u_0$, also in the unit circle, as $q\rightarrow\infty$. 
To simplify notation, let $u_q=u_{m_q}$, $z_q=z_{n_{m_q}}$, $x_q=x_{n_{m_q}}$, $A_q=A_{n_{m_q}}$, and $h_q=h_{n_{m_q}}$. 
First, we note that $u_0\in S=\poly(I_J A_0, 0)$ because, for any $j\in J$, 
\[
e'_jA_0 u_0=\lim_{q\rightarrow\infty}\|z_q\|^{-1}e'_j A_qz_q\le \lim_{q\rightarrow\infty}\|z_q\|^{-1}e'_j h_q=0, 
\]
where the inequality follows because $z_q\in \poly(I_J A_q, I_J h_q)$ eventually and the final equality follows because $e'_j h_q\rightarrow e'_jh_0=0$. 
Second, we show that $u_0\perp S$. 
Note that for any $v\in S$, and for any $j\in J$, $e'_j A_0 v=0$ (because, otherwise, either $x=v$ or $x=-v$ would satisfy $e'_j A_0 x>0$, so we would not be in the case that $S=\poly(I_JA_0,0)$ is a linear subspace of $\R^{d_\mu}$). 
Therefore, for any $v\in S$, 
\[
v'u_0=\lim_{q\rightarrow\infty}\|z_q\|^{-1}v'z_q=\lim_{q\rightarrow\infty}v'A'_qI'_K(I_K A_qA'_q I'_K)^{-1}\|z_q\|^{-1}I_KA_qx_q=0, 
\]
where the final equality follows because $v'A'_qI'_K\rightarrow v'A'_0 I'_K=0$, $(I_K A_qA'_q I'_K)^{-1}=O(1)$, and $\|z_q\|^{-1}I_KA_qx_q=I_K A_qu_q=O(1)$ (because $u_q=O(1)$ and $I_K A_q=O(1)$). 
Therefore, $u_0\in S\cap S^\perp=\{0\}$. 
This is a contradiction because $u_0=0$ does not belong to the unit circle. 
Therefore, $z_n\rightarrow 0$. 

To finish the proof, we show that $z_n\in\poly(A_n, h_n)$ eventually. 
We have already shown that $z_n\in\poly(I_JA_n, I_Jh_n)$ eventually. 
For any $j\notin J$, we have $a_{j,n}z_n\rightarrow 0<e'_j h_0$, so $a_{j,n}z_n<e'_j h_n$ eventually. 
Therefore, $z_n\in\poly(A_n, h_n)$ eventually. 
\end{proof}

\begin{proof}[Proof of Lemma \textup{\ref{K-convergence_nonempty2_new}}] 
Let $a_{j,0}'$ denote the $j$th row of $A_{0}$ and let $a_{j,n}'$ denote the $j$th row of $A_n$. 
For the first half of the definition of Kuratowski convergence, let $n_q$ be a subsequence and $z_q$ be a sequence such that $z_q\in\textup{poly}(A_{n_q},h_{n_q})$ for all $q$ and $z_q\rightarrow z_0\in\R^{d_\mu}$ as $q\rightarrow\infty$. 
Then, 
\begin{equation}
a'_{j,0}z_0 = \lim_{q\rightarrow\infty}a'_{j,n_q} z_q\le \limsup_{q\rightarrow\infty}h_{j,n_q}=h_{j,0}, 
\end{equation}
showing that $z_0\in\textup{poly}(A_0,h_0)$. 

For the second half of the definition of Kuratowski convergence, let $z_0\in\textup{poly}(A_0,h_0)$.\footnote{If $\poly(A_0, h_0)$ is empty, then there is nothing to prove for the second half of the definition of Kuratowski convergence. The first half of the convergence implies that $\poly(A_n, h_n)$ Kuratowski converges to the empty set.} 
We seek a sequence, $z_n\in \poly(A_n, h_n)$ eventually, such that $z_n\rightarrow z_0$.\footnote{As in the proof of Lemma \ref{K-convergence_nonempty2_new2}, the ``eventually'' here means that the sequence $z_n$ must eventually belong to the set $\poly(A_n, h_n)$. This accommodates the possibility that $\poly(A_n, h_n)$ is empty for finitely many values of $n$.} 

We reduce to the case that $z_0=0$. 
Let $g_n=h_n-A_nz_0$ and $g_0=h_0-A_0z_0$. 
Notice that $g_n\rightarrow g_0$. 
(Also notice that $g_0\ge 0$ because $z_0\in \poly(A_0,h_0)$.) 
If there exists a sequence $x_n\in\poly(A_n, g_n)$ eventually converging to $0\in\poly(A_0,g_0)$, then $x_n+z_0\in\poly(A_n,h_n)$ eventually and converges to $z_0$. 
Thus, it is sufficient to prove the second half of Kuratowski convergence assuming $z_0=0$ and $h_0\ge 0$. 

Let $J=\{j\in\{1,...,d_A\}: e'_jh_0=0\}$. 
This is the active set for $\poly(A_0, h_0)$ at $z_0=0$. 
Let $S$ denote the smallest linear subspace of $\R^{d_\mu}$ that contains $\poly(I_JA_0, I_Jh_0)$. 
Let $J^S=\{j\in J: a_{j,0}\perp S\}$. 
By Lemma 11 in CS23, there exists a $\widetilde x\in S$ such that $e'_jA_0\widetilde x<e'_j h_0$ for all $j\in \{1,...,d_A\}/J^S$. 
Notice that, for any $j\in J^S$, $e'_j A_0 \widetilde x =0=e'_j h_0$ (because $\widetilde x\in S$), so $\widetilde x\in\poly(A_0,h_0)$. 

Let $\lambda_m$ be a sequence in (0,1] converging to zero as $m\rightarrow\infty$. 
For every $m$, notice that $\lambda_m \widetilde x\in \poly(A_0,h_0)$ with active inequalities given by $J^S$. 
Also notice that by Lemma 13 in CS23, $S=\poly(I_{J^S}A_0, I_{J^S}h_0)$, which is a linear subspace of $\R^{d_\mu}$. 
(Note that $I_{J^S}h_0=0$ because $J^S\subseteq J$.) 
Therefore, the existence of a sequence $x_{m,n}$ such that $x_{m,n}\in \poly(A_n,h_n)$ eventually and $x_{m,n}\rightarrow \lambda_m\widetilde x$ as $n\rightarrow\infty$ follows from Lemma \ref{K-convergence_nonempty2_new2} (using Condition (iii) in Lemma \ref{K-convergence_nonempty2_new} to satisfy Condition (iii) in Lemma \ref{K-convergence_nonempty2_new2}). 
Let $m=m_n$ grow sufficiently slowly so that $x_{{m_n}, n}\in \poly(A_n, h_n)$ eventually and $\|x_{{m_n}, n}-\lambda_{m_n}\widetilde x\|\le m_n^{-1}$ eventually. 
Then, by the triangle inequality, $x_{{m_n}, n}\rightarrow 0=z_0$ as $n\rightarrow\infty$. 
This verifies the second half of Kuratowski convergence and concludes the proof of Lemma \ref{K-convergence_nonempty2_new}. 
\end{proof}

The following lemma is used in the proof of Lemma \ref{projected_poly_convergence}. 
In general, a sequence of K-converging sets, intersected with a fixed set, does not K-converge. 
(See Remark 3 after Theorem 1 in \citeapp{CoxArgmax} for a simple counterexample.) 
This lemma shows that when the sets are convex, intersection preserves K-convergence. 
\begin{lemma}\label{BoxLemma}
Let $S_n$ be a sequence of convex sets K-converging to $S$, a closed and convex set. 
Let $T$ be a closed convex set such that $S\cap \text{int}(T)\neq \emptyset$, where $\text{int}(\cdot)$ denotes the interior of a set. 
Then, $S_n\cap T\rightarrow_K S\cap T$.  
\end{lemma}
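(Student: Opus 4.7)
The plan is to verify the two halves of the Kuratowski convergence definition directly, with the first half being essentially automatic and the second half requiring a convexity-based approximation argument.

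For the first half (outer/upper condition), fix any subsequence $n_m$ and any converging sequence $x_m \in S_{n_m} \cap T$ with limit $x_\infty$. Since $x_m \in S_{n_m}$ and $S_n \overset{K}{\to} S$, we obtain $x_\infty \in S$. Since $x_m \in T$ for all $m$ and $T$ is closed, $x_\infty \in T$. Hence $x_\infty \in S \cap T$, as required.

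For the second half (inner/lower condition), fix $x \in S \cap T$. The idea is to approximate $x$ by points in the \emph{relative interior region} $S \cap \textup{int}(T)$, for which K-convergence provides recovery sequences that automatically land inside $T$ eventually. By hypothesis there exists a point $x^\ast \in S \cap \textup{int}(T)$. For any $\lambda \in (0,1]$, set $x_\lambda = (1-\lambda)x + \lambda x^\ast$. Convexity of $S$ gives $x_\lambda \in S$, and a standard convex-geometry fact (if $a \in \textup{int}(T)$, $b \in T$, and $T$ is convex, then $\lambda a + (1-\lambda) b \in \textup{int}(T)$ for every $\lambda \in (0,1]$) gives $x_\lambda \in \textup{int}(T)$. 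Thus $x_\lambda \in S \cap \textup{int}(T)$ for each $\lambda \in (0,1]$, and clearly $x_\lambda \to x$ as $\lambda \to 0$.

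For each $m \in \mathbb{N}$, apply K-convergence of $S_n$ to $S$ at the point $x_{1/m} \in S$ to obtain a sequence $y_{m,n} \in S_n$ (eventually in $n$) with $y_{m,n} \to x_{1/m}$ as $n \to \infty$. Since $x_{1/m} \in \textup{int}(T)$ and $y_{m,n} \to x_{1/m}$, we have $y_{m,n} \in T$ for all $n$ sufficiently large, hence $y_{m,n} \in S_n \cap T$ eventually. Now diagonalize: choose $m_n \to \infty$ slowly enough that $y_{m_n,n} \in S_n \cap T$ eventually and $\|y_{m_n,n} - x_{1/m_n}\| \le 1/n$ eventually. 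Since $x_{1/m_n} \to x$ and $\|y_{m_n,n} - x_{1/m_n}\| \to 0$, the triangle inequality gives $y_{m_n,n} \to x$, completing the verification of the inner condition.

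The only nontrivial step is the second half, and within it the main obstacle is handling $x$ on the relative boundary of $T$, where a direct K-convergence recovery sequence in $S_n$ need not lie in $T$. The resolution uses the hypothesis $S \cap \textup{int}(T) \ne \emptyset$ to push $x$ slightly into the interior of $T$ along a convex combination inside $S$, after which K-convergence plus openness of $\textup{int}(T)$ delivers the approximation, followed by a routine diagonal argument.
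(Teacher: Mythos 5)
Your proof is correct. Both halves check out: the outer (closedness) half is the same as the paper's, and your inner half is sound --- the convex-geometry fact that $\lambda a+(1-\lambda)b\in\textup{int}(T)$ for $a\in\textup{int}(T)$, $b\in T$, $\lambda\in(0,1]$ is standard, and the diagonal extraction is legitimate. The core idea is the same as the paper's (exploit the point of $S\cap\textup{int}(T)$ and convexity to push $x$ into the interior of $T$), but the execution is genuinely different. The paper works entirely at the level of the approximating sequences: it fixes a single interior point $y$ with an $\epsilon$-ball inside $T$, takes recovery sequences $x_n\rightarrow x$ and $y_n\rightarrow y$ in $S_n$, and forms the single mixture $z_n=(1-\lambda_n)x_n+\lambda_n y_n$ with the explicit rate $\lambda_n=\min(2\epsilon^{-1}\|x_n-x\|,1)$, then verifies $z_n\in T$ by rescaling to a point $z_n^\ast$ that lands in the $\epsilon$-ball around $y$; this avoids any diagonalization but requires the rate bookkeeping. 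You instead perturb in the \emph{limit} set, producing a one-parameter family $x_\lambda\in S\cap\textup{int}(T)$, use openness of $\textup{int}(T)$ to get recovery sequences that land in $T$ for free, and then diagonalize over $\lambda=1/m$. Your version trades the explicit $\epsilon$--$\lambda_n$ computation for a diagonal argument of exactly the kind the paper itself uses in the proof of Lemma~\ref{K-convergence_nonempty2_new}, so nothing is lost; if anything the interior-point fact makes the ``why is the mixture in $T$'' step more transparent.
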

\noindent\textbf{Remark:} {\it Lemma \textup{\ref{BoxLemma}} can be applied with $S_n=\textup{poly}(A_n, b_n)$, $S=\textup{poly}(A,b)$ with $b\ge 0$, and $T=\{x: \|x\|_\infty\le\rho\}$ for some $\rho>0$ because $0\in S\cap \text{int}(T)$.}
\begin{proof}[Proof of Lemma \textup{\ref{BoxLemma}}]
Let $y\in S\cap \text{int}(T)$. 
Let $\epsilon>0$ such that $\{x: \|x-y\|<\epsilon\}\subseteq T$. 
It follows from $S_n\rightarrow_K S$ that there exist $y_n\in S_n$ such that $y_n\rightarrow y$. 
We show K-convergence in two steps. 

(1) Let $x\in S\cap T$. 
It follows from $S_n\rightarrow_K S$ that there exist $x_n\in S_n$ such that $x_n\rightarrow x$. 
Let $\lambda_n=\textup{min}(2\epsilon^{-1}\|x_n-x\|,1)$. 
Let $z_n=(1-\lambda_n)x_n+\lambda_ny_n$. 
Then, $z_n\in S_n$ because $S_n$ is convex. 
Also, $z_n\rightarrow x$ because $\lambda_n\rightarrow 0$. 
If $\lambda_n=0$, then $\|x_n-x\|=0$ and $z_n=x\in T$. 
To show that $z_n\in T$ eventually when $\lambda_n>0$, let $z^\ast_n=\lambda_n^{-1}z_n+(1-\lambda_n^{-1})x$ so that $z_n=\lambda_n z^\ast_n+(1-\lambda_n)x$. 
We then have $\|z_n^\ast-y\|=\|\lambda_n^{-1}(x_n-x)+x-x_n+y_n-y\|\le\epsilon/2+\|x_n-x\|+\|y_n-y\|$, which is less than $\epsilon$ eventually. 
When $\|z_n^\ast-y\|<\epsilon$, then $z^\ast_n\in T$, and, by the convexity of $T$, $z_n\in T$. 

(2) Let $x_n\in S_n\cap T$ and let $n_q$ be an arbitrary subsequence along which $x_{n_q}\rightarrow x$. 
It follows from $S_n\rightarrow_K S$ that $x\in S$. 
It also follows from the closedness of $T$ that $x\in T$. 
Therefore, $x\in S\cap T$, completing the second half of K-convergence. 
\end{proof}

\begin{proof}[Proof of Lemma \textup{\ref{projected_poly_convergence}}]
Given the assumption that $\textup{rk}(I_J[B, C_n])=\textup{rk}(I_J[B, C_\infty])$ eventually for all $J$, we can invoke Lemma \ref{K-convergence_nonempty2_new} to get that 
\begin{equation}
\poly([B, C_n], h_n)\to_K\poly([B, C_\infty], h_\infty). \label{projected_poly_convergence1}
\end{equation}

For the first half of Kuratowski convergence, consider an arbitrary $x\in\ppoly(B, h_\infty; C_\infty)$. 
Then, there exists a $\delta$ such that $(x, \delta)\in\poly([B, C_\infty], h_\infty)$. 
By (\ref{projected_poly_convergence1}), there exists a sequence $(x_n, \delta_n)\in\poly([B, C_n], h_n)$ such that $x_n\rightarrow x$ and $\delta_n\rightarrow\delta$. 
It follows that $x_n\in\ppoly(B, h_n; C_n)$. 

For the second half of the Kuratowski convergence, consider an arbitrary subsequence $n_q$ and an arbitrary sequence of values $x_q\in\ppoly(B, h_{n_q}; C_{n_q})$ such that $x_q\rightarrow x$. 
We want to show that $x\in\ppoly(B, h_\infty; C_\infty)$. 
Note that $\ppoly(B, h_\infty; C_\infty)\neq\emptyset$ because $h_\infty\ge 0$. 
Let $y$ be the projection of $x$ onto $\ppoly(B, h_\infty; C_\infty)$. 
Since $y\in\ppoly(B, h_\infty; C_\infty)$, there exists a $\delta\in\poly(C_\infty, h_\infty-By)$. 
Let $\rho>\|(y, \delta)\|_{\infty}$. 
Let $\text{Box}(\rho)=\{(x,\delta)\in\R^{d_\mu}\times \R^{d_\delta}: \|(x,\delta)\|_\infty\le\rho\}$. 
Let $\ppoly_\rho(B, h_{n_q}; C_{n_q})=\{x\in\R^{d_\mu}: (x,\delta)\in\poly([B, C_{n_q}], h_{n_q})\cap\text{Box}(\rho) \text{ for some }\delta\}$. 
Let $z_q$ denote the projection of $x_q$ onto $\ppoly_\rho(B, h_{n_q}; C_{n_q})$. 
Let $\gamma_q$ be such that $(z_q, \gamma_q)\in\poly([B, C_{n_q}], h_{n_q})\cap\text{Box}(\rho)$. 
Take a further subsequence so that $(z_q, \gamma_q)$ converges to some $(z, \gamma)\in\text{Box}(\rho)$. 
It follows from Lemma \ref{BoxLemma} that 
\begin{equation}
\poly([B, C_{n_q}], h_{n_q})\cap \text{Box}(\rho)\overset{K}{\to} \poly([B, C_\infty], h_\infty)\cap \text{Box}(\rho). \label{projected_poly_convergence2}
\end{equation}
Therefore, $(z, \gamma)\in\poly([B, C_\infty], h_\infty)$. 
Notice that 
\begin{equation}
\|x_q-z_q\|^2\rightarrow \|x-z\|^2\ge \|x-y\|^2, \label{projected_poly_convergence3}
\end{equation}
where the inequality follows because $z\in\ppoly(B, h_\infty; C_\infty)$. 
Note that the inequality in (\ref{projected_poly_convergence3}) holds with equality if and only if $z=y$ (by uniqueness of projection onto a convex set). 
Also, by (\ref{projected_poly_convergence2}), there exists a $(y_q, \delta_q)\in \poly([B, C_{n_q}], h_{n_q})\cap \text{Box}(\rho)$ such that $(y_q, \delta_q)\rightarrow (y, \delta)$. 
Notice that 
\begin{equation}
\|x_q-z_q\|^2\le \|x_q-y_q\|^2\rightarrow \|x-y\|^2. \label{projected_poly_convergence4}
\end{equation}
where the inequality follows because $y_q\in\ppoly(B, h_{n_q}; C_{n_q})$. 
It follows from (\ref{projected_poly_convergence3}) and (\ref{projected_poly_convergence4}) that $\lim_{q\rightarrow\infty}\|x_q-z_q\|^2=\|x-z\|^2=\|x-y\|^2$. 
Therefore, $z=y$. 

Next, let $J$ be the minimal activatable set for $\poly(C_\infty, h_\infty-By)$. 
By assumption, $\textup{rk}(I_J C_{n_q})=\textup{rk}(I_JC_\infty)$ eventually as $q\rightarrow\infty$. 
Therefore, by Lemma \ref{K-convergence_nonempty2_new}, $\poly(C_{n_q}, h_{n_q}-Bz_q)\overset{K}{\to} \poly(C_\infty, h_\infty-By)$. 
Since $\delta\in\poly(C_\infty, h_\infty-By)$, there exists $\widetilde\delta_q\in \poly(C_{n_q}, h_{n_q}-Bz_q)$ such that $\widetilde\delta_q\rightarrow \delta$. 
The fact that $\rho>\|(y, \delta)\|_\infty$ implies that $\|(z_q, \widetilde\delta_q)\|<\rho$ eventually. 
Recall that $z_q$ is the projection of $x_q$ onto $\ppoly_\rho(B, h_{n_q}; C_{n_q})$. 
Since the restriction that $(z_q, \widetilde\delta_q)\in \text{Box}(\rho)$ is not binding, it follows that $z_q=x_q$.\footnote{To clarify this argument, let $\lambda_q$ be such that $(x_q, \lambda_q)\in\poly([B, C_{n_q}], h_{n_q})$. If $z_q\neq x_q$, then note that $\epsilon (x_q, \lambda_q)+(1-\epsilon)(z_q, \widetilde\delta_q)\in \poly([B, C_{n_q}], h_{n_q})\cap \text{Box}(\rho)$ for $\epsilon>0$ small enough. This shows that $z_q$ cannot be the projection of $x_q$ onto $\ppoly_\rho(B, h_{n_q}; C_{n_q})$. Therefore, $z_q=x_q$.} 
Therefore, $x_q\rightarrow y$, which implies that $y=x$ and $x\in\ppoly(B, h_\infty; C_\infty)$. 
\end{proof}

\begin{proof}[Proof of Lemma \textup{\ref{multiplier_convergence}}]
Fix the given subsequence and denote $\widehat K_{n_q}$ by $\widehat K$. 
For $q\in\mathbb{N}$, let 
\[
\widetilde U_q=\left[\begin{array}{c}-I_{d_C}\\B'\\-B'\\C'_{n_q}\\-C'_{n_q}\\I_{\widehat K^c}\\-I_{\widehat K^c}\end{array}\right] \text{ and }\widetilde b_q=\left[\begin{array}{c}0_{d_C}\\2\Sigma_{n_q}^{-1}(x_{n_q}-\widehat\mu_{n_q})\\-2\Sigma_{n_q}^{-1}(x_{n_q}-\widehat\mu_{n_q})\\0_{d_\delta}\\0_{d_\delta}\\0_{|\widehat{K}^c|}\\0_{|\widehat{K}^c|}\end{array}\right]. 
\]
Similarly, let 
\[
\widetilde U_\infty=\left[\begin{array}{c}-I_{d_C}\\B'\\-B'\\C'_\infty\\-C'_\infty\\I_{\widehat K^c}\\-I_{\widehat K^c}\end{array}\right] \text{ and }\widetilde b_\infty=\left[\begin{array}{c}0_{d_C}\\2\Sigma_\infty^{-1}(x_\infty-\widehat\mu_\infty)\\-2\Sigma_\infty^{-1}(x_\infty-\widehat\mu_\infty)\\0_{d_\delta}\\0_{d_\delta}\\0_{|\widehat{K}^c|}\\0_{|\widehat{K}^c|}\end{array}\right]. 
\]
Then, $\widehat\psi_n=\underset{\psi\in{\footnotesize\poly}(\widetilde U_q, \widetilde b_q)}{\argmin}\|\psi\|$. 
Denote the number of inequalities by $d_U=2d_\mu+2d_\delta+2|\widehat{K}^c|+d_C$. 

We first show that 
\begin{equation}
\rk(I_J\widetilde{U}_q)=\rk(I_J\widetilde{U}_\infty)\label{multiplier_convergence_rank_condition}
\end{equation}
for every $J\subseteq\{1,...,d_U\}$ that is activatable for $\poly(\widetilde{U}_q, \widetilde{b}_q)$ or $\poly(\widetilde{U}_\infty, \widetilde{b}_\infty)$. 
Since there are three collections of equalities, any such $J$ must contain the final $2d_\mu+2d_\delta+2|\widehat{K}^c|$ elements of $\{1,...,d_U\}$. 
We can therefore write $J$ as $J_0\cup\{d_C+1,...d_U\}$ where $J_0\subseteq\{1,...,d_C\}$. 
Evaluate: 
\begin{align}
\rk(I_J\widetilde{U}_q) = \rk(\widetilde{U}_q'I_J')
&=\rk([-I'_{J_0}, ~B, ~-B,~C_{n_q},~-C_{n_q},~I'_{\widehat{K}^c},~-I'_{\widehat{K}^c}])\nonumber\\
&=\rk([-I'_{J_0}, ~B, ~C_{n_q},~I'_{\widehat{K}^c}])\nonumber\\
&=\rk([-I'_{J_0}, ~B, ~B\Pi_{n_q}+D,~I'_{\widehat{K}^c}])\nonumber\\
&=\rk([-I'_{J_0}, ~B, ~D,~I'_{\widehat{K}^c}])\nonumber\\
&=\rk([-I'_{J_0}, ~B, ~B\Pi_{\infty}+D,~I'_{\widehat{K}^c}])\nonumber\\
&=\rk([-I'_{J_0}, ~B, ~C_{\infty},~I'_{\widehat{K}^c}])\nonumber\\
&=\rk([-I'_{J_0}, ~B, ~-B,~C_{\infty},~-C_{\infty},~I'_{\widehat{K}^c},~-I'_{\widehat{K}^c}])\nonumber\\
&=\rk(\widetilde U_\infty I'_J) = \rk(I_J\widetilde{U}_\infty),\nonumber
\end{align}
where the second and second-to-last equalities hold by the structure of $J$, and the middle six equalities hold by the rank-preservation property of basic matrix column operations. 
This shows (\ref{multiplier_convergence_rank_condition}). 

Note that $\poly(\widetilde{U}_q,\widetilde{b}_q)$ is nonempty because $\widehat\psi_{n_q}\in\poly(\widetilde{U}_q,\widetilde{b}_q)$. 
This, combined with (\ref{multiplier_convergence_rank_condition}) and Lemma \ref{K-convergence_nonempty}, implies that $\poly(\widetilde{U}_\infty,\widetilde{b}_\infty)$ is nonempty. 
Let $\widehat\psi_\infty=\underset{\psi\in{\footnotesize\poly}(\widetilde U_\infty, \widetilde b_\infty)}{\argmin}\|\psi\|$. 
The fact that $\widehat\psi_\infty\in\poly(\widetilde U_\infty, \widetilde b_\infty)$ implies that $\widehat\psi_\infty$ satisfies conditions (\ref{multiplier_lemma_KKT1})-(\ref{multiplier_lemma_KKT3}) with $n=\infty$. 
For condition (\ref{multiplier_lemma_KKT4}), note that $I_{\widehat{K}^c}\widehat \psi_\infty=0$, which implies that $I_{\widehat{K}^c_\infty}\widehat \psi_\infty=0$ because $\widehat{K}^c_\infty\subseteq\widehat{K}^c$ (equivalently, $\widehat{K}\subseteq\widehat{K}_\infty$). 
To see this last point, note that if, for a given $k\in\{1,...,d_C\}$, we have $e'_k(h_{n_q}-B\widehat{\mu}_{n_q}-C_{n_q}\widehat\delta_{n_q})=0$ for all $q$, then $e'_k(h_\infty-B\widehat{\mu}_\infty-C_\infty\widehat{\delta}_\infty)=0$. 
(Recall $\widehat{K}=\widehat{K}_{n_q}$ for all $q$.) 

Finally, it follows from Lemma \ref{K-convergence_nonempty2_new}, using (\ref{multiplier_convergence_rank_condition}) and the non-emptiness of $\poly(\widetilde{U}_q,\widetilde{b}_q)$, that $\poly(\widetilde{U}_q,\widetilde{b}_q)\rightarrow_K\poly(\widetilde{U}_\infty,\widetilde{b}_\infty)$. 
Then, it follows from Lemma \ref{Projection-convergence} and the non-emptiness of $\poly(\widetilde{U}_\infty,\widetilde{b}_\infty)$, that $\widehat\psi_{n_q}\rightarrow \widehat\psi_\infty$. 
\end{proof}

\begin{proof}[Proof of Lemma \textup{\ref{lem:MixedChi2}}] First note that $X$ has the same distribution as $\widetilde{X} = \Sigma^{1/2} Z$, where $\Sigma^{1/2}$ is the symmetric matrix square root of $\Sigma$ and $Z\sim N(\mathbf{0},I)$. Thus, it suffices to show that $\|A\widetilde{X}+b\|^2_\Upsilon$ has a continuous distribution. 

Next, consider the derivation:
\begin{align}
\|A\widetilde X+b\|^2_{\Upsilon} = Z'\Sigma^{1/2}A'\Upsilon A\Sigma^{1/2} Z + 2b'\Upsilon A\Sigma^{1/2}Z + b'\Upsilon b.
\end{align}
Let $SVD$ be the singular value decomposition of $\Upsilon^{1/2}A\Sigma^{1/2}$, where $S$ and $D$ are unitary matrices (i.e. square matrices such that $SS' = I$ and $D D' = I$) and $V$ is a rectangular diagonal matrix whose diagonal elements are singular values of $\Upsilon^{1/2}A\Sigma^{1/2}$. Then
\begin{align}
\|A\widetilde X+b\|^2_{\Upsilon} = Z'D'VV'DZ + 2b'\Upsilon^{1/2} SVDZ + b'\Upsilon b.
\end{align}
Let $\widetilde{Z} = DZ$. Then $\widetilde{Z}\sim N(\mathbf{0},I)$. Let $\widetilde{Z}_j$ be the $j$th element of $\widetilde{Z}$ for $j\in\{1,\dots,\ell\}$. Then
\begin{align}
\|A\widetilde X+b\|^2_{\Upsilon}  = \sum_{j=1}^\ell (v_j^2\widetilde{Z}_j^2 + w_jv_j\widetilde{Z}_j) + b'\Upsilon b,
\end{align}
where $v_j$ is the $j$th diagonal element of $V$ (we let $v_j = 0$ for $j>k$ when $\ell>k$) and $w_j$ is the $j$th element of the vector $2S'\Upsilon^{1/2}b$. The lemma is proved by observing that $\{v_j^2\widetilde{Z}_j^2+w_jv_j\widetilde{Z}_j\}_{j=1}^k$ are mutually independent, and for each $j$, $v_j^2\widetilde{Z}_j^2+w_jv_j\widetilde{Z}_j$ is a continuous random variable unless $v_j=0$. 
\end{proof}

\begin{proof}[Proof of Lemma \textup{\ref{s=0_implies_T=0}}]
The first two KKT conditions for (\ref{generic_CQPP}) are 
\begin{align*}
2\Sigma^{-1}(X-\widehat\mu)&=B'\widehat\psi\\
C'\widehat\psi&=\mathbf{0}. 
\end{align*}
It follows from the definition of $\widehat L$ that $\widehat \psi=I'_{\widehat L}I_{\widehat L}\widehat\psi$. 
Plugging this into the first two KKT conditions, we get that 
\begin{align}
2\Sigma^{-1}(X-\widehat\mu)&=B'_{\widehat{L}}I_{\widehat{L}}\widehat\psi \label{generic1}\\
C'_{\widehat L}I_{\widehat{L}}\widehat\psi&=\mathbf{0}. \label{generic2}
\end{align}
It then follows from (\ref{generic2}) that $I_{\widehat{L}}\widehat\psi=M_{C_{\widehat{L}}}I_{\widehat{L}}\widehat\psi$. 
Plugging this into (\ref{generic1}), we get that 
\[
2\Sigma^{-1}(X-\widehat\mu)=B'_{\widehat{L}}M_{C_{\widehat{L}}}I_{\widehat{L}}\widehat\psi. 
\]
Then note that, by Lemma \ref{lem:MB2}(a), $0=\textup{rk}(M_{C_{\widehat{L}}}B_{\widehat{L}})$. 
Therefore, $\Sigma^{-1}(X-\widehat\mu)=\mathbf{0}$, which implies that $\widehat\mu=X$ and $T=0$. 
\end{proof}

\begin{proof}[Proof of Lemma \textup{\ref{LimitRankNonincreasing}}]
Note that the rank of any matrix is equal to the number of nonzero singular values. The matrix $A$ has $\textup{rk}(A)$ nonzero singular values. 
By Lemma \ref{ContinuousSingularValues}, $A_n$ has at least $\textup{rk}(A)$ nonzero singular values eventually. 
Therefore, $\textup{rk}(A_n)\ge \textup{rk}(A)$ eventually. 
\end{proof}

\begin{proof}[Proof of Lemma \textup{\ref{ContinuousSingularValues}}]
The left singular values of $A_n$ and $A$ are the non-negative square roots of the eigenvalues of $A'_nA_n$ and $A'A$, respectively. 
The result then follows from Theorem 2.4.9.2 of \citeapp{HornJohnson2012}, which implies that the eigenvalues of a Hermitian matrix are continuous in the entries of the matrix. 
\end{proof}

\begin{proof}[Proof of Lemma \textup{\ref{equivalent_representation}}]
The fact that $\widehat\mu_1=\widehat\mu_2$ follows because the feasible sets are the same. 

For the other two results, it is sufficient to prove them when the rows of $A$ are a subset of the rows of $B$. 
This is because we can define $F=[A;B]$ and $g=[c;d]$, where the semicolon denotes vertical concatenation, so that $\textup{poly}(F,g)=\textup{poly}(A,c)=\textup{poly}(B,d)$. 
Then the result with the rows of $A$ being a subset of the rows of $F$ combines with the result applied to $B$ and $F$ to yield $\widehat r_1$ and $\widehat r_2$ are both equal to the rank of the active inequalities for projection onto $\textup{poly}(F,g)$ and $\widehat\beta_1$ and $\widehat\beta_2$ are both equal to the value calculated when projecting onto $\textup{poly}(F,g)$. 

When the rows of $A$ are a subset of the rows of $B$, then $\widehat r_1=\widehat r_2$ follows from the proof of part (c) of Lemma 9 in CS23 (the bottom of page 90 in the supplemental materials). The fact that $\widehat\beta_1=\widehat\beta_2$ follows from Lemma 10 in CS23. 
\end{proof}

\section{Feasibility of the Limit}\label{A1iiiDiscussion}

Assumption \ref{Assumption1}(vi) requires poly$(C_\infty,b_\infty)$ to be nonempty. 
The remark on Assumption \ref{Assumption1} points out a sufficient condition based on a compact set $\Delta$. 
In fact, that condition can be weakened to $\{\delta\in \mathbb{R}^{d_\delta}: C_{F}\delta\leq b_{F}\}\cap\Delta\neq\emptyset$ for all $F\in{\cal F}_{n0}$. 
The argument for sufficiency is the same. 
Namely, for any sequence $F_{n_q}\in{\cal F}_{n_q0}$, there exists a $\delta_{F_{n_q}}\in\Delta$ such that  $B\mu_{F_{n_q}}+(B\Pi_{F_{n_q}}+D)\delta_{F_{n_q}}\leq d_{n_q}$. This sequence has a subsequence that converges to some limit $\delta_\infty\in\Delta$ that satisfies $B\mu_{\infty}+(B\Pi_\infty+D)\delta_\infty\leq d_\infty$. 
The remark on Assumption \ref{Assumption1} also claims that Assumption \ref{Assumption1}(vi) follows from Assumption \ref{Assumption1}(i) and a strengthened version of Assumption \ref{assu:rank:simple}.  
This section states Lemma \ref{K-convergence_nonempty}, which formalizes this claim. 

\begin{lemma}\label{K-convergence_nonempty} Consider a sequence of $d_C\times d_\delta$ matrices  $\{C_n\}$ and $d_C$-dimensional vectors $\{b_n\}$. 
Suppose 
\begin{itemize}
\item[\textup{(i)}] $C_n\to C_\infty$ and $b_n\to b_\infty$ as $n\to\infty$ for some 
$C_\infty\in\R^{d_C\times d_\delta}$ and $b_\infty\in\R^{d_C}$, 
\item[\textup{(ii)}] \textup{poly}$(C_n,b_n)\neq \emptyset$ eventually, and
\item[\textup{(iii)}] for any $K\subseteq\{1,\dots,d_C\}$ that is activatable for \textup{poly}$(C_n,b_n)$ infinitely often, we have $\textup{rk}(I_KC_n) = \textup{rk}(I_KC_\infty)$ eventually. 
\end{itemize}
Then, \textup{poly}$(C_\infty,b_\infty)\neq \emptyset$. 
\end{lemma}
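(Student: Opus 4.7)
The plan is to extract, from the sequence of nonempty polyhedra, a convergent sequence $\delta_n \in \poly(C_n,b_n)$ whose limit must lie in $\poly(C_\infty,b_\infty)$. Since an arbitrary choice of $\delta_n$ may be unbounded, I would define each $\delta_n$ as the minimum-norm element,
\[
\delta_n = \argmin_{\delta \in \poly(C_n,b_n)} \|\delta\|^2,
\]
which exists and is unique for all large $n$ by condition \textup{(ii)} together with the closedness/convexity of the polyhedron and strict convexity of $\|\cdot\|^2$.

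Next, form the active set $K_n = \{j : e_j' C_n \delta_n = e_j' b_n\}$. Because $K_n$ ranges over the finite power set of $\{1,\dots,d_C\}$, pigeonhole gives a subsequence along which $K_n$ is constantly equal to some $K$. On this subsequence, $K$ is activatable for $\poly(C_n,b_n)$ infinitely often, so condition \textup{(iii)} yields $\rk(I_K C_n) = \rk(I_K C_\infty)$ eventually. The KKT conditions for the minimum-norm projection furnish multipliers $\lambda_n \ge 0$, supported on $K_n = K$ by complementary slackness, such that $\delta_n = -C_n' \lambda_n$. Hence $\delta_n$ lies in the row space of $I_K C_n$, and since it also satisfies $I_K C_n \delta_n = I_K b_n$, it is characterized as the minimum-norm solution of this linear system:
\[
\delta_n = (I_K C_n)^{+} I_K b_n,
\]
where ``$+$'' denotes the Moore--Penrose pseudoinverse.

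Rank stability is precisely the hypothesis under which the pseudoinverse is continuous (Stewart's continuity theorem), so $(I_K C_n)^{+} \to (I_K C_\infty)^{+}$, and therefore $\delta_n \to \delta_\infty := (I_K C_\infty)^{+} I_K b_\infty$. Passing to the limit in $C_n \delta_n \leq b_n$ via the convergence $C_n \to C_\infty$ and $b_n \to b_\infty$ gives $C_\infty \delta_\infty \leq b_\infty$, so $\delta_\infty \in \poly(C_\infty,b_\infty)$, as desired.

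The main obstacle is the pseudoinverse continuity step: the map $A \mapsto A^{+}$ is discontinuous exactly when rank drops in the limit, and without some form of rank stability the lemma is genuinely false (a row of $C_n$ that shrinks to zero can enforce an inconsistent inequality in the limit). The leverage provided by the minimum-norm choice is that the KKT conditions automatically place $\delta_n$ in the row space of the active-rows submatrix, reducing the problem exactly to the form where condition \textup{(iii)} supplies the rank stability needed for pseudoinverse continuity.
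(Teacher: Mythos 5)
Your proposal is correct and follows essentially the same route as the paper: minimum-norm selection, pigeonholing the active set to a constant $K$ along a subsequence, using the KKT conditions to express the selection as $(I_K C_n)^+ I_K b_n$ (the paper invokes its Lemma \ref{lem:KKTB2_new} for this), and then applying continuity of the Moore--Penrose pseudoinverse under the rank-stability hypothesis (the paper's Lemma \ref{MP_convergence}) before passing to the limit in the inequalities. No gaps.
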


\noindent\textbf{Remark:} {\it Lemma \textup{\ref{K-convergence_nonempty}} verifies Assumption \textup{1(vi)} under Assumption \textup{\ref{Assumption1}(i)} and a strengthened version of Assumption \textup{\ref{assu:rank:simple}} where we require equation \textup{(\ref{rankst})} to hold also for every $K$ such that $K^=\subseteq K$ and $K\in \cup_{q=1}^{\infty}{\cal A}(C_{F_{n_q}}, b_{F_{n_q}})$ eventually.
This is  because, by Assumption \textup{\ref{Assumption1}(i)}, $C_{F_{n_q}}=D+B\Pi_{F_{n_q}}\rightarrow D+B\Pi_\infty=C_\infty$ and $b_{F_{n_q}}=d_{n_q}-B\mu_{F_{n_q}}\rightarrow d_\infty-B\mu_\infty=b_\infty$. 
Also, the fact that $F_{n}\in\mathcal{F}_{n0}$ implies that $\poly(C_{F_n},b_{F_n})$ is nonempty. 
Finally, any $K\subseteq\{1,...,d_C\}$ that is activatable for $\poly(C_{F_{n_q}},b_{F_{n_q}})$ infinitely often must include $K^=$ and must belong to $\cup_{q=1}^\infty\mathcal{A}(C_{F_{n_q}},b_{F_{n_q}})$. 
Therefore, all conditions of the lemma are satisfied, and Assumption \textup{\ref{Assumption1}(vi)} is verified.}

\begin{proof}[Proof of Lemma \textup{\ref{K-convergence_nonempty}}] 
This proof uses Lemma \ref{MP_convergence}, stated below. 
By condition (ii), $\poly(C_n,b_n)$ is a non-empty closed set eventually. 
Thus, the following argmin is well-defined for large enough $n$:
\begin{equation}
\hat x_n=\underset{x\in\mathbb{R}^{d_\delta}}{\argmin}\|x\|^2 \text{ s.t. } C_nx\le b_n. \label{KKTmin1}
\end{equation}
Let $K_n = \{j\in\{1,\dots,d_C\}: e_j'(C_n\hat x_n-b_n)=0\}$. 
Since $K_n$ can take at most $2^{d_C}$ distinct values, there exists a subsequence $\{n_q\}$ such that $K_{n_q}$ does not depend on $q$ and thus can be simply denoted by $K$. 
It then follows from Lemma \ref{lem:KKTB2_new} (with no nuisance parameters) that 
\begin{equation}
\hat x_{n_q}=\left(I_KC_{n_q}\right)^+I_Kb_{n_q} 
\end{equation}
for all $q\in\N$. 

Note that $K$ is an active set of inequalities for $\poly(C_n,b_n)$ infinitely often, and therefore by condition (iii), $\rk(I_KC_{n_q})=\rk(I_K C_\infty)$ eventually as $q\rightarrow\infty$. 
It follows from Lemma \ref{MP_convergence} that $\left(I_KC_{n_q}\right)^+\rightarrow \left(I_KC_\infty\right)^+$ as $q\rightarrow\infty$. 
Let $\hat x_\infty=\left(I_KC_\infty\right)^+I_K b_\infty$ and note that $\hat x_{n_q}\rightarrow \hat x_\infty$. 
To complete the proof, note that 
\[
C_\infty \hat x_\infty-b_\infty = \lim_{q\to\infty}(C_{n_q}\hat x_{n_q}-b_{n_q})\leq 0.
\]
Thus, $\hat x_\infty\in\poly(C_\infty,b_\infty)$, which shows that $\poly(C_\infty,b_\infty)\neq\emptyset$. 
\end{proof}

\begin{lemma}\label{MP_convergence}
Let $A_n$ be a sequence of matrices converging to $A_\infty$ such that $\rk(A_n)=\rk(A_\infty)$ eventually. Then, $A_n^+\rightarrow A^+_\infty$. 
\end{lemma}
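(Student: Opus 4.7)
The plan is to prove $A_n^+\to A_\infty^+$ by combining Lemma \ref{ContinuousSingularValues} with a compactness argument applied to the orthogonal factors of the singular value decomposition (SVD). The rank-stability hypothesis is exactly what is needed to convert continuity of singular values into continuity of the pseudoinverse.

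First I would set $r=\textup{rk}(A_\infty)$ and write the singular values of $A_\infty$ in non-increasing order as $\sigma_1\ge\cdots\ge\sigma_r>0=\sigma_{r+1}=\cdots$ By Lemma \ref{ContinuousSingularValues}, the $k$-th singular value of $A_n$ converges to $\sigma_k$ for every $k$. Since $\textup{rk}(A_n)=r$ eventually, the first $r$ singular values of $A_n$ are eventually bounded below by $\sigma_r/2>0$ and the remaining singular values are exactly zero. Writing $A_n=U_n\Sigma_n V_n^\top$ in SVD form (with $U_n,V_n$ orthogonal and $\Sigma_n$ diagonal), this lower bound yields $\|A_n^+\|=1/\sigma_r(A_n)=O(1)$, so the sequence $\{A_n^+\}$ is bounded.

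Next I would extract a convergent subsequence. The orthogonal groups are compact, so along any subsequence there is a further subsequence $n_q$ with $U_{n_q}\to U^*$ and $V_{n_q}\to V^*$ (both orthogonal), and $\Sigma_{n_q}\to\Sigma^*$, the diagonal matrix of singular values of $A_\infty$. Taking limits in $A_{n_q}=U_{n_q}\Sigma_{n_q}V_{n_q}^\top$ gives $A_\infty=U^*\Sigma^*(V^*)^\top$, so the triple $(U^*,\Sigma^*,V^*)$ is a (possibly non-canonical) SVD of $A_\infty$. Because only the nonzero diagonal entries of $\Sigma_{n_q}$ are inverted when forming $\Sigma_{n_q}^+$, and these are bounded away from zero, $\Sigma_{n_q}^+\to(\Sigma^*)^+$. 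Hence $A_{n_q}^+=V_{n_q}\Sigma_{n_q}^+U_{n_q}^\top\to V^*(\Sigma^*)^+(U^*)^\top$.

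Finally I would invoke uniqueness of the Moore-Penrose pseudoinverse: any SVD of $A_\infty$ produces the same $A_\infty^+$, so $V^*(\Sigma^*)^+(U^*)^\top=A_\infty^+$ regardless of which subsequence and which orthogonal limits were chosen. Since every subsequence of $\{A_n^+\}$ has a further subsequence converging to the same limit $A_\infty^+$, the full sequence converges to $A_\infty^+$. The main obstacle is the well-known discontinuity of the SVD factors when singular values coincide; the subsequence-plus-uniqueness argument is precisely what sidesteps this issue, while the rank-stability assumption is what prevents the smallest nonzero singular value of $A_n$ from collapsing to zero and thereby blowing up $A_n^+$.
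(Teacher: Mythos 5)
Your proof is correct and follows essentially the same route as the paper's: an SVD of $A_n$, compactness of the orthogonal factors to pass to a convergent subsequence, continuity of singular values (Lemma \ref{ContinuousSingularValues}) plus rank stability to get $\Sigma_{n_q}^+\to\Sigma_\infty^+$, and the subsequence principle to conclude. Your explicit appeal to uniqueness of the Moore--Penrose pseudoinverse across different SVDs of $A_\infty$ is a point the paper leaves implicit, and you correctly write the pseudoinverse as $V\Sigma^+U'$ where the paper's proof has a harmless transposition typo.
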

\begin{proof}[Proof of Lemma \ref{MP_convergence}]
Fix an arbitrary subsequence, $n_m$. It is sufficient to show that there exists a further subsequence along which the convergence in the conclusion holds. 

Let $A_n=U_n\Sigma_nV'_n$ be a singular value decomposition of $A_n$, where $U_n$ and $V_n$ are orthonormal matrices and $\Sigma_n$ is a (possibly non-square) diagonal matrix with diagonal elements equal to the singular values of $A_n$ in nonincreasing order. 
Let $\Sigma_\infty$ be the (possibly non-square) diagonal matrix with diagonal elements equal to the singular values of $A_\infty$. 
Let $U_\infty$ and $V_\infty$ be orthonormal matrices and let $n_q$ be a further subsequence of $n_m$ such that $U_{n_q}\rightarrow U_\infty$, $V_{n_q}\rightarrow V_\infty$, and $r=\rk(A_{n_q})=\rk(A_\infty)$ does not depend on $q$. 
(These exist because the set of orthonormal matrices is compact.) 
It follows from Lemma \ref{ContinuousSingularValues} that $\Sigma_{n_q}\rightarrow \Sigma_\infty$. 
It follows from convergence that $A_\infty=U_\infty\Sigma_\infty V'_\infty$ is a singular value decomposition of $A_\infty$. 

The fact that $\Sigma_{n_q}$ and $\Sigma_\infty$ both have $r$ nonzero diagonal elements, together with the fact that $\Sigma_{n_q}\rightarrow \Sigma_\infty$, implies that $\Sigma_{n_q}^+\rightarrow \Sigma_\infty^+$. 
(The Moore-Penrose pseudo-inverse of a possibly non-square diagonal matrix is a diagonal matrix of the same dimension with diagonal elements equal to the pseudo-inverse of each diagonal element of the original matrix.) 
Therefore, $A_{n_q}^+=U_{n_q}\Sigma_{n_q}^+V'_{n_q}\rightarrow U_\infty \Sigma_\infty^+ V'_\infty=A_\infty^+$. 
(In general, the Moore-Penrose pseudo-inverse of a matrix $A$ with singular value decomposition $U\Sigma V'$ is $A^+=U\Sigma^+ V'$; see, for example, Chapter 6 in \citeapp{Ben-IsraelGreville2003}.) 
\end{proof}

\section{Inference on Policy Relevant Treatment Effects}\label{app:MTE}

In this section, we demonstrate the GCC test in a simulation of Example \ref{ex:PRTE}. 
The simulations show that the GCC test is fast to compute and has good size and power. 

\subsection{The Data Generating Process} \label{E.1}
We follow Section 5 in \citeapp{MogstadSantosTorgovitsky2018}. 
We generate an i.i.d.\ sample of $\{Y_i,D_i,Z_i\}_{i=1}^n$ according to the following distribution. 
Suppose $Y$ is binary and there are no exogenous covariates $X$. 
Let $Z$ be independent of $(Y_0, Y_1)$ with support $\{0,1,2\}$ and distribution given by $P(Z=0)=0.5$, $P(Z=1) = 0.4$, and $P(Z=2) = 0.1$. 
Also, let $D = 1\{p(Z)\geq U\}$, where $U|Z\sim \textup{Uniform}[0,1]$ and $p(z)$ is the propensity score, defined by $p(0)=0.35$, $p(1) = 0.6$, and $p(2) = 0.7$. Let $Y_d|U=u\sim \textup{Bernoulli}(\kappa_d(u))$, where the marginal treatment response functions are
\begin{align}
	\kappa_0(u) &= 0.6\varphi_0^2(u)+0.4\varphi_1^2(u)+0.3\varphi_2^2(u),\nonumber\\
	\kappa_1(u) &= 0.75\varphi_0^2(u)+0.5\varphi_1^2(u)+0.25\varphi_2^2(u),\label{m0m1}
\end{align}
and $\varphi_0^2(u), \varphi_1^2(u), \varphi_2^2(u)$ are the Bernstein basis polynomials of degree $2$:
$$
\varphi_k^2(u) = {2 \choose k} u^k(1-u)^{2-k}.
$$
These marginal treatment response functions are the same as those used in MST18 and are depicted in Figure 1 in that paper. 

\subsection{Local Average Treatment Effects}

For $0\le\alpha_1<\alpha_2\le 1$, let 
\begin{equation}
\textup{LATE}(\alpha_1,\alpha_2)=\mathbb{E}[Y_1-Y_0|\alpha_1<U<\alpha_2]. 
\end{equation}
As shown in \citeapp{ImbensAngrist1994}, the three point-identified LATEs for the DGP specified in Section \ref{E.1} are: LATE(0.35,0.6), LATE(0.35,0.7), and LATE(0.6,0.7). Other LATEs are generally not point identified. 
We take our (policy-relevant) parameter of interest to be $\theta=$LATE(p(0),0.9). 
Recall the true value of $p(0)$ is $0.35$. 
This parameter is closely related to the instrument that identifies the LATE(0.35,0.7) parameter. 
It has the interpretation of a policy change that includes the same compliers as that instrument plus additional compliers with values of $u$ up to $0.9$. 
(Think of the instrument as being a past policy change, while the policy change of interest is the same but with more compliers.) 
The true value of $\theta$ for the DGP specified in Section \ref{E.1} is approximately 0.046. 
This parameter of interest is not point identified. However, we use the bounds in MST18 to calculate a confidence interval for $\theta$. 

\subsection{Marginal Treatment Response Parameterization}
We define a set of MTRs to be a set of functions of $u\in[0,1]$ that are bounded between $0$ and $1$ and constant on the intervals $\mathcal{U}_1=[0,p(0))$, $\mathcal{U}_2=[p(0),p(1))$, $\mathcal{U}_3=[p(1),p(2))$, $\mathcal{U}_4=[p(2),0.9)$, and $\mathcal{U}_5=[0.9,1]$. 
The endpoints of these intervals are taken from the range of the propensity score function, together with the boundary of the LATE that we are interested in. 
By Proposition 4 in MST18, this parameterization of MTRs is sufficient to achieve tight nonparametric bounds on $\theta$. 
For $d\in\{0,1\}$, let $\delta_d=(\delta_{d1},\delta_{d2},\delta_{d3},\delta_{d4},\delta_{d5})'\in[0,1]^5$ be a vector of coefficients such that $\delta_{dj}$ is the value of the function over $\mathcal{U}_j$ for $j\in\{1,...,5\}$. 
Let $\delta=(\delta'_0, \delta'_1)'$. 
We enforce these shape restrictions on $\delta$ by setting 
\begin{equation}
	A=
	\begin{pmatrix}
		-I_{10}\\
		I_{10}
	\end{pmatrix}
	\qquad
	\text{and}
	\qquad
	b=
	\begin{pmatrix}
		\mathbf{0}_{10\times 1}\\                             
		\mathbf{1}_{10\times 1}
	\end{pmatrix}. 
\end{equation}

\subsection{IV-Like Estimands}
\label{sec:ivlike}
Any measurable function of $D$ and $Z$ could be an IV-like estimand. 
We consider a variety of subsets of the following IV-like estimands: 
\begin{align}
	\text{IV}:~& s_{IV}(D,Z) = Z - \E[Z]\nonumber\\
	\text{OLS}:~& s_{OLS}(D,Z) = D-\E[D]\nonumber\\
	\textup{IV}_{Z0}:~&s_{Z0}(D,Z) = 1\{Z=0\}\nonumber\\
	\textup{IV}_{Z1}:~&s_{Z1}(D,Z) = 1\{Z=1\}\nonumber\\
	\textup{IV}_{Z2}:~&s_{Z2}(D,Z) = 1\{Z=2\}\nonumber\\
	\textup{IV}_{DZ10}:~&s_{DZ10}(D,Z) = D\,1\{Z=0\} \nonumber\\
	\textup{IV}_{DZ11}:~&s_{DZ11}(D,Z) = D\,1\{Z=1\} \nonumber\\
	\textup{IV}_{DZ12}:~&s_{DZ12}(D,Z) = D\,1\{Z=2\} \nonumber\\
	\textup{IV}_{DZ00}:~&s_{DZ00}(D,Z) = (1-D)\,1\{Z=0\} \nonumber\\
	\textup{IV}_{DZ01}:~&s_{DZ01}(D,Z) = (1-D)\,1\{Z=1\} \nonumber\\
	\textup{IV}_{DZ02}:~&s_{DZ02}(D,Z) = (1-D)\,1\{Z=2\}. \label{IVlike}
\end{align}
Any $\mathcal{S}\subseteq \overline{\mathcal{S}}:=\{\text{IV}, \text{OLS},\text{Z0},\text{Z1},\text{Z2},\text{DZ10},\text{DZ11},\text{DZ12},\text{DZ00},\text{DZ01},\text{DZ02}\}$ defines a collection of IV-like estimands that defines a vector $m$ with $|\mathcal{S}|$ elements. 

\subsection{Defining and Estimating $m$, $\gamma$, and $\Gamma$}
For any $x\in\overline{\mathcal{S}}$, let $m_x=\mathbb{E}[s_x(D,Z)Y]$. 
For any $\mathcal{S}\subseteq\overline{\mathcal{S}}$, we can take $m$ to be the vector of $m_x$ for $x\in\mathcal{S}$. 
Each $m_x$ can be estimated by $\overline{m}_x=\esum s_x(D_i,Z_i)Y_i$.\footnote{\label{estimate_s_IV_OLS}For $x\in\{\text{IV},\text{OLS}\}$, $s_x(\cdot)$ is estimated by replacing $\mathbb{E}[Z]$ and $\mathbb{E}[D]$ with $\esum Z_i$ and $\esum D_i$, respectively.} 

To define $\gamma$, we use the formula for the weights associated with LATE(p(0),0.9) from Table I in MST18 combined with the piecewise constant basis functions. 
For $d\in\{0,1\}$ and $j\in\{1,...,5\}$, let 
\begin{equation}
\gamma_{dj}=\frac{(-1)^{d+1}}{|0.9-p(0)|}\int \mathds{1}\{u\in\mathcal{U}_j\cap[p(0),0.9]\}du. \label{gamma_def}
\end{equation}
Let $\gamma=(\gamma'_0,\gamma'_1)'$, where $\gamma_d=(\gamma_{d1},\gamma_{d2},\gamma_{d3},\gamma_{d4},\gamma_{d5})'$ for $d\in\{0,1\}$. 
Note that $\gamma$ needs to be estimated because it depends on $p(0)$. 
It also depends on $\mathcal{U}_j$, which need to be estimated.\footnote{In MST18, the basis that parameterizes the MTRs is taken as given and known throughout the paper. However, a specific basis is required for the bounds to be equal to the endpoints of the identified set---one that depends on the unknown propensity scores; see Proposition 4 in MST18. This basis must be estimated in practice.} 
We estimate the propensity scores by $(\widehat{p}(0),\widehat{p}(1),\widehat{p}(2))=\esum(\mathds{1}\{Z_i=0\},\mathds{1}\{Z_i=1\},\mathds{1}\{Z_i=2\})$. 
We then take $\widehat{\mathcal{U}}_j$ to be estimated versions of $\mathcal{U}_j$ with the estimated propensity scores plugged in. 
Similarly, we estimate $\gamma_{dj}$ and $\gamma$ by $\widehat{\gamma}_{dj}$ and $\widehat{\gamma}$ following (\ref{gamma_def}) with $\widehat{p}(0)$ and $\widehat{\mathcal{U}}_j$ plugged in. 

To define $\Gamma$, we use the formula for the weights from Proposition 1 in MST18 combined with the piecewise constant basis functions. 
For $x\in\overline{\mathcal{S}}$, let $\Gamma_x$ denote the row of $\Gamma$ associated with $s_x(D,Z)$. 
We can define $\Gamma_x=(\Gamma'_{x0},\Gamma'_{x1})$, where $\Gamma_{xd}=(\Gamma_{xd1},...,\Gamma_{xd5})'$ and 
\begin{equation}
\Gamma_{xdj}=\begin{cases}
\mathbb{E}\int s_x(0,Z)\mathds{1}\{u\in\mathcal{U}_j\}\mathds{1}\{u>p(Z)\}du&\text{ if }d=0\\
\mathbb{E}\int s_x(1,Z)\mathds{1}\{u\in\mathcal{U}_j\}\mathds{1}\{u\le p(Z)\}du&\text{ if }d=1
\end{cases}, 
\end{equation}
for $d\in\{0,1\}$, $j\in\{1,...,5\}$, and $x\in\overline{\mathcal{S}}$. 
We can estimate $\Gamma$ by 
\begin{equation}
\widehat{\Gamma}_{xdj}=\begin{cases}
\esum\int s_x(0,Z_i)\mathds{1}\{u\in\widehat{\mathcal{U}}_j\}\mathds{1}\{u>\widehat{p}(Z_i)\}du&\text{ if }d=0\\
\esum\int s_x(1,Z_i)\mathds{1}\{u\in\widehat{\mathcal{U}}_j\}\mathds{1}\{u\le \widehat{p}(Z_i)\}du&\text{ if }d=1
\end{cases}, 
\end{equation}
where, for $x\in\{\text{IV},\text{OLS}\}$, we follow footnote \ref{estimate_s_IV_OLS} for estimating $s_x(\cdot,\cdot)$. 

We estimate the variance-covariance matrix of the estimators of $\gamma$, $m$, and $\Gamma$ by the bootstrap with $B=1000$ bootstrap draws. 

\subsection{Results}
We implement the GCC and RGCC tests using (\ref{LP1}) to write $\Gamma$, $m$, $\gamma$, $\theta$, $A$, and $b$ in terms of $B$, $\mu$, $\Pi$, $D$, and $d$.\footnote{Because $\gamma$ is estimated, we follow the strategy in Remark (2) above Example \ref{ex:npiv} in Section \ref{sub:bound_lp}.} 
We can similarly define $\overline{\mu}_n$, $\overline{\Pi}_n$, and $\overline{\Omega}_n$ using the estimators of $\gamma$, $m$, $\Gamma$, and the estimator of their variance-covariance matrix. 
For each of $5000$ simulations, we calculate the confidence interval implied by the GCC and RGCC tests using bisection. 
We then calculate the frequency with which any given value of $\theta$ in $[-1,1]$ lies outside the confidence interval. 

We consider a variety of choices of $\mathcal{S}\subseteq\overline{\mathcal{S}}$. 
Figure \ref{Fig:mte1} reports the results for only the IV estimand: $\mathcal{S}_1=\{\text{IV}\}$. 
Figure \ref{Fig:mte2} adds the OLS estimand: $\mathcal{S}_2=\{\text{IV},\text{OLS}\}$. 
Figure \ref{Fig:mte3} breaks the instrument into three components and keeps the OLS estimand: $\mathcal{S}_3=\{\text{Z0},\text{Z1},\text{Z2},\text{OLS}\}$. 
Figure \ref{Fig:mte4} saturates the support of $(D,Z)$: $\mathcal{S}_4=\{\text{DZ10},\text{DZ11},\text{DZ12},\text{DZ00},\text{DZ01},\text{DZ02}\}$. 
Note that $\mathcal{S}_4$ gives the tightest bounds on $\theta$. 
The identified sets for $\theta$ are also depicted in Figures \ref{Fig:mte1}-\ref{Fig:mte4} by the shaded region.\footnote{For a given $\mathcal{S}\subseteq\overline{\mathcal{S}}$, the identified set for $\theta$ can be calculated by solving two LPPs: $\theta_{\min}=\min_{\delta: A\delta\le b, \Gamma\delta=m}\gamma'\delta$ and $\theta_{\max}=\max_{\delta: A\delta\le b, \Gamma\delta=m}\gamma'\delta$ for the $\Gamma$ and $m$ associated with $\mathcal{S}$.} 
In the legend, the number in square brackets indicates the computational time (in seconds) to compute the conﬁdence interval once, taking the median over simulations. 

\noindent\textbf{Remarks:} 
(1) \textit{In each figure, both GCC and RGCC tests have well controlled null rejection rates and reasonable power outside the identified set. 
We can see that the identified sets shrink as we add more IV-like estimands, as expected.  It is encouraging to see that the power curves get steeper with more IV-like estimands as well, indicating that our tests can effectively capture the identification power of the additional IV-like estimators despite the added noise. 
The RGCC test appears to be size-exact on the boundary of the identified sets. Also, the difference between the GCC and RGCC tests gets smaller as the number of equalities increases.} 

(2) \textit{It is interesting to note that the computational time of the RGCC does not change monotonically as we move from Figure \ref{Fig:mte1} to Figure \ref{Fig:mte4}. It takes the longest in Figure \ref{Fig:mte3}. The nonmonotonicity results from the way we implement the refinement. Specifically, we do not implement the vertex enumeration step unless $\hat{s}_n = 1$. The event $\hat{s}_n = 1$ may occur less frequently as the number of inequalities/equalities increases.} 

\newcommand{\centerparameter}{3.2}

\begin{figure}
	\pgfplotsset{width=0.5*\textwidth}
	\begin{center}
		\begin{tikzpicture}
			\begin{axis}
				[ymin=0,ymax=1,xmin=-1,xmax=1,legend style={at={(axis cs:-0.52,0.985)},anchor=north west}]
				\draw [pattern=north west lines, pattern color=gray!70,draw=none] (-0.4209,0) rectangle (0.5003,0.996);
				\draw[dotted] (-1,0.05) -- (1,0.05);
				\addplot[very thick] table [x=theta, y=rr_GCC, col sep=semicolon]
				{MTE_Rej_n_eqs500_Sim_eqs5000_CI1.txt};
				\addplot[dashed, very thick, gray] table [x=theta, y=rr_RGCC, col sep=semicolon]
				{MTE_Rej_n_eqs500_Sim_eqs5000_CI1.txt};
				\addlegendentry{{\scriptsize GCC test  [0.93s]}};
				\addlegendentry{{\scriptsize RGCC test [1.22s]}};
			\end{axis}
			\node[align=center, below] at (\centerparameter,-.6) {(a) $n=500$};
		\end{tikzpicture}
		\begin{tikzpicture}
			\begin{axis}
				[ymin=0,ymax=1,xmin=-1,xmax=1,legend style={at={(axis cs:-0.52,0.985)},anchor=north west}]
				\draw [pattern=north west lines, pattern color=gray!70,draw=none] (-0.4209,0) rectangle (0.5003,0.996);
				\draw[dotted] (-1,0.05) -- (1,0.05);
				\addplot[very thick] table [x=theta, y=rr_GCC, col sep=semicolon]
				{MTE_Rej_n_eqs1000_Sim_eqs5000_CI1.txt};
				\addplot[dashed, very thick, gray] table [x=theta, y=rr_RGCC, col sep=semicolon]
				{MTE_Rej_n_eqs1000_Sim_eqs5000_CI1.txt};
				\addlegendentry{{\scriptsize GCC test  [0.93s]}};
				\addlegendentry{{\scriptsize RGCC test [1.18s]}};
			\end{axis}
			\node[align=center, below] at (\centerparameter,-.6) {(b) $n=1000$};
		\end{tikzpicture}
	\end{center}
\caption{Power Curve with only the IV Slope Coefficient ($\mathcal{S}_1$)}
\label{Fig:mte1}
\end{figure}
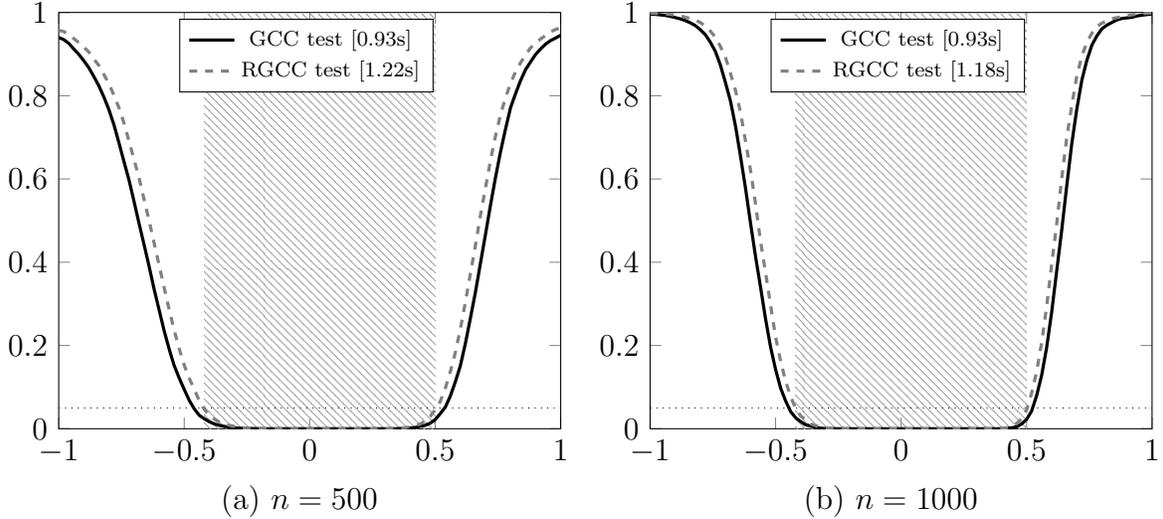

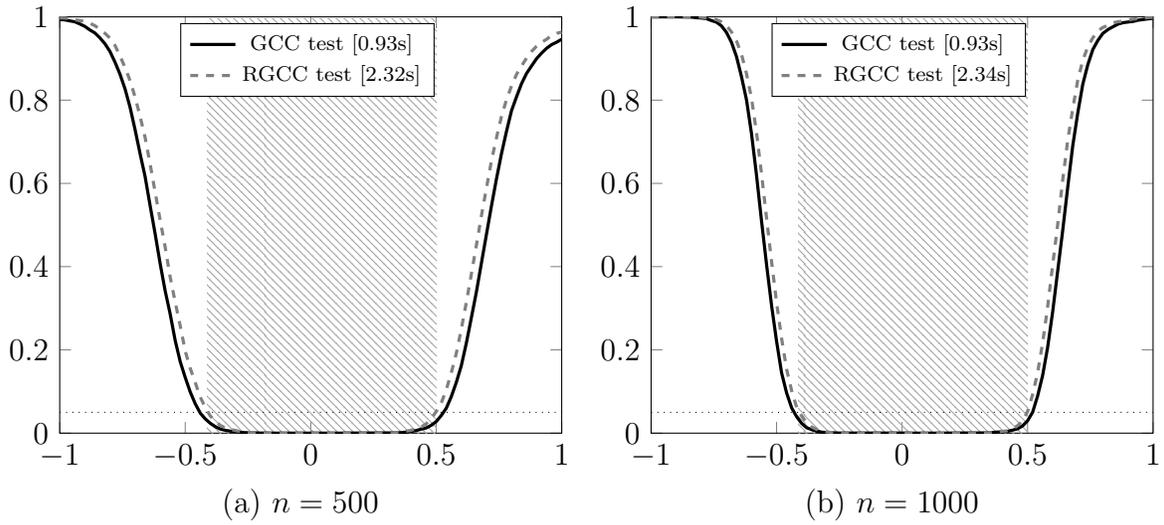
\begin{figure}
	\pgfplotsset{width=0.5*\textwidth}
	\begin{center}
		\begin{tikzpicture}
			\begin{axis}
				[ymin=0,ymax=1,xmin=-1,xmax=1,legend style={at={(axis cs:-0.52,0.985)},anchor=north west}]
				\draw [pattern=north west lines, pattern color=gray!70,draw=none] (-0.4112,0) rectangle (0.5003,0.996);
				\draw[dotted] (-1,0.05) -- (1,0.05);
				\addplot[very thick] table [x=theta, y=rr_GCC, col sep=semicolon]
				{MTE_Rej_n_eqs500_Sim_eqs5000_CI2.txt};
				\addplot[dashed, very thick, gray] table [x=theta, y=rr_RGCC, col sep=semicolon]
				{MTE_Rej_n_eqs500_Sim_eqs5000_CI2.txt};
				\addlegendentry{{\scriptsize GCC test  [0.93s]}};
				\addlegendentry{{\scriptsize RGCC test [2.32s]}};
			\end{axis}
			\node[align=center, below] at (\centerparameter,-.6) {(a) $n=500$};
		\end{tikzpicture}
		\begin{tikzpicture}
			\begin{axis}
				[ymin=0,ymax=1,xmin=-1,xmax=1,legend style={at={(axis cs:-0.52,0.985)},anchor=north west}]
				\draw [pattern=north west lines, pattern color=gray!70,draw=none] (-0.4112,0) rectangle (0.5003,0.996);
				\draw[dotted] (-1,0.05) -- (1,0.05);
				\addplot[very thick] table [x=theta, y=rr_GCC, col sep=semicolon]
				{MTE_Rej_n_eqs1000_Sim_eqs5000_CI2.txt};
				\addplot[dashed, very thick, gray] table [x=theta, y=rr_RGCC, col sep=semicolon]
				{MTE_Rej_n_eqs1000_Sim_eqs5000_CI2.txt};
				\addlegendentry{{\scriptsize GCC test  [0.93s]}};
				\addlegendentry{{\scriptsize RGCC test [2.34s]}};
			\end{axis}
			\node[align=center, below] at (\centerparameter,-.6) {(b) $n=1000$};
		\end{tikzpicture}
	\end{center}
\caption{Power Curve with Both the IV and OLS Slope Coefficients ($\mathcal{S}_2$)}
\label{Fig:mte2}
\end{figure}

\begin{figure}
	\pgfplotsset{width=0.5*\textwidth}
	\begin{center}
		\begin{tikzpicture}
			\begin{axis}
				[ymin=0,ymax=1,xmin=-1,xmax=1,legend style={at={(axis cs:-0.52,0.985)},anchor=north west}]
				\draw [pattern=north west lines, pattern color=gray!70,draw=none] (-0.2988,0) rectangle (0.4075,0.996);
				\draw[dotted] (-1,0.05) -- (1,0.05);
				\addplot[very thick] table [x=theta, y=rr_GCC, col sep=semicolon]
				{MTE_Rej_n_eqs500_Sim_eqs5000_CI3.txt};
				\addplot[dashed, very thick, gray] table [x=theta, y=rr_RGCC, col sep=semicolon]
				{MTE_Rej_n_eqs500_Sim_eqs5000_CI3.txt};
				\addlegendentry{{\scriptsize GCC test  [0.90s]}};
				\addlegendentry{{\scriptsize RGCC test [7.66s]}};
			\end{axis}
			\node[align=center, below] at (\centerparameter,-.6) {(a) $n=500$};
		\end{tikzpicture}
		\begin{tikzpicture}
			\begin{axis}
				[ymin=0,ymax=1,xmin=-1,xmax=1,legend style={at={(axis cs:-0.42,0.985)},anchor=north west}]
				\draw [pattern=north west lines, pattern color=gray!70,draw=none] (-0.2988,0) rectangle (0.4075,0.996);
				\draw[dotted] (-1,0.05) -- (1,0.05);
				\addplot[very thick] table [x=theta, y=rr_GCC, col sep=semicolon]
				{MTE_Rej_n_eqs1000_Sim_eqs5000_CI3.txt};
				\addplot[dashed, very thick, gray] table [x=theta, y=rr_RGCC, col sep=semicolon]
				{MTE_Rej_n_eqs1000_Sim_eqs5000_CI3.txt};
				\addlegendentry{{\scriptsize GCC test  [0.89s]}};
				\addlegendentry{{\scriptsize RGCC test [13.85s]}};
			\end{axis}
			\node[align=center, below] at (\centerparameter,-.6) {(b) $n=1000$};
		\end{tikzpicture}
	\end{center}
\caption{Power Curve with Breaking the IV Slope into Three Components and OLS Slope Coefficients ($\mathcal{S}_3$)}
\label{Fig:mte3}
\end{figure}
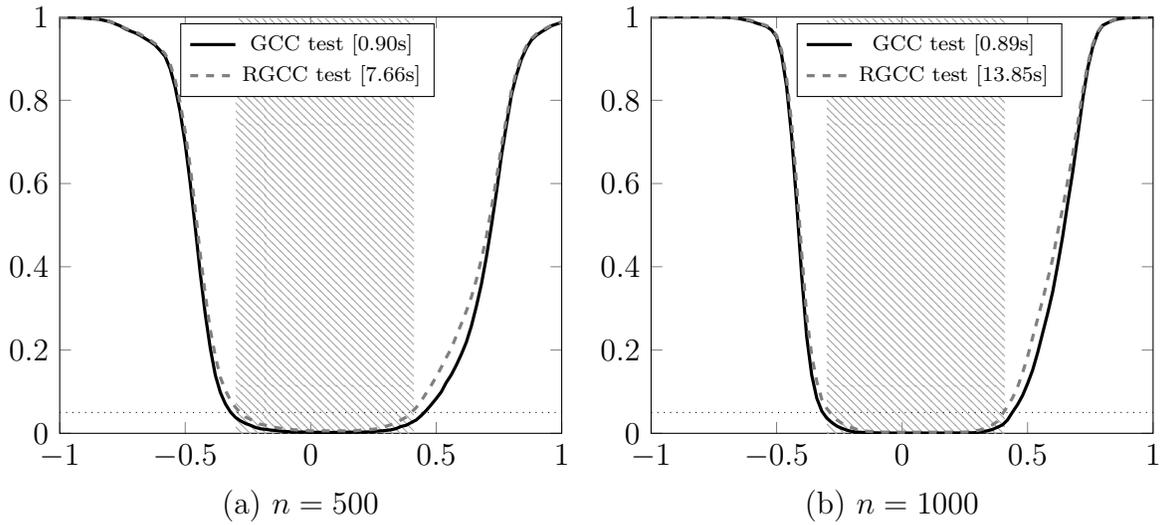

\begin{figure}
	\pgfplotsset{width=0.5*\textwidth}
	\begin{center}
		\begin{tikzpicture}
			\begin{axis}
				[ymin=0,ymax=1,xmin=-1,xmax=1,legend style={at={(axis cs:-0.52,0.985)},anchor=north west}]
				\draw [pattern=north west lines, pattern color=gray!70,draw=none] (-0.1378,0) rectangle (0.4075,0.996);
				\draw[dotted] (-1,0.05) -- (1,0.05);
				\addplot[very thick] table [x=theta, y=rr_GCC, col sep=semicolon]
				{MTE_Rej_n_eqs500_Sim_eqs5000_CI4.txt};
				\addplot[dashed, very thick, gray] table [x=theta, y=rr_RGCC, col sep=semicolon]
				{MTE_Rej_n_eqs500_Sim_eqs5000_CI4.txt};
				\addlegendentry{{\scriptsize GCC test  [0.93s]}};
				\addlegendentry{{\scriptsize RGCC test [1.03s]}};
			\end{axis}
			\node[align=center, below] at (\centerparameter,-.6) {(a) $n=500$};
		\end{tikzpicture}
		\begin{tikzpicture}
			\begin{axis}
				[ymin=0,ymax=1,xmin=-1,xmax=1,legend style={at={(axis cs:-0.38,0.985)},anchor=north west}]
				\draw [pattern=north west lines, pattern color=gray!70,draw=none] (-0.1378,0) rectangle (0.4075,0.996);
				\draw[dotted] (-1,0.05) -- (1,0.05);
				\addplot[very thick] table [x=theta, y=rr_GCC, col sep=semicolon]
				{MTE_Rej_n_eqs1000_Sim_eqs5000_CI4.txt};
				\addplot[dashed, very thick, gray] table [x=theta, y=rr_RGCC, col sep=semicolon]
				{MTE_Rej_n_eqs1000_Sim_eqs5000_CI4.txt};
				\addlegendentry{{\scriptsize GCC test  [0.91s]}};
				\addlegendentry{{\scriptsize RGCC test [1.05s]}};
			\end{axis}
			\node[align=center, below] at (\centerparameter,-.6) {(b) $n=1000$};
		\end{tikzpicture}
	\end{center}
\caption{Power Curve with All Six IV-like Estimands ($\mathcal{S}_4$)}
\label{Fig:mte4}
\end{figure}
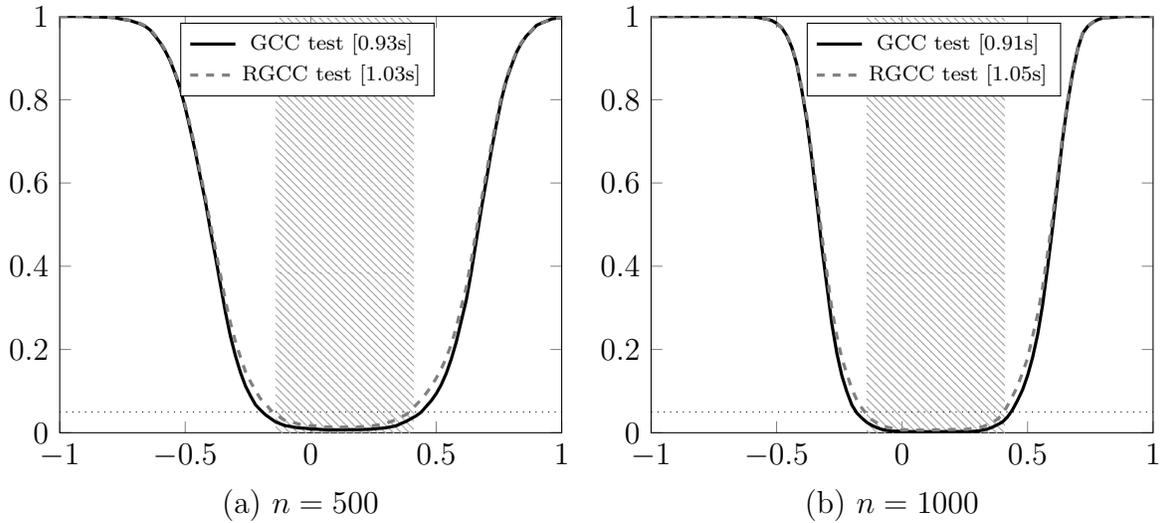

\bibliographystyleapp{apalike}
\bibliographyapp{references}

\end{document}